\theoremstyle{plain}
\newtheorem{thm}{Theorem}[section]
\newtheorem*{thm*}{Theorem}
\newtheorem{cor}[thm]{Corollary}
\newtheorem{lem}[thm]{Lemma}
\newtheorem{prop}[thm]{Proposition}
\newenvironment{customasm}[1]
  {\innercustomasm}
  {\endinnercustomasm}
\theoremstyle{definition}
\newtheorem{rem}[thm]{Remark}
\newtheorem{exa}[thm]{Example}
\newtheorem*{exas*}{Examples}
\newtheorem{defi}[thm]{Definition}
\newtheorem{conj}[thm]{Conjecture}
\newtheorem{clm}[thm]{Claim}
\newtheorem{constr}[thm]{Construction}
\newcommand{\reptype}{\mathsf{type}_{\rep}}
\newcommand{\repty}{\mathsf{ty}_{\rep}}
\newcommand{\RepTy}{\mathsf{RepTy}}
\newcommand{\BTy}{\mathsf{BTy}}
\newcommand{\BRepTy}{\mathsf{BRepTy}}
\newcommand{\MonoTy}{\mathsf{MonoTy}}
\newcommand{\PolyTy}{\mathsf{PolyTy}}
\newcommand{\GenTy}{\mathsf{GenTy}}
\newcommand{\GenRepTy}{\mathsf{GenRepTy}}
\newcommand{\lexpre}{\Sigma}
\newcommand{\reppre}{\Sigma,\Pi_{\rep}}
\newcommand{\repinfty}{\Sigma,\Pi_{\rep},{\Id_{ws}}}
\newcommand{\inner}[1]{{\color{RedOrange}{\mathbf{#1}}}}
\newcommand{\innerSym}[1]{{\color{RedOrange}{\bm{#1}}}}
\newcommand{\iS}{\inner{S}}
\newcommand{\iT}{\inner{T}}
\newcommand{\ity}{\inner{ity}}
\newcommand{\itm}{\inner{itm}}
\newcommand{\iTy}{\inner{iTy}}
\newcommand{\iTm}{\inner{iTm}}
\newcommand{\iid}{\inner{iid}}
\newcommand{\iId}{\inner{iId}}
\newcommand{\irefl}{\inner{irefl}}
\newcommand{\iJ}{\inner{iJ}}
\newcommand{\iJb}{\inner{iJ_{\mathnormal{\beta}}}}
\newcommand{\iEquiv}{\mathrm{iEquiv}}
\newcommand{\iUnit}{\innerSym{\Unit}}
\newcommand{\iSigma}{\innerSym{\Sigma}}
\newcommand{\ipi}{\innerSym{\pi}}
\newcommand{\iPi}{\innerSym{\Pi}}
\newcommand{\iapp}{\inner{iapp}}
\newcommand{\iob}{\inner{ob}}
\newcommand{\ihom}{\inner{hom}}
\newcommand{\ieqhom}{\inner{eqhom}}
\newcommand{\iOb}{\inner{Ob}}
\newcommand{\iHom}{\inner{Hom}}
\newcommand{\iEqHom}{\inner{EqHom}}
\renewcommand{\iid}{\inner{id}}
\newcommand{\icomp}{\inner{comp}}
\newcommand{\icirc}{\mathbin{\innerSym{\circ}}}
\newcommand{\ReflGraph}{\mathsf{ReflGraph}}
\newcommand{\PreReflGraph}{\mathsf{PreReflGraph}}
\newcommand{\CPreReflGraph}{\mathbf{PreReflGraph}}
\newcommand{\CPreReflEqv}{\mathbf{PreReflEqv}}
\newcommand{\CReflEqv}{\mathbf{ReflEqv}}
\newcommand{\Th}{\CT}
\newcommand{\Free}{\mathsf{Free}}
\newcommand{\Clos}{\mathsf{Clos}}
\newcommand{\isRefl}{\mathrm{isRefl}}
\newcommand{\isRepContr}{\isContr^{\rep}}
\newcommand{\isBContr}{\isContr^{b}}
\newcommand{\isBRepContr}{\isContr^{\rep,b}}
\newcommand{\contr}{\mathsf{contr}}
\newcommand{\ccenter}{\mathsf{center}}
\newcommand{\chpath}{\mathsf{hpath}}
\newcommand{\cpath}{\mathsf{path}}
\newcommand{\DId}{\mathsf{DId}}
\newcommand{\Colon}{\mathrel{::}}
\newcommand{\funl}{\overleftarrow{\mathsf{fun}}}
\newcommand{\funr}{\overrightarrow{\mathsf{fun}}}
\newcommand{\El}{\mathsf{El}}
\author{Rafaël Bocquet}
\email{bocquet@inf.elte.hu}
\address{Department of Programming Languages and Compilers, Eötvös Loránd University, Budapest, Hungary}
\date{November 14, 2022}
\title{External univalence for second-order generalized algebraic theories}
\begin{document}

\maketitle

\begin{abstract}
  Voevodsky's univalence axiom is often motivated as a realization of the equivalence principle; the idea that equivalent mathematical structures satisfy the same properties.
  Indeed, in Homotopy Type Theory, properties and structures can be transported over type equivalences.
  However, we may wish to explain the equivalence principle without relying on the univalence axiom.
  For example, all type formers preserve equivalences in most type theories; thus it should be possible to transport structures over type equivalences even in non-univalent type theories.

  We define \emph{external univalence}, a property of type theories (and more general second-order generalized algebraic theories) that captures the preservation of equivalences (or other homotopy relations).
  This property is defined syntactically, as the existence of identity types on the (syntactically defined) coclassifying $(\reppre)$-CwF (also called generic model or walking model) of the theory.
  Semantically, it corresponds to the existence of some left semi-model structure on the category of models of the theory.
  We give syntactic conditions that can be used to check that a theory satisfies external univalence.
  We prove external univalence for some theories, such as the first-order generalized algebraic theory of categories, and dependent type theory with any standard choice of type formers and axioms, including identity types, $\Sigma$-types, $\Pi$-types, universes à la Tarski, the univalence axiom, the Uniqueness of Identity Proofs axiom, \etc.
\end{abstract}

\section{Introduction}\label{sec:introduction}

The principle of equivalence, also called principle of isomorphism, equivalence-invariance, \etc{}, is the idea that all constructions (in some language or theory) should respect equivalences (for some notion of equivalence associated to the theory).
Structures and properties should be transportable over equivalences.
Voevodsky's univalence axiom can be seen as an internalization of this principle in the language of type theory.
However, univalence is a non-conservative extension of type theory, and incompatible with other useful type theoretic principles, such as Uniqueness of Identity Proofs (UIP).
We also wish to achieve transport over equivalent structures in non-univalent type theories.

In some ways, univalence is similar to parametricity.
Parametricity captures the preservation of $n$-ary relations, whereas univalence is related to the preservation of equivalences (which can be seen as binary relations that are functional in both directions).
While some theories satisfy internal parametricity, many others only satisfy parametricity externally.
External parametricity is a provable metatheoretic property of these theories.
In this paper, we introduce \emph{external univalence} a metatheoretic property of theories which captures the preservation of equivalences (or other homotopy relations).

The name ``external univalence'' corresponds to two ideas.
First, as already mentioned, the link between external and internal univalence is somewhat similar to the relationship between external and internal parametricity.
Secondly, external univalence is directly related to the other established use of the word ``univalence'', as found in the notion of univalent category~\parencite{UnivalentCategories}.
Indeed, a theory will satisfy external univalence when its generic model is univalent, for some suitable definition of univalent model of the given theory.

The original motivation for this paper is the author's work~\parencite{CoherenceStrictEqualities} on the conservativity of extensions of type theories by additional definitional equalities.
These conservativity results are proven by replacing definitional equalities by transports over equivalences and identifications (elements of the identity type).
It is then important to know that equivalences and identifications are preserved by everything in the theory.

We formulate external univalence for any second-order generalized algebraic theory (SOGAT) equipped with the data of homotopy relations on every sort.
SOGATs correspond to a class of type theories studied by Uemura~\parencite{UemuraFramework,UemuraThesis}.
The syntax and semantics of most type theories (including type theories with unusual contextual structure, such as cubical type theories and two-level type theories) can all be described using SOGATs.
All first-order generalized algebraic theories (GATs, such as the theory of categories) can also be seen as SOGATs.
The homotopy relations specify a notion of weak equality on every sort of the theory.

The GAT of categories has three sorts: objects, morphisms and equality between morphisms.
The homotopy relation on objects is given by isomorphisms, the homotopy relation on morphisms is given by equality of morphisms, and the homotopy relation on equality of morphisms is trivial.
Type theories are usually SOGATs with two sorts: types and terms.
The homotopy relations on types and terms can be given by respectively type equivalences and identifications.

We study properties of a SOGAT $\Th$ by focusing on its \emph{coclassifying} $(\reppre)$-CwF (which contains the \emph{generic model} or \emph{walking model} of $\Th$).
The coclassifying $(\reppre)$-CwF is also written $\Th$ and is often identified with the theory.
The underlying category of $\Th$ is equivalent to the category of finitely generated models of $\Th$.
However, we use a more syntactic definition of $\Th$, as the initial model of some two-level type theory.
The syntax (\ie{} the initial model) of $\Th$ embeds faithfully into its coclassifying $(\reppre)$-CwF; any property of $\Th$ has direct consequences on the syntax of the theory.
For many theories, the initial model is however too trivial to be interesting; for example, the initial category is empty.

In this setting, we say that a SOGAT $\Th$ satisfies external univalence if its coclassifying $(\reppre)$-CwF $\Th$ can be equipped with (weakly stable and weakly computational) identity types that are compatible with the specified homotopy relations.
The elimination principle of these identity types then gives transport over elements that are related by the homotopy relations.

More semantically, we will show in a future article that external univalence is equivalent to the existence of a left semi-model structure on the category of models of $\Th$, where the classes of cofibrations, fibrations and weak equivalences are determined by the theory $\Th$ and its homotopy relations.
In the case of the theory of categories, this semi-model structure is the canonical (or ``folk'') model structure on $\CCat$, while in the case of type theories with identity types, this semi-model structure is the one constructed by~\textcite{HomotopyTheoryTTs}.

Our main theorem states that external univalence can be proven for a theory by checking some syntactic conditions.
Checking these conditions requires providing witnesses of preservation of the homotopy relations by every operation of the theory, together with $1$- and $2$- dimensional cubical composition and filling operations for the homotopy relations.
In the restricted case of theories without equations (\eg type theories without computation rules), these conditions are actually necessary conditions.

Using this theorem, we show external univalence for:
\begin{itemize}
  \item the theory of categories, as a minimal example application of the method;
  \item type theory with identity types and any standard choice of additional type-theoretic structures, such as $\Pi$-types, $\Sigma$-types, universes à la Tarski (without a coding function), booleans, univalence, UIP, \etc.
\end{itemize}
Depending on the precise algebraic definition of universes, it is not always possible to prove external univalence in the absence of (internal) univalence.
A universe comes with a decoding function $\El$, that sends terms of the universe type to types.
A coding function is an inverse of the decoding function $\El$, universes with a coding function are also called Coquand universes~\parencite{CoquandPresheafNote,CoquandNormalizationDTT}.
In the absence of a coding function, it is always possible to prove external univalence.
\Textcite{MarriageUnivalenceParametricity} give some counter-examples to the preservation of equivalences in the absence of univalence, but they all rely on the use of universes à la Russell, which identify types with terms of the universe.

It should be possible to use our methods to show that other theories (such as the first-order generalized algebraic theory of $2$-categories, cubical type theories with or without Glue-types, \etc{}) also satisfy external univalence.

\subsection*{Example: the theory of categories}

We look in more details at external univalence in the setting of the generalized algebraic theory $\Th_{\CCat}$ of categories, which is perhaps the simplest theory with non-trivial homotopical content.
Since $\Th_{\CCat}$ is a \emph{first-order} generalized algebraic theory, its coclassifying $(\reppre)$-CwF $\Th_{\CCat}$ is in fact a coclassifying $\lexpre$-CwF.
It admits multiple equivalent definitions:
\begin{itemize}
  \item $\Th_{\CCat}$ is the initial category with families equipped with $\Unit$-types, $\Sigma$-types and with an ``internal category'';
  \item $\Th_{\CCat}$ is the initial model of a type theory with:
        \begin{itemize}
          \item A type $(\iob\ \type)$ of objects;
          \item A dependent type $((x,y : \iob) \vdash \ihom(x,y)\ \type)$ of morphisms;
          \item A dependent type $((x,y : \iob)\ (f,g : \ihom(x,y)) \vdash \ieqhom(f,g)\ \type)$ of equalities between morphisms;
          \item Such that types are closed under $\Unit$-types and $\Sigma$-types;
          \item A dependent term $(x : \iob) \vdash \iid(x) : \ihom(x,x)$;
          \item A dependent term $(f : \ihom(x,y))\ (g : \ihom(y,z)) \vdash \icomp(f,g) : \ihom(x,z)$;
          \item Such that the categorical laws are satisfied:
            \begin{alignat*}{1}
              & \icomp(f,\iid) = f, \\
              & \icomp(\iid,f) = f, \\
              & \icomp(\icomp(f,g),h) = \icomp(f,\icomp(g,h));
            \end{alignat*}
          \item And such that the type $\ieqhom(f,g)$ is propositional, and inhabited if and only if $f = g$;
        \end{itemize}
        The types and terms of this initial model are respectively the \emph{sorts} and \emph{elements} of the theory of categories.
  \item The objects of $\Th_{\CCat}$ are the categories that are finitely generated by a finite set of objects, a finite collection of morphisms between these objects, and a finite collection of equalities between compositions of these morphisms.
        The category $\Th_{\CCat}$ is a full subcategory of the $1$-category $\CCat$.

        The types of $\Th_{\CCat}$ over a finitely generated category $\Gamma$ are the ``diagram shapes'' over $\Gamma$; extensions of $\Gamma$ by a finite collection of new generating objects, morphisms and equalities.
        Equivalently, these are the functors into $\Gamma$ that have a finitely generated domain and are injective-on-objects (\ie{} that are cofibrations in the canonical model structure on $\CCat$).

        The terms of a diagram shape $A$ are the actual diagrams of that shape in the category that is finitely generated by $\Gamma$.
\end{itemize}

For example, the context (that can also be seen as a closed record type)
\[ \mathsf{Sect} = (x : \iob, y : \iob, r : \ihom(x,y), s : \ihom(y,x), p : \ieqhom(\icomp(s,r),\iid(y))) \]
is an object of $\Th_{\CCat}$.
The corresponding finitely generated category is the ``walking section''
\[
  \begin{tikzcd}
    y
    \ar[r, "s"']
    \ar[rr, "\iid(y)", bend left]
    &
    x
    \ar[r, "r"']
    &
    y \rlap{\ .}
  \end{tikzcd}
\]

We claim that $\Th_{\CCat}$ satisfies external univalence, meaning that the coclassifying $\lexpre$-CwF $\Th_{\CCat}$ can be equipped with identity types.
For any type $A$, \ie{} a diagram shape over $\Gamma$, the dependent type $\Id_{A}(x,y)$ is the diagram shape of isomorphisms between the two copies ($x$ and $y$) of the diagram $A$.

For example, the diagram shape $\Id_{\mathsf{Sect}}((x_{0},y_{0},r_{0},s_{0},p_{0}),(x_{1},y_{1},r_{1},s_{1},p_{1}))$ consists of the vertical isomorphisms in the following commutative diagram.
\[ \begin{tikzcd}
    y_{0}
    \ar[r, "s_{0}"']
    \ar[rr, "\iid(y_{0})", bend left]
    \ar[d, "\cong", "y_{2}"']
    &
    x_{0}
    \ar[r, "r_{0}"']
    \ar[d, "\cong", "x_{2}"']
    &
    y_{0}
    \ar[d, "\cong", "y_{2}"']
    \\
    y_{1}
    \ar[r, "s_{1}"]
    \ar[rr, "\iid(y_{1})"', bend right]
    &
    x_{1}
    \ar[r, "r_{1}"]
    &
    y_{1}\rlap{\ .}
  \end{tikzcd} \]

The elimination principle of the identity types then tells us that we can transport any diagram extension along such diagram isomorphisms.
For instance, if we know that $s_{0}$ and $r_{0}$ are actually inverses in the above setting, we can transport this fact over the diagram isomorphism to obtain that $s_{1}$ and $r_{1}$ are also inverses.
Concretely, we can form a dependent type $P$ over $\mathsf{Sect}$, with $P(x,y,r,s) = \ieqhom(\icomp(r,s), \iid(x))$.
If we have any term of type $P(x_{0},y_{0},r_{0},s_{0},p_{0})$, we obtain an element of type $P(x_{1},y_{1},r_{1},s_{1},p_{1})$ by transport.

Note that we do not include any type of ``equality between objects''.
Indeed, equalities between objects cannot be transported over diagram isomorphisms.

\subsection*{Example: Dependent type theories}

We also describe what external univalence entails for a dependent type theory $\Th$ with identity types, universes à la Tarski and any choice of standard type formers ($\Pi$-types, $\Sigma$-types, inductive types, \etc{}).

The coclassifying $(\reppre)$-CwF of $\Th$ can be described as the initial model of a two-level type theory with:
\begin{itemize}
  \item For every universe level $i$, we have:
        \begin{itemize}
          \item An outer type $(\ity_{i}\ \type)$ of inner types of level $i$.
          \item A dependent outer type $(A:\ity_{i} \vdash \itm_{i}(A)\ \type)$ of inner terms.
          \item An inner type $(\inner{\bm{U}}_{i} : \ity_{i+1})$ for the universe of $i$-small inner types and a dependent inner type $(A : \itm_{i+1}(\inner{\bm{U}}_{i}) \vdash \inner{El}(A) : \iTy_{i})$ for its decoding function.
        \end{itemize}
  \item The inner types and terms are closed under the operations of the dependent type theory $\Th$, including identity types $\iId$, $\irefl$, \etc.
  \item The outer types are closed under $\Unit$- and $\Sigma$-types.
  \item The outer types are closed under $\Pi$-types with arities in inner terms.
        This means that we have a type forming operation
        \begin{mathpar}
          \inferrule
          {A : \ity_{i} \\
            a : \itm_{i}(A) \vdash B(a)\ \type}
          {\Pi(A,B)\ \type}
        \end{mathpar}
        such that terms of type $\Pi(A,B)$ correspond bijectively to dependent terms $(a:\itm_{i}(A) \vdash b : B(a))$.
\end{itemize}
The underlying category of $\Th$ is equivalent to the category of all finitely generated contextual models of $\Th$.

The model $\Th$ does not coincide with the initial model $\Init_\Th$ of $\Th$; but there is a faithful embedding $\Init_\Th \to \Th$, so that anything constructed in $\Th$ is also valid in the syntax $\Init_\Th$.

In that setting, external univalence for $\Th$ says that the $(\reppre)$-CwF $\Th$ is equipped with (weakly stable) identity types with $\Id_{\ity_{i}}(A,B) \simeq \iEquiv(A,B)$ and $\Id_{\itm_{i}(A)}(x,y) \simeq \itm_{i}(\iId_{A}(x,y))$, where $\iEquiv(A,B)$ is the outer type of inner equivalences between the inner types $A$ and $B$.

A closed dependent inner type $(A:\ity_{i} \vdash P(A) : \ity_{i})$ in the $(\reppre)$-CwF $\Th$ is exactly a type expression that depends on a type variable $A$.
If $\Th$ satisfies external univalence, we know that any such $P$ perserves equivalences.
Indeed, $P$ has an action on paths:
\[ (A,B:\ity_{i}), E:\Id_{\ity_{i}}(A,B) \vdash \ap(P,E) : \Id_{\ity_{i}}(P(A),P(B)). \]
By external univalence, this is equivalent to an action of $P$ on equivalences:
\[ (A,B:\ity_{i}), E:\iEquiv(A,B) \vdash \ap(P,E) : \iEquiv(P(A),P(B)). \]

Furthermore, this action of $P$ on equivalences preserves composition of equivalences and the whole $\infty$-groupoid structure of types.

As an example, we can show how to transport the commutativity of addition from a type $\Nat_{\mathsf{un}}$ of unary natural numbers to a type $\Nat_{\mathsf{bin}}$ of binary natural numbers.
We consider the following dependent type:
\begin{alignat*}{3}
  & P && :{ }
  && (N : \ity_0) \times (\mathsf{plus} : \itm(N) \to \itm(N) \to \itm(N)) \to \ity_0, \\
  & P(N,\mathsf{plus}) && \triangleq{ }
  && \forall n\ m \to \iId_{N}(\mathsf{plus}(n,m),\mathsf{plus}(m,n)).
\end{alignat*}
We have an identification $E$ between $(\Nat_{\mathsf{un}},+_{\mathsf{un}})$ and $(\Nat_{\mathsf{bin}},+_{\mathsf{bin}})$ in the outer type $(N : \ity_0) \times (\mathsf{plus} : \itm(N) \to \itm(N) \to \itm(N))$.
By external univalence and function extensionality in $\Th$, this identification consists of an equivalence between $\Nat_{\mathsf{un}}$ and $\Nat_{\mathsf{bin}}$ along with a proof that it is compatible with $+_{\mathsf{un}}$ and $+_{\mathsf{bin}}$.
We can then use the action on paths of $P$ to obtain an identification $\ap(P,E)$ between $P(\Nat_{\mathsf{un}},+_{\mathsf{un}})$ and $P(\Nat_{\mathsf{bin}},+_{\mathsf{bin}})$.
By external univalence, we can also see $\ap(P,E)$ as an equivalence between $P(\Nat_{\mathsf{un}},+_{\mathsf{un}})$ and $P(\Nat_{\mathsf{bin}},+_{\mathsf{bin}})$.
Now given any term of type $P(\Nat_{\mathsf{un}},+_{\mathsf{un}})$, we can apply the equivalence to obtain a term of type $P(\Nat_{\mathsf{bin}},+_{\mathsf{bin}})$, \ie{} a proof of commutativity for the binary natural numbers.

Note that if $\Th$ has Coquand universes instead, given by inverses $(A:\ity_{i} \vdash \inner{c}(A) : \itm_{i+1}(\inner{\bm{U}}_i))$ of the coding functions $\inner{El}$, then external univalence implies internal univalence.
Indeed the action of $\inner{c}$ on paths is
\[ (A,B:\ity_{i}), E:\Id_{\ity_{i}}(A,B) \vdash \ap(\inner{c},E) : \Id_{\itm_{i+1}(\inner{\bm{U}}_i)}(\inner{c}(A),\inner{c}(B)). \]
By external univalence, this is equivalent to
\[ (A,B:\ity_{i}), E:\iEquiv(A,B) \vdash \ap(\inner{c},E) : \itm_{i+1}(\iId_{\inner{\bm{U}}_i}(\inner{c}(A),\inner{c}(B))), \]
\ie to the fact that any equivalence can be turned into an identification between elements of the universe $\inner{\bm{U}}_i$.

\subsection*{Related work}

\subsubsection*{Relational parametricity and the Identity Extension Lemma}

Reynolds' relational parametricity~\parencite{ReynoldsParametricity} provides an interpretation of the types of System F as binary relations for any given mapping of the type variables to relations.
A crucial property of Reynolds' model is the Identity Extension Lemma, which states that whenever all type variables are mapped to identity relations, the interpretation of any type is also the identity relation.

For dependent type theories, constructing models that satisfy the Identity Extension Lemma is generally challenging.
\Textcite{RelationallyParametricModelDTT} show that the Identity Extension Lemma can be motivated by the use of reflexive graphs in the construction of relationally parametric models.
They also construct a relationally parametric model of dependent types in reflexive graphs.
In that model, the universe of small types is interpreted as a universe of discrete and proof-irrelevant reflexive graphs.

Although the general setting differs, external univalence seems to be related to the Identity Extension Lemma, as our goal is to interpret every type as a (type-valued) relation that is also an identity type.
We also almost use reflexive graphs in our constructions, except that we have to replace the diagram shape of reflexive graphs by an inverse diagram shape (see~\cref{sec:mgraph_model}).

\subsubsection*{Univalent Parametricity}
\Textcite{MarriageUnivalenceParametricity} give a univalent parametricity translation for a type theory with the univalence axiom.
The univalence axiom is needed in their translation of the universes.
This translation allows for the transport of proofs and structures over equivalences.
In many instances, the transport is effective, meaning that the output term does not actually rely on the univalence axiom.
In these cases, the translated terms can be used even in non-univalent type theories.

Tabareau \etal implemented this univalent parametricity using the typeclass mechanism of Coq.
\Textcite{OrnamentsForProofReuse} also implemented a related transformation as a Coq plugin.

In our work, we show that the transport of structures over equivalences can be achieved even for non-univalent type theories, if their universes do not have a coding function.
Our constructions involve a homotopical inverse diagram model that is closely related to the univalent parametricity translation of Tabareau \etal.

We do not provide any algorithmic implementation of our results.
However, we work in a constructive metatheory, it is in principle possible to extract an algorithm from our proofs.
Furthermore most of our constructions involve syntactic manipulations that should be directly implementable.

\subsubsection*{Semi-model structures on categories of models of type theories}
\Textcite{HomotopyTheoryTTs} construct left semi-model structures on the categories of models of type theories with identity types, $\Sigma$-types, and (optionally) $\Pi$-types with function extensionality.
The properties of these semi-model structures can be used to transport structures over equivalences.
They prove the existence of the semi-model structures using several homotopical inverse diagram models.
For this purpose, \textcite{HomotopicalInverseDiagrams} have constructed homotopical inverse diagram models over arbitrary homotopical inverse categories.
Note that closely related homotopical gluing models had been constructed before by \textcite{ShulmanUnivalenceInverseDiagrams}.
\Textcite{IsaevModelStructuresOnModels} has also constructed some model structures on categories of models of type theories with an interval.

We will show in another paper that a SOGAT equipped with homotopy relations satisfies external univalence if and only if its category of models is a left semi-model category, for classes of trivial cofibrations, cofibrations and weak equivalences that are derived from the SOGAT and the chosen homotopy relations.
Thus our results will yield an alternative proof of the results of Kapulkin and Lumsdaine.
Since we prove external univalence for a large class of type theories, we will also obtain left semi-model structures for a large class of type theories.

\subsubsection*{Computing with univalence}
Proving external univalence for a theory essentially involves providing a computational explanation of univalence in a very restricted setting: the outer layer of the coclassifying $(\reppre)$-CwF of the theory.
It has only $\Sigma$-types, some $\Pi$-types, and some base types.
In particular, there is no universe classifying the outer types, so univalence cannot be iterated.
As a consequence, giving a computational explanation of external univalence is much simpler than for internal univalence.

Nevertheless, there are some similarities between our setting and computation with internal univalence.
Some of our constructions are reminiscent of the cubical type theory without an interval of \textcite{CubicalTypeTheoryWithoutInterval}, which was an early attempt at providing a computation interpretation of internal univalence.

\subsubsection*{Principle of equivalence}
Makkai's Principle of Isomorphism~\parencite{MakkaiTowardsCategoricalFoundation} is the idea that ``All grammatically correct properties of objects of a fixed category are to be invariant under isomorphism.''
These ideas were formally developed in the framework of First-Order Logic with Dependent Sorts~\parencite{MakkaiFOLDS}.
There was also prior work by \textcite{FreydPropertiesInvariant} and \textcite{BlancEquivalenceNaturelle}, showing that first-order categorical statements can be transported over equivalences of categories, as long as they do not mention equalities between objects.
In a recent talk, \textcite{HenryLanguageModelCat} has explained the relationship between these ideas and homotopy theory.

\textcite{HigherSIP} have revisited FOLDS in a univalent setting, and give a generic definition of ``indiscernability'' for any FOLDS-signature.
FOLDS-signatures can be identified with first-order generalized algebraic theories without operations.
It would be interesting to investigate whether indiscernabilities are homotopy relations that always satisfy external univalence in our setting.

\subsubsection*{\texorpdfstring{$\infty$}{infinity}-type theories}
We expressed external univalence using the structure of identity types on the coclassifying $(\reppre)$-CwF of a SOGAT $\Th$.
This coclassifying $(\reppre)$-CwF then has the structure of a model of type theory with $\Sigma$-types, (weakly stable) identity types and some $\Pi$-types.
In line with internal language conjectures~\parencite{HomotopyTheoryTTs,InternalLanguageLexInfty}, which assert that models of type theories with identity types and other structures are the internal languages of structured $\infty$-categories, the coclassifying $(\reppre)$-CwF ought to be the internal language of some $\infty$-category with representable maps.

\textcite{InftyTypeTheories} have used a precise definition of $\infty$-categories with representable maps as a notion of $\infty$-type theory.
Such an $\infty$-type theory has an $\infty$-category of models; in a model all substitution laws and computation rules only hold up to homotopy.
They have also established some coherence theorems that compare some $\infty$-type theories with some $1$-type theories.

Our results provide a way to work with objects that are morally $\infty$-type theories, without relying on any simplicial presentation of $\infty$-categories.
Instead we morally use a type-theoretic definition of (structured) $\infty$-categories, originally inspired by Brunerie's type-theoretic definition of $\infty$-groupoids~\parencite[Appendix~B]{BrunerieThesis}.


\section{Background}\label{sec:background}
We work in a constructive metatheory.

\subsection{Notations}

We use different relation symbols for the different notions of identifications that occur in this paper.
We reserve the use of $(\sim)$ for homotopy relations associated to a theory (see~\cref{def:homotopy_relations}).
The symbol $(\simeq)$ is used for equivalences between types and identifications (terms of an identity type).
Isomorphisms are denoted by the symbol $(\cong)$.

\subsection{Factorization systems}

We recall some basic results on (both weak and orthogonal) factorization systems over locally finitely presentable categories.
We omit all proofs.
Details on locally presentable categories can be found in the standard reference book by~\textcite{LocallyPresentableAndAccessibleCategories}.
A general introduction to factorization systems can be found in notes by~\textcite{RiehlFactorizationSystems}.

We fix a locally finitely presentable category $\BC$.

\begin{defi}
  Let $l : A \to B$ and $r : X \to Y$ be two maps in $\BC$.
  We say that $l$ has the \defemph{left lifting property} with respect to $r$, or that $r$ has the \defemph{right lifting property} with respect to $l$ if for any square (lifting problem) of the form
  \[ \begin{tikzcd}
      A \ar[d, "l"] \ar[r, "f"] & X \ar[d, "r"] \\
      B \ar[r, "g"] & Y \rlap{\ ,}
    \end{tikzcd} \]
  there exists a diagonal map $h : B \to X$ such that $h \circ l = f$ and $g = r \circ h$.
  In that case we write $l \boxslash r$.

  We say that $l$ has the \defemph{unique left lifting property} with respect to $r$, when the diagonal filler $h$ is unique.
  This is also denoted by $l \perp r$
  \defiEnd{}
\end{defi}

\begin{prop}
  Given $l : A \to B$ and $r : X \to Y$, we have $l \perp r$ if and only if $l \boxslash r$ and $\nabla_{f} \boxslash r$ where $\nabla_{f} : B +_{A} B \to B$ is the codiagonal of $f$.
  \qed{}
\end{prop}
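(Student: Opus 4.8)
The plan is to unfold the two conditions into statements about diagonal fillers and then run a short diagram chase, using only the universal property of the pushout $B +_A B$ and the defining equations of the codiagonal. Write $i_1, i_2 : B \to B +_A B$ for the two coprojections, so that $i_1 \circ l = i_2 \circ l$, and recall that $\nabla_l$ is the unique map with $\nabla_l \circ i_1 = \nabla_l \circ i_2 = \id_B$. By definition, $l \perp r$ asserts that every lifting problem
\[ \begin{tikzcd}
    A \ar[d, "l"'] \ar[r, "f"] & X \ar[d, "r"] \\
    B \ar[r, "g"'] & Y
  \end{tikzcd} \]
admits a diagonal filler $h : B \to X$ and that such a filler is unique; the existence clause is exactly $l \boxslash r$, so the content of the proposition is that, granting $l \boxslash r$, uniqueness of fillers is equivalent to $\nabla_l \boxslash r$.

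For the forward direction, assume $l \perp r$. Then $l \boxslash r$ is immediate. To prove $\nabla_l \boxslash r$, take any square with top $u : B +_A B \to X$ and bottom $v : B \to Y$ over $\nabla_l$ and $r$. Setting $u_k = u \circ i_k$ for $k = 1, 2$, we get $u_1 \circ l = u_2 \circ l$ and $r \circ u_k = v \circ \nabla_l \circ i_k = v$, so $u_1$ and $u_2$ are both diagonal fillers of the square with top $u_1 \circ l : A \to X$, bottom $v$, and sides $l$ and $r$. Uniqueness gives $u_1 = u_2 =: h$; then $h \circ \nabla_l = u$ by the universal property of the pushout and $r \circ h = v$, so $h$ solves the lifting problem against $\nabla_l$.

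For the converse, assume $l \boxslash r$ and $\nabla_l \boxslash r$. Existence of fillers is the first hypothesis, so only uniqueness remains. Let $h_1, h_2 : B \to X$ be fillers of a square with top $f$ and bottom $g$ as above. From $h_1 \circ l = f = h_2 \circ l$ we obtain an induced map $\langle h_1, h_2 \rangle : B +_A B \to X$, and composing with $i_1, i_2$ shows $r \circ \langle h_1, h_2 \rangle = g \circ \nabla_l$. Applying $\nabla_l \boxslash r$ to this square produces $k : B \to X$ with $k \circ \nabla_l = \langle h_1, h_2 \rangle$; precomposing with $i_1$ and $i_2$ yields $k = h_1$ and $k = h_2$, hence $h_1 = h_2$.

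All of this is routine bookkeeping with the coprojections of $B +_A B$, so I do not expect a genuine obstacle; the only thing to keep straight is the identity $i_1 \circ l = i_2 \circ l$ together with $\nabla_l \circ i_k = \id_B$. (I read the ``$\nabla_f$'' of the statement as the codiagonal $\nabla_l : B +_A B \to B$ of $l$, which is the only reading making the types match.)
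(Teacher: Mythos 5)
Your proof is correct and is the standard argument: the paper itself omits the proof of this proposition (it is stated with a \qed{} and no argument), so there is nothing to diverge from. Your reading of the typo $\nabla_{f}$ as the codiagonal $\nabla_{l} : B +_{A} B \to B$ of $l$ is the intended one, and both directions of your diagram chase — identifying the two coprojection restrictions via uniqueness of fillers, and conversely using a lift against $\nabla_{l}$ to force two fillers to coincide — check out.
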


\begin{defi}
  A \defemph{weak factorization system} consists of two classes $\CL$ and $\CR$ of maps of $\BC$, such that
  \begin{alignat*}{1}
    & \CL = \{ l : A \to B \mid \forall r : X \to Y, l \boxslash r \}, \\
    & \CR = \{ r : X \to Y \mid \forall l : A \to B, l \boxslash r \},
  \end{alignat*}
  and such that every map can be factored as a map in $\CL$ followed by a map in $\CR$.
  \defiEnd{}
\end{defi}

\begin{defi}
  An \defemph{orthogonal factorization system} consists of two classes $\CL$ and $\CR$ of maps of $\BC$, such that
  \begin{alignat*}{1}
    & \CL = \{ l : A \to B \mid \forall r : X \to Y, l \perp r \}, \\
    & \CR = \{ r : X \to Y \mid \forall l : A \to B, l \perp r \},
  \end{alignat*}
  and such that every map can be factored as a map in $\CL$ followed by a map in $\CR$.
  \defiEnd{}
\end{defi}

\begin{prop}
  Any orthogonal factorization system is also a weak factorization system.
  Conversely, a weak factorization system is an orthogonal factorization system if and only if for every $f \in \CL$, $\nabla_{f} \in \CL$.
  \qed{}
\end{prop}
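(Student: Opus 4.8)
The plan is to reduce the whole statement to two ingredients: the characterization $l \perp r \iff (l \boxslash r) \wedge (\nabla_l \boxslash r)$ from the preceding proposition, and the standard \emph{retract argument}. I will write ${}^{\boxslash}\CR = \{ l \mid \forall r \in \CR,\ l \boxslash r \}$ and $\CL^{\boxslash} = \{ r \mid \forall l \in \CL,\ l \boxslash r \}$, and similarly ${}^{\perp}\CR$ and $\CL^{\perp}$; the definitions above then say exactly that a weak factorization system is a pair with $\CL = {}^{\boxslash}\CR$ and $\CR = \CL^{\boxslash}$ admitting factorizations, and an orthogonal one is the same with $\perp$ in place of $\boxslash$. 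I will use that ${}^{\boxslash}\CR$, $\CL^{\boxslash}$, ${}^{\perp}\CR$, $\CL^{\perp}$ are each closed under retracts.

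For the first claim, let $(\CL,\CR)$ be an orthogonal factorization system. Factorizations are already available, so I only need $\CL = {}^{\boxslash}\CR$ and $\CR = \CL^{\boxslash}$. The inclusions $\CL \subseteq {}^{\boxslash}\CR$ and $\CR \subseteq \CL^{\boxslash}$ hold because a unique filler is in particular a filler. For the inclusion ${}^{\boxslash}\CR \subseteq \CL$, I would take $l$ with $l \boxslash r$ for all $r \in \CR$, factor $l = r' \circ l'$ with $l' \in \CL$ and $r' \in \CR$, solve the lifting problem with top $l'$, left $l$, bottom $\id$ and right $r'$ (possible since $l \boxslash r'$), and read off that $l$ is a retract of $l'$; closure of $\CL = {}^{\perp}\CR$ under retracts then gives $l \in \CL$. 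The dual argument, factoring a map $r$ with $l \boxslash r$ for all $l \in \CL$ and exhibiting it as a retract of its right factor, gives $\CR = \CL^{\boxslash}$.

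For the converse I prove both directions. If $(\CL,\CR)$ is at once a weak and an orthogonal factorization system, then for $l \in \CL$ we have $l \perp r$ for every $r \in \CR$, hence $\nabla_l \boxslash r$ for every $r \in \CR$ by the preceding proposition, hence $\nabla_l \in {}^{\boxslash}\CR = \CL$. Conversely, suppose $(\CL,\CR)$ is a weak factorization system with $\nabla_l \in \CL$ whenever $l \in \CL$; to see it is orthogonal I must upgrade $\boxslash$ to $\perp$ in the two closure conditions. Given $l \in \CL$ and $r \in \CR$: $l \boxslash r$ holds, and $\nabla_l \boxslash r$ holds since $\nabla_l \in \CL$, so $l \perp r$ by the preceding proposition; this yields $\CL \subseteq {}^{\perp}\CR$ and $\CR \subseteq \CL^{\perp}$. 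The reverse inclusions are immediate from ${}^{\perp}\CR \subseteq {}^{\boxslash}\CR = \CL$ and $\CL^{\perp} \subseteq \CL^{\boxslash} = \CR$, and the factorizations are unchanged, so $(\CL,\CR)$ is an orthogonal factorization system.

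The only point needing care is the claim that ${}^{\perp}\CR$ (and $\CL^{\perp}$) is closed under retracts: existence of the filler for a retract is the usual conjugation of the lifting problem by the retract maps, but its uniqueness also has to be checked, and that is precisely where one transports two competing fillers to the ambient map and compares them there. Everything else is a formal manipulation of the two lifting relations, so I do not expect any analytic or combinatorial obstacle.
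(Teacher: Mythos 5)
Your proof is correct and is exactly the standard argument (the retract argument for the first claim, plus the characterization $l \perp r \iff l \boxslash r \wedge \nabla_l \boxslash r$ for the converse) that the paper implicitly defers to by omitting the proof and citing Riehl's notes. You also correctly isolate the one non-formal point — that ${}^{\perp}\CR$ and $\CL^{\perp}$ are closed under retracts including \emph{uniqueness} of fillers, which is verified by precomposing two competing fillers with the retraction $B' \to B$ and using uniqueness for the ambient map — so there is no gap.
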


We fix a set $\CX$ of maps in $\BC$.
\begin{defi}
  An \defemph{$\CX$-cellular map} is a sequential composition of pushouts of coproducts of maps in $\CX$.
  A \defemph{$\CX$-cellular complex} is an object $A$ of $\BC$ such that the unique map $\Init_{\BC} \to A$ is an $\CX$-cellular map.

  A \defemph{finite $\CX$-cellular map} is a finite composition of pushouts of maps in $\CX$.
  A \defemph{finite $\CX$-cellular complex} is an object $A$ of $\BC$ such that the unique map $\Init_{\BC} \to A$ is a finite $\CX$-cellular map.
  \defiEnd{}
\end{defi}
We see $\CX$-cellularity as additional structure on the maps of $\BC$.
The cellular maps are usually defined as arbitrary transfinite compositions of pushouts of coproducts of maps in $\CX$; but since $\BC$ is locally \emph{finitely} presentable, it suffices to consider sequential compositions.

\begin{lem}[Small object argument]
  There is a weak factorization system on $\BC$, said to be cofibrantly generated by $\CX$.
  The right class of maps consists of maps with the right lifting property with respect to every map in $\CX$.
  The maps in the left class are the retracts of $\CX$-cellular maps.
  Furthermore, every map in $\BC$ factors as a $\CX$-cellular map followed by a map in the right class.
  \qed{}
\end{lem}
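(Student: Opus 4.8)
The plan is to run Quillen's small object argument. Set $\CR := \{ r \mid \forall l \in \CX,\ l \boxslash r \}$, the maps with the right lifting property against every map of $\CX$, and $\CL := \{ l \mid \forall r \in \CR,\ l \boxslash r \}$. Almost the entire content of the lemma then reduces to a single construction: for every map $f : X \to Y$, produce a factorization $f = p \circ i$ with $i$ an $\CX$-cellular map and $p \in \CR$. Everything else is formal closure-property bookkeeping.

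First I would carry out this factorization. Fix a regular cardinal $\lambda$ such that every domain of a map in $\CX$ is $\lambda$-presentable; such a $\lambda$ exists because $\BC$ is locally finitely presentable and $\CX$ is a set, and one may take $\lambda = \omega$ precisely when those domains are finitely presentable --- the situation arising in the applications, which is why sequential compositions suffice. Build a $\lambda$-chain $X = Z_0 \to Z_1 \to \cdots$ with compatible maps $Z_\alpha \to Y$: at a successor stage, let $S_\alpha$ be the set of commutative squares from some $l_s : A_s \to B_s$ in $\CX$ to $Z_\alpha \to Y$, and put
\[ Z_{\alpha+1} := Z_\alpha +_{\coprod_{s \in S_\alpha} A_s} \coprod_{s \in S_\alpha} B_s \]
with its evident map to $Y$; at a limit stage take the colimit. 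Let $i : X \to Z_\lambda$ be the transfinite composite and $p : Z_\lambda \to Y$ the induced map. By construction $i$ is a composition of pushouts of coproducts of maps in $\CX$, hence $\CX$-cellular (a sequential composition when $\lambda = \omega$). To see $p \in \CR$: given a lifting problem from $l : A \to B$ in $\CX$ against $p$, the component $A \to Z_\lambda$ factors through some $Z_\alpha$ since $A$ is $\lambda$-presentable and $Z_\lambda$ is a $\lambda$-filtered colimit of the $Z_\alpha$; that factored square belongs to $S_\alpha$, so the corresponding leg of the pushout $Z_\alpha \to Z_{\alpha+1}$ yields a map $B \to Z_{\alpha+1} \to Z_\lambda$ that is the desired diagonal filler.

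Then I would assemble the statement. Since ${}^\boxslash(-)$ is closed under pushout, coproduct, transfinite composition and retract, and since $\CX \subseteq \CL$ by the very definition of $\CR$, the class $\CL = {}^\boxslash\CR$ contains every $\CX$-cellular map and every retract of one; hence the factorization above is an $(\CL,\CR)$-factorization, giving the factorization axiom. The defining equations of a weak factorization system hold: $\CL = {}^\boxslash\CR$ by definition, while $\CR = \CL^\boxslash$ because $\CX \subseteq \CL$ forces $\CL^\boxslash \subseteq \CX^\boxslash = \CR$ and the reverse inclusion is the definition of $\CL$. For the retract characterization of the left class, take $l : A \to B$ in $\CL$ and factor $l = p \circ i$ as above; the square with top $i$, left $l$, right $p$, bottom $\id_B$ commutes, and since $l \in {}^\boxslash\CR$ and $p \in \CR$ it admits a filler $s : B \to Z$ with $s \circ l = i$ and $p \circ s = \id_B$. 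Then $(\id_A, s) : l \to i$ and $(\id_A, p) : i \to l$ in the arrow category of $\BC$ compose to $\id_l$, exhibiting $l$ as a retract of the $\CX$-cellular map $i$; conversely every retract of an $\CX$-cellular map lies in $\CL$ by retract-stability of lifting properties.

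The main obstacle is the smallness bookkeeping in the factorization step: choosing $\lambda$ correctly and verifying that the transfinite chain ``converges'', i.e.\ that every lifting problem against $Z_\lambda$ has already been solved at some earlier stage. This is exactly where local finite presentability of $\BC$ --- and finite presentability of the domains of the maps in $\CX$, in the sequential formulation --- is used, through the fact that a $\lambda$-presentable object maps into a $\lambda$-filtered colimit only by factoring through one of its stages.
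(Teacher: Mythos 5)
Your proof is correct: it is the standard Quillen small object argument (transfinite one-step factorization, convergence via presentability of the domains, the retract argument for the left class, and the formal closure properties identifying $\CL$ with ${}^\boxslash\CR$ and $\CR$ with $\CL^\boxslash$). The paper gives no proof at all here --- this lemma sits in a background section where all proofs are explicitly omitted --- so there is nothing to diverge from. One point worth keeping in view: the paper's definition of $\CX$-cellular map uses \emph{sequential} compositions, so the stated factorization through an $\CX$-cellular map really does require the $\lambda=\omega$ case of your construction, i.e.\ it implicitly assumes the domains of the maps in $\CX$ are finitely presentable (true in all of the paper's applications, where the generators are maps between finitely generated free objects); your parenthetical acknowledging this is exactly the right caveat, and for general $\lambda$ one only gets a transfinite cellular map.
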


\begin{lem}[Small object argument for orthogonal factoriation systems]
  There is an orthogonal factorization system on $\BC$, generated by $\CX$.
  The maps in the right class are the maps with the unique right lifting property with respect to every map in $\CX$.
  As a weak factorization system, it is cofibrantly generated by
  \[ \CX \cup \{ \nabla_{f} \mid f \in \CX \}.
    \tag*{\qed{}}
  \]
\end{lem}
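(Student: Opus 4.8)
The plan is to obtain the orthogonal factorization system by applying the (already established) small object argument to the enlarged generating set $\CX' := \CX \cup \{\nabla_f \mid f \in \CX\}$, and then to upgrade the resulting weak factorization system to an orthogonal one using the preceding proposition that characterizes $l \perp r$ as the conjunction of $l \boxslash r$ and $\nabla_l \boxslash r$. So first I would invoke the small object argument lemma for $\CX'$: it yields a weak factorization system $(\CL', \CR')$ in which $\CR'$ is the class of maps with the right lifting property against every map of $\CX'$, $\CL'$ is the class of retracts of $\CX'$-cellular maps, and every map of $\BC$ factors as a $\CX'$-cellular map followed by a map in $\CR'$. By the proposition relating $\perp$ to $\boxslash$ and codiagonals, a map $r$ lies in $\CR'$ exactly when $f \perp r$ for every $f \in \CX$; that is, $\CR'$ is precisely the class of maps with the unique right lifting property against $\CX$, as required by the statement.

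The heart of the proof is to show that $\CL'$ and $\CR'$ already stand in the orthogonal lifting relation, i.e.\ $l \perp r$ for every $l \in \CL'$ and $r \in \CR'$. I would fix $r \in \CR'$ and study the class $\CL_r := \{ l \mid l \perp r \}$. The key elementary observation is that uniqueness of fillers against a codiagonal is automatic: in a lifting problem against $\nabla_l : B +_A B \to B$ the candidate filler $B \to X$ is forced to agree with each of the two legs of the square, so any two fillers coincide. Hence $\nabla_l \boxslash r \iff \nabla_l \perp r$ for every $l$. It follows that all generators lie in $\CL_r$: each $f \in \CX$ satisfies $f \perp r$ by the description of $\CR'$ above, and each $\nabla_f$ satisfies $\nabla_f \boxslash r$ since $r \in \CR'$, hence $\nabla_f \perp r$. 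Because $l \perp r$ is equivalent to a bijectivity condition on hom-sets, $\CL_r$ is closed under pushout, sequential composition, coproduct and retract; therefore it contains every retract of a $\CX'$-cellular map, i.e.\ $\CL' \subseteq \CL_r$. As $r$ was an arbitrary element of $\CR'$, this proves the claim.

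It then remains to check the two defining equalities of an orthogonal factorization system. For $\CL' = \{ l \mid \forall r \in \CR',\ l \perp r \}$: the inclusion $\subseteq$ is what was just proved, and $\supseteq$ holds since $l \perp r$ entails $l \boxslash r$, so any map orthogonal to all of $\CR'$ lifts against all of $\CR'$ and hence lies in $\CL'$ by the weak factorization system. For $\CR' = \{ r \mid \forall l \in \CL',\ l \perp r \}$: the inclusion $\supseteq$ is the main claim, while $\subseteq$ holds because $\CX' \subseteq \CL'$, so a map orthogonal to all of $\CL'$ in particular has the plain right lifting property against $\CX'$. The factorizations produced by the small object argument are already of the form $\CL'$ followed by $\CR'$. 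This establishes the orthogonal factorization system; its description as a weak factorization system cofibrantly generated by $\CX \cup \{ \nabla_f \mid f \in \CX \}$ is exactly the one we started from.

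The step I expect to be the main obstacle is the ``heart'' step above, and in particular the realisation that one does \emph{not} need to iterate the codiagonal construction: closing $\CX$ under $\nabla$ a single time suffices precisely because uniqueness of lifts against a codiagonal is free. The naïve alternative — going through the earlier proposition that a weak factorization system is orthogonal iff its left class is closed under codiagonals — would instead require showing directly that $\nabla_l \in \CL'$ for every $l \in \CL'$, which seems to demand an induction over cellular presentations and is considerably more delicate; the route via $\CL_r$ sidesteps this entirely.
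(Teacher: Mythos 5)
The paper explicitly omits all proofs in this background subsection, so there is no in-paper argument to compare against; judged on its own, your proof is correct and is essentially the standard argument. Applying the small object argument to $\CX \cup \{\nabla_f \mid f \in \CX\}$, identifying the resulting right class with the maps uniquely right-lifting against $\CX$ via the earlier proposition, and then upgrading $\boxslash$ to $\perp$ on the whole left class by noting that lifts against a codiagonal are automatically unique (so no iteration of the $\nabla$-closure is needed) and that the left orthogonal class $\{l \mid l \perp r\}$ is closed under pushout, coproduct, sequential composition and retract, is exactly how this lemma is usually established.
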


\subsection{Internal language of presheaf categories}\label{ssec:internal_language_psh}

We frequently use the type-theoretic internal languages of presheaf categories throughout this paper.

We use $\CPsh(\CC)$ to refer to the presheaf topos over $\CC$; it is a model of extensional type theory with a hierarchy of universes closed under many type-theoretic structures, including dependent products, dependent sums, extensional equality types, quotient types, \etc.

We use $\psh{\CC}$ to refer to the presheaf category over $\CC$.
It could be the underlying category of the topos $\CPsh(\CC)$, but we typically assume that $\psh{\CC}$ lives in a smaller universe than $\CPsh(\CC)$.

The types of $\CPsh(\CC)$ are the dependent presheaves; a dependent presheaf $Y$ over a presheaf $X$ is equivalently a presheaf over the category of elements $\int_{\CC} X$.

The universes of $\CPsh(\CC)$ are the Hofmann-Streicher universes; they classify the ($i$-small) dependent presheaves.
We denote them by $\UPsh_{i}$, or just $\UPsh$.

We often need to reason externally with objects that were defined in the internal language.
In that case, we borrow the following notations from crisp type theory~\parencite{ShulmanBrouwersFT}.
If $X$ is a presheaf over $\CC$, \ie{} a type of $\CPsh(\CC)$ over the empty context of $\CPsh(\CC)$, we write $x \Colon X$ to indicate that $x$ is a \emph{global} element of $X$.
When the category $\CC$ has a terminal object $\diamond$, this means that $x$ is an element of $X_{\diamond}$.
In particular, if $x \Colon \yo(\Gamma) \to X$, then $x$ is a global element of the exponential presheaf $(\yo(\Gamma) \to X)$, or equivalently an element of $X_{\Gamma}$ by the Yoneda lemma.
We leave implicit such uses of the Yoneda lemma.
Conversely, whenever we have a global element $x$ of $X$, we can use it in the internal language of $\CPsh(\CC)$ wherever an element of $X$ would be expected.

\subsubsection{Local representability}

We recall the notion of locally representable dependent presheaves, which is used to model context extensions.

\begin{defi}
  A dependent presheaf $Y$ over a presheaf $X$ is \defemph{locally representable} when for every element $x \Colon \yo(\Gamma) \to X$, the restricted presheaf
  \begin{alignat*}{3}
    & Y_{\mid x} && :{ } && {(\CC/\Gamma)}^{\op} \to \CSet, \\
    & Y_{\mid x}(\rho : \Delta \to \Gamma) && \triangleq{ } && Y_{\Delta}(x[\rho]).
  \end{alignat*}
  is representable.

  Its representing object consists of an extended context $\Gamma.Y_{\mid x}$ along with an isomorphism
  \[ \angles{\bm{p},\bm{q}} : \yo(\Gamma.Y_{\mid x}) \cong (\gamma:\yo(\Gamma)) \times Y(x(\gamma)). \]

  We will often denote the extended context by $(\gamma:\Gamma).Y(x(\gamma))$ and implicitly coerce through the isomorphism above.
  \defiEnd{}
\end{defi}

A dependent presheaf $Y$ is locally representable if and only if the corresponding total natural transformation $\Sigma_{X} Y \to X$ is a representable natural transformation~\parencite{AwodeyNaturalModels}.

There is a universe $\URepPsh$ classifying the locally representable dependent presheaves in $\CPsh(\CC)$, see for instance~\parencite{StreicherRepresentability} for a construction.

\subsection{Type-theoretic structures over internal families}
We now work internally to a presheaf topos $\CPsh(\CC)$.

\subsubsection{Internal families}

\begin{defi}
  A \defemph{family} is a pair $(\Ty,\Tm)$, where $\Ty : \UPsh$ and $\Tm : \Ty \to \UPsh$.
  It is said to have \defemph{representable elements} when $\Tm(A)$ is locally representable for any type $A$, \ie{} when $\Tm : \Ty \to \URepPsh$.
  \defiEnd{}
\end{defi}
The elements of $\Ty$ are often called types, and the elements of $\Tm$ are called terms.

\begin{defi}
  A \defemph{restriction} $\Ty' \to \Ty$ of a family $(\Ty,\Tm)$ consists of a presheaf $\Ty' : \UPsh$ along with a map ${\iota : \Ty' \to \Ty}$.
  It induces a \defemph{restricted family} $(\Ty',\Tm')$, with $\Tm'(A) = \Tm(\iota(A))$.

  A \defemph{subfamily} $\Ty' \hra \Ty$ is a restriction that is also a monomorphism.
  \defiEnd{}
\end{defi}
We will often leave $\iota$ implicit, especially when it is a monomorphism.

\subsubsection{Basic type-theoretic structures}\label{sssec:basic_tt_structures}

\begin{defi}
  A $\Unit$-type structure over a family $(\Ty,\Tm)$ consists of a type
  \begin{alignat*}{3}
    & \Unit && :{ } && \Ty
  \end{alignat*}
  along with an isomorphism
  \begin{alignat*}{1}
    & \Tm(\Unit) \cong \{\tt\}.
    \tag*{\defiEnd{}}
  \end{alignat*}
\end{defi}

\begin{defi}
  A $\Sigma$-type structure over a family $(\Ty,\Tm)$ consists of an operation
  \begin{alignat*}{3}
    & \Sigma && :{ } && (A : \Ty) (B : \Tm(A) \to \Ty) \to \Ty
  \end{alignat*}
  along with an isomorphism
  \begin{alignat*}{1}
    & \Tm(\Sigma(A,B)) \cong ((a : \Tm(A)) \times (b : \Tm(B(a)))).
    \tag*{\defiEnd{}}
  \end{alignat*}
\end{defi}

\begin{defi}
  A $\Pi$-type structure over a family $(\Ty,\Tm)$ consists of an operation
  \begin{alignat*}{3}
    & \Pi && :{ } && (A : \Ty) (B : \Tm(A) \to \Ty) \to \Ty
  \end{alignat*}
  along with an isomorphism
  \begin{alignat*}{1}
    & \Tm(\Pi(A,B)) \cong ((a : \Tm(A)) \to \Tm(B(a))).
    \tag*{\defiEnd{}}
  \end{alignat*}
\end{defi}

We will implicitly coerce through these isomorphisms.

\subsubsection{First-order $\Pi$-types}\label{sssec:first_order_pi_types}
We also need to consider a restriction of $\Pi$-types that will be used to describe the binders of type theories.

\begin{defi}
  The structure of \defemph{$\Pi$-types} in a family $(\Ty,\Tm)$ \defemph{with arities} in a family $(\Ty',\Tm')$ consists of an operation
  \begin{alignat*}{3}
    & \Pi && :{ } && (A : \Ty') (B : \Tm'(A) \to \Ty) \to \Ty
  \end{alignat*}
  along with an isomorphism
  \begin{alignat*}{1}
    & \Tm(\Pi(A,B)) \cong ((a : \Tm'(A)) \to \Tm(B(a))).
    \tag*{\defiEnd{}}
  \end{alignat*}
\end{defi}

\begin{defi}
  The structure of \defemph{first-order $\Pi$-types} in a family $(\Ty,\Tm)$ consists of a restricted family $\RepTy \to \Ty$, along with $\Pi$-types in $(\Ty,\Tm)$ with arities in $\RepTy$.
  \defiEnd{}
\end{defi}

The intuition here is that $\Ty$ is the family of first-order types, while $\RepTy$ is its restricted family of zeroth-order types.
The domain of a first-order $\Pi$-type has to be a zeroth-order $\Pi$-type.
Elements of $\RepTy$ will also be called \emph{representable types}, since they will typically be interpreted as locally representable dependent presheaves.
We sometimes use $\Pi_{\rep}$ to refer to the first-order $\Pi$-types.

We use these first-order $\Pi$-types in the definition of second-order generalized algebraic theories; in a second-order theory, the domain of an operation can be any first-order type.

\begin{exa}\label{exa:psh_reppi}
  A presheaf topos $\CPsh(\CC)$ is equipped with first-order $\Pi$-types, where the representable types are the locally representable dependent presheaves.
  The first-order $\Pi$-types could be defined to be the usual $\Pi$-types of the presheaf topos, but there is also an alternative definition that relies on the local representability of the domain.
  Indeed, if $X$ is a presheaf, $Y$ is a dependent presheaf over $X$ and $Z$ is a dependent presheaf over $\Sigma_{X}Y$, we can pose
  \begin{alignat*}{3}
    & {\Pi(Y,Z)}_{\Gamma}(x) && \triangleq{ } && Z_{(\gamma:\Gamma).Y(x(\gamma))}(\lambda (\gamma,-) \mapsto x(\gamma)).
  \end{alignat*}
  In the simply-typed case, this was first observed by \textcite{SemanticalAnalysisHOAS}.

  Because the two definitions satisfy the same universal property, they are interchangeable.
  However the alternative definition gives a first-order algebraic presentation of the categories of models of algebraic theories with binders, ensuring that the category of models is locally finitely presentable and the existence of initial models.
  \defiEnd{}
\end{exa}

\subsubsection{Telescopes}\label{sssec:cwfs_telescopes}

Given any family $(\Ty,\Tm)$, we can consider the family $(\Ty^{\star},\Tm^{\star})$ of \emph{telescopes};
the notation $\Ty^{\star}$ is inspired from the notation $A^{\star}$ for the set of lists of elements of a set $A$.
The elements of $\Ty^{\star}$ are finite dependent sequences
\[ (A_{1} : \Ty, A_{2} : \Tm(A_{1}) \to \Ty, A_{3} : (a_{1} : \Tm(A_{1})) \to (a_{2} : \Tm(A_{2}(a_{1}))) \to \Ty, \dotsc) \]
of types, and elements of $\Tm^{\star}(A)$ are sequences
\[ (a_{1} : \Tm(A_{1}), a_{2} : \Tm(A_{2}(a_{1})), a_{3} : \Tm(A_{3}(a_{1},a_{2})), \dotsc) \]
of terms of the types of the sequence $A$.

\begin{defi}
  The family $(\Ty^{\star},\Tm^{\star})$ is defined by induction-recursion as follows:
  \begin{alignat*}{3}
    & \Ty^{\star} && :{ } && \UPsh, \\
    & \Tm^{\star} && :{ } && \Ty^{\star} \to \UPsh, \\
    & \diamond && :{ } && \Ty^{\star}, \\
    & \Tm^{\star}(\diamond) && \triangleq{ } && \top, \\
    & \_{}. \_{} && :{ } && (A : \Ty^{\star}) \to (\Tm^{\star}(A) \to \Ty^{\star}) \to \Ty^{\star}, \\
    & \Tm^{\star}(A. B) && \triangleq{ } && (a : \Tm^{\star}(A)) \times \Tm(B(a)).
    \tag*{\defiEnd{}}
  \end{alignat*}
\end{defi}

The family of telescopes can be equipped with (strictly associative and unital) $\Sigma$-types, given by concatenation of the sequences of types.
When the base family $(\Ty,\Tm)$ has $\Sigma$-types, there is a family morphism $\Ty^{\star} \to \Ty$ that interprets telescopes as (either left-nested or right-nested) iterated $\Sigma$-types.

\subsection{Categories with Families}

We now return to an external setting.
The internal notions of type-theoretic structures yield external notions of models equipped with these type-theoretic structures.
More precisely, these models are categories with families (CwFs,\cite{InternalTypeTheory,CwFsUSD}).

Note that for most of this paper, CwFs are not directly used as the notion of model of type theory, but rather as worlds in which the notion of model of type theory can be interpreted.
In other words, they do not correspond to the object theories we are interested in, but rather to logical frameworks in which the object theories can be specified and interpreted.
Accordingly, while we study arbitrary object theories, the CwFs will only be equipped with a handful of structures ($\Sigma$-types, (first-order) $\Pi$-types, and some identity types).
These correspond~\parencite{BiequivalenceLCCC} to well-known classes of structured categories, such as clans, finitely complete categories, representable map categories, locally cartesian closed categories, \etc{}

\begin{defi}
  A \defemph{category with families} (CwF) $\CC$ is a category, equipped with a terminal object, along with a global family $(\Ty_{\CC}, \Tm_{\CC})$ with representable elements in $\CPsh(\CC)$.
  \defiEnd{}
\end{defi}
We have a locally finitely presentable $1$-category $\CCwf$ of CwFs and strict CwF morphisms.

\begin{defi}
  A $\lexpre$-CwF is a CwF whose family is equipped with $\Unit$- and $\Sigma$- types.
  \defiEnd{}
\end{defi}
We write $\CCwf_{\lexpre}$ for the $1$-category of $\lexpre$-CwFs.

\begin{defi}
  A $(\reppre)$-CwF is a CwF equipped with:
  \begin{itemize}
    \item A restriction $\RepTy \to \Ty$ inducing a family of \defemph{representable types} (or first-order types).
    \item First-order $\Pi$-types with respect to $\RepTy \to \Ty$.
    \item Along with $\Unit$- and $\Sigma$- type structures over the families $\Ty$ and $\RepTy$.
          They do not have to be strictly preserved by $\RepTy \to \Ty$ (but they are automatically preserved up to isomorphism).
          \defiEnd{}
  \end{itemize}
\end{defi}
We write $\CCwf_{\reppre}$ for the $1$-category of $(\reppre)$-CwFs and strict morphisms.

\begin{exa}
  Any presheaf category $\psh{\CC}$ is equipped with the structure of a $(\reppre)$-CwF where:
  \begin{itemize}
  \item The types are the dependent presheaves.
  \item The representable types are the locally representable dependent presheaves.
  \item The first-order $\Pi$-types are defined as in~\cref{exa:psh_reppi}.
    \defiEnd{}
  \end{itemize}
\end{exa}




\subsection{Identity types}

We now recall the definitions of some classes of identity types.
We only use identity types with an elimination rule \emph{à la Paulin-Mohring}, also called based path induction.
We only work with \emph{weak} identity types, whose computation rule only holds up to a path.

We use both strictly stable and weakly stable variants of the identity type.
In presence of either variant, we have well-behaved notions of contractibility, equivalence, transport, \etc that we don't explicitely introduce.

\subsubsection{Weak identity types}

\begin{defi}[Weak identity types]\label{def:weak_identity_types}
  The structure of \defemph{weak identity types} over an internal family $(\Ty,\Tm)$ consists of four components $\Id$, $\refl$, $\J$, $\Jbeta$ with the following signature:
  \begin{alignat*}{3}
    & \Id && :{ } && \forall (A : \Ty)\ (x,y : \Tm(A)) \to \Ty, \\
    & \refl && :{ } && \forall (A : \Ty)\ (x : \Tm(A)) \to \Tm(\Id(A,x,x)), \\
    & \J && :{ } && \forall (A : \Ty)\ (x : \Tm(A)) \\
    &&&&& \phantom{\forall} (P : (y:\Tm(A))(p:\Tm(\Id(A,x,y))) \to \Ty)\ (d : \Tm(P(x,\refl(x)))) \\
    &&&&& \to \forall y\ p \to \Tm(P(y,p)), \\
    & \Jbeta && :{ } && \forall A\ x\ P\ d \to \Tm(\Id(P(x,\refl(x)), \J(A,x,P,d,x,\refl(x)), d)).
    \tag*{\defiEnd{}}
  \end{alignat*}
\end{defi}

Once a CwF has weak identity types, many notions can be derived, such as composition of paths, the action on paths of a function, the notion of contractibility, \etc.
They can be defined mostly in the same way as in the HoTT book~\parencite{hottbook}, although some additional effort is needed to deal with the absence of the strict $\beta$-rule for $\J$ and with the lack of $\Sigma$- and $\Pi$- types.

\subsubsection{Weakly stable identity types}

We will only consider weakly stable identity types with a weak computation rule.
We fix a base CwF $\CC$.

\begin{defi}[Weakly stable identity types]
 A \defemph{$\Id$-introduction context} is a triple $(\Gamma,A,x)$, where
  \begin{alignat*}{3}
    & \Gamma && :{ }
    && \Ob_{\CC}, \\
    & A && \Colon{ }
    && \yo(\Gamma) \to \Ty, \\
    & x && \Colon{ }
    && (\gamma : \yo(\Gamma)) \to \Tm(A(\gamma)).
  \end{alignat*}
  Here $\Gamma$ is an object of $\CC$, and $A$ and $x$ are types and terms that only depend on $\Gamma$.

  A \defemph{weakly stable identity type introduction structure} consists, for every $\Id$-introduction context $(\Gamma,A,x)$, of operations
  \begin{alignat*}{3}
    & \Id_{(\Gamma,A,x)} && \Colon{ }
    && \forall (\gamma : \yo(\Gamma)) (y : \Tm(A(\gamma))) \to \Ty, \\
    & \refl_{(\Gamma,A,x)} && \Colon{ }
    && \forall (\gamma : \yo(\Gamma)) \to \Tm(\Id_{(\Gamma,A,x)}(\gamma,x)).
  \end{alignat*}

  A \defemph{$\Id$-elimination context} over an $\Id$-introduction context $(\Gamma,A,x)$ is a tuple $(\Delta,\gamma,P,d)$, where
  \begin{alignat*}{3}
    & \Delta && :{ } && \Ob_{\CC}, \\
    & \gamma && :{ } && \Delta \to \Gamma, \\
    & P && \Colon{ } && \forall (\delta : \yo(\Delta)) (y : \Tm(A(\gamma(\delta)))) (p : \Tm(\Id_{(\Gamma,A,x)}(\gamma(\delta), y))) \to \Ty, \\
    & d && \Colon{ } && \forall (\delta : \yo(\Delta)) \to \Tm(P(\delta, x(\gamma(\delta)), \refl_{(\Gamma,A,x)}(\gamma(\delta), x(\gamma(\delta))))).
  \end{alignat*}

  A \defemph{weakly stable identity type elimination structure} consists, for every $\Id$-elimination context $(\Delta,\gamma,P,d)$ over $(\Gamma,A,x)$, of operations
  \begin{alignat*}{3}
    & \J_{(\Gamma,A,x,\Delta,\gamma,P,d)} && \Colon{ }
    && \forall (\delta : \yo(\Delta)) (y : \Tm(A(\gamma(\delta)))) (p : \Tm(\Id_{(\Gamma,A,x)}(\gamma(\delta), y))) \to \Tm(P(\delta,y,p)), \\
    & \Jbeta_{(\Gamma,A,x,\Delta,\gamma,P,d)} && \Colon{ }
    && \forall (\delta : \yo(\Delta)) \to
       \Tm(\Id_{(\Delta,P',d)}
       (\delta, \J_{(\Gamma,A,x,\Delta,\gamma,P,d)}(\delta, x(\gamma(\delta)), \refl_{(\Gamma,A,x)}(\gamma(\delta))))), \\
    & P'(\delta') && \triangleq{ }
    && P(\delta', x(\delta'), \refl_{\Gamma}(\gamma(\delta'), x(\delta'))).
  \end{alignat*}

  A \defemph{weakly stable identity type structure} consists of introduction and elimination structures.
  \defiEnd{}
\end{defi}

\begin{defi}\label{def:weakly_stable_contr}
  Let $\CC$ be a CwF that is equipped with weakly stable identity types.
  Given a type $A \Colon \yo(\Gamma) \to \Ty_\CC$, the set $\isContr(A)$ of witnesses of contractibility of $A$ is defined as
  \[ \isContr(A) \triangleq (\forall \gamma \to \Tm_\CC(A)) \times (\forall \gamma\ (x,y : \Tm_\CC(A)) \to \Tm_\CC(\Id_{(\Gamma.A, A)}((\gamma,x),y))).
    \tag*{\defiEnd{}}
  \]
\end{defi}

\begin{defi}
  Let $\CC$ be a $(\reppre)$-CwF that is also equipped with weakly stable identity types.
  We say that $\CC$ satisfies \defemph{function extensionality} if for every $\Gamma : \Ob_{\CC}$, $A \Colon \yo(\Gamma) \to \RepTy_{\CC}$, $B \Colon (\gamma : \yo(\Gamma)) \to \Tm_{\CC}(A(\gamma)) \to \Ty_{\CC}$ and $f \Colon (\gamma : \yo(\Gamma)) \to \Tm_{\CC}((a : A(\gamma)) \to B(\gamma,a))$, the type
  \[ (g : (a : A(\gamma)) \to B(\gamma,a)) \times ((a : A(\gamma)) \to \Id_{((\gamma':\Gamma).(a':A(\gamma')), B(\gamma',a'), f(\gamma',a'))}((\gamma,a),g(a))) \]
  is contractible (over $(\gamma:\Gamma)$).
  \defiEnd{}
\end{defi}

We write $\CCwf_{\repinfty}$ for the category of $(\reppre)$-CwFs equipped with weakly stable identity types that satisfy function extensionality.
A $(\repinfty)$-CwF can be thought of as an $\infty$-category with representable maps.


\section{Second-order generalized algebraic theories}\label{sec:sogats}

We introduce our definition of second-order generalized algebraic theory (SOGAT), which are algebraic theories with dependent sorts and bindings.
It is closely related to Uemura's general definition of type theory with functorial semantics in representable map categories~\parencite{UemuraFramework}; a large part of the material presented in this section can be found in Uemura's work, with a different presentation.
We call these theories SOGATs rather than type theories to emphasize that we also consider theories that are not usually seen as type theories, such as the (first-order) generalized algebraic theory of categories.
We note that Uemura uses the term SOGAT to refer to syntactic presentations of representable map categories in his thesis~\parencite{UemuraThesis}.

Our definition differs from Uemura's definition in the following ways:
\begin{itemize}
  \item Uemura's representable map categories have all finite limits.
        This means that they generalize essentially algebraic theories (EATs) rather than generalized algebraic theories (GATs).
        Essentially algebraic theories do not have dependent sorts, but allow for partial operations instead.
        Any generalized algebraic theory induces an essentially algebraic theory with an equivalent category of models, but this translation loses information about the sort dependencies.
        This information is important; for example it equips the category of models of a generalized algebraic theory with notions of cofibrations and trivial fibrations (see~\cref{ssec:trivial_fibrations_sogats}).

  \item We use $(\reppre)$-CwFs instead of representable map categories.
        This is partially a matter of preference, as $(\reppre)$-CwFs ought to be equivalent to categories with classes of representable maps and display maps (``representable map clans'').
        One advantage of our approach is that freely generated $(\reppre)$-CwFs are perhaps easier to understand syntactically, since they are themselves the initial models of some type theories.
        Furthermore, we may embed $(\reppre)$-CwFs into CwFs with additional structure.
        In particular we will consider $(\repinfty)$-CwFs, which should correspond to some notion of representable map $\infty$-categories.
        It seems possible to observe both homotopical and computational properties of the theories using $(\repinfty)$-CwFs, while computational properties are not always easily observable with $\infty$-categories (depending on the chosen model of $\infty$-categories).

  \item We prefer to work with the $1$-category of $(\reppre)$-CwFs and strict $(\reppre)$-CwF morphisms, instead of the $(2,1)$-category of $(\reppre)$-CwFs and pseudo-morphisms.
        Similarly, we prefer to work with its $1$-category of models and strict morphisms, rather than the $(2,1)$-category of models and weak morphisms.
        One of the reason is that we consider factorization systems and semi model structures on these categories, which are easier to understand in the $1$-categorical setting.
        This does not play an important role in this paper, as we work almost exclusively with the coclassifying $(\reppre)$-CwF of the theory, without considering morphisms between other models.
\end{itemize}

\subsection{Definition and functorial semantics}\label{ssec:semantics_sogats}

\begin{defi}\label{def:sogat}
  A \defemph{second-order generalized algebraic theory} (SOGAT) is an $\{I^{\ty},I^{\repty},I^{\tm},E^{\tm}\}$-cellular $(\reppre)$-CwF $\Th$, where the maps $\{I^{\ty},I^{\repty},I^{\tm},E^{\tm}\}$ are the generic extensions of $(\reppre)$-CwFs by a type, representable type, term or term equality:
  \begin{alignat*}{3}
    & I^{\ty} && :{ }
    && \Free_{\reppre}(\bm{\Gamma} \vdash) \to \Free_{\reppre}(\bm{\Gamma} \vdash \bm{A}\ \type),
    \\
    & I^{\repty} && :{ }
    && \Free_{\reppre}(\bm{\Gamma} \vdash) \to \Free_{\reppre}(\bm{\Gamma} \vdash \bm{A}\ \reptype),
    \\
    & I^{\tm} && :{ }
    && \Free_{\reppre}(\bm{\Gamma} \vdash \bm{A}\ \type) \to \Free_{\reppre}(\bm{\Gamma} \vdash \bm{a} : \bm{A}),
    \\
    & E^{\tm} && :{ }
    && \Free_{\reppre}(\bm{\Gamma} \vdash \bm{x},\bm{y} : \bm{A}) \to \Free_{\reppre}(\bm{\Gamma} \vdash \bm{x} = \bm{y}).
       \tag*{\defiEnd{}}
  \end{alignat*}
\end{defi}
In other words, a SOGAT is a presentation of a $(\reppre)$-CwF by collections of generating types, generating representable types, generating terms and generating equations between terms.
We will write these generators using a $\inner{red,bold}$ font.
In practice, a SOGAT is given by a signature, and the $(\reppre)$-CwF $\Th$ is reconstructed from the signature.
We keep the notion of signature informal in this paper; a formal definition of signature can be given by modifying the definition of QIIT-signature of \textcite{ConstructingQIITs}.
For every generating type, term or equation in a signature, the $(\reppre)$-CwF is extended by pushout against a map in $\{I^{\ty},I^{\repty},I^{\tm},E^{\tm}\}$.

For example, the signature of a pointed dependent type
\begingroup{}\allowdisplaybreaks{}
\begin{alignat*}{3}
  & A && :{ } && \Ty, \\
  & B && :{ } && (a : \Tm(A)) \to \Ty, \\
  & b && :{ } && (a : \Tm(A)) \to \Tm(B(a)).
\end{alignat*}\endgroup{}
gets translated to the following iterated pushout:
\[
\begin{tikzcd}[column sep = 60pt]
  \Free_{\reppre}(\bm{\Gamma} \vdash)
  \ar[d]
  \ar[r, "{\angles{\diamond}}"']
  \ar[rd, phantom, very near end, "\ulcorner"]
  & \Init_{\reppre}
  \ar[d]
  &
  \\ \Free_{\reppre}(\bm{\Gamma} \vdash \bm{A}\ \type)
  \ar[r, "{\angles{\diamond, A}}"']
  & \Init_{\reppre}[A]
  \ar[d]
  & \Free_{\reppre}(\bm{\Gamma} \vdash)
  \ar[d]
  \ar[l, "{\angles{(a:A)}}"]
  \ar[ld, phantom, very near end, "\urcorner"]
  \\ \Free_{\reppre}(\bm{\Gamma} \vdash \bm{A}\ \type)
  \ar[d]
  \ar[r, "{\angles{(a:A), B(a)}}"']
  & \Init_{\reppre}[A,B]
  \ar[d]
  & \Free_{\reppre}(\bm{\Gamma} \vdash \bm{A}\ \type)
  \ar[l, "{\angles{(a:A), B(a)}}"]
  \\ \Free_{\reppre}(\bm{\Gamma} \vdash \bm{a} : \bm{A}\ \type)
  \ar[r, "{\angles{(a:A), B(a), b(a)}}"']
  & \Init_{\reppre}[A,B,b] \rlap{\ .}
  &
\end{tikzcd}
\]

Our running examples will be the first-order generalized algebraic theory $\Th_{\CCat}$ of categories and the type theory $\Th_{\Id}$ of weak identity types.
\begin{exa}
  The (first-order) generalized algebraic theory of categories $\Th_{\CCat}$ is given by the following signature:
  \begingroup{}\allowdisplaybreaks{}
  \begin{alignat*}{3}
    & \iob && :{ } && \Ty \\
    & \iOb && \triangleq{ } && \Tm(\iob) \\
    & \ihom && :{ } && \iOb \to \iOb \to \Ty \\
    & \iHom(x,y) && \triangleq{ } && \Tm(\ihom(x,y)) \\
    & \ieqhom && :{ } && \forall x\ y \to \iHom(x,y) \to \iHom(x,y) \to \Ty \\
    & \iEqHom(f,g) && \triangleq{ } && \Tm(\ieqhom(f,g)) \\
    & \iid && :{ } && \forall x \to \iHom(x,x) \\
    & \icomp && :{ } && \forall x\ y\ z \to \iHom(x,y) \to \iHom(y,z) \to \iHom(x,z) \\
    & f \icirc g && \triangleq{ } && \icomp(g,f) \\
    & \irefl && :{ } && \forall x\ y\ \to (f : \iHom(x,y)) \to \iEqHom(f,f) \\
    &&&&& \iid \icirc f = f \\
    &&&&& f \icirc \iid = f \\
    &&&&& (f \icirc g) \icirc h = f \icirc (g \icirc h) \\
    &&&&& \forall x\ y\ f\ g \to (p,q : \iEqHom(f,g)) \to p = q \\
    &&&&& \iEqHom(f,g) \to f = g
  \end{alignat*}\endgroup{}
  We use the capitalized $\iOb$, $\iHom$, $\iEqHom$ to denote the elements of the sorts $\iob$, $\ihom$ and $\ieqhom$.
  We also use $\_{} \icirc \_{}$ as an infix notation for composition.

  Note that including the sort $\ieqhom$ of equalities between morphisms does not change the categories of models of $\Th_{\CCat}$.
  However it has to be included in order to determine the correct ``language of categories''.
  In our setting, including this sort is needed to equip $\Th_{\CCat}$ with homotopy relations in~\cref{ssec:homotopy_relations}; isomorphisms cannot be defined without mentioning equality of morphisms.
  \defiEnd{}
\end{exa}

\begin{exa}
  The SOGAT $\Th_{\CCwf}$ of a family with representable elements is given by the following signature:
  \begin{alignat*}{3}
    & \ity && :{ } && \Ty  \\
    & \iTy && \triangleq{ } && \Tm(\ity) \\
    & \itm && :{ } && \iTy \to \RepTy \\
    & \iTm(A) && \triangleq{ } && \Tm(\itm(A))
  \end{alignat*}

  Type-theoretic structures ($\Sigma$, $\Pi$, \etc{}) can be specified by extensions of this signature by new operations and equations.
  In particular, the theory $\Th_{\Id}$ of weak identity types is the extension of $\Th_{\CCwf}$ by the new operations $(\iId,\irefl,\iJ,\iJb)$ with the signature given in~\cref{def:weak_identity_types}.
  \defiEnd{}
\end{exa}

For the remainder of this section, we fix an arbitrary SOGAT $\Th$.

We now briefly recall the main definitions of the functorial semantics of $\Th$; we refer the reader to~\cite{UemuraFramework} for further details.
The main results of this paper only involve the syntax of $\Th$; but are motivated by the semantics.

\begin{defi}
  An \defemph{internal model} of $\Th$ in a $(\reppre)$-CwF $\CC$ is a $(\reppre)$-CwF morphism
  \[ \MC : \Th \to \CC.
    \tag*{\defiEnd{}}
  \]
\end{defi}
When unambiguous, we will write $X_{\CC}$, $A_{\CC}$, $a_{\CC}$, \etc{} instead of $\MC(X)$, $\MC(A)$, $\MC(a)$, \etc{} for the application of the $(\reppre)$-CwF morphism $\MC$ on objects, morphisms, types and terms.

By the universal property of $\Th$, an internal model in $\CC$ is uniquely determined by the image of the generators of $\Th$, that is by an interpretation of the signature $\Th$ in $\CC$.

We have a locally finitely presentable $1$-category $(\CCwf_{\reppre} \backslash \Th)$ of $(\reppre)$-CwFs equipped with an internal model of $\Th$.
The identity morphism $\id : \Th \to \Th$ equips $\Th$ with the structure of an internal model, called the \emph{generic model} of $\Th$.
It is also the initial object of $(\CCwf_{\reppre} \backslash \Th)$).

\begin{defi}
  A \defemph{model} of $\Th$ consists of a category $\CC$ with a terminal object, along with an internal model of $\Th$ in the $(\reppre)$-CwF $\widehat{\CC}$, that is a $(\reppre)$-CwF morphism $\MC : \Th \to \widehat{\CC}$.
  \defiEnd{}
\end{defi}

\begin{defi}
  A \defemph{weak morphism} $F$ of models of $\Th$ consists of a functor $F : \CC \to \CD$ such that:
  \begin{itemize}
    \item The functor $F$ weakly preserves terminal objects.
    \item For every object $X : \Th$, we have a transformation
          \[ F^{X} : (\Gamma : \CC^{\op}) \to (\yo(\Gamma) \to X_{\CC}) \to (\yo(F(\Gamma)) \to X_{\CD}), \]
          contravariantly natural in $\Gamma$.
    \item For every morphism $\alpha : X \to Y$, the following square commutes
          \[ \begin{tikzcd}
              (\yo(\Gamma) \to X_{\CC})
              \ar[d, "(\alpha_{\CC} \circ -)"]
              \ar[r, "F^{X}"] &
              (\yo(F(\Gamma)) \to X_{\CD})
              \ar[d, "(\alpha_{\CD} \circ -)"] \\
              (\yo(\Gamma) \to Y_{\CC})
              \ar[r, "F^{Y}"] &
              (\yo(F(\Gamma)) \to Y_{\CD})
          \end{tikzcd} \]
    \item Remark that we obtain, for every object $X : \Th$ and type $A \Colon \yo(X) \to \Ty_{\Th}$, a natural transformation
          \begin{alignat*}{3}
            & F^{A} && :{ } && (\Gamma : \CC^{\op}) \to (x : \yo(\Gamma) \to X_{\CC}) \to (a : (\gamma : \yo(\Gamma)) \to A_{\CC}(x(\gamma))) \\
            &&&&& \to ((\gamma : \yo(F(\Gamma))) \to A_{\CD}(F^{X}(x)(\gamma)))
          \end{alignat*}
          such that $F^{X.A}(x,a) = (F^{X}(x),F^{A}(x,a))$.
    \item Context extensions are weakly preserved: for every object $X : \Th$, representable type $A \Colon \yo(X) \to \RepTy_{\Th}$, object $\Gamma : \CC$ and element $x \Colon \yo(\Gamma) \to X_{\CC}$, the comparison map
          \[ \angles{F(\bm{p}),F^{A}(\bm{q})} : F((\gamma:\Gamma).A_{\CC}(x(\gamma))) \to (\gamma:F(\Gamma)).A_{\CD}(F^{X}(x)(\gamma)) \]
          is an isomorphism.
  \end{itemize}

  A morphism is \defemph{strict} if the terminal object and context extensions are strictly preserved.
  \defiEnd{}
\end{defi}

\begin{defi}
  A $2$-cell between two weak morphisms $F,G : \CC \to \CD$ of models of $\Th$ consists of a natural isomorphism $\alpha : F \cong G$, such that:
  \begin{itemize}
    \item For every object $X : \Th$, context $\Gamma : \CC$, elements $x \Colon \yo(\Gamma) \to X_{\CC}$ and $\gamma : \yo(F(\Gamma))$, we have $F^{X}(x,\gamma) = G^{X}(x,\alpha_{\Gamma}(\gamma))$.
          \defiEnd{}
  \end{itemize}
\end{defi}

We have a $(2,1)$-category of models, weak morphisms and $2$-cells, and a $1$-category $\CMod_{\Th}$ of models and strict morphisms.
We will mainly work with the $1$-category $\CMod_{\Th}$.
The category $\CMod_{\Th}$ is locally finitely presentable; in particular we have an initial model $\Init_{\Th}$ and more general freely generated models.

\subsection{Structure of the types of a SOGAT}

We write $\GenTy_{\Th}$ for the set of generating types of $\Th$; it can be obtained from the presentation of $\Th$ as an $\{I^{\ty},I^{\repty},I^{\tm},E^{\tm}\}$-cellular $(\reppre)$-CwF.
For every $\iS : \GenTy_{\Th}$, we have an object $\partial \iS : \Th$ and a type $\iS : \yo(\partial \iS) \to \Ty_{\Th}$.

We also have a subset $\GenRepTy_{\Th} \subseteq \GenTy_{\Th}$ of generating representable types of $\Th$.

For example, $\GenTy_{\Th_{\CCat}} = \{\iob,\ihom\}$ with $\partial \iob = \Unit$ and $\partial \ihom = \iob \times \iob$.
For $\Th_\Id$, we have $\GenTy_{\Th_{\Id}} = \{\ity,\itm\}$ and $\GenRepTy_{\Th_{\Id}} = \{\itm\}$, with $\partial \ity = \Unit$ and $\partial \itm = \ity$.

Because a SOGAT cannot contain any equations between sorts, the types of $\Th$ can all be reconstructed by closing the generating types under substitution and the type-formers $\Sigma$, $\Unit$ and $\Pi$.
We can consider the same closure in arbitrary internal models of $\Th$.
Furthermore we stratify these types into basic types (obtained by closing the generating types under substitution), the monomial types (obtained by closing the basic types under dependent products with arities in representable types) and the polynomial types (``sums of products'', obtained by closing the monomial types under dependent sums).

\begin{defi}\label{def:sogat_named_families}
  Let $\CC$ be a $(\reppre)$-CwF equipped with an internal model of $\Th$.
  We define families $\BTy_{\CC}$, $\MonoTy_{\CC}$ and $\PolyTy_{\CC}$ that are restrictions of $\Ty_{\CC}$, and a restricted family $\BRepTy_{\CC} \to \RepTy_{\CC}$.
  We work internally to $\CPsh(\CC)$.

  \begin{itemize}
    \item A \defemph{basic type} $\iS(\sigma) : \BTy_{\CC}$ consists of $\iS : \GenTy_{\Th}$ and $\sigma : \Tm_{\CC}(\partial \iS_{\CC})$.

          The corresponding type in $\Ty_{\CC}$ is $\iS_{\CC}(\sigma)$.
    \item A \defemph{basic representable type} $\iS(\sigma) : \BRepTy_{\CC}$ consists of $\iS : \GenRepTy_{\Th}$ and $\sigma : \Tm_{\CC}(\partial \iS)$.
    \item A \defemph{monomial type} $[\Delta \vdash A] : \MonoTy_{\CC}$ consists of a telescope $\Delta : \BRepTy^{\star}_{\CC}$ of basic representable types, along with a dependent basic type $A : \Tm^{\star}_{\CC}(\Delta) \to \BTy_{\CC}$.
          The corresponding type in $\Ty_{\CC}$ is an iterated first-order $\Pi$-type.
    \item A \defemph{polynomial type} is a telescope of monomial types: $\PolyTy_{\CC} \triangleq \MonoTy_{\CC}^{\star}$.
          The corresponding type in $\Ty_{\CC}$ is obtained as an iterated $\Sigma$-type.
          \defiEnd{}
  \end{itemize}
\end{defi}

We also define the closure $\Clos_{\reppre}(\BTy_{\CC})$ of basic types under $\Unit$-, $\Sigma$- and first-order $\Pi$- types.
\begin{defi}
  We define restricted families $\Clos_{\reppre}(\BTy_{\CC}) \to \Ty_{\CC}$ and $\Clos_{\lexpre}(\BRepTy_{\CC}) \to \RepTy_{\CC}$ by induction-recursion (internally to $\CPsh(\CC)$).
  \begingroup{}\allowdisplaybreaks{}
  \begin{alignat*}{3}
    & \Clos_{\lexpre}(\BRepTy_{\CC}) && :{ } && \UU, \\
    & \iota_{\rep} && :{ } && \Clos_{\lexpre}(\BRepTy_{\CC}) \to \RepTy_{\CC}, \\
    & \Clos_{\reppre}(\BTy_{\CC}) && :{ } && \UPsh, \\
    & \iota && :{ } && \Clos_{\reppre}(\BTy_{\CC}) \to \Ty_{\CC}.
  \end{alignat*}\endgroup{}
  The family $\Clos_{\lexpre}(\BRepTy_{\CC})$ has constructors $\tau_{\rep} : \BRepTy_{\CC} \to \Clos_{\lexpre}(\BRepTy_{\CC})$, $\Unit$ and $\Sigma$, with $\iota_{\rep}(\tau_{\rep}(A)) = A$, $\iota_{\rep}(\Unit) = \Unit$ and $\iota_{\rep}(\Sigma(A,B)) = \Sigma(\iota_{\rep}(A), \lambda a \mapsto \iota_{\rep}(B(a)))$.
  Similarly, the family $\Clos_{\reppre}(\BTy_{\CC})$ has constructors $\tau : \BTy_{\CC} \to \Clos_{\reppre}(\BTy_{\CC})$, $- : \RepTy_\CC \to \Clos_{\reppre}(\BTy_{\CC})$, $\Unit$, $\Sigma$ and $\Pi_{\rep}$ that are preserved by $\iota$.
  \defiEnd{}
\end{defi}

\begin{prop}\label{prop:type_iso_polytype}
  The canonical maps $\PolyTy_{\CC} \to \Clos_{\reppre}(\BTy_{\CC})$ and $\BRepTy^{\star}_{\CC} \to \Clos_{\lexpre}(\BRepTy_{\CC})$ are essentially surjective: for every $A : \Clos_{\reppre}(\BTy_{\CC})$, there is some $A_{0} : \PolyTy_{\CC}$ such that $\Tm_{\CC}(A_{0}) \simeq \Tm_{\CC}(A)$; and for every $A : \Clos_{\lexpre}(\BRepTy_{\CC})$, there is some $A_{0} : \BRepTy^{\star}_{\CC}$ such that $\Tm_{\CC}(A_{0}) \simeq \Tm_{\CC}(A)$.
\end{prop}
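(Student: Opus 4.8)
The plan is to prove essential surjectivity of the two canonical maps by mutual induction on the structure of elements of $\Clos_{\reppre}(\BTy_{\CC})$ and $\Clos_{\lexpre}(\BRepTy_{\CC})$. Since these families were defined inductively (with constructors $\tau$, $\tau_{\rep}$, the coercion $- : \RepTy_{\CC} \to \Clos_{\reppre}(\BTy_{\CC})$, $\Unit$, $\Sigma$, and $\Pi_{\rep}$), it suffices to produce, for each constructor, a polynomial type (resp.\ a telescope of basic representable types) whose type of terms is equivalent to the one named by that constructor, using the induction hypotheses on subterms. The whole argument takes place internally to $\CPsh(\CC)$, and ``equivalent'' here means equipped with mutually inverse-up-to-path functions in the sense available once one has the identity types coming with the ambient $(\reppre)$-CwF structure — but in fact, since $\CPsh(\CC)$ has extensional equality, we may even construct genuine isomorphisms, and the equivalences $\Tm_{\CC}(A_{0}) \simeq \Tm_{\CC}(A)$ follow a fortiori.

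First I would handle the representable case $\BRepTy^{\star}_{\CC} \to \Clos_{\lexpre}(\BRepTy_{\CC})$, which is the simpler of the two and is needed as a sub-ingredient. Given $A : \Clos_{\lexpre}(\BRepTy_{\CC})$, I recurse: for $\tau_{\rep}(A)$ with $A : \BRepTy_{\CC}$ take the one-element telescope $(\diamond.A)$; for $\Unit$ take the empty telescope $\diamond$, using $\Tm^{\star}_{\CC}(\diamond) = \top \cong \Tm_{\CC}(\Unit)$; for $\Sigma(A,B)$ apply the induction hypothesis to $A$ to get a telescope $A_{0}$ with $\Tm^{\star}_{\CC}(A_{0}) \simeq \Tm_{\CC}(\iota_{\rep}(A))$, transport $B$ along this equivalence, apply the induction hypothesis pointwise to obtain telescopes $B_{0}(a)$, and concatenate $A_{0}$ with $B_{0}$ — the telescope family has strictly associative and unital $\Sigma$-types given by concatenation (as recalled in \cref{sssec:cwfs_telescopes}), and $\Tm^{\star}_{\CC}(A_{0}.B_{0}) = (a : \Tm^{\star}_{\CC}(A_{0})) \times \Tm^{\star}_{\CC}(B_{0}(a))$ matches $\Tm_{\CC}(\Sigma(\iota_{\rep}(A),B))$ up to the evident equivalence. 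The slight subtlety is that concatenating a telescope $B_{0}(a)$ that depends on $a : \Tm^{\star}_{\CC}(A_{0})$ with $A_{0}$ requires reindexing $B$ along the equivalence $\Tm^{\star}_{\CC}(A_{0}) \simeq \Tm_{\CC}(\iota_{\rep}(A))$; this is routine transport.

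Then for $\Clos_{\reppre}(\BTy_{\CC}) \to \PolyTy_{\CC} = \MonoTy_{\CC}^{\star}$ I recurse again. For $\tau(A)$ with $A : \BTy_{\CC}$, take the one-entry polynomial type consisting of the single monomial type $[\diamond \vdash A]$ (empty arity telescope). For the coercion of a representable type $R : \RepTy_{\CC}$, first note $R$ need not be \emph{basic}; but here the element of $\Clos_{\reppre}(\BTy_{\CC})$ is only built from an arbitrary $R : \RepTy_{\CC}$ via the coercion constructor, and — wait, this is the real obstacle: the constructor $- : \RepTy_{\CC} \to \Clos_{\reppre}(\BTy_{\CC})$ injects \emph{all} representable types, not just those in the image of $\BRepTy^{\star}_{\CC}$, so a priori $\Tm_{\CC}$ of such a type need not be expressible as a polynomial type. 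The resolution must be that in the intended reading $\RepTy_{\CC}$ here is implicitly $\Clos_{\lexpre}(\BRepTy_{\CC})$ (the representable types generated by the theory), or that the coercion constructor's argument ranges over $\Clos_{\lexpre}(\BRepTy_{\CC})$; under that reading I apply the first part to replace $R$ by a telescope $\Delta_{0} : \BRepTy^{\star}_{\CC}$ with $\Tm^{\star}_{\CC}(\Delta_{0}) \simeq \Tm_{\CC}(R)$, and then the monomial type $[\Delta_{0} \vdash \Unit]$ — no, rather a polynomial type all of whose "product" parts are trivial and whose term-set is $\Tm^{\star}_{\CC}(\Delta_{0})$ — does the job (a telescope of basic representable types is a telescope of monomial types with empty arities). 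For $\Unit$ take the empty polynomial type; for $\Sigma(A,B)$ concatenate polynomial types exactly as in the representable case, again using that $\PolyTy_{\CC} = \MonoTy_{\CC}^{\star}$ carries strictly associative concatenation $\Sigma$-types. The genuinely interesting case is $\Pi_{\rep}(A,B)$ where $A : \Clos_{\lexpre}(\BRepTy_{\CC})$ and $B : \Tm_{\CC}(\iota_{\rep}(A)) \to \Clos_{\reppre}(\BTy_{\CC})$: here I use the first part to get $\Delta_{0} : \BRepTy^{\star}_{\CC}$ for $A$, use the induction hypothesis on $B$ (reindexed along $\Tm^{\star}_{\CC}(\Delta_{0}) \simeq \Tm_{\CC}(\iota_{\rep}(A))$) to get, for each point of the arity telescope, a polynomial type $B_{0}$, i.e.\ a telescope of monomial types $[\Theta_{i} \vdash C_{i}]$; then I must show $\Pi$ distributes over $\Sigma$ and over iterated first-order $\Pi$ so that $(a : \Tm^{\star}_{\CC}(\Delta_{0})) \to \Tm^{\star}_{\CC}(B_{0}(a))$ is again of polynomial form. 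Concretely $\Pi_{\rep}$ of a telescope is componentwise, and a first-order $\Pi$ of a first-order $\Pi$ recombines into a single first-order $\Pi$ over the concatenated arity telescope (currying/uncurrying), so $[\Delta_{0}.\Theta_{i} \vdash C_{i}]$ assembled into a polynomial type of the same length as $B_{0}$ works. I expect this $\Pi$-over-$\Sigma$ distributivity bookkeeping — the type-theoretic analogue of "a product of sums of products is a sum of products" — to be the main obstacle: it is conceptually the content of \cref{def:sogat_named_families}'s stratification, and the care needed is in threading the reindexing equivalences correctly through the currying isomorphisms; but each step is an elementary consequence of the universal properties of $\Sigma$-, $\Unit$- and first-order $\Pi$-types recalled in \cref{sssec:basic_tt_structures,sssec:first_order_pi_types}. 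I would close by remarking that the constructed equivalences are natural, so the maps are in fact essentially surjective functors between the relevant (discrete-on-morphisms or groupoidal) families, which is all that is used downstream.
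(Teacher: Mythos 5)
Your proof is correct and follows essentially the same route as the paper, whose entire argument is the one-line observation that $\Sigma$-types are essentially associative and that first-order $\Pi$-types essentially distribute over $\Sigma$-types — exactly the two facts you isolate as the content of the $\Sigma$- and $\Pi_{\rep}$-cases of your induction. Your reading of the coercion constructor (that its argument should range over $\Clos_{\lexpre}(\BRepTy_{\CC})$ rather than all of $\RepTy_{\CC}$, so that the first part of the statement can be applied to it) is the intended one, since otherwise the proposition would fail for a general $(\reppre)$-CwF $\CC$.
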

\begin{proof}
  This follows from the facts that $\Sigma$-types are essentially associative and that (first-order) $\Pi$-types essentially distribute over $\Sigma$-types.
\end{proof}

Since the presentation of $\Th$ does not include any type equation, the types of $\Th$ are exactly the closure of the basic types under $\Unit$, $\Sigma$ and first-order $\Pi$ -types.
\begin{prop}\label{prop:description_ty_coclassifying_model}
  The canonical maps
  \[ \Clos_{\reppre}(\BTy_{\Th}) \to \Ty_{\Th} \]
  and
  \[ \Clos_{\lexpre}(\BRepTy_{\Th}) \to \RepTy_{\Th} \]
  are isomorphisms.
  \qed{}
\end{prop}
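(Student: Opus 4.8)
The plan is to exploit that $\Th$, having a cellular presentation built only from the generating maps $\{I^{\ty},I^{\repty},I^{\tm},E^{\tm}\}$, is the \emph{free} $(\reppre)$-CwF on the signature of $\Th$: strict $(\reppre)$-CwF morphisms $\Th \to \CC$ correspond to interpretations of the signature in $\CC$, and such a morphism out of $\Th$ is unique once the interpretation is fixed. Since the signature contains no equations between types or between representable types, the types of $\Th$ ought to be freely generated by the basic types under $\Unit$, $\Sigma$ and first-order $\Pi$ --- which is exactly what $\Clos_{\reppre}(\BTy_{\Th})$ and $\Clos_{\lexpre}(\BRepTy_{\Th})$ compute. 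The argument will make this precise by producing a two-sided inverse to the canonical restriction maps.

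First I would build a candidate $(\reppre)$-CwF $\tilde\Th$ with the same underlying category, terminal object and context extensions as $\Th$, but with $\Ty_{\tilde\Th} \triangleq \Clos_{\reppre}(\BTy_{\Th})$, $\RepTy_{\tilde\Th} \triangleq \Clos_{\lexpre}(\BRepTy_{\Th})$ and $\Tm_{\tilde\Th}(A) \triangleq \Tm_{\Th}(\iota A)$ (resp. via $\iota_{\rep}$). Its $\Unit$-, $\Sigma$- and $\Pi_{\rep}$-structures are given by the constructors of $\Clos$, the required term-isomorphisms following from the corresponding isomorphisms in $\Th$ together with the fact that $\iota$, $\iota_{\rep}$ preserve these type formers; local representability and context extension are inherited from $\Th$ along $\iota_{\rep}$. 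The restriction maps then assemble into a strict $(\reppre)$-CwF morphism $\widetilde\iota : \tilde\Th \to \Th$ that is the identity on the underlying category and on terms. Because $\widetilde\iota$ is injective on terms it reflects equations, so the signature of $\Th$ can be interpreted in $\tilde\Th$ (send each generating type, resp. representable type, to the corresponding basic one via $\tau$, resp. $\tau_{\rep}$, and each generating term to the inherited term), and the universal property of $\Th$ yields a strict morphism $j : \Th \to \tilde\Th$.

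Now $\widetilde\iota \circ j : \Th \to \Th$ is a strict endomorphism preserving the chosen interpretation of the signature (each generating type $\iS$ is sent to $\iota(\tau\,\iS) = \iS$, etc.), hence equals $\id_{\Th}$ by the uniqueness half of the universal property; in particular $\iota$ and $\iota_{\rep}$ are split epimorphisms, giving surjectivity. For injectivity I would show $j \circ \widetilde\iota = \id_{\tilde\Th}$, i.e. $j|_{\Ty} \circ \iota = \id$ and $j|_{\RepTy} \circ \iota_{\rep} = \id$, by the induction-recursion principle defining $\Clos_{\reppre}(\BTy_{\Th})$ and $\Clos_{\lexpre}(\BRepTy_{\Th})$: on a basic type $\tau(\iS(\sigma))$ one uses that $j$ preserves substitution together with $\widetilde\iota \circ j = \id_{\Th}$ to compute $j(\iota(\tau(\iS(\sigma)))) = j(\iS_{\Th}(\sigma)) = \tau(\iS(\sigma))$; on the closure constructors $\Unit$, $\Sigma$, $\Pi_{\rep}$ and on the coercion $\RepTy_{\Th} \to \Clos_{\reppre}(\BTy_{\Th})$ one uses that $j$ is a $(\reppre)$-CwF morphism together with the inductive hypotheses.

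The main obstacle will be the bookkeeping in that final inductive argument --- in particular making precise the compatibility of $j$ with the boundary substitutions $\sigma$ occurring in basic types $\iS(\sigma)$, and more generally tracking the dependent nature of the $\Sigma$- and $\Pi_{\rep}$-cases through the induction-recursion. The only other non-formal point is checking at the outset that $\tilde\Th$ genuinely is a $(\reppre)$-CwF, namely that $\Tm_{\tilde\Th}(A)$ is locally representable for $A$ a closure of basic representable types and that the inherited context extensions have the right universal property with respect to the new type family; both are immediate from the corresponding data in $\Th$. Everything else is a routine consequence of the universal property of a free $(\reppre)$-CwF and the absence of type equations in the signature.
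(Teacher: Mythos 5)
Your argument is correct in outline, but it takes a genuinely different route from the paper, which omits the proof entirely and appeals to ``a standard normalization argument''. Where a normalization proof would exhibit canonical forms for the types of $\Th$ and show existence and uniqueness of normal forms directly, you instead run the standard ``no junk, no confusion'' argument for free algebras: package $\Clos_{\reppre}(\BTy_{\Th})$ and $\Clos_{\lexpre}(\BRepTy_{\Th})$ into a $(\reppre)$-CwF $\tilde\Th$ over the same base category, use the cellularity of $\Th$ (i.e.\ its universal property as the free $(\reppre)$-CwF on the signature, which is legitimate because morphisms out of a cellular object are exactly compatible interpretations of the cells) to produce a section $j$, get surjectivity from $\widetilde\iota\circ j=\id_{\Th}$ by uniqueness, and injectivity from $j\circ\widetilde\iota=\id$ by induction on $\Clos$. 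This buys you a proof that never needs an analysis of arbitrary types of $\Th$: in the injectivity induction you only ever evaluate $j$ on types already known to be of the form $\iota(X)$, so strictness of $j$ lets you push it through the constructors. Two points deserve more care than your sketch gives them. First, the construction of $j$ and the verification that it agrees with the inclusion on the underlying category (and on boundaries $\sigma$) must be interleaved cell by cell, since the choice $j(\iS)=\tau(\iS(\sigma))$ already presupposes that the boundary computed in $\tilde\Th$ is the boundary in $\Th$; you flag this as bookkeeping, which is fair. Second, and more substantively, your induction step for the coercion constructor $\RepTy_{\Th}\to\Clos_{\reppre}(\BTy_{\Th})$ only closes if the definition of $\Clos_{\reppre}(\BTy_{\Th})$ identifies the coercion of a basic representable type $\iS(\sigma)$, $\iS:\GenRepTy_{\Th}$, with the basic type $\tau(\iS(\sigma))$; otherwise these are two distinct elements with the same image under $\iota$, $j$ cannot simultaneously be strict and send generators to $\tau$ of generators, and the stated isomorphism fails. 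This is a wrinkle in the paper's (informal) definition of $\Clos$ rather than a defect of your strategy, but any written-out version of either proof has to resolve it.
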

We omit the proof; it follows from a standard normalization argument.
This result allows us to use induction over the structure of types of $\Th$.

\begin{cor}
  The canonical maps $\PolyTy_{\Th} \to \Ty_{\Th}$ and $\BRepTy^{\star}_{\Th} \to \RepTy_{\Th}$ are essentially surjective: for every $A : \Ty_{\Th}$, there is some $A_{0} : \PolyTy_{\Th}$ such that $A_{0} \simeq A$; and for every $A : \RepTy_{\Th}$, there is some $A_{0} : \BRepTy^{\star}_{\Th}$ such that $A_{0} \simeq A$.
\end{cor}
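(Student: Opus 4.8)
The plan is to simply chain the two preceding propositions, so the proof will be short. \Cref{prop:description_ty_coclassifying_model} provides isomorphisms $\Clos_{\reppre}(\BTy_{\Th}) \cong \Ty_{\Th}$ and $\Clos_{\lexpre}(\BRepTy_{\Th}) \cong \RepTy_{\Th}$, while \cref{prop:type_iso_polytype}, instantiated at $\CC = \Th$, tells us that the canonical maps $\PolyTy_{\Th} \to \Clos_{\reppre}(\BTy_{\Th})$ and $\BRepTy^{\star}_{\Th} \to \Clos_{\lexpre}(\BRepTy_{\Th})$ are essentially surjective. I would first observe that the canonical map $\PolyTy_{\Th} \to \Ty_{\Th}$ factors as $\PolyTy_{\Th} \to \Clos_{\reppre}(\BTy_{\Th})$ followed by the isomorphism of \cref{prop:description_ty_coclassifying_model} — both assign to a telescope of monomial types the associated iterated $\Sigma$-type of first-order $\Pi$-types. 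Being the composite of an essentially surjective map with an isomorphism, it is itself essentially surjective; the same reasoning applies to $\BRepTy^{\star}_{\Th} \to \RepTy_{\Th}$.

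Unwinding this: given $A : \Ty_{\Th}$, transport it along the isomorphism of \cref{prop:description_ty_coclassifying_model} to the unique $A' : \Clos_{\reppre}(\BTy_{\Th})$ with $\Tm_{\Th}(A') = \Tm_{\Th}(A)$, then apply \cref{prop:type_iso_polytype} to obtain $A_{0} : \PolyTy_{\Th}$ with $\Tm_{\Th}(A_{0}) \simeq \Tm_{\Th}(A') = \Tm_{\Th}(A)$, that is $A_{0} \simeq A$. The representable case is verbatim the same with $\Clos_{\lexpre}(\BRepTy_{\Th})$, $\BRepTy^{\star}_{\Th}$, $\RepTy_{\Th}$ in place of their non-representable counterparts.

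There is no genuine obstacle here; the only point that needs a line of justification is the factorization of the canonical maps claimed above, namely that the ``corresponding type in $\Ty_{\Th}$'' attached to a polynomial type (resp.\ to a telescope of basic representable types) coincides with the image of its class in $\Clos_{\reppre}(\BTy_{\Th})$ (resp.\ $\Clos_{\lexpre}(\BRepTy_{\Th})$) under the isomorphism. This is immediate from the definitions, since both are produced by the same iterated $\Sigma$- and $\Pi_{\rep}$-forming recipe.
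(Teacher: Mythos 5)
Your proof is correct and is exactly the paper's argument: the paper's proof is the one-line "By Proposition \ref{prop:type_iso_polytype} and Proposition \ref{prop:description_ty_coclassifying_model}," and you have simply spelled out the composition and the compatibility of the canonical maps in more detail.
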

\begin{proof}
  By~\cref{prop:type_iso_polytype} and~\cref{prop:description_ty_coclassifying_model}.
\end{proof}

\begin{prop}
  The family restriction $\RepTy_\Th \to \Ty_\Th$ is a monomorphism.
\end{prop}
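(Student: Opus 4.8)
The plan is to read off the statement from the explicit inductive descriptions of the types of $\Th$ furnished by \cref{prop:description_ty_coclassifying_model}. That proposition provides isomorphisms $\iota \colon \Clos_{\reppre}(\BTy_{\Th}) \xrightarrow{\sim} \Ty_{\Th}$ and $\iota_{\rep} \colon \Clos_{\lexpre}(\BRepTy_{\Th}) \xrightarrow{\sim} \RepTy_{\Th}$. Among the constructors of the inductive-recursive family $\Clos_{\reppre}(\BTy_{\Th})$ there is the ``a representable type is a type'' constructor, and the requirement that it be preserved by $\iota$ says precisely that, modulo the isomorphisms $\iota$ and $\iota_{\rep}$, this constructor is the family restriction $\RepTy_{\Th} \to \Ty_{\Th}$ under scrutiny. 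Thus the family restriction factors as an isomorphism, followed by a constructor of an inductive-recursive family, followed by an isomorphism.

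It then remains to observe that a constructor of an inductive-recursive family is a monomorphism. Working in the internal language of the presheaf topos $\CPsh(\Th)$, the family $\Clos_{\reppre}(\BTy_{\Th})$ enjoys the usual ``no confusion'' property: every constructor is injective and distinct constructors have disjoint images; in particular the inclusion of representable types into types is injective, hence a monomorphism of presheaves over $\Th$. Composing with the isomorphisms $\iota$ and $\iota_{\rep}^{-1}$ shows that $\RepTy_{\Th} \to \Ty_{\Th}$ is a monomorphism.

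There is no real difficulty here: all of the work has already been done by the normalization argument behind \cref{prop:description_ty_coclassifying_model}. The one point deserving a moment's attention is the identification in the first paragraph, i.e.\ checking that the family restriction really does correspond to the inclusion constructor under the stated isomorphisms; this is built into the phrase ``preserved by $\iota$'' in the definition of $\Clos_{\reppre}(\BTy_{\CC})$. Informally, the content being used is that in the coclassifying $(\reppre)$-CwF no representable type is accidentally equal to a non-representable type (such as a first-order $\Pi$-type), which is exactly what the explicit description of $\Ty_{\Th}$ rules out. One could equally well argue by structural induction on $\Clos_{\lexpre}(\BRepTy_{\Th})$, invoking the no-confusion property at each step, but the factorization above is shorter.
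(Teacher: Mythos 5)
Your proof is correct and is essentially the paper's own argument: transport the map along the isomorphism $\Ty_\Th \cong \Clos_{\reppre}(\BTy_\Th)$ of \cref{prop:description_ty_coclassifying_model} and observe that the resulting map is the constructor $\RepTy_\Th \to \Clos_{\reppre}(\BTy_\Th)$, hence injective by no-confusion for the inductive-recursive family. (The only cosmetic difference is that the constructor's domain is already $\RepTy_\Th$, so the second isomorphism $\iota_{\rep}$ is not actually needed.)
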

\begin{proof}
  This follows from the isomorphism $\Ty_\Th \cong \Clos_{\reppre}(\BTy_\Th)$.
  Indeed $\RepTy_\Th \to \Clos_{\reppre}(\BTy_\Th)$ is a constructor of $\Clos_{\reppre}(\BTy_\Th)$, and is therefore injective.
\end{proof}

\subsection{Contextual models}\label{ssec:contextual_models_sogats}

We can generalize the notions of contextuality from CwFs to the category of models of an arbitrary SOGAT.

\begin{defi}\label{def:contextual_isomorphism}
  A morphism $F : \CC \to \CD$ in $\CMod_{\Th}$ is a \defemph{contextual isomorphism} if it is bijective on every sort: for every generating type $\iS : \GenTy_{\Th}$, object $\Gamma : \CC$, boundary $\sigma \Colon \yo(\Gamma) \to \partial\iS_{\CC}$ and element $a \Colon (\gamma : \yo(F(\Gamma))) \to \iS_{\CD}(F(\sigma)(\gamma))$, there is a unique element $a_{0} \Colon (\gamma : \yo(\Gamma)) \to \iS_{\CC}(\sigma(\gamma))$ such that $F(a_{0}) = a$.
  \defiEnd{}
\end{defi}

The contextual isomorphisms are the right class of maps of an orthogonal factorization system generated by a set $I_{\Th}$ of maps in $\CMod_{\Th}$.
\begin{alignat*}{3}
  & I_{\Th} && \triangleq{ } && \{ I^{\iS} \mid \iS : \GenTy_{\Th} \} \\
  & I^{\iS} && :{ } && \Free_{\Th}(\bm{\Gamma} \vdash \bm{\sigma} : \partial \iS) \to \Free_{\Th}(\bm{\Gamma} \vdash \bm{x} : \iS(\bm{\sigma}))
\end{alignat*}
The maps in the corresponding left class are called \defemph{left contextual} maps.

\begin{defi}\label{def:contextual_core}
  The \defemph{contextual core} $\cxl(\CC)$ is obtained from the factorization of the unique map $\Init_{\Th} \to \CC$ as a left contextual map $\Init_{\Th} \to \cxl(\CC)$ followed by a contextual isomorphism $\cxl(\CC) \to \CC$.
  \defiEnd{}
\end{defi}

\begin{defi}\label{def:contextual}
  A model $\CC : \CMod_{\Th}$ is \defemph{contextual} if $\cxl(\CC) \to \CC$ is an isomorphism.
  \defiEnd{}
\end{defi}

The $1$-category $\CMod_{\Th}^{\cxl}$ of contextual models forms a coreflective subcategory of $\CMod_{\Th}$; the functor $\cxl : \CMod_{\Th} \to \CMod_{\Th}^{\cxl}$ is right adjoint to the subcategory inclusion $\CMod_{\Th}^{\cxl} \to \CMod_{\Th}$.

\subsection{Trivial fibrations}\label{ssec:trivial_fibrations_sogats}

\begin{defi}\label{def:trivial_fibration}
  A morphism $F : \CC \to \CD$ in $\CMod_{\Th}$ is a \defemph{trivial fibration} if it is surjective on every sort: for every generating type $\iS : \GenTy_{\Th}$, object $\Gamma : \CC$, boundary $\sigma : \yo(\Gamma) \to \partial\iS_{\CC}$ and element $a : (\gamma : \yo(F(\Gamma))) \to \iS_{\CD}(F(\sigma)(\gamma))$, there exists an element $a_{0} : (\gamma : \yo(\Gamma)) \to \iS_{\CC}(\sigma(\gamma))$ such that $F(a_{0}) = a$.
  \defiEnd{}
\end{defi}

The trivial fibrations are the right class of maps of the weak factorization system that is cofibrantly generated by the same set $I_{\Th}$ of maps that we used to define contextual isomorphisms.
The maps in the left class are called \defemph{cofibrations}.

In the case of the GAT $\Th_\CCat$, the trivial fibrations are the trivial fibrations of the canonical model structure on $\CCat$, that is functors that are surjective on objects and fully faithful.

For the SOGAT $\Th_\CCwf$, the (cofibrations, trivial fibrations) weak factorization system on $\CCwf$ coincides with the one defined by \textcite{HomotopyTheoryTTs}.


\section{Theories with homotopy relations}\label{sec:homotopy}

\subsection{Homotopy relations}\label{ssec:homotopy_relations}

We now consider SOGATs that are equipped with an additional piece of data: a choice of a homotopy relation for every generating sort of the theory.

From the point of view of model categories, this roughly corresponds to the choice of a relative cylinder object for every generating cofibration.

\begin{defi}\label{def:homotopy_relations}
  The data of \defemph{homotopy relations} on a SOGAT $\Th$ consists, for every generating type $\iS : \GenTy_{\Th}$, of a reflexive type-valued binary relation on its terms:
  \begin{alignat*}{3}
    & \_{} \sim_{\iS(\_{})} \_{} && \Colon{ } && (\sigma : \partial \iS) (x,y : \iS(\sigma)) \to \Ty_{\Th}, \\
    & \refl_{\iS(\_{})} && \Colon{ } && (\sigma : \partial \iS) (x : \iS(\sigma)) \to x \sim_{\iS(\sigma)} x.
    \tag*{\defiEnd{}}
  \end{alignat*}
\end{defi}
Since these homotopy relations are specified in the $(\reppre)$-CwF $\Th$, they are automatically available in any other model of $\Th$.

\begin{exa}\label{exa:homotopy_relations_th_cat}
  Homotopy relations are defined over the theory of categories $\Th_{\CCat}$ as follows:
  \begin{alignat*}{3}
    & x \sim_{\iob} y && \triangleq{ } && (x \cong y), \\
    & f \sim_{\ihom(x,y)} g && \triangleq{ } && \ieqhom(f,g), \\
    & \_{} \sim_{\ieqhom(f,g)} \_{} && \triangleq{ } && \Unit,
  \end{alignat*}
  where $(x \cong y)$ is the type of isomorphisms between $x$ and $y$, \ie{}
  \begin{alignat*}{3}
    & (x \cong y) && \triangleq{ } && (f : \ihom(x,y)) \times (g : \ihom(y,x)) \\
    &&&&& \quad\times \ieqhom(g \icirc f,\iid) \times \ieqhom(f \icirc g,\iid).
  \end{alignat*}
  Reflexivities are given by the identity isomorphisms on objects, by $\irefl$ on morphisms, and by $\tt$ on equalities between morphisms.
  \defiEnd{}
\end{exa}

\begin{exa}\label{exa:homotopy_relations_th_id}
  Homotopy relations are defined over the type theory $\Th_{\Id}$ of weak identity types as follows:
  \begin{alignat*}{3}
    & A \sim_{\ity} B && \triangleq{ } && \iEquiv(A,B) \\
    & x \sim_{\itm(A)} y && \triangleq{ } && \itm(\iId(A,x,y))
  \end{alignat*}
  where $\iEquiv(A,B)$ is the type of relational equivalences between $A$ and $B$.
  Note that even though $\iEquiv(A,B)$ is not classified by an inner type in $\Th_{\Id}$, it can be written as an outer type in $\Th_{\Id}$.
  \begin{alignat*}{3}
    & \Equiv(A,B) && \triangleq{ } && (R : \itm(A) \to \itm(B) \to \ity) \\
    &&&&& \times ((a : \itm(A)) \to \isContr((b : B) \times R(a,b))) \\
    &&&&& \times ((b : \itm(B)) \to \isContr((a : A) \times R(a,b))) \\
    & \isContr(X) && \triangleq{ } && (x : \itm(X)) \times (\forall (x,y : \itm(X)) \to \itm(\iId(X,x,y)))
  \end{alignat*}
  Reflexivities are given by the identity equivalence $\iId_{-}$ on types, and by $\irefl$ on terms.
  \defiEnd{}
\end{exa}

We fix a SOGAT $\Th$ equipped with homotopy relations for the remainder of this section.

\subsection{Classes of maps}\label{ssec:homotopy_classes_of_maps}

The homotopy relations induce notions of weak equivalences and of fibrations over the category $\CMod_{\Th}$.
In the case of the theory $\Th_{\Id}$ of weak identity types, we recover the classes of weak equivalences and fibrations on $\CCwf_{\Id}$ that were introduced by \textcite{HomotopyTheoryTTs}.

\begin{defi}\label{def:weak_equivalence}
  A morphism $F : \CC \to \CD$ in $\CMod_{\Th}$ is a \defemph{weak equivalence} if it is essentially surjective on every sort: for every generating type $\iS : \GenTy_{\Th}$, object $\Gamma : \CC$, boundary $\sigma \Colon \yo(\Gamma) \to \partial\iS_{\CC}$, and element $x \Colon (\gamma : \yo(F(\Gamma))) \to \iS_{\CD}(F(\sigma)(\gamma))$, there exists a lifted element $x_{0} \Colon (\gamma : \yo(\Gamma)) \to \iS_{\CC}(\sigma(\gamma))$ along with a homotopy
  \[ p \Colon (\gamma : \yo(F(\Gamma))) \to F(x_{0})(\gamma) \sim_{\iS_{\CD}(F(\sigma)(\gamma))} x(\gamma).
    \tag*{\defiEnd{}}
  \]
\end{defi}

\begin{defi}\label{def:fibration}
  A morphism $F : \CC \to \CD$ in $\CMod_{\Th}$ is a \defemph{fibration} if it satisfies a lifting condition for homotopies with a fixed left endpoint.
  \begin{description}
    \item[homotopy lifting] For every generating type $\iS : \GenTy_{\Th}$, object $\Gamma : \CC$, boundary $\sigma \Colon \yo(\Gamma) \to \partial\iS_{\CC}$, element $x \Colon (\gamma : \yo(\Gamma)) \to \iS_{\CC}(\sigma(\gamma))$ and homotopy
          \[ p \Colon (\gamma: \yo(F(\Gamma))) \to F(x)(\gamma) \sim_{\iS_{\CD}(F(\sigma)(\gamma))} y(\gamma), \]
          there exists a homotopy
          \[ p_{0} \Colon (\gamma : \yo(\Gamma)) \to x(\gamma) \sim_{\iS_{\CC}(\sigma(\gamma))} y_{0}(\gamma) \]
          such that $F(y_{0}) = y$ and $F(p_{0}) = p$.
          \defiEnd{}
  \end{description}
\end{defi}

\subsection{Univalent internal models}

Recall that a $(\repinfty)$-CwF is a $(\reppre)$-CwF equipped with weakly stable identity types satisfying function extensionality.
Consider a $(\repinfty)$-CwF $\CC$ equipped with an internal model of $\Th$.
Internally to $\CC$, we have two notions of ``weak equality'' between elements of the model of $\Th$, given by the homotopy relations $(\sim)$ and by the (outer) identity types $(\simeq)$.
There is always a comparison map that sends elements of the outer identity types $(\simeq)$ to homotopies $(\sim)$, defined by sending the outer reflexivity to the inner reflexivity.
It is then natural to ask for this map to be an equivalence (with respect to the outer identity types).
We express this as a contractibility condition.

\begin{defi}\label{def:univalent_internal_model}
  Let $\CC$ be a $(\repinfty)$-CwF equipped with an internal model of $\Th$.
  We say that the internal model is \defemph{univalent}, or that the identity types are \defemph{saturated} (with respect to the homotopy relations) if for every generating type $\iS : \GenTy_{\Th}$, the dependent type
  \[ (y : \iS_{\CC}(\sigma)) \times (p : x \sim_{\iS_{\CC}(\sigma)} y) \]
  is contractible over $(\sigma : \partial \iS_{\CC}, x : \iS_{\CC}(\sigma))$, for the notion of contractibility induced by the outer identity types $(\simeq)$.
  \defiEnd{}
\end{defi}
In the case of the theory $\Th_{\CCat}$, an internal category is univalent in the sense of~\cref{def:univalent_internal_model} when it is univalent in the sense of HoTT~\parencite{UnivalentCategories}.

\subsection{External univalence}

We can finally define the main notion of this paper.

\begin{defi}
  We say that a SOGAT $\Th$ equipped with homotopy relations satisfies \defemph{external univalence} when the $(\reppre)$-CwF $\Th$ can be equipped with weakly stable identity types satisfying function extensionality and saturation with respect to the homotopy relations.
  \defiEnd{}
\end{defi}

The following claim will be proven in a future paper.
\begin{clm}\label{clm:external_univalence_equiv_modelcat}
  Let $\Th$ be a SOGAT equipped with homotopy relations.
  It satisfies external univalence if and only if the category $\CMod_\Th^\cxl$ of contextual models of $\Th$, equipped with the classes of trivial fibrations, fibrations and weak equivalences defined in~\cref{def:trivial_fibration}, \cref{def:fibration} and~\cref{def:weak_equivalence}, is a left semi-model category.
\end{clm}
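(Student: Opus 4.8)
The plan is to prove the two implications separately; in both directions the link between the syntactic datum (weakly stable identity types on the $(\reppre)$-CwF $\Th$, satisfying function extensionality and saturation) and the categorical datum (the left semi-model structure on $\CMod_\Th^\cxl$) is the slogan that \emph{identity types on $\Th$ are a homotopy-coherent choice of relative cylinder objects for the generating cofibrations} $I_\Th = \{ I^{\iS} \mid \iS : \GenTy_{\Th} \}$, refining the ``rough'' cylinders already recorded by the homotopy relations of~\cref{def:homotopy_relations}.

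For the forward direction, assume external univalence. First I would turn each identity type $\Id_\iS$ into a relative cylinder object for $I^{\iS}$, by factoring the codiagonal of $I^{\iS}$ --- the map $\Free_{\Th}(\bm{\Gamma} \vdash \bm{\sigma}:\partial\iS,\, \bm{x},\bm{y} : \iS(\bm{\sigma})) \to \Free_{\Th}(\bm{\Gamma} \vdash \bm{\sigma}:\partial\iS,\, \bm{x} : \iS(\bm{\sigma}))$ --- through the model freely generated by an additional related element $\bm{p} : \bm{x} \sim_{\iS(\bm{\sigma})} \bm{y}$; saturation lets one take $\bm{p}$ to be an element of $\Id_\iS$ instead, so this cylinder is literally computed by the identity types, while function extensionality makes the cylinders interact well with the first-order $\Pi$-types and hence (via~\cref{prop:description_ty_coclassifying_model}) with arbitrary sorts. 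Next I would let $J_\Th$ be the set of Leibniz (pushout-product) maps of $I_\Th$ against the endpoint inclusions of these cylinders and run the small object argument on $J_\Th$, producing a weak factorization system whose right class --- by unfolding the lifting condition against $J_\Th$ in terms of the explicit cylinders --- is exactly the fibrations of~\cref{def:fibration}, the left class being the trivial cofibrations; the (cofibration, trivial fibration) weak factorization system of~\cref{ssec:trivial_fibrations_sogats} supplies the other factorization. It then remains to check: (i) a map is a trivial fibration iff it is both a fibration and a weak equivalence, where the nontrivial inclusion promotes essential surjectivity of a fibration into genuine surjectivity on each sort by combining homotopy lifting with the contractibility furnished by saturation; (ii) the $2$-out-of-$3$ property for weak equivalences, where the failure of transitivity of the bare relations $(\sim)$ forces a detour through the path algebra (composition, inverses, $\ap$) of the identity types, the conclusion being transported back along saturation; (iii) the retract closure and the lifting axioms of a left semi-model category, all stated only for maps with cofibrant domain --- and this cofibrancy restriction is exactly what the \emph{weak} (rather than strict) stability of the identity types obliges, the cylinders being coherent only up to homotopy, which suffices over cofibrant objects. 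Throughout one verifies that every factorization and every replacement used remains inside the coreflective subcategory $\CMod_\Th^\cxl$.

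For the reverse direction, assume $\CMod_\Th^\cxl$ is a left semi-model category for the stated classes. For each generating sort $\iS$ I would factor the codiagonal of $I^{\iS}$ --- whose domain $\Free_{\Th}(\bm{\Gamma} \vdash \bm{\sigma}:\partial\iS,\, \bm{x},\bm{y} : \iS(\bm{\sigma}))$ is cofibrant --- as a trivial cofibration followed by a fibration, obtaining a relative path object, and read off candidate operations $\Id_\iS,\refl_\iS$ on the generating sorts; by~\cref{prop:description_ty_coclassifying_model} these extend uniquely along $\Unit$-, $\Sigma$- and first-order $\Pi$-types to weakly stable identity types on all of $\Ty_\Th$. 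The eliminator $\J$ and its propositional computation rule $\Jbeta$ come from the lifting of trivial cofibrations against fibrations over cofibrant domains; function extensionality follows from the compatibility of the semi-model structure with the first-order $\Pi$-types; and saturation reduces to the statement that the comparison map from the outer identity types $(\simeq)$ to the homotopy relations $(\sim)$ has contractible fibers, which holds because $(\sim)$ was used to define the weak equivalences, so that the comparison map is a weak equivalence between sufficiently fibrant objects.

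The main obstacle --- and the reason this is deferred to a separate paper --- is the simultaneous handling of two weakenings: the identity types are only \emph{weakly} stable, and one correspondingly obtains a \emph{semi}-model structure rather than a model structure. The transfinite small-object construction must be arranged so that every strict coherence condition that fails is recovered up to homotopy over cofibrant objects, while in parallel one must match the abstractly defined classes with the concrete, sort-by-sort definitions of~\cref{def:fibration} and~\cref{def:weak_equivalence}. The single most delicate point is the $2$-out-of-$3$ property for weak equivalences given only the nontransitive homotopy relations: this is exactly the place where saturation of the identity types is indispensable.
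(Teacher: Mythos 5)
The paper does not actually prove this statement: it is labelled a \emph{Claim} and the text immediately before it says ``The following claim will be proven in a future paper,'' a point reiterated in \cref{sec:conclusion}. So there is no in-paper proof to compare your proposal against; I can only assess it on its own terms. Your overall strategy --- treating the identity types as homotopy-coherent relative cylinders for the generating cofibrations $I_\Th$, running the small object argument on the Leibniz maps to get the (trivial cofibration, fibration) factorization, and conversely extracting identity types by factoring codiagonals --- is the natural one and is consistent with the Kapulkin--Lumsdaine construction that the paper cites as the special case $\Th_{\Id}$. As a roadmap it is sensible.

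However, as a proof it has a genuine gap in the reverse direction. You assert that once $\Id_{\iS}$ and $\refl_{\iS}$ are read off the generating sorts, they ``extend uniquely along $\Unit$-, $\Sigma$- and first-order $\Pi$-types to weakly stable identity types on all of $\Ty_\Th$'' by \cref{prop:description_ty_coclassifying_model}. This extension is neither unique nor automatic: it is precisely the technical heart of the paper, occupying all of \cref{sec:refl_eqv,sec:mgraph_model,sec:univalence_proof}, where the lift of contractibility through $\Sigma$- and $\Pi$- types requires the elaborate machinery of reflexive-equivalence models, heterogeneous all-paths operations, and the two-dimensional filling data of \cref{asm:basic_center_and_all_paths}. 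In particular the $\J$-eliminator for a composite type cannot be produced sort-by-sort; one needs transport in dependent families over arbitrary contexts, which is exactly what \cref{thm:identity_types_from_refleqv} is engineered to supply. A similar (smaller) gap sits in the forward direction: the identification of the abstract right class of the $J_\Th$-generated weak factorization system with the concrete homotopy-lifting fibrations of \cref{def:fibration}, and the $2$-out-of-$3$ property for the non-transitive relations $(\sim)$, are each asserted rather than argued, and the latter is where most of the saturation hypothesis has to be spent. These are not presentational omissions but the places where the actual mathematics lives, which is presumably why the author defers the claim to a separate paper.
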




\section{Contractibility data and reflexive equivalences}\label{sec:refl_eqv}
We show that weakly stable identity types can be reconstructed from the data of reflexive relational equivalences (also called one-to-one relations or one-to-one correspondences).
Similar ideas are used in the cubical type theory without an interval of \textcite{CubicalTypeTheoryWithoutInterval} and in the higher observational type theory of Altenkirch, Kaposi and Shulman~\parencite{ShulmanHigherObservationalTT, TowardsHigherObservationalTT}.

We fix a CwF $\CC$ equipped with $\Sigma$-types.
\begin{defi}[Internally to $\CPsh(\CC)$]\label{def:contractibility_data}
  \defemph{Contractibility data} over $\CC$ consists of a dependent presheaf
  \[ \isContr : \Ty_{\CC} \to \UPsh.
    \tag*{\defiEnd{}}
  \]
\end{defi}
Note that contractibility is not propositional data, even though a witness of contractibility should be unique up to homotopy.
Whenever we say that some type $A$ is contractible, we really mean that we have an element of $\isContr(A)$.

We now assume that $\CC$ is equipped with global contractibility data.

\begin{defi}[Internally to $\CPsh(\CC)$]\label{def:relational_equivalence}
  An \defemph{equivalence} $E : A \simeq B$ between two types $A, B : \Ty_{\CC}$ consists of a binary relation
  \[ E : \Tm_{\CC}(A) \to \Tm_{\CC}(B) \to \Ty_{\CC} \]
  that is functional in both directions, as witnessed by the following contractibility conditions
  \begin{alignat*}{1}
    & E .\funr : \forall (a : \Tm_{\CC}(A)) \to \isContr((b : B) \times E(a,b)), \\
    & E .\funl : \forall (b : \Tm_{\CC}(B)) \to \isContr((a : A) \times E(a,b)).
    \tag*{\defiEnd{}}
  \end{alignat*}
\end{defi}

\begin{defi}[Internally to $\CPsh(\CC)$]\label{def:reflexive_equivalence}
  A \defemph{reflexive equivalence} is an equivalence $E : A \simeq A$ that is additionally equipped with a reflexivity map
  \[ E.\refl : (a : \Tm_{\CC}(A)) \to \Tm_{\CC}(E(a,a)).
    \tag*{\defiEnd{}}
  \]
\end{defi}

\begin{defi}[Internally to $\CPsh(\CC)$]\label{def:dependent_equivalence}
  A \defemph{dependent equivalence} for a dependent type $B : \Tm_{\CC}(A) \to \Ty_{\CC}$ over an equivalence $E : A \simeq A$ consists of a family of equivalences
  \[ T_{B} : \forall (x, y : \Tm_{\CC}(A)) \to \Tm_{\CC}(E(x,y)) \to B(x) \simeq B(y).
    \tag*{\defiEnd{}}
  \]
\end{defi}

\begin{defi}\label{def:contractibility_reflexive_equivalences}
  We say that $\CC$ is equipped with \defemph{reflexive equivalences} when for every type $A \Colon \yo(\Gamma) \to \Ty_{\CC}$, there is a reflexive equivalence
  \begin{alignat*}{3}
    & \Id_{A} && \Colon{ } && \forall \gamma \to A(\gamma) \simeq A(\gamma).
  \end{alignat*}
  We denote its reflexivity map by $\refl_{A} \Colon \forall \gamma\ (x : \Tm_{\CC}(A(\gamma))) \to \Id_{A}(\gamma,x,x)$.
  \defiEnd{}
\end{defi}

\begin{defi}\label{def:contractibility_dependent_equivalences}
  We say that $\CC$ is equipped with \defemph{dependent equivalences} when for every dependent type $B \Colon (\gamma : \yo(\Gamma)) \to \Tm_{\CC}(A(\gamma)) \to \Ty_{\CC}$, there is a dependent equivalence
  \begin{alignat*}{3}
    & \DId_{A.B} && \Colon{ } && \forall \gamma\ (x, y : \Tm_{\CC}(A(\gamma))) \to \Tm_{\CC}(\Id_{A}(\gamma,x,y)) \to B(\gamma,x) \simeq B(\gamma,y).
    \tag*{\defiEnd{}}
  \end{alignat*}
\end{defi}
Note that we do not assume that $\DId_{A.B}$ is reflexive; the reason is that $\DId_{A.B}$ can be replaced by reflexive dependent equivalences by considering the composition
\[ \DId_{A.B}(\gamma,x,y,p) \circ {\DId_{A.B}(\gamma,x,x,\refl_{A}(x))}^{-1} \]
when it is defined.

\begin{defi}\label{def:contractibility_center}
  We say that a type $A \Colon \forall \gamma \to \Ty_\CC$ has a \defemph{center} (of contraction) if we have an element
  \[ \ccenter_A \Colon{ } (\gamma : \yo(\Gamma)) \to \Tm_{\CC}(A(\gamma)).
  \]
\end{defi}

\begin{defi}\label{def:contractibility_has_all_paths}
  We say that a type $A \Colon \forall \gamma \to \Ty_\CC$ \defemph{has all paths}, or a \defemph{homogeneous all-paths operation}, if we have an element
  \begin{alignat*}{3}
    & \chpath_A && \Colon{ } && \forall (\gamma : \yo(\Gamma))\ (x, y : \Tm_{\CC}(A(\gamma))) \to \Tm_{\CC}(\Id_{A}(\gamma,x,y)).
  \end{alignat*}
  In other words, the type $A$ has all paths if it is a homotopy proposition, with respect to the identity type $\Id_{A}$.
  \defiEnd{}
\end{defi}
We will also need to consider an analogous heterogeneous structure for dependent contractible types, similarly to homogeneous and heterogeneous compositions structures in cubical type theories (see \eg{}~\textcite{SyntaxModelsCartCTT}).

\begin{defi}\label{defi:heterogeneous_all_paths}
  We say that a dependent type
  \[ B \Colon \forall (\gamma : \yo(\Gamma))\ (a : \Tm_\CC(A(\gamma))) \to \Ty_\CC \]
  has a \defemph{heterogeneous all-paths operation} if we have
  \begin{alignat*}{3}
    & \cpath_B && :{ }
    && \forall \gamma\ (a_{l},a_{r} : \Tm_\CC(A(\gamma)))\ (a_{e} : \Tm_\CC(\Id_{A}(\gamma,a_{l},a_{r}))) \\
    &&&&& \phantom{\forall} (b_{l} : \Tm_\CC(B(\gamma,a_{l})))\ (b_{r} : \Tm_\CC(B(\gamma,a_r))) \to \DId_{A.B}(\gamma,a_{e},b_{l},b_{r})).
          \tag*{\defiEnd{}}
  \end{alignat*}
\end{defi}

We say that a contractibility witness
\[ c \Colon \forall (\gamma : \yo(\Gamma)) \to \isContr(A(\gamma)) \]
has a center or a homogeneous all-paths operations if the type $A$ has a center or a homogeneous all-paths operations.
In that case, it is written $c.\ccenter$ or $c.\chpath$.

Similarly, we say that a dependent contractibility witness
\[ c \Colon \forall (\gamma : \yo(\Gamma))\ (a : \Tm_\CC(A(\gamma))) \to \isContr(B(\gamma,a)) \]
has a heterogeneous all-paths operations if the dependent type $B$ has one.
In that case, it is written $c.\cpath$.

When $c \Colon \forall \gamma \to \isContr((a : A(\gamma)) \times B(\gamma,a))$, the type $B(\gamma,a)$ is typically of the form $\Id_{A}(\gamma,a,a_{0})$ or $\Id_{A}(\gamma,a_{0},a)$ for some $a_{0} : A(\gamma)$.
In that case, we can think of the center and all-paths operations as specific cubical composition and filling operations, as described in the following diagrams (where we write $-.1$ and $-.2$ for the first and second projections out of a $\Sigma$-type):

\[ \begin{tikzcd}[column sep=70]
    a_{0}
    \ar[r, dashed, no head, "c.\ccenter.2"]
    & c.\ccenter.1
\end{tikzcd} \]

\[ \begin{tikzcd}[column sep=70, row sep=70]
    {a_{0}}
    \ar[r, equal]
    \ar[d, no head, "{b_{l}}"']
    \ar[rd, phantom, "(c.\chpath.2)"]
    & {a_{0}}
    \ar[d, no head, "{b_{r}}"]
    \\
    {a_{l}}
    \ar[r, dashed, no head, "{c.\chpath.1}"']
    &
    {a_{r}}
\end{tikzcd} \]

Indeed, the first and second projections of the operations $\ccenter$ and $\chpath$ correspond approximately to the operations $\mathsf{coe}$, $\mathsf{coh}$, $\mathsf{uncoe}$ and $\mathsf{uncoh}$ of the cubical type theory without an interval investigated by \textcite{CubicalTypeTheoryWithoutInterval}.

\begin{thm}\label{thm:identity_types_from_refleqv}
  Assume that $\CC$ is equipped with the following data:
   \begin{itemize}
    \item A family $\isContr$ of contractibility data (\cref{def:contractibility_data});
    \item Along with reflexive equivalences $(\Id_{-},\refl_{-})$ (\cref{def:contractibility_reflexive_equivalences});
    \item Together with dependent equivalences $(\DId_{-})$ (\cref{def:contractibility_dependent_equivalences});
    \item Such that every contractible type has a center (\cref{def:contractibility_center}) and all paths (\cref{def:contractibility_has_all_paths}).
  \end{itemize}

  Then the identity type introduction structure $(\Id_{-},\refl_{-})$ can be equipped with a weakly stable elimination structure.
\end{thm}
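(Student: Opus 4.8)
The plan is to build the weakly stable elimination operations $(\J,\Jbeta)$ pointwise, treating the reflexive equivalences as the identity-type introduction structure and recovering based path induction from the contractibility of singleton types. Fix an $\Id$-introduction context $(\Gamma,A,x)$ and an $\Id$-elimination context $(\Delta,\gamma,P,d)$ over it. Pulling back along $\gamma : \Delta \to \Gamma$, I work over $\Delta$ and write $A,x$ for the reindexed type and term. First I would form, using the $\Sigma$-types of $\CC$, the type $S := \Sigma(A,\ \lambda y \mapsto \Id_A(x,y))$ with distinguished element $c_0 := (x,\refl_A(x)) : \Tm_{\CC}(S)$, and reindex $P$ to a dependent type $\bar P : \Tm_{\CC}(S) \to \Ty_{\CC}$ with $\bar P(y,p) = P(y,p)$ and $\bar P(c_0) = P'$, so that $d : \Tm_{\CC}(\bar P(c_0))$. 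The crucial point is that $S$ is contractible: since $\Id_A$ is a reflexive equivalence, $\Id_A.\funr(x) : \isContr(\Sigma_{y}\Tm_{\CC}(\Id_A(x,y))) = \isContr(S)$; call this witness $\kappa_S$, which by hypothesis carries a center and a homogeneous all-paths operation $\kappa_S.\chpath : \forall s\ s' \to \Tm_{\CC}(\Id_S(s,s'))$.

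Next I would define the eliminator. For $y : \Tm_{\CC}(A)$ and $p : \Tm_{\CC}(\Id_A(x,y))$, put $q := \kappa_S.\chpath(c_0,(y,p))$ and let $E := \DId_{S.\bar P}(c_0,(y,p),q) : \bar P(c_0) \simeq P(y,p)$ using the dependent equivalences. Since $E$ is functional to the right, $E.\funr(d) : \isContr(\Sigma_{b : \Tm_{\CC}(P(y,p))}\Tm_{\CC}(E(d,b)))$; taking its center gives a pair $(b_0,e_0)$, and I set $\J(\delta,y,p) := b_0$. Because all the operations involved ($\Id_A$, $\Id_S$, the transport $\DId_{S.\bar P}$, $\chpath$, $\ccenter$) are natural in $\delta : \Delta$, this construction commutes with reindexing and so defines a genuine weakly stable elimination operation; the associated bookkeeping is routine.

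The delicate part is the computation rule. We must produce a path $\Tm_{\CC}(\Id_{P'}(d,\ \J(\delta,x,\refl_A x)))$. Unwinding the definition, $\J(\delta,x,\refl_A x)$ is the ``transport'' of $d$ along the endo-path $q_0 := \kappa_S.\chpath(c_0,c_0) : \Tm_{\CC}(\Id_S(c_0,c_0))$, coming with a witness $e_0 : \Tm_{\CC}(\DId_{S.\bar P}(c_0,c_0,q_0)(d,\J(\delta,x,\refl_A x)))$. The idea is to show that transporting $d$ along \emph{any} endo-path of $c_0$ is $\Id_{P'}$-related to $d$. To that end I would first replace $\DId_{S.\bar P}$ by the reflexive dependent equivalence $\DId'_{S.\bar P}(s,s',w) := \DId_{S.\bar P}(s,s',w) \mathbin{\circ} \DId_{S.\bar P}(s,s,\refl_S s)^{-1}$ as in the remark following \cref{def:contractibility_dependent_equivalences} — deriving composition and inversion of equivalences from the two functionality conditions and the center/all-paths data — and then exploit that the singleton $\Sigma_{s}\Tm_{\CC}(\Id_S(c_0,s))$ is contractible (it is $\Id_S.\funr(c_0)$). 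This forces $(c_0,q_0)$ and $(c_0,\refl_S c_0)$ to be connected by a path, which, propagated through $\DId'_{S.\bar P}$ and through the ``extract a function from an equivalence via its center of contraction'' operation, identifies $\J(\delta,x,\refl_A x)$ with the transport of $d$ along $\refl_S c_0$; by reflexivity of $\DId'_{S.\bar P}$ the latter is $\Id_{P'}$-related to $d$, giving $\Jbeta$.

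I expect this last step to be the main obstacle: it requires a fair amount of the derived $\infty$-groupoid calculus (functoriality of transport, compatibility of transport with the functionality-center extraction, the fact that composing an equivalence with its inverse is homotopic to the identity) to be carried out using only reflexive equivalences, contractibility data, centers and homogeneous all-paths, before full weakly stable identity types are available. Everything else — the reduction to working over $\Delta$, the reindexing of $P$, and the naturality/weak-stability verifications — is straightforward.
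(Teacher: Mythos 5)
Your setup --- the singleton $S = \Sigma(A,\lambda y \mapsto \Id_A(x,y))$, its contractibility via $\Id_A.\funr(x)$, and the specialization of $\DId_{S.\bar P}$ along $\chpath_S$ --- matches the paper's. But your definition of the eliminator by a \emph{single} forward transport of $d$ along $\DId_{S.\bar P}(c_0,(y,p),q)$ cannot support the computation rule, and this is a genuine gap, not just a hard step left to the reader. At $(y,p)=c_0$ your $\J(\delta,x,\refl_A x)$ is the image of $d$ under the self-equivalence $F := \DId_{S.\bar P}(c_0,c_0,q_0)$ of $\bar P(c_0)$. The hypotheses of the theorem impose no coherence on $\DId$ beyond fiberwise functionality in both directions (\cref{def:contractibility_dependent_equivalences}): in the groupoid model, for instance, one may take $\DId_{S.\bar P}(s,s',w)$ to be the canonical transport composed with a fixed nonidentity automorphism of each fiber; this is still an admissible dependent equivalence, and for it the $F$-transport of $d$ is not path-connected to $d$. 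So no argument can produce $\Jbeta$ for your $\J$; the definition of $\J$ itself must change. Your proposed repair --- reflexivizing to $\DId'(s,s',w) = \DId(s,s',w)\circ\DId(s,s,\refl_S s)^{-1}$ --- is applied only inside the $\Jbeta$ argument while $\J$ is still built from $\DId$, so it does not identify $\J(\delta,x,\refl_A x)$ with anything related to $d$. Even redefining $\J$ with $\DId'$ throughout, its two halves transport along \emph{different} base paths ($\refl_S c_0$ backward, $q_0 = \chpath_S(c_0,c_0)$ forward), so relating the result to $d$ forces you to compare $\DId(c_0,c_0,\refl_S c_0)$ with $\DId(c_0,c_0,q_0)$ across a path joining $(c_0,\refl_S c_0)$ to $(c_0,q_0)$, and the natural way to extract a conclusion from that comparison requires projecting paths out of a $\Sigma$-type --- but $\Id$ on a $\Sigma$-type is just whatever \cref{def:contractibility_reflexive_equivalences} assigns, with no componentwise computation guaranteed.

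The paper resolves exactly this by a double transport along the \emph{same} specialized equivalence: setting $T_{S.P}(\delta,a,b) := \DId_{S.P}(\delta,a,b,\chpath_S(a,b))$, it transports $d$ \emph{backward} along $T_{S.P}(\delta,a,a)$ to get $d'$ with a witness $\widetilde{d'} : T_{S.P}(\delta,a,a)(d',d)$, then forward along $T_{S.P}(\delta,a,b)$ to define $j(b)$. At $b=a$ both $d$ and $j(a)$ then lie in the fiber of the single functional relation $T_{S.P}(\delta,a,a)(d',-)$, whose total space is contractible by $T_{S.P}(\delta,a,a).\funr(d')$ and hence has all paths; applying the dependent equivalence for $Q(y,q) := \Id(y,d)$ over that contractible total space and transporting $\refl(d)$ yields $\Jbeta$ with no coherence assumption on $\DId$ and no analysis of identity types of $\Sigma$-types. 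You correctly identified the computation rule as the crux; the missing idea is that the backward and forward transports must be taken along the same path $\chpath_S(a,a)$, so that $d$ and $\J(\delta,x,\refl_A x)$ end up in one and the same contractible fiber.
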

\begin{proof}
  Let $A \Colon \yo(\Gamma) \to \Ty_{\CC}$ be a type of $\CC$, along with a point $x \Colon \forall \gamma \to \Tm_{\CC}(A(\gamma))$.

  Take parameters $(\Delta,\gamma,P,d)$ for the weakly stable elimination structure, consisting of:
  \begin{alignat*}{3}
    & \Delta && :{ } && \CC, \\
    & \gamma && :{ } && \Delta \to \Gamma, \\
    & P && \Colon{ } && \forall (\delta : \yo(\Delta)) (y : \Tm_{\CC}(A(\gamma(\delta)))) (p : \Tm_{\CC}(\Id_{A}(\gamma(\delta),x(\gamma(\delta)),y))) \to \Ty_{\CC}, \\
    & d && \Colon{ } && \forall (\delta : \yo(\Delta)) \to \Tm_{\CC}(P(x(\gamma(\delta)), \refl_{A}(\gamma(\delta), x(\gamma(\delta))))).
  \end{alignat*}

  We have to construct
  \begin{alignat*}{3}
    & j && \Colon{ } && \forall (\delta : \yo(\Delta)) (y : \Tm_{\CC}(A(\gamma(\delta)))) (p : \Tm_{\CC}(\Id_{A}(\gamma(\delta),x(\gamma(\delta)),y))) \to \Tm_{\CC}(P(\delta,y,p)), \\
    & {j\beta} && \Colon{ } && \forall (\delta : \yo(\Delta)) \to \Tm_{\CC}(\Id_{\delta.P(\delta,x(\gamma(\delta)),\refl_{A}(\gamma(\delta),x(\gamma(\delta))))}(\delta,j(\delta,x(\gamma(\delta)),\refl_{A}(\gamma(\delta),x(\gamma(\delta)))),d(\delta))).
  \end{alignat*}

  We pose $S(\delta) \triangleq (y : A(\gamma(\delta))) \times \Id_{A}(\gamma(\delta),x(\gamma(\delta)),y)$; it is the type of the dependency of the motive $P$.
  Since $\Id_{A}$ is a reflexive equivalence, $S$ is a family of contractible types, \ie{} we have an element ${(\lambda \delta \mapsto \Id_{A}.\funr(\gamma(\delta)))}$ of $\forall \delta \to \isContr(S(\delta))$.

  We now see $P$ as a dependent type $P \Colon \forall \delta \to \Tm_{\CC}(S(\delta)) \to \Ty_{\CC}$.
  We consider the dependent equivalence
  \[ \DId_{S.P} \Colon \forall \delta\ (a,b : \Tm_{\CC}(S(\delta)))\ (p : \Tm_{\CC}(\Id_{S}(\delta,a,b))) \to P(\delta,a) \simeq P(\delta,b). \]

  Since $S$ is contractible and contractible types have all paths, we can specialize $\DId_{S.P}$ to
  \[ T_{S.P} \Colon \forall \delta\ (a,b : \Tm_{\CC}(S(\delta))) \to P(\delta,a) \simeq P(\delta,b), \]
  providing a way to transport between different fibers of $P$.

  We can now define $j$; we pose
  \begingroup{}\allowdisplaybreaks{}
  \begin{alignat*}{3}
    & a && \Colon{ } &&
    \forall \delta \to \Tm_{\CC}(S(\delta)), \\
    & a(\delta) && \triangleq{ } &&
    (x(\gamma(\delta)), \refl_{A}(\gamma(\delta),x(\gamma(\delta)))), \\
    & \overleftarrow{T_{S.P}} && :{ } &&
    \forall \delta \to \isContr((d' : P(\delta,a(\delta))) \times T_{S.P}(\delta,a(\delta),a(\delta),d',d(\delta))), \\
    & \overleftarrow{T_{S.P}}(\delta) && \triangleq{ } &&
    T_{S.P}(\delta,a(\delta),a(\delta)).\funl(d(\delta)), \\
    & d' && :{ } &&
    \forall \delta \to \Tm_{\CC}(P(\delta,a(\delta))), \\
    & d' && \triangleq{ } &&
    \overleftarrow{T_{S.P}}.\ccenter(\delta).1, \\
    & \overrightarrow{T_{S.P}} && :{ } &&
    \forall \delta\ (b : \Tm_{\CC}(S(\delta))) \to \isContr((j : P(\delta,b(\delta))) \times T_{S.P}(\delta,a(\delta),b,d'(\delta),j)), \\
    & \overrightarrow{T_{S.P}}(\delta) &&
    \triangleq{ } && T_{S.P}(\delta,a(\delta),b(\delta)).\funr(d'(\delta)), \\
    & j && :{ } && \forall \delta \to (b : \Tm_{\CC}(S(\delta))) \to \Tm_{\CC}(P(\delta,b(\delta))), \\
    & j && \triangleq{ } && \overrightarrow{T_{S.P}}.\ccenter(\delta,b).1.
  \end{alignat*}\endgroup{}

  Defining $j$ by transporting twice deals with the lack of reflexivity for the dependent equivalences.

  It remains to construct an element
  \begin{alignat*}{3}
    & {j\beta} && \Colon{ } &&
    \forall \delta \to \Id_{\delta.P(\delta,a(\delta))}(\delta,j(\delta,a(\delta)),d(\delta)).
  \end{alignat*}

  Note that we have elements $\widetilde{d'} : T_{S.P}(\delta,a(\delta),a(\delta),d'(\delta),d(\delta))$ and $\widetilde{j} : T_{S.P}(\delta,a(\delta),a(\delta),d'(\delta),j(\delta,a(\delta)))$.

  We consider the dependent type
  \begin{alignat*}{3}
    & Q(\delta,(y,q)) && :{ } && \forall \delta\ (y : \Tm_{\CC}(P(\delta,a(\delta))))\ (q : \Tm_{\CC}(T_{S.P}(\delta,a(\delta),a(\delta),d'(\delta),y))) \to \Ty_{\CC}, \\
    & Q(\delta,(y,q)) && \Colon{ } &&
    \Id_{\delta.P(\delta,a(\delta))}(\delta,y,d(\delta)).
  \end{alignat*}

  The type of such pairs $(y,q)$ is contractible, because $T_{S.P}$ is a dependent equivalence.
  Therefore, it has all paths and we obtain a family of equivalences
  \[ \forall \delta \to Q(\delta,(d(\delta),\widetilde{d'})) \simeq Q(\delta,(j(\delta,a(\delta)),\widetilde{j})). \]

  The element ${j\beta}(\delta)$ is obtained by transporting $\refl_{\delta.P(\delta,a(\delta))}(\delta,d(\delta))$ over that equivalence.
\end{proof}

We also show that, conversely, any weakly stable identity type structures satisfies the assumptions of~\cref{thm:identity_types_from_refleqv}.
For this we need to construct contractibility data that is stable under substitution, that is we need to strictify the standard definition (\cref{def:weakly_stable_contr}) of contractible types.
\begin{constr}\label{constr:stable_contractibility_data}
  Assume that $\CC$ is equipped with weakly stable identity types.
  Then we construct a family $\isContr' : \Ty_\CC \to \UPsh$ such that for every $A \Colon \yo(\Gamma) \to \Ty_\CC$, there is a logical equivalence
  \[ (\forall \gamma \to \isContr'(A(\gamma))) \leftrightarrow \isContr(A). \]
\end{constr}
\begin{proof}[Contruction]
  We construct $\isContr'$ by cofreely adding naturality to $\isContr$.

  We define $\isContr'$ as a dependent presheaf over $\Ty_\CC$.
  For any object $\Gamma : \CC$ and element $A \Colon \yo(\Gamma) \to \Ty_\CC$, we let $\isContr'_\Gamma(A)$ be the set of functions $c$ that send every morphism $\rho : \Delta \to \Gamma$ to a witness $c(\rho) : \isContr(A[\rho])$ of the contractibility of $A[\rho]$.
  For any morphism $f : \Omega \to \Gamma$, element $A \Colon \yo(\Gamma) \to \Ty_\CC$ and element $c : \isContr'_\Gamma(A)$, the restriction $c[f]$ sends a morphism $\rho : \Delta \to \Omega$ to a witness $c(f \circ \rho)$ of the contractibility of $A[f][\rho]$.

  The map $\isContr'_\Gamma(A) \to \isContr(A)$ is defined by evaluating $c$ at the identity morphism $\id : \Gamma \to \Gamma$.
  The map $\isContr(A) \to \isContr'_\Gamma(A)$ is defined using the fact that the weakly stable identity types are indeed weakly stable, providing maps $\isContr(A) \to \isContr(A[\rho])$ for any $\rho : \Delta \to \Gamma$.
\end{proof}

\begin{thm}
  If $\CC$ is equipped with weakly stable identity types, then the contractibility data constructed in~\cref{constr:stable_contractibility_data} satisfies the assumptions of~\cref{thm:identity_types_from_refleqv}.
\end{thm}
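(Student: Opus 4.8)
The plan is to derive every component from the weakly stable identity types of $\CC$, using \cref{constr:stable_contractibility_data} as a bridge whenever a contractibility proof needs to be stable under substitution. For the contractibility data demanded by \cref{def:contractibility_data} I use the family $\isContr'$ of \cref{constr:stable_contractibility_data}: it is a dependent presheaf over $\Ty_\CC$, and it carries the logical equivalence $(\forall\gamma \to \isContr'(A(\gamma))) \leftrightarrow \isContr(A)$ relating it to ordinary contractibility in the sense of \cref{def:weakly_stable_contr}, which I will invoke at each step below.

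For the reflexive equivalences, given $A \Colon \yo(\Gamma) \to \Ty_\CC$ I form the context extension $\Gamma.A$ with projection $\bm{p}$ and generic term $\bm{q}$ and set $\Id_A(\gamma,x,y) \triangleq \Id_{(\Gamma.A,\, A[\bm{p}],\, \bm{q})}((\gamma,x),y)$ and $\refl_A(\gamma,x) \triangleq \refl_{(\Gamma.A,\, A[\bm{p}],\, \bm{q})}((\gamma,x))$, using the weakly stable identity type introduction structure; this is chosen deliberately so as to agree with the identity type appearing in \cref{def:weakly_stable_contr}. To see that $\Id_A$ is an equivalence (\cref{def:relational_equivalence}) I must check that the based path spaces $(y:A(\gamma))\times\Id_A(\gamma,x,y)$ and $(x:A(\gamma))\times\Id_A(\gamma,x,y)$ are contractible; this is the standard contractibility of singletons, proven with $\J$ and the derived path composition, the only subtlety being that the weak computation rule forces the usual equalities to hold merely up to a path. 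The resulting witnesses are elements of contractibility in the sense of \cref{def:weakly_stable_contr}, which the logical equivalence of \cref{constr:stable_contractibility_data} promotes to $\isContr'$; together with $\refl_A$ this yields a reflexive equivalence in the sense of \cref{def:reflexive_equivalence}, so \cref{def:contractibility_reflexive_equivalences} is satisfied.

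For the dependent equivalences of \cref{def:contractibility_dependent_equivalences}, fix a dependent type $B \Colon (\gamma:\yo(\Gamma))\,(a:\Tm(A(\gamma))) \to \Ty_\CC$. Since $\CC$ has $\Sigma$-types I may form $\Sigma_A B : \Ty_\CC$ together with its weakly stable identity type $\Id_{\Sigma_A B}$, and the first projection of a path gives a term $\ap(\fst,-) \Colon \Tm(\Id_{\Sigma_A B}(\gamma,(x,b_x),(y,b_y))) \to \Tm(\Id_A(\gamma,x,y))$. I then define $\DId_{A.B}(\gamma,x,y,p)$ to be the relation that sends $(b_x,b_y)$ to the type $(q : \Id_{\Sigma_A B}(\gamma,(x,b_x),(y,b_y))) \times \Id_{\Id_A(\gamma,x,y)}(\ap(\fst,q),p)$ of paths of $\Sigma_A B$ lying over $p$; every ingredient here is built from $\Sigma$-types and the weakly stable identity types of $\CC$, so this is a genuine type of $\CC$. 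Its functionality in both directions (\cref{def:relational_equivalence}) reduces, after reshuffling $\Sigma$-types and using contractibility of based path spaces, to the contractibility of the based path spaces of $\Sigma_A B$, which holds because $\Id_{\Sigma_A B}$ is a reflexive equivalence by the previous paragraph; the resulting contractibility witnesses are again promoted through \cref{constr:stable_contractibility_data}. This produces the dependent equivalence $\DId_{A.B}$ over $\Id_A$.

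Finally, suppose a type $A \Colon \yo(\Gamma) \to \Ty_\CC$ is contractible, i.e.\ we are given $c \Colon \forall\gamma \to \isContr'(A(\gamma))$. Applying the logical equivalence of \cref{constr:stable_contractibility_data} yields an element of contractibility in the sense of \cref{def:weakly_stable_contr}; its first component is exactly a center $\ccenter_A \Colon \forall\gamma\to\Tm_\CC(A(\gamma))$ as in \cref{def:contractibility_center}, and its second component is exactly a homogeneous all-paths operation $\chpath_A \Colon \forall\gamma\,(x,y:\Tm_\CC(A(\gamma))) \to \Tm_\CC(\Id_A(\gamma,x,y))$ as in \cref{def:contractibility_has_all_paths}, since $\Id_A$ was set up to coincide with the identity type of \cref{def:weakly_stable_contr}. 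This establishes the four hypotheses of \cref{thm:identity_types_from_refleqv}. I expect the main obstacle to be the construction of the dependent equivalences and the verification of their functionality: this is where one has to juggle the weak ($\beta$-only-up-to-a-path) computation rule, carry out the $\Sigma$-type reshuffling cleanly, and route every contractibility proof through \cref{constr:stable_contractibility_data} so that it becomes stable under substitution — nothing deep, but the bookkeeping is genuinely delicate.
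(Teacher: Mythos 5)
Your proposal is correct and takes essentially the same route as the paper, which simply remarks that all the assumptions of \cref{thm:identity_types_from_refleqv} are standard properties of identity types; you have spelled out those standard constructions (singleton contractibility for the reflexive equivalences, paths-over-a-path in a $\Sigma$-type for the dependent equivalences, and the two components of \cref{def:weakly_stable_contr} for centers and all-paths), routing the witnesses through \cref{constr:stable_contractibility_data} exactly as intended.
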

\begin{proof}
  All of the assumptions of~\cref{thm:identity_types_from_refleqv} are standard properties of identity types.
\end{proof}

\begin{rem}
  It should be possible to generalize this construction to CwFs without $\Sigma$-types by having the contractibility data quantify over telescopes of types, \ie{} $\isContr : \Ty_{\CC}^{\star} \to \UPsh$.
\end{rem}


\section{Reflexive equivalences models}\label{sec:mgraph_model}
In this section, we construct $(\reppre)$-CwFs $\CPreReflGraph(\CC)$ of pre-reflexive graphs, $\CPreReflEqv(\CC)$ of equivalences with pre-reflexive equivalences and $\CReflEqv(\CC)$ of reflexive equivalences from a given $(\reppre)$-CwF $\CC$ equipped with suitable contractibility data.

By \emph{pre-reflexive} graph, we mean a graph together with a family of loops that should be thought of as reflexive loops, without conditions expressing the existence or the uniqueness of a reflexive loop yet.

For a SOGAT $\CT$, the model $\CPreReflEqv(\CT)$ will be used to prove external univalence for $\CT$ by constructing identity types over $\CT$ in~\cref{sec:univalence_proof}.

The $(\reppre)$-CwF $\CPreReflGraph(\CC)$ is an instance of an inverse diagram model~\parencite{HomotopicalInverseDiagrams}, indexed by the inverse category
\[ \PreReflGraph \triangleq \left\{ \begin{tikzcd}
    R
    \ar[r, "e"]
    \ar[rr, bend left=45, "c"]
    & E
    \ar[r, "l", shift left]
    \ar[r, "r"', shift right]
    & V
  \end{tikzcd} \right\}, \]
where $l\circ e = c = r \circ e$, although we present it syntactically, rather than diagrammatically.

The inverse diagram $\PreReflGraph$ is an inverse replacement~\parencite{SpaceValuedDiagramsTT} of the diagram
\[ \ReflGraph \triangleq \left\{ \begin{tikzcd}
      E
      \ar[r, "l", bend left]
      \ar[r, "r"', bend right]
      & V
      \ar[l, "e"]
    \end{tikzcd} \right\} \]
that indexes the presheaf category of reflexive graphs.

The $(\reppre)$-CwFs $\CPreReflEqv(\CC)$ of pre-reflexive equivalences and $\CReflEqv(\CC)$ of reflexive equivalences are homotopical inverse diagram models over the same base category $\PreReflGraph$, with different sets of morphisms marked as equivalences.
The types of $\CPreReflEqv(\CC)$ and $\CReflEqv(\CC)$ will be types of $\CPreReflGraph(\CC)$ along with some additional contractibility conditions for every marked arrow.
For $\CPreReflEqv(\CC)$, the arrows $l$ and $r$ are marked, while for $\CReflEqv(\CC)$, the arrows $l$, $r$ and $c$ are marked.

The CwF $\CReflEqv(\CC)$ of reflexive equivalences is essentially the same as the CwA of trivial auto-span-equivalences from \textcite{HomotopyTheoryTTs}.

We fix a $(\reppre)$-CwF $\CC$ for the whole section.

\subsection{The category of pre-reflexive graphs}

\begin{defi}
  We define a category $\CPreReflGraph(\CC)$ of \defemph{pre-reflexive graphs} in $\CC$.
  \begin{itemize}
  \item An object $\Gamma$ of $\CPreReflGraph(\CC)$ is a triple $(\Gamma_V,\Gamma_E,\Gamma_R)$ where
    \begin{alignat*}{3}
      & \Gamma_V && \Colon{ }
      && \Ob_\CC, \\
      & \Gamma_E && \Colon{ }
      && \forall (\gamma_l,\gamma_r : \yo(\Gamma_V)) \to \Ty_\CC, \\
      & \Gamma_R && \Colon{ }
      && \forall (\gamma : \Gamma)\ (\gamma_e : \Tm_\CC(\Gamma_E(\gamma,\gamma))) \to \Ty_\CC.
    \end{alignat*}

    We can see $\Gamma_V$ as a type of vertices, $\Gamma_E$ as a dependent type of edges and $\Gamma_R$ as a dependent type of marked loops, which should be thought of as the reflexivity edges.
  \item A morphism $f$ from $\Gamma$ to $\Delta$ is a triple $(f_V,f_E,f_R)$ where
    \begin{alignat*}{3}
      & f_V && \Colon{ }
      && \Gamma_V \to \Delta_V, \\
      & f_E && \Colon{ }
      && \forall (\gamma_l,\gamma_r : \yo(\Gamma_V))\ (\gamma_e : \Tm_\CC(\Gamma_E(\gamma_l,\gamma_r))) \to \Tm_\CC(\Delta_E(f_V(\gamma_l),f_V(\gamma_r))), \\
      & f_R && \Colon{ }
      && \forall (\gamma : \yo(\Gamma_V))\ (\gamma_e : \Tm_\CC(\Gamma_E(\gamma,\gamma)))\ (\gamma_r : \Tm_\CC(\Gamma_R(\gamma,\gamma_e))) \to \Tm_\CC(\Delta_R(f_V(\gamma),f_E(\gamma_e))).
    \end{alignat*}
  \item Identities and compositions are defined in the evident way.
    \defiEnd{}
  \end{itemize}
\end{defi}

We could instead define $\CPreReflGraph(\CC)$ as the (non-equivalent) diagram category $\CC^{\PreReflGraph}$.
Using the more syntactic definition helps with computations in our applications.

\subsection{Reedy types}

\begin{defi}\label{def:mgraph_reedy_types}
  A \defemph{Reedy type} $A$, or \defemph{dependent pre-reflexive graph} $A$, over an object $\Gamma : \CPreReflGraph(\CC)$ is a triple $(A_V,A_E,A_R)$ where
  \begin{alignat*}{3}
    & A_V && \Colon{ }
    && \forall (\gamma : \yo(\Gamma_V)) \to \Ty_\CC, \\
    & A_E && \Colon{ }
    && \forall \gamma_l\ \gamma_r\ (\gamma_e : \Tm_\CC(\Gamma_E(\gamma_l,\gamma_r)))\ (a_l : \Tm_\CC(A_V(\gamma_l)))\ (a_r : \Tm_\CC(A_V(\gamma_r))) \to \Ty_\CC, \\
    & A_R && \Colon{ }
    && \forall \gamma\ \gamma_e\ (\gamma_r : \Tm_\CC(\Gamma_R(\gamma,\gamma_e)))\ (a : \Tm_\CC(A_V(\gamma)))\ (a_e : \Tm_\CC(A_E(\gamma_e,a,a))) \to \Ty_\CC.
  \end{alignat*}

  The substitution of a Reedy-type $A$ along a morphism $f : \Delta \to \Gamma$ in $\CPreReflGraph(\CC)$ is defined by composition with the components of $f$:
  \begin{alignat*}{3}
    & {A[f]}_V && \triangleq{ }
    && \lambda \delta \mapsto A_V(f_V(\delta)), \\
    & {A[f]}_E && \triangleq{ }
    && \lambda \delta_e\ a_l\ a_r \mapsto A_E(f_E(\delta_e),a_l,a_r), \\
    & {A[f]}_R && \triangleq{ }
    && \lambda \delta_r\ a\ a_e \mapsto A_R(f_R(\delta_r,a,a_e)).
  \end{alignat*}

  The functoriality of this definition is easy to check; it follows from the associativity of function composition.
  \defiEnd{}
\end{defi}

\begin{defi}
  The representable Reedy types are defined in the same way:
  a \defemph{representable Reedy type} $A$, or \defemph{representable dependent pre-reflexive graph} $A$, over an object $\Gamma : \CPreReflGraph(\CC)$ is a triple $(A_V,A_E,A_R)$ where
  \begin{alignat*}{3}
    & A_V && \Colon{ }
    && \forall (\gamma : \yo(\Gamma_V)) \to \RepTy_\CC, \\
    & A_E && \Colon{ }
    && \forall \gamma_l\ \gamma_r\ (\gamma_e : \Tm_\CC(\Gamma_E(\gamma_l,\gamma_r)))\ (a_l : \Tm_\CC(A_V(\gamma_l)))\ (a_r : \Tm_\CC(A_V(\gamma_r))) \to \RepTy_\CC, \\
    & A_R && \Colon{ }
    && \forall \gamma\ \gamma_e\ (\gamma_r : \Gamma_R(\gamma,\gamma_e))\ (a : \Tm_\CC(A_V(\gamma)))\ (a_e : \Tm_\CC(A_E(\gamma_e,a,a))) \to \RepTy_\CC.
  \end{alignat*}
  In particular, any representable Reedy type can be seen as a Reedy type by applying the map $\RepTy_\CC \to \Ty_\CC$ to all components.
  \defiEnd{}
\end{defi}

\begin{defi}
  A \defemph{term} of a Reedy type $A$ over $\Gamma : \CPreReflGraph(\CC)$ is a triple $(a_V,a_E,a_R)$ where
  \begin{alignat*}{3}
    & a_V && \Colon{ }
    && \forall (\gamma : \yo(\Gamma_V)) \to \Tm_\CC(A_V(\gamma)), \\
    & a_E && \Colon{ }
    && \forall \gamma_l\ \gamma_r\ (\gamma_e : \Tm_\CC(\Gamma_E(\gamma_l,\gamma_r))) \to \Tm_\CC(A_E(\gamma_e,a_V(\gamma_l),a_V(\gamma_r))), \\
    & a_R && \Colon{ }
    && \forall \gamma\ \gamma_r\ (\gamma_r : \Tm_\CC(\Gamma_R(\gamma,\gamma_e))) \to \Tm_\CC(A_R(\gamma_r,a_V(\gamma),a_E(\gamma_e))).
  \end{alignat*}
  The substitution of a term of a Reedy type along a morphism in $\CPreReflGraph(\CC)$ is also defined by composition with the components of $f$.

  The extension of a context $\Gamma$ by a type $A$ is the context
  \begin{alignat*}{3}
    & {(\Gamma.A)}_V && \triangleq{ }
    && \Gamma_V.A_V, \\
    & {(\Gamma.A)}_E((\gamma_l,a_l),(\gamma_r,a_r)) && \triangleq{ }
    && (\gamma_e:\Gamma_E(\gamma_l,\gamma_r)) \times (A_E(\gamma_e,a_l,a_r)), \\
    & {(\Gamma.A)}_R((\gamma,a),(\gamma_e,a_e)) && \triangleq{ }
    && (\gamma_r:\Gamma_R(\gamma,\gamma_e)) \times (A_R(\gamma_r,a,a_e)).
  \end{alignat*}
  It can be checked that this definition satisfies the required universal property.
  \defiEnd{}
\end{defi}

\begin{constr}
  We equip the Reedy types and the representable Reedy types with $\Unit$- and $\Sigma$- types as follows:
  \begin{alignat*}{3}
    & \Unit_{V} && \triangleq{ }
    && \lambda \gamma \mapsto \Unit, \\
    & \Unit_{E} && \triangleq{ }
    && \lambda \gamma_e\ t_{l}\ t_{r} \mapsto \Unit, \\
    & \Unit_{R} && \triangleq{ }
    && \lambda \gamma_r\ t\ t_{e} \mapsto \Unit, \\
    & {(\Sigma(A,B))}_{V} && \triangleq{ }
    && \lambda \gamma \mapsto (a : A_{V}(\gamma)) \times (b : B_{V}(\gamma,a)) \\
    & {(\Sigma(A,B))}_{E} && \triangleq{ }
    && \lambda \gamma_e\ (a_{l},b_{l})\ (a_{r},b_{r}) \mapsto (a_{e} : A_{E}(\gamma_e,a_{l},a_{r})) \times (b_{e} : B_{E}((\gamma_e,a_{e}),b_{l},b_{r})) \\
    & {(\Sigma(A,B))}_{R} && \triangleq{ }
    && \lambda \gamma_r\ (a,b)\ (a_{e},b_{e}) \mapsto (a_{r} : A_{R}(\gamma_r,a,a_{e})) \times (b_{r} : B_{r}((\gamma_r,a_{r}),b,b_{e})).
  \end{alignat*}

  It is straightforward to check that these definitions are natural and satisfy the universal properties of $\Unit$- and $\Sigma$- types.
  \defiEnd{}
\end{constr}

\begin{constr}
  We equip the Reedy types with $\Pi$-types with arities in the representable Reedy types as follows:
  \begin{alignat*}{3}
    & {(\Pi(A,B))}_{V} && \triangleq{ }
    && \lambda \gamma \mapsto (a : A_{V}(\gamma)) \to B_{V}(\gamma,a) \\
    & {(\Pi(A,B))}_{E} && \triangleq{ }
    && \lambda \gamma_e\ f_l\ f_r \mapsto \forall a_l\ a_r\ (a_{e} : A_{E}(\gamma_e,a_{l},a_{r})) \to B_{E}((\gamma_e,a_{e}),f_{l}(a_l),f_{r}(a_r)), \\
    & {(\Pi(A,B))}_{R} && \triangleq{ }
    && \lambda \gamma_r\ f\ f_e \mapsto \forall a\ a_e\ (a_{r} : A_{R}(\gamma_r,a,a_{e})) \to B_{r}((\gamma_r,a_{r}),f(a),f_e(a_e)).
  \end{alignat*}
  Checking the naturality of this definition is straightfoward, and checking the universal property of the $\Pi$-types is a matter of unfolding the definitions.
  \defiEnd{}
\end{constr}

To summarize, we have described the following construction.
\begin{constr}\label{constr:model_rep_mgraph}
  If $\CC$ is a $(\reppre)$-CwF, then the category $\CPreReflGraph(\CC)$ is equipped with the structure of a $(\reppre)$-CwF whose types are the \emph{Reedy types} (\cref{def:mgraph_reedy_types}), and the projection functor $V : \CPreReflGraph(\CC) \to \CC$ extends to a morphism of $(\reppre)$-CwFs.
  \defiEnd{}
\end{constr}

\subsection{Homotopical Reedy types}

Now assume that the $(\reppre)$-CwF $\CC$ is equipped with contractibility data, \ie{} with families $\isContr$ and $\isRepContr$ over its types and representable types, along with a map $\isRepContr(A) \to \isContr(A)$ for any $A : \RepTy_\CC$.

\begin{defi}\label{def:homotopical_reedy_types}
  A \defemph{$\{l,r\}$-homotopical Reedy type}, or \defemph{dependent pre-reflexive equivalence} is a Reedy type $A$ over $\Gamma : \CPreReflGraph(\CC)$ that satisfies the following two contractibility conditions:
  \begin{alignat*}{1}
    & A .\funr : \forall \gamma_l\ \gamma_r\ (\gamma_e : \Gamma_E(\gamma_l,\gamma_r))\ (a_l : A_V(\gamma_l)) \to \isContr((a_r : A_V(\gamma_r)) \times A_E(\gamma_e,a_l,a_r)), \\
    & A .\funl : \forall \gamma_l\ \gamma_r\ (\gamma_e : \Gamma_E(\gamma_l,\gamma_r))\ (a_r : A_V(\gamma_r)) \to \isContr((a_l : A_V(\gamma_l)) \times A_E(\gamma_e,a_l,a_r)).
  \end{alignat*}

  A representable dependent pre-reflexive equivalence is a representable Reedy type that satisfies:
  \begin{alignat*}{1}
    & A .\funr : \forall \gamma_l\ \gamma_r\ (\gamma_e : \Gamma_E(\gamma_l,\gamma_r))\ (a_l : A_V(\gamma_l)) \to \isRepContr((a_r : A_V(\gamma_r)) \times A_E(\gamma_e,a_l,a_r)), \\
    & A .\funl : \forall \gamma_l\ \gamma_r\ (\gamma_e : \Gamma_E(\gamma_l,\gamma_r))\ (a_r : A_V(\gamma_r)) \to \isRepContr((a_l : A_V(\gamma_l)) \times A_E(\gamma_e,a_l,a_r)).
  \end{alignat*}

  The action of morphism $f : \Delta \to \Gamma$ in $\CPreReflGraph$ a on a dependent pre-reflexive equivalence $A$ is defined by composition with the components of $f$.
  \defiEnd{}
\end{defi}
In other words, a Reedy type is $\{l,r\}$-homotopical when it determines a dependent equivalence in the sense of~\cref{def:dependent_equivalence}.

\begin{defi}
  A \defemph{$\{l,r,c\}$-homotopical Reedy type}, or \defemph{dependent reflexive equivalence} is a dependent pre-reflexive equivalence $A$ over $\Gamma : \CPreReflGraph(\CC)$ that satisfies the following additional contractibility conditions:
  \begin{alignat*}{1}
    & A .\refl : \forall \gamma\ (\gamma_e : \Gamma_E(\gamma,\gamma))\ (\gamma_r : \Gamma_R(\gamma,\gamma_e))\ a \to \isContr((a_e : A_E(\gamma_e,a,a)) \times A_R(\gamma_r,a,a_e)).
  \end{alignat*}

  A representable dependent reflexive equivalence is a representable dependent pre-reflexive equivalence that satisfies:
  \begin{alignat*}{1}
    & A .\refl : \forall \gamma\ (\gamma_e : \Gamma_E(\gamma,\gamma))\ (\gamma_r : \Gamma_R(\gamma,\gamma_e))\ a \to \isRepContr((a_e : A_E(\gamma_e,a,a)) \times A_R(\gamma_r,a,a_e)).
  \end{alignat*}
\end{defi}

We identify a collection of closure conditions on $\isRepContr$ and $\isContr$ that ensure that the $\Unit$-, $\Sigma$- and $\Pi$- type formers of $\CPreReflGraph(\CC)$ lift from the dependent pre-reflexive graphs to the dependent pre-reflexive equivalences.
\begin{lem}\label{lem:contr_closure_conditions_mgraph}
  Assume that the contractibility data is closed under the following operations:
  \begingroup{}\allowdisplaybreaks{}
  \begin{alignat*}{3}
    & \contr^{\rep}_{\Unit} && :{ } && \isRepContr(\Unit \times \Unit), \\
    & \contr^{\rep}_{\Sigma} && :{ } && \forall (A : \RepTy_{\CC})\ (B : \forall a \to \RepTy_{\CC})\ (C : \forall a \to \RepTy_{\CC})\ (D : \forall a\ b\ c \to \RepTy_{\CC}) \\
    &&&&& \quad \to \isRepContr((a : A) \times B(a)) \\
    &&&&& \quad \to (\forall a\ b \to \isRepContr((c : C(a)) \times D(a,b,c))) \\
    &&&&& \quad \to \isRepContr(((a : A) \times (c : C(a))) \times ((b : B(a)) \times D(a,b,c))), \\
    & \contr^{\rep}_{\cong} && :{ } && \forall (A,B : \RepTy_\CC)\\
    &&&&& \quad \to (\Tm_\CC(A) \cong \Tm_\CC(B)) \to \isRepContr(A) \\
    &&&&& \quad \to \isRepContr(B), \\
    & \contr_{\Unit} && :{ } && \isContr(\Unit \times \Unit), \\
    & \contr_{\Sigma} && :{ } && \forall (A : \Ty_{\CC})\ (B : \forall a \to \Ty_{\CC})\ (C : \forall a \to \Ty_{\CC})\ (D : \forall a\ b\ c \to \Ty_{\CC}) \\
    &&&&& \quad \to \isContr((a : A) \times B(a)) \\
    &&&&& \quad \to (\forall a\ b \to \isContr((c : C(a)) \times D(a,b,c))) \\
    &&&&& \quad \to \isContr(((a : A) \times (c : C(a))) \times ((b : B(a)) \times D(a,b,c))), \\
    & \contr_{\cong} && :{ } && \forall (A,B : \Ty_\CC)\\
    &&&&& \quad \to (\Tm_\CC(A) \cong \Tm_\CC(B)) \to \isContr(A) \\
    &&&&& \quad \to \isContr(B), \\
    & \contr_{\Pi} && :{ } && \forall (X : \RepTy_{\CC})\ (Y : \Tm_\CC(X) \to \RepTy_{\CC}) \\
    &&&&& { } \to (\forall x \to \isRepContr(Y(x))) \\
    &&&&& { } \to (\underline{A} : \Tm_\CC(X) \to \Ty_{\CC}) \\
    &&&&& { } \to (\underline{B} : (x : \Tm_\CC(X)) \to \Tm_\CC(Y(x)) \to \Tm_\CC(\underline{A}(x)) \to \Ty_{\CC}) \\
    &&&&& { } \to (\forall x\ y \to \isContr((a : \underline{A}(x)) \times \underline{B}(x,y,a))) \\
    &&&&& { } \to \isContr((a : (x:X) \to \underline{A}(x)) \\
    &&&&& \phantom{{ }\to\isContr} \times ((x:X) \to (y : Y(x)) \to \underline{B}(x,y,a(x)))).
          \tag*{\defiEnd{}}
  \end{alignat*}\endgroup{}

  Then the $\Unit$-, $\Sigma$- and $\Pi$- type structures lift from $\CPreReflGraph(\CC)$ to $\CPreReflEqv(\CC)$, for both types and representable types.
\end{lem}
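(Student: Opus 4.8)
The plan is to verify, for each of the type formers $\Unit$, $\Sigma$ and $\Pi$ of $\CPreReflGraph(\CC)$ described in the preceding constructions, that whenever its inputs satisfy the $\funr$ and $\funl$ contractibility conditions of~\cref{def:homotopical_reedy_types}, so does its output; and likewise with $\isRepContr$ in place of $\isContr$ for the representable type formers. Since all the underlying data (objects, types, universal properties) is shared with $\CPreReflGraph(\CC)$, the $(\reppre)$-CwF axioms are inherited for free, and it only remains to produce the extra contractibility witnesses; and since everything is carried out in the internal language of $\CPsh(\CC)$ and the postulated closure operations are global, naturality (stability under substitution) is automatic, so it suffices to produce each witness pointwise. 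The case of $\Unit$ is immediate: unfolding the definitions, both $\Unit.\funr$ and $\Unit.\funl$ ask for a witness of $\isContr(\Unit \times \Unit)$ (resp.\ $\isRepContr(\Unit \times \Unit)$), which is exactly $\contr_{\Unit}$ (resp.\ $\contr^{\rep}_{\Unit}$).

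For $\Sigma$, unfolding $(\Sigma(A,B))_V$ and $(\Sigma(A,B))_E$ shows that $\Sigma(A,B).\funr$ asks for contractibility of $(a_r : A_V(\gamma_r)) \times (b_r : B_V(\gamma_r,a_r)) \times (a_e : A_E(\gamma_e,a_l,a_r)) \times B_E((\gamma_e,a_e),b_l,b_r)$. Reassociating and transposing the independent components $b_r$ and $a_e$ via $\contr_{\cong}$, this becomes $\isContr$ of $((a_r : A_V(\gamma_r)) \times A_E(\gamma_e,a_l,a_r)) \times ((b_r : B_V(\gamma_r,a_r)) \times B_E((\gamma_e,a_e),b_l,b_r))$, which is precisely the conclusion of $\contr_{\Sigma}$ when its first hypothesis is taken to be $A.\funr$ and its second hypothesis $B.\funr$ (instantiated over the extended context $\Gamma.A$, for each $a_r$ and $a_e$). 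The case of $\funl$ is symmetric, using $A.\funl$ and $B.\funl$, and the representable versions use $\contr^{\rep}_{\Sigma}$ and $\contr^{\rep}_{\cong}$.

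For $\Pi$, unfolding $(\Pi(A,B))_V$ and $(\Pi(A,B))_E$ shows that $\Pi(A,B).\funr$ asks for contractibility of $(f_r : (a : A_V(\gamma_r)) \to B_V(\gamma_r,a)) \times (\forall a_l\ a_r\ (a_e : A_E(\gamma_e,a_l,a_r)) \to B_E((\gamma_e,a_e),f_l(a_l),f_r(a_r)))$. After currying and reordering the bound variables with $\contr_{\cong}$ so that the outer quantification is over $a_r : A_V(\gamma_r)$, this matches the conclusion of $\contr_{\Pi}$ with $X := A_V(\gamma_r)$, $Y := \lambda a_r \mapsto (a_l : A_V(\gamma_l)) \times A_E(\gamma_e,a_l,a_r)$ (representable, since $A$ is a representable Reedy type), $\underline{A} := \lambda a_r \mapsto B_V(\gamma_r,a_r)$ and $\underline{B} := \lambda a_r\ (a_l,a_e)\ b_r \mapsto B_E((\gamma_e,a_e),f_l(a_l),b_r)$: the hypothesis $\forall x \to \isRepContr(Y(x))$ is exactly $A.\funl$ in its representable form, while $\forall x\ y \to \isContr((a:\underline{A}(x)) \times \underline{B}(x,y,a))$ is exactly $B.\funr$. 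The case of $\funl$ is symmetric, this time using $A.\funr$ (representable form) and $B.\funl$; since $\Pi$-types land in the family of all Reedy types, only the non-representable closure conditions are needed here.

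The main obstacle is the $\Pi$-case: one has to reorganise the nested $\Pi$/$\Sigma$ structure of the goal carefully so that it lines up with the precise shape of $\contr_{\Pi}$, and the essential point is that $A$ must be a \emph{representable} pre-reflexive equivalence exactly so that the ``fibre over $a_r$'' type $(a_l : A_V(\gamma_l)) \times A_E(\gamma_e,a_l,a_r)$ is representably contractible — this is what makes the first-order $\Pi$-types of $\CPreReflGraph(\CC)$ interact correctly with the homotopical structure, and it is why the closure condition $\contr_{\Pi}$ is stated with a representable domain $X$ and a representably contractible family $Y$. By contrast the $\Unit$- and $\Sigma$-cases are routine bookkeeping with reassociation isomorphisms.
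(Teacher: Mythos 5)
Your proof is correct and follows essentially the same route as the paper's: the same case analysis over $\Unit$, $\Sigma$ and $\Pi$, the same instantiation of $\contr_{\Pi}$ with $X = A_V(\gamma_r)$, $Y(a_r) = (a_l : A_V(\gamma_l)) \times A_E(\gamma_e,a_l,a_r)$, and the same appeal to $\contr_{\cong}$/$\contr^{\rep}_{\cong}$ for the bookkeeping isomorphisms. One tiny remark: in the $\Sigma$ case the conclusion of $\contr_{\Sigma}$, with first hypothesis $A.\funr$ and second $B.\funr$, already has the interleaved shape $((a_r, b_r)) \times ((a_e, b_e))$ of the goal, so no reassociation is actually needed there (and the regrouping you wrote down is not literally the conclusion of $\contr_{\Sigma}$ under that instantiation, though $\contr_{\cong}$ covers the discrepancy).
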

\begin{proof} We need to check two contractibility conditions for each type former.
  \begin{description}
  \item[Case ${\Unit}$]
    We have to check the following two contractibility conditions:
    \begin{alignat*}{3}
      & \forall \gamma_l\ \gamma_r\ \gamma_e\ t_{l} && \to{ }
      && \isContr(\Unit \times \Unit), \\
      & \forall \gamma_l\ \gamma_r\ \gamma_e\ t_{r} && \to{ }
      && \isContr(\Unit \times \Unit).
    \end{alignat*}
    They are both instances of $\contr_{\Unit}$, or $\contr^{\rep}_{\Unit}$ in the case of representable Reedy types.
  \item[Case ${\Sigma}$]
    We have a dependent pre-reflexive equivalence $A$ over $\Gamma$ and a dependent pre-reflexive equivalence $B$ over $(\Gamma.A)$.

    In order to check that the Reedy type $\Sigma(A,B)$ is a dependent pre-reflexive equivalence, we have to check the following two contractibility conditions:
    \begin{alignat*}{3}
      & \forall \gamma_l\ \gamma_r\ \gamma_e\ (a_l,b_l) && \to{ }
      && \isContr(((a_{r} : A_{V}(\gamma_r)) \times (b_{r} : B_{V}(\gamma_r,a_{r}))) \\
      &&&&& \phantom{\isContr}{ }\times ((a_{e} : A_{E}(\gamma_e,a_{l},a_{r})) \times (b_{e} : B_{E}((\gamma_e,a_{e}),b_{l},b_{r})))), \\
      & \forall \gamma_l\ \gamma_r\ \gamma_e\ (a_r,b_r) && \to{ }
      && \isContr(((a_{l} : A_{V}(\gamma_l)) \times (b_{l} : B_{V}(\gamma_l,a_{l}))) \\
      &&&&& \phantom{\isContr}{ }\times ((a_{e} : A_{E}(\gamma_e,a_{l},a_{r})) \times (b_{e} : B_{E}((\gamma_e,a_{e}),b_{l},b_{r})))).
    \end{alignat*}

    They both follow from $\isContr_{\Sigma}$ and from the contractibility conditions of $A$ and $B$.

    In the case of representable Reedy types, we use $\isContr^\rep_\Sigma$ instead.
  \item[Case ${\Pi}$]
    We have a representable dependent pre-reflexive equivalence $A$ over $\Gamma$ and a dependent pre-reflexive equivalence $B$ over $(\Gamma.A)$.

    In order to check that the Reedy type $\Pi(A,B)$ is a dependent pre-reflexive equivalence over $\Gamma$, we have to check the following two contractibility conditions:
    \begin{alignat*}{3}
      & \forall \gamma_l\ \gamma_r\ \gamma_e\ f_l && \to{ }
      && \isContr((f_r : \forall a_r \to B_V(\gamma_r,a_r)) \\
      &&&&& \phantom{\isContr}{ }\times (f_e : \forall a_l\ a_r\ a_e \to B_E((\gamma_e,a_e),f_l(a_l),f_r(a_r)))), \\
      & \forall \gamma_l\ \gamma_r\ \gamma_e\ f_r && \to{ }
      && \isContr((f_l : \forall a_l \to B_V(\gamma_l,a_l)) \\
      &&&&& \phantom{\isContr}{ }\times (f_e : \forall a_l\ a_r\ a_e \to B_E((\gamma_e,a_e),f_l(a_l),f_r(a_r)))).
    \end{alignat*}

    Up to type isomorphism, the first contractibility condition is an instance of $\contr_\Pi$, whose arguments $(X,Y,\underline{A},\underline{B})$ are instantiated to:
    \begin{alignat*}{3}
      & X && ={ } && A_{V}(\gamma_r), \\
      & Y(a_{r}) && ={ } && (a_{l} : A_{V}(\gamma_l)) \times (a_{e} : A_{E}(\gamma_e,a_{l},a_{r})), \\
      & \underline{A}(a_{r}) && ={ } && B_{V}(\gamma_r,a_{r}), \\
      & \underline{B}(a_r,(a_l,a_e),b_{r}) && ={ } && B_{E}((\gamma_e,a_e),f_{l}(a_{l}),f_r(a_r)).
    \end{alignat*}

    Relying on type isomorphisms is allowed thanks to the operations $\contr^{\rep}_{\cong}$ and $\contr_{\cong}$.

    Up to symmetry, the second contractibility condition is similar the first one.
    \qedhere{}
  \end{description}
\end{proof}

\begin{constr}\label{constr:homotopical_mgraph_model}
  Let $\CC$ be a $(\reppre)$-CwF equipped with operations satisfying the specification of~\cref{lem:contr_closure_conditions_mgraph}.
  Then there is a $(\reppre)$-CwF $\CPreReflEqv(\CC)$ whose types are the dependent pre-reflexive equivalences as defined in~\cref{def:homotopical_reedy_types}.
  There is a $(\reppre)$-CwF $\CPreReflEqv(\CC) \to \CPreReflGraph(\CC)$ lying over the identity functor that forgets the contractibility witnesses of the dependent pre-reflexive equivalences.
  \defiEnd{}
\end{constr}

\begin{lem}\label{lem:contr_closure_conditions_mgraph_2}
  Assume that the contractibility data is closed under the operations of~\cref{lem:contr_closure_conditions_mgraph} and the additional operation
  \begingroup{}\allowdisplaybreaks{}
  \begin{alignat*}{3}
    & \contr_{\Pi,\refl} && :{ }
    && \forall (X : \RepTy_{\CC})\ (Y : \Tm_\CC(X) \to \RepTy_{\CC})\ ((Z : \Tm_\CC(X) \to \RepTy_{\CC})) \\
    &&&&& { } \to (\forall x \to \isRepContr(Y(x))) \\
    &&&&& { } \to (\forall x \to \isRepContr(Z(x))) \\
    &&&&& { } \to (f : \forall x \to \Tm_\CC(Y(x)) \to \Tm_\CC(Z(x))) \\
    &&&&& { } \to (\underline{A} : (x : \Tm_\CC(X)) \to \Tm_\CC(Z(x)) \to \Ty_{\CC}) \\
    &&&&& { } \to (\underline{B} : (x : \Tm_\CC(X)) \to (y : \Tm_\CC(Y(x))) \to \Tm_\CC(\underline{A}(x,f(x,y))) \to \Ty_{\CC}) \\
    &&&&& { } \to (\forall x\ y \to \isContr((a : \underline{A}(x,f(x,y))) \times \underline{B}(x,y,a))) \\
    &&&&& { } \to \isContr((a : (x:X) \to (z : Z(x)) \to \underline{A}(x,z)) \\
    &&&&& \phantom{{ }\to\isContr} \times ((x:X) \to (y : Y(x)) \to \underline{B}(x,y,a(x,f(y))))).
          \tag*{\defiEnd{}}
  \end{alignat*}\endgroup{}

  Then the $\Unit$-, $\Sigma$- and $\Pi$- type structures lift from $\CPreReflEqv(\CC)$ to $\CReflEqv(\CC)$, for both types and representable types.
\end{lem}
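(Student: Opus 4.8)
The plan is to mirror the proof of \cref{lem:contr_closure_conditions_mgraph} clause by clause. A dependent reflexive equivalence over $\Gamma$ is a dependent pre-reflexive equivalence $A$ over $\Gamma$ together with the single extra witness
\[ A.\refl : \forall \gamma\ (\gamma_e : \Gamma_E(\gamma,\gamma))\ (\gamma_r : \Gamma_R(\gamma,\gamma_e))\ a \to \isContr((a_e : A_E(\gamma_e,a,a)) \times A_R(\gamma_r,a,a_e)), \]
and the representable variant is the same with $\isRepContr$. Since \cref{lem:contr_closure_conditions_mgraph} already provides the $\funl$- and $\funr$-witnesses for $\Unit$, $\Sigma(A,B)$ and $\Pi(A,B)$ whenever $A$ and $B$ are dependent pre-reflexive equivalences, the only new thing to produce is the $\refl$-witness for each of the three type formers. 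Naturality of these witnesses under morphisms of $\CPreReflGraph(\CC)$ is automatic, as everything is built by composition with the structure maps, exactly as before.

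First I would dispatch the $\Unit$ case: unfolding $\Unit_E$ and $\Unit_R$ turns the required witness into $\isContr(\Unit \times \Unit)$, which is an instance of $\contr_{\Unit}$ (and of $\contr^{\rep}_{\Unit}$ for the representable variant). For $\Sigma(A,B)$ I would unfold $(\Sigma(A,B))_E$ and $(\Sigma(A,B))_R$ on the diagonal; the resulting witness is then an instance of $\contr_{\Sigma}$ with $A \mapsto A_E(\gamma_e,a,a)$, $B(a_e) \mapsto A_R(\gamma_r,a,a_e)$, $C(a_e) \mapsto B_E((\gamma_e,a_e),b,b)$ and $D(a_e,a_r,b_e) \mapsto B_R((\gamma_r,a_r),b,b_e)$, whose two premises are precisely $A.\refl$ at $(\gamma,\gamma_e,\gamma_r,a)$ and $B.\refl$ at $((\gamma,a),(\gamma_e,a_e),(\gamma_r,a_r),b)$; the representable variant uses $\contr^{\rep}_{\Sigma}$, and any residual reassociation of the nested $\Sigma$'s is absorbed by $\contr_{\cong}$ (and $\contr^{\rep}_{\cong}$).

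The $\Pi$ case is where the new assumption $\contr_{\Pi,\refl}$ is used, and it is the only step with any real bookkeeping. Given a representable dependent reflexive equivalence $A$ over $\Gamma$ and a dependent reflexive equivalence $B$ over $\Gamma.A$, I would unfold $(\Pi(A,B))_E$ and $(\Pi(A,B))_R$, reducing the goal to the contractibility of the type of pairs $(f_e,f_r)$ with $f_e : \forall a\ a'\ (a_e : A_E(\gamma_e,a,a')) \to B_E((\gamma_e,a_e),f(a),f(a'))$ and $f_r : \forall a\ (a_e : A_E(\gamma_e,a,a))\ (a_r : A_R(\gamma_r,a,a_e)) \to B_R((\gamma_r,a_r),f(a),f_e(a,a,a_e))$. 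Up to the evident currying isomorphism, handled by $\contr_{\cong}$, this is an instance of $\contr_{\Pi,\refl}$ with $X = A_V(\gamma)$, $Z(a) = (a' : A_V(\gamma)) \times A_E(\gamma_e,a,a')$ (contractible by $A.\funr$), $Y(a) = (a_e : A_E(\gamma_e,a,a)) \times A_R(\gamma_r,a,a_e)$ (contractible by $A.\refl$, landing in $\isRepContr$ since $A$ is representable), the evaluation map instantiated to $(a,(a_e,a_r)) \mapsto (a,a_e) : Z(a)$, and the motives $\underline{A}(a,(a',a_e)) = B_E((\gamma_e,a_e),f(a),f(a'))$ and $\underline{B}(a,(a_e,a_r),b_e) = B_R((\gamma_r,a_r),f(a),b_e)$; with this choice the last premise of $\contr_{\Pi,\refl}$, namely $\forall a\ (a_e,a_r) \to \isContr((b_e : B_E((\gamma_e,a_e),f(a),f(a))) \times B_R((\gamma_r,a_r),f(a),b_e))$, is exactly $B.\refl$ at $((\gamma,a),(\gamma_e,a_e),(\gamma_r,a_r),f(a))$. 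Since first-order $\Pi$-types produce only general Reedy types, there is no representable variant to check. I expect the one genuine difficulty to be precisely this matching: aligning the iterated dependencies of the $(f_e,f_r)$ pair with the rigid shape of $\contr_{\Pi,\refl}$, and in particular recognizing that its evaluation-map parameter must be taken to be the inclusion $(a,(a_e,a_r)) \mapsto (a,a_e)$ of $A_E(\gamma_e,a,a)$ into $(a':A_V(\gamma)) \times A_E(\gamma_e,a,a')$; once that is pinned down the verification is routine unfolding.
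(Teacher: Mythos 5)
Your proof is correct and follows essentially the same route as the paper: the $\Unit$ and $\Sigma$ cases are dispatched by $\contr_{\Unit}$ and $\contr_{\Sigma}$ (with the same instantiation the paper leaves implicit), and the $\Pi$ case instantiates $\contr_{\Pi,\refl}$ with exactly the paper's choice of $X$, $Y$, $Z$, the map $(a,(a_e,a_r)) \mapsto (a,a_e)$, and motives $\underline{A}$, $\underline{B}$, so that the final premise is $B.\refl$. No gaps.
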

\begin{proof} We need to check one contractibility condition for each type former.
  \begin{description}
  \item[Case ${\Unit}$]
    We have to check the following contractibility condition:
    \begin{alignat*}{3}
      & \forall \gamma\ \gamma_e\ \gamma_r\ t && \to{ }
      && \isContr(\Unit \times \Unit).
    \end{alignat*}
    This is an instance of $\contr_{\Unit}$, or $\contr^{\rep}_{\Unit}$ in the case of representable Reedy types.
  \item[Case ${\Sigma}$]
    We have a dependent reflexive equivalence $A$ over $\Gamma$ and a dependent reflexive equivalence $B$ over $(\Gamma.A)$.

    In order to check that the Reedy type $\Sigma(A,B)$ is a dependent reflexive equivalence, we have to check the following contractibility condition:
    \begin{alignat*}{3}
      & \forall \gamma\ \gamma_e\ \gamma_r\ (a,b) && \to{ }
      && \isContr(((a_{e} : A_{E}(\gamma_e,a,a)) \times (b_{e} : B_{E}((\gamma_e,a_{e}),b,b))) \\
      &&&&& \phantom{\isContr}{ }\times ((a_{r} : A_{R}(\gamma_r,a,a_{e})) \times (b_{r} : B_{R}((\gamma_r,a_{r}),b,b_{e})))).
    \end{alignat*}

    It follows from $\isContr_{\Sigma}$ (or $\isContr^\rep_\Sigma$) and from the contractibility conditions of $A$ and $B$.
  \item[Case ${\Pi}$]
    We have a representable dependent reflexive equivalence $A$ over $\Gamma$ and a dependent reflexive equivalence $B$ over $(\Gamma.A)$.

    In order to check that the Reedy type $\Pi(A,B)$ is a dependent reflexive equivalence, we have to check the following contractibility condition:
    \begin{alignat*}{3}
      & \forall \gamma\ \gamma_e\ \gamma_r\ f && \to{ }
      && \isContr((f_e : \forall a_l\ a_r\ (a_e : A_E(\gamma_e,a_l,a_r)) \to B_{E}((\gamma_e,a_e),f(a_l),f(a_r))) \\
      &&&&& \phantom{\isContr}{ }\times (f_r : \forall a\ a_e\ (a_r : A_R(\gamma_r,a,a_e)) \to B_R((\gamma_r,a_r),f(a),f_e(a_e)))).
    \end{alignat*}

    Up to type isomorphism, this contractibility condition is an instance of $\contr_{\Pi,\refl}$, whose arguments $(X,Y,Z,f,\underline{A},\underline{B})$ are instantiated to:
    \begin{alignat*}{3}
      & X && ={ }
      && A_{V}(\gamma), \\
      & Y(a) && ={ }
      && (a_{e} : A_{V}(\gamma_e,a,a)) \times (a_{r} : A_{R}(\gamma_r,a,a_{e})), \\
      & Z(a) && ={ }
      && (a_{r} : A_{V}(\gamma)) \times (a_{e} : A_{E}(\gamma_e,a,a_{r})), \\
      & f(a,(a_{e},a_{r})) && ={ }
      && (a,a_e), \\
      & \underline{A}(a,(a_r,a_e)) && ={ }
      && B_{E}((\gamma_e,a_e),f(a),f(a_r)), \\
      & \underline{B}(a,(a_e,a_r),b_{r}) && ={ }
      && B_{R}((\gamma_r,a_r),f(a),f_e(a_e)).
         \tag*{\qedhere{}}
    \end{alignat*}
  \end{description}
\end{proof}

\begin{constr}\label{constr:homotopical_mgraph_model_2}
  Let $\CC$ be a $(\reppre)$-CwF equipped with operations satisfying the specifications of~\cref{lem:contr_closure_conditions_mgraph} and~\cref{lem:contr_closure_conditions_mgraph_2}.
  Then there is a $(\reppre)$-CwF $\CReflEqv(\CC)$ whose types are the dependent reflexive equivalences as defined in~\cref{def:homotopical_reedy_types}.
  There is a $(\reppre)$-CwF $\CReflEqv(\CC) \to \CPreReflEqv(\CC)$ lying over the identity functor that forgets the additional contractibility conditions of dependent reflexive equivalences.
  \defiEnd{}
\end{constr}

\subsection{Parametricity structures}\label{ssec:parametricity_structures}

We now use the pre-reflexive graph and homotopical pre-reflexive graphs models to specify notions of parametricity structures over $(\reppre)$-CwFs.

\begin{defi}
  A \defemph{parametricity structure} for a $(\reppre)$-CwF $\CC$ is a section $\sem{-}$ of the projection morphism $V : \CPreReflGraph(\CC) \to \CC$.

  A \defemph{$\{l,r\}$-homotopical parametricity structure} for a $(\reppre)$-CwF $\CC$ that is equipped with contractibility data and satisfies the closure conditions of~\cref{lem:contr_closure_conditions_mgraph} is a section $\sem{-}$ of the projection morphism $V : \CPreReflEqv(\CC) \to \CC$.
  \defiEnd{}
\end{defi}

\begin{defi}\label{def:reflexivity_operation}
  Let $\CC$ be a $(\reppre)$-CwF equipped with a parametricity structure $\sem{-}$.

  A \defemph{reflexivity operation} for an object $\Gamma : \CC$ consists of:
  \begin{alignat*}{3}
    & \refl_\Gamma^E && \Colon{ }
    && \forall (\gamma : \yo(\Gamma)) \to \Tm_\CC(\sem{\Gamma}_E(\gamma,\gamma)), \\
    & \refl_\Gamma^R && \Colon{ }
    && \forall (\gamma : \yo(\Gamma)) \to \Tm_\CC(\sem{\Gamma}_R(\gamma,\refl_\Gamma^E(\gamma))).
  \end{alignat*}

  A \defemph{reflexivity operation} for a type $A \Colon \yo(\Gamma) \to \Ty_\CC$ consists of:
  \begin{alignat*}{3}
  & \refl_A^E && \Colon{ }
  && \forall (\gamma : \yo(\Gamma))\ (\gamma_e : \Tm_\CC(\sem{\Gamma}_E(\gamma,\gamma)))\ (\gamma_r : \Tm_\CC(\sem{\Gamma}_R(\gamma,\gamma_e))) \\
  &&&&& \phantom{\forall} (a : \Tm_\CC(A(\gamma))) \to \Tm_\CC(\sem{A}_E(\gamma_e,a,a)), \\
  & \refl_A^R && \Colon{ }
  && \forall (\gamma : \yo(\Gamma))\ (\gamma_e : \Tm_\CC(\sem{\Gamma}_E(\gamma,\gamma)))\ (\gamma_r : \Tm_\CC(\sem{\Gamma}_R(\gamma,\gamma_e))) \\
  &&&&& \phantom{\forall} (a : \Tm_\CC(A(\gamma))) \to \Tm_\CC(\sem{A}_R(\gamma_r,a,\refl_A^E(\gamma_r,a))).
  \end{alignat*}

  A \defemph{reflexivity structure} over $\CC$ consists of reflexivity operations for all objects and types of $\CC$.
  \defiEnd{}
\end{defi}

\begin{prop}\label{prop:contextual_reflexivity_structure}
  If $\CC$ is a contextual $(\reppre)$-CwF that is equipped with reflexivity operations for all types, then it is also equipped with reflexivity operations for all objects, and thus of a reflexivity structure.
\end{prop}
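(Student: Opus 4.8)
The plan is to induct on the contextual structure of $\CC$. Since $\CC$ is contextual, every object arises from the terminal object $\diamond$ by a unique finite chain of context extensions $\diamond.A_{1}.\dotsb.A_{n}$, with each $A_{i}$ a type over $\diamond.A_{1}.\dotsb.A_{i-1}$; so it suffices to produce a reflexivity operation for $\diamond$ and, given one for $\Delta$ together with the reflexivity operation furnished for a type $A$ over $\Delta$ (which exists by hypothesis, since \cref{def:reflexivity_operation} provides reflexivity operations for all types), a reflexivity operation for $\Delta.A$. Throughout I will use that the parametricity structure $\sem{-} : \CC \to \CPreReflGraph(\CC)$ is a strict $(\reppre)$-CwF morphism, hence preserves the terminal object and context extensions strictly: $\sem{\diamond}$ is the terminal object of $\CPreReflGraph(\CC)$, and $\sem{\Delta.A} = \sem{\Delta}.\sem{A}$.

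For the base case I would first identify the terminal object of $\CPreReflGraph(\CC)$ as the triple whose vertex component is $\diamond_{\CC}$ and whose edge and loop components are constantly the unit type $\Unit$: a morphism into a triple $\Gamma$ is a $V$-component together with terms of the $E$- and $R$-components, so terminality forces the latter two components to be $\Unit$. Thus $\sem{\diamond}_{E}$ and $\sem{\diamond}_{R}$ are constantly $\Unit$, and I set $\refl_{\diamond}^{E}(\gamma) \triangleq \tt$ and $\refl_{\diamond}^{R}(\gamma) \triangleq \tt$.

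For the inductive step, fix a type $A \Colon \yo(\Delta) \to \Ty_{\CC}$ and a reflexivity operation $(\refl_{\Delta}^{E},\refl_{\Delta}^{R})$ for $\Delta$. Unfolding the definition of context extension in $\CPreReflGraph(\CC)$, a point of $\yo(\Delta.A)$ is a pair $(\delta,a)$, and
\[ \sem{\Delta.A}_{E}((\delta,a),(\delta,a)) = (\delta_{e} : \sem{\Delta}_{E}(\delta,\delta)) \times \sem{A}_{E}(\delta_{e},a,a), \qquad \sem{\Delta.A}_{R}((\delta,a),(\delta_{e},a_{e})) = (\delta_{r} : \sem{\Delta}_{R}(\delta,\delta_{e})) \times \sem{A}_{R}(\delta_{r},a,a_{e}). \]
The reflexivity operation $(\refl_{A}^{E},\refl_{A}^{R})$ for $A$ consumes precisely a point $\delta$, an edge $\delta_{e} : \Tm_{\CC}(\sem{\Delta}_{E}(\delta,\delta))$, a loop $\delta_{r} : \Tm_{\CC}(\sem{\Delta}_{R}(\delta,\delta_{e}))$ over it, and a point $a : \Tm_{\CC}(A(\delta))$. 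I would therefore set
\[ \refl_{\Delta.A}^{E}(\delta,a) \triangleq \bigl(\refl_{\Delta}^{E}(\delta),\ \refl_{A}^{E}(\delta,\refl_{\Delta}^{E}(\delta),\refl_{\Delta}^{R}(\delta),a)\bigr), \qquad \refl_{\Delta.A}^{R}(\delta,a) \triangleq \bigl(\refl_{\Delta}^{R}(\delta),\ \refl_{A}^{R}(\delta,\refl_{\Delta}^{E}(\delta),\refl_{\Delta}^{R}(\delta),a)\bigr), \]
instantiating $\delta_{e}$ to $\refl_{\Delta}^{E}(\delta)$ and $\delta_{r}$ to $\refl_{\Delta}^{R}(\delta)$, where the latter has type $\Tm_{\CC}(\sem{\Delta}_{R}(\delta,\refl_{\Delta}^{E}(\delta)))$ by the defining property of $\refl_{\Delta}^{R}$. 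A brief type-check confirms that the second components land in $\sem{A}_{E}(\refl_{\Delta}^{E}(\delta),a,a)$ and in $\sem{A}_{R}(\refl_{\Delta}^{R}(\delta),a,a_{e})$ with $a_{e} = \refl_{A}^{E}(\delta,\refl_{\Delta}^{E}(\delta),\refl_{\Delta}^{R}(\delta),a)$ — which is exactly the second component of $\refl_{\Delta.A}^{E}(\delta,a)$ — so both pairs have the required types. This closes the induction, and assembling the operations over all objects yields the reflexivity structure.

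The argument is essentially bookkeeping; the two points that deserve care (and hence the ``hard part'') are the structural facts invoked above: that in a contextual $(\reppre)$-CwF the contextual decomposition of each object is unique, so the recursion is well founded and involves no choices, and that a strict $(\reppre)$-CwF morphism preserves the terminal object and context extensions on the nose, which is what makes $\sem{-}$ of an iterated extension literally equal to the iterated extension of the images and thereby makes the displayed types match strictly rather than merely up to isomorphism. If one additionally wants the resulting operations to be stable under substitution — in case the reflexivity operations for types are — the same computation goes through, now using the naturality of $\refl_{\Delta}^{E}$, $\refl_{\Delta}^{R}$, $\refl_{A}^{E}$, $\refl_{A}^{R}$.
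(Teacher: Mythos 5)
Your proof is correct and follows essentially the same route as the paper's: induction on the contextual structure, with the base case given by the unit components of the terminal object and the inductive step given by exactly the pairing $\refl_{\Gamma.A}^E(\gamma,a) = (\refl_\Gamma^E(\gamma), \refl_A^E(\gamma,\refl_\Gamma^E(\gamma),\refl_\Gamma^R(\gamma),a))$ and its $R$-analogue. The additional remarks on strict preservation of context extensions and uniqueness of the contextual decomposition are accurate but left implicit in the paper.
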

\begin{proof}
  By induction on the contexts of $\CC$, we pose
  \begin{alignat*}{3}
    & \refl_\diamond^E(\star) && \triangleq{ }
    && \star, \\
    & \refl_\diamond^R(\star) && \triangleq{ }
    && \star, \\
    & \refl_{\Gamma.A}^E(\gamma,a) && \triangleq{ }
    && (\refl_\Gamma^E(\gamma), \refl_A^E(\gamma,\refl_\Gamma^E(\gamma),\refl_\Gamma^R(\gamma),a)), \\
    & \refl_{\Gamma.A}^R(\gamma,a) && \triangleq{ }
    && (\refl_\Gamma^R(\gamma), \refl_A^R(\gamma,\refl_\Gamma^E(\gamma),\refl_\Gamma^R(\gamma),a)).
       \tag*{\qedhere}
  \end{alignat*}
\end{proof}

We now fix a $(\reppre)$-CwF $\CC$ equipped with contractibility data and with a homotopical parametricity structure $\sem{-}$, along with a reflexivity structure.

We can then equip it with reflexive equivalences (\cref{def:reflexive_equivalence}) and with dependent equivalences (\cref{def:dependent_equivalence}):
\begin{alignat*}{3}
  & \Id_A(\gamma,a_l,a_r) && \triangleq{ } && \sem{A}(\refl^E_\Gamma(\gamma),a_l,a_r), \\
  & \refl_A(\gamma,a) && \triangleq{ } && \refl_A^E(\refl^E_\Gamma(\gamma),\refl^R_\Gamma(\gamma),a), \\
  & \DId_{A.B}(\gamma,a_l,a_r,a_e,b_l,b_r) && \triangleq{ } && \sem{B}((\refl^E_\Gamma(\gamma),a_e),b_l,b_r).
\end{alignat*}

Our goal is now to investigate the remaining assumption of~\cref{thm:identity_types_from_refleqv}.
We prove some lemmata showing that center and homogeneous all-paths operations are preserved by the operations of~\cref{lem:contr_closure_conditions_mgraph}, under some additional hypothesis for some of the operations.
These lemmata will be needed in~\cref{sec:univalence_proof}.

\begin{constr}\label{constr:heterogeneous_all_paths}
  Let $\Ty'_\CC \hra \Ty_\CC$ be a subfamily of the family of types such that:
  \begin{itemize}
  \item for every $X \Colon \yo(\Gamma) \to \Ty'_\CC$, the contractibility witnesses of the homotomical Reedy type $\sem{X}$ are equipped with centers.
  \item for every $X \Colon \yo(\Gamma) \to \Ty'_\CC$, the dependent types $\sem{X}_E$ and $\sem{X}_R$ are dependent types in $\Ty'_\CC$, \ie{} they factor through $\Ty'_\CC \hra \Ty_\CC$.
  \end{itemize}

  Let $A \Colon \yo(\Gamma) \to \Ty'_\CC$ be a global type and $B \Colon (\gamma : \yo(\Gamma)) \to (a : \Tm_\CC(A(\gamma))) \to \Ty'_\CC$ be a global dependent types.
  If we have a homogeneous all-paths operation for $B$ over $\Gamma.A$, then we construct a heterogeneous all-paths operation for $B$.
\end{constr}
\begin{proof}
  We have to define
  \begin{alignat*}{3}
    & B.\cpath && \Colon{ }
    && \forall \gamma\ (a_{l},a_{r} : \Tm_\CC(A(\gamma)))\ (a_{e} : \Tm_\CC(\Id_{A}(\gamma,a_{l},a_{r}))) \\
    &&&&& \phantom{\forall} (b_{l} : \Tm_\CC(B(\gamma,a_{l})))\ (b_{r} : \Tm_\CC(B(\gamma,a_{r}))) \to \Tm_\CC(\DId_{A.B}(\gamma,a_{e},b_{l},b_{r})).
  \end{alignat*}

  We first transport $b_{r}$ through the equivalence $\DId_{A.B}(\gamma,a_{e}) : B(\gamma,a_{l}) \simeq B(\gamma,a_{r})$.
  \begin{alignat*}{3}
    & \overleftarrow{T_{B}}(\gamma,a_{e},b_{r}) && :{ } &&
    \isContr((b_{l}' : B(\gamma,a_{l})) \times (b_{e}' : \DId_{A.B}(\gamma,a_{e},b_{l}',b_{r}))), \\
    & \overleftarrow{T_{B}}(\gamma,a_{e},b_{r}) && \triangleq{ } &&
    \DId_{A.B}(\gamma,a_{e}).\funl(b_{r}), \\
    & (b_{l}'(\gamma,a_{e},b_{r}),b_{e}'(\gamma,a_{e},b_{r})) &&
    \triangleq{ } && \overleftarrow{T_{B}}.\ccenter(\gamma,a_{e},b_{r}),
  \end{alignat*}
  where the center can be obtained thanks to our first assumption about $\Ty'_\CC$.

  Now, using the homogeneous all-paths operation for $B$ at $a_l$, we obtain a homogeneous path between $b_{l}$ and $b_{l}'$.
  \begin{alignat*}{3}
    & b_{e}''(\gamma,a_{e},b_{l},b_{r}) && :{ } &&
    \Id_{B(\gamma,a_{l})}(\gamma,b_{l},b_{l}'), \\
    & b_{e}''(\gamma,a_{e},b_{l},b_{r}) && \triangleq{ } &&
    B.\chpath((\gamma,a_{l}),b_{l},b_{l}').
  \end{alignat*}

  It remains to compose the homogeneous path $b_{e}''$ with the homogeneous path $b_{e}'$.
  \begin{alignat*}{3}
    & I_{B}(\gamma,a_{e},b_{r})(b_{l}) && \triangleq{ } &&
    \DId_{A.B}(\gamma,a_{e},b_{l},b_{r}), \\
    & T_{I}(\gamma,a_{e},b_{l},b_{r}) && :{ } &&
    I_{B}(\gamma,a_{e},b_{r})(b_{l}) \simeq I_{B}(\gamma,a_{e},b_{r})(b_{l}'), \\
    & T_{I}(\gamma,a_{e},b_{l},b_{r}) && \triangleq{ } &&
    \DId_{B.I_{B}}((\gamma,a_{e},b_{r}),b_{e}''), \\
    & \overleftarrow{T_{I}}(\gamma,a_{e},b_{l},b_{r}) && :{ } &&
    \isContr((b_{e} : I_{B}(\gamma,a_{e},b_{r})(b_{l})) \times \cdots), \\
    & \overleftarrow{T_{I}}(\gamma,a_{e},b_{l},b_{r}) && \triangleq{ } &&
    T_{I}(\gamma,a_{e},b_{l},b_{r}).\funl(b_{e}''(\gamma,a_{e},b_{l},b_{r})), \\
    & (b_{e}(\gamma,a_{e},b_{l},b_{r}),\_{}) && \triangleq{ } &&
    \overleftarrow{T_{I}}.\ccenter(\gamma,a_{e},b_{l},b_{r}),
  \end{alignat*}
  where the center can be obtained thanks to our first assumption about $\Ty'_\CC$ and the fact that the dependent type $I_{B}$ lands in $\Ty'_\CC$.

  We can finally pose:
  \begin{alignat*}{1}
    & B.\cpath(\gamma,a_{e},b_{l},b_{r}) \triangleq b_{e}(\gamma,a_{e},b_{l},b_{r}).
    \tag*{\qedhere{}}
  \end{alignat*}
\end{proof}

\begin{constr}
  Let $A,B \Colon \yo(\Gamma) \to \Ty_\CC$ be two types related by an isomorphism $\alpha : \forall \gamma \to \Tm_\CC(A(\gamma)) \cong \Tm_\CC(B(\gamma))$.

  If $A$ is equipped with a center operation (\resp with a homogeneous all-paths operations), then we equip $B$ with a center operation (\resp with a homogeneous all-paths operations).
\end{constr}
\begin{proof}
  Equipping $B$ with a center operation is straightforward:
  \begin{alignat*}{3}
    & B.\ccenter(\gamma) && \triangleq{ } && \alpha(\gamma,A.\ccenter(\gamma));
  \end{alignat*}

  For the homogeneous all-paths operations, we show that the isomorphism $\alpha$ lifts an isomorphism between $\Id_A$ and $\Id_B$.

  We have $\sem{\alpha}_E : \forall \gamma_l\ \gamma_r\ \gamma_e\ a_l\ a_r \to \Tm_\CC(\sem{A}_E(\gamma_e,a_l,a_r)) \cong \Tm_\CC(\sem{B}_E(\gamma_e,\alpha(\gamma_l,a_l),\alpha(\gamma_r,a_r)))$.
  Thus $\sem{\alpha}_E(\refl^E_\Gamma(\gamma)) : \forall a_l\ a_r \to \Tm_\CC(\Id_A(\gamma,a_l,a_r)) \cong \Tm_\CC(\Id_B(\gamma,\alpha(\gamma,a_l),\alpha(\gamma,a_r)))$ is an isomorphism between $\Id_A$ and $\Id_B$.

  We can now pose
  \begin{alignat*}{3}
    & B.\chpath(\gamma,b_l,b_r) && \triangleq{ }
    && \sem{\alpha_E}(\refl^E_\Gamma(\gamma), A.\chpath(\gamma,\alpha^{-1}(\gamma,b_l),\alpha^{-1}(\gamma,b_r))).
       \tag*{\qedhere}
  \end{alignat*}
\end{proof}

\begin{constr}\label{constr:center_and_all_paths_unit}
  The type $\Unit \times \Unit$ has center and homogeneous all-paths operations over any context $\Gamma : \CC$.
\end{constr}
\begin{proof}
  The center is $(\tt,\tt)$ over any context.

  By definition of $\Unit$- and $\Sigma$- types in $\CPreReflGraph(\CC)$, we compute $\Id_{\Unit \times \Unit}(\gamma,-,-) = \sem{\Unit \times \Unit}_E(\cdots) = \Unit \times \Unit$.
  Thus the homogeneous all-paths operations can also be defined by $(\tt,\tt)$ over any context.
\end{proof}

\begin{constr}\label{constr:center_sigma}
  Assume given the data of:
  \begingroup{}\allowdisplaybreaks{}
  \begin{alignat*}{3}
    & \Gamma && :{ }
    && \CC, \\
    & A && \Colon{ }
    && \forall \gamma \to \Ty_\CC, \\
    & B && \Colon{ }
    && \forall \gamma\ a \to \Ty_\CC, \\
    & C && \Colon{ }
    && \forall \gamma\ a \to \Ty_\CC, \\
    & D && \Colon{ }
    && \forall \gamma\ a\ b\ c \to \Ty_\CC, \\
  \end{alignat*}\endgroup{}
  along with center operations for the types
  \[ (a : A(\gamma)) \times B(\gamma,a) \]
  over $(\gamma : \yo(\Gamma))$ and
  \[ (c : C(\gamma,a)) \times D(\gamma,a,b,c) \]
  over $((\gamma,a,b) : \yo(\Gamma.A.B))$.

  We construct an center over $(\gamma : \yo(\Gamma))$ for the type
  \[ ((a : A(\gamma)) \times (b : B(\gamma,a))) \times ((c : C(\gamma,a)) \times (d : D(\gamma,a,b,c))).
  \]
\end{constr}
\begin{proof}[Construction]
  We write $AB \triangleq (a : A(\gamma)) \times B(\gamma,a)$, \etc.

  From our hypotheses, we have
  \begin{alignat*}{3}
    & \angles{a,b}(\gamma) && \triangleq{ } && \ccenter_{AB}(\gamma), \\
    & \angles{c,d}(\gamma) && \triangleq{ } && \ccenter_{CD}(\gamma,ab(\gamma)).
  \end{alignat*}

  Thus we can pose
  \begin{alignat*}{3}
    & \ccenter_{ABCD}(\gamma) && \triangleq{ } && ((a(\gamma),c(\gamma)),(b(\gamma),d(\gamma))).
    \tag*{\qedhere{}}
  \end{alignat*}
\end{proof}

\begin{constr}\label{constr:all_paths_sigma}
  Assume given the data of:
  \begingroup{}\allowdisplaybreaks{}
  \begin{alignat*}{3}
    & \Gamma && :{ }
    && \CC, \\
    & A && \Colon{ }
    && \forall \gamma \to \Ty_\CC, \\
    & B && \Colon{ }
    && \forall \gamma\ a \to \Ty_\CC, \\
    & C && \Colon{ }
    && \forall \gamma\ a \to \Ty_\CC, \\
    & D && \Colon{ }
    && \forall \gamma\ a\ b\ c \to \Ty_\CC, \\
  \end{alignat*}\endgroup{}
  along with a homogeneous all-paths operation for the type
  \[ (a : A(\gamma)) \times B(\gamma,a) \]
  over $(\gamma : \yo(\Gamma))$ and a heterogeneous all-paths operation for the dependent type
  \[ (a,b) \mapsto (c : C(\gamma,a)) \times D(\gamma,a,b,c) \]
  over $(\gamma : \yo(\Gamma))$.

  We construct a homogeneous all-paths operation over $(\gamma : \yo(\Gamma))$ for the type
  \[ ((a : A(\gamma)) \times (b : B(\gamma,a))) \times ((c : C(\gamma,a)) \times (d : D(\gamma,a,b,c))).
  \]
\end{constr}
\begin{proof}[Construction]
  We pose $AB \triangleq (a : A(\gamma)) \times B(\gamma,a)$, \etc.
  We also write $ab$ instead of $(a,b)$, \etc.

  Our goal is to define
  \begin{alignat*}{3}
    & \chpath_{ABCD} && :{ }
    && \forall \gamma\ acbd_{l}\ acbd_{r} \to \Id_{ACBD}(acbd_{l},acbd_{r}).
  \end{alignat*}

  We pose
  \begin{alignat*}{3}
    & ab_{e}(\gamma,ab_{l},ab_{r}) && \triangleq{ }
    && \chpath_{AB}(\gamma,ab_{l},ab_{r}), \\
    & cd_{e}(\gamma,acbd_{l},acbd_{r}) && \triangleq{ }
    && \cpath_{CD}(\gamma,ab_{e},cd_{l},cd_{r}).
  \end{alignat*}

  Thus, we can define
  \begin{alignat*}{3}
    & \chpath_{ABCD}(\gamma,acbd_{l},acbd_{r}) && \triangleq{ }
    && acbd_{e}(\gamma,acbd_{l},acbd_{r}).
       \tag*{\qedhere{}}
  \end{alignat*}
\end{proof}

\begin{constr}\label{constr:center_pi}
  Also assume given the data of:
  \begingroup{}\allowdisplaybreaks{}
  \begin{alignat*}{3}
    & \Gamma && :{ }
    && \CC, \\
    & X && \Colon{ }
    && \yo(\Gamma) \to \RepTy_{\CC}, \\
    & Y && \Colon{ }
    && \forall \gamma \to \Tm_\CC(X(\gamma)) \to \RepTy_{\CC}, \\
    & c_Y && \Colon{ }
    && \forall \gamma\ x \to \isRepContr(Y(\gamma,x)), \\
    & A && \Colon{ }
    && \forall \gamma \to \Tm_\CC(X(\gamma)) \to \Ty_{\CC}, \\
    & B && \Colon{ }
    && \forall \gamma\ (x : \Tm_\CC(X(\gamma)))\ (y : \Tm_\CC(Y(\gamma,x))) \to \Tm_\CC(A(\gamma,x)) \to \Ty_{\CC},
  \end{alignat*}\endgroup{}
  along with a center operation for the type
  \[ AB(\gamma) \triangleq (a : A(\gamma,x)) \times B(\gamma,x,y,a) \]
  over $((\gamma,x,y) : \yo(\Gamma.X.Y))$.

  We construct an center over $(\gamma : \yo(\Gamma))$ for the type
  \[ (a : (x : X(\gamma)) \to A(\gamma,x)) \times (b : (x : X(\gamma)) \to (y : Y(\gamma,x)) \to B(\gamma,x,y,a(x))).
  \]
\end{constr}
\begin{proof}[Construction]

  Our goal is to define the following:
  \begingroup{}\allowdisplaybreaks{}
  \begin{alignat*}{3}
    & \ccenter_\Pi(\gamma) && :{ }
    && (a : (x : X(\gamma)) \to A(\gamma,x)) \\
    &&&&& { } \times (b : (x : X(\gamma))\ (y : Y(\gamma,x)) \to B(\gamma,x,y,a(x))).
  \end{alignat*}\endgroup{}
  Since $Y$ is a family of contractible representable types, we can find its centers of contraction.
  \begingroup{}\allowdisplaybreaks{}
  \begin{alignat*}{3}
    & y_{0}(\gamma,x) && \triangleq{ } && \ccenter_Y(\gamma,x), \\
  \end{alignat*}\endgroup{}
  We then obtain elements of $A$ from the centers of contraction of $AB$.
  \begingroup{}\allowdisplaybreaks{}
  \begin{alignat*}{3}
    & a(\gamma,x) && :{ }
    && A(\gamma,x), \\
    & b_{0}(\gamma,x) && :{ }
    && B(\gamma,x,y_{0}(\gamma,x),a(\gamma,x)), \\
    & (a(\gamma,x),b_{0}(\gamma,x)) && \triangleq{ }
    && \ccenter_{AB}(\gamma,x,y_{0}(\gamma,x)).
  \end{alignat*}\endgroup{}
  We then want to transport $b_{0}$ over paths in $Y$, using the fact that $Y$ has all paths.
  We start by computing paths from $y_{0}$ to any element in $Y$.
  \begingroup{}\allowdisplaybreaks{}
  \begin{alignat*}{3}
    & p_{Y}(\gamma,x,y) && :{ }
    && \Id_{Y[x]}(\gamma,y_{0}(x),y), \\
    & p_{Y}(\gamma,x,y) && \triangleq{ }
    && \chpath_Y((\gamma,x),y_{0}(\gamma,x),y).
  \end{alignat*}\endgroup{}
  We now consider the transport of $b_{0}$ over $p_{Y}$.
  \begingroup{}\allowdisplaybreaks{}
  \begin{alignat*}{3}
    & T_{B}(\gamma,x,y) && :{ }
    && B(\gamma,x,y_{0}(\gamma,x),a(\gamma,x)) \simeq B(\gamma,x,y,a(\gamma,x)), \\
    & T_{B}(\gamma,x,y) && \triangleq{ }
    && \DId_{(y:Y).B(\gamma,x,y,a(\gamma,x))}((\gamma,x),p_{Y}(\gamma,x,y)), \\
    & \overrightarrow{T_{B}}(\gamma,x,y) && :{ }
    && \isContr((b : B(\gamma,x,y,a(\gamma,x))) \times \cdots), \\
    & \overrightarrow{T_{B}}(\gamma,x,y) && \triangleq{ }
    && T_{B}(\gamma,x,y).\funr(b_{0}(\gamma,x)), \\
    & b(\gamma,x,y) && \triangleq{ }
    && \overrightarrow{T_{B}}.\ccenter(\gamma,x,y).
  \end{alignat*}\endgroup{}
  We can now conclude the definition:
  \begingroup{}\allowdisplaybreaks{}
  \begin{alignat*}{3}
    & \ccenter_\Pi(\gamma) && \triangleq{ }
    && (\lambda x \mapsto a(\gamma,x), \lambda x\ y \mapsto b(\gamma,x,y)).
       \tag*{\qedhere{}}
  \end{alignat*}\endgroup{}
\end{proof}

\begin{constr}\label{constr:all_paths_pi}
  Assume that the weakly stable identity type introduction structure $\Id_{-}$ on $\RepTy_\CC$ can be equipped with a weakly stable elimination structure.

  Also given the data of:
  \begingroup{}\allowdisplaybreaks{}
  \begin{alignat*}{3}
    & \Gamma && :{ }
    && \CC, \\
    & X && \Colon{ }
    && \yo(\Gamma) \to \RepTy_{\CC}, \\
    & Y && \Colon{ }
    && \forall \gamma \to \Tm_\CC(X(\gamma)) \to \RepTy_{\CC}, \\
    & A && \Colon{ }
    && \forall \gamma \to \Tm_\CC(X(\gamma)) \to \Ty_{\CC}, \\
    & B && \Colon{ }
    && \forall \gamma\ (x : \Tm_\CC(X(\gamma)))\ (y : \Tm_\CC(Y(\gamma,x))) \to \Tm_\CC(A(\gamma,x)) \to \Ty_{\CC},
  \end{alignat*}\endgroup{}
  along with a heterogeneous all-paths operation for the dependent type
  \[ AB \triangleq (x,y) \mapsto (a : A(\gamma,x)) \times B(\gamma,x,y,a) \]
  over $(\gamma : \yo(\Gamma))$.

  We construct a homogeneous all-paths operation over $(\gamma : \yo(\Gamma))$ for the type
  \[ (a : (x : X(\gamma)) \to A(\gamma,x)) \times (b : (x : X(\gamma)) \to (y : Y(\gamma,x)) \to B(\gamma,x,y,a(x))).
  \]
\end{constr}
\begin{proof}[Construction]
  We write $ab$ instead of $(a,b)$, $a_{lr}$ instead of $(a_l,a_r)$, \etc.

  Our goal is to define the following:
  \begingroup{}\allowdisplaybreaks{}
  \begin{alignat*}{3}
    & \chpath_\Pi(\gamma,ab_{l},ab_{r}) && :{ }
    && (a_{e} : \forall x_{lre} \to \sem{A}_E((\refl_{\Gamma}(\gamma),x_{e}), a_{l}(x_{l}), a_{r}(x_{r}))) \\
    &&&&& \times (b_{e} : \forall x_{lre}\ y_{lre} \to \sem{B}_E((\refl_{\Gamma}(\gamma),x_{e},y_{e},a_{e}(x_e)), b_{l}(x_{l},y_{l}), b_{r}(x_{r},y_{r}))).
  \end{alignat*}\endgroup{}
  Since $Y$ is a family of contractible representable types, we can find paths in $Y$ over any path in $X$.
  \begingroup{}\allowdisplaybreaks{}
  \begin{alignat*}{3}
    & y_{0}(\gamma,x) && \triangleq{ }
    && \ccenter_Y(\gamma,x), \\
    & y_{0e}(\gamma,x_{lre}) && :{ }
    && \sem{Y}_E((\refl_{\Gamma}(\gamma),x_{e}), y_{0}(\gamma,x_{l}), y_{0}(\gamma,x_{r})), \\
    & y_{0e}(\gamma,x_{lre}) && \triangleq{ }
    && \sem{y_{0}}_E(\refl_{\Gamma},x_{e}).
  \end{alignat*}\endgroup{}
  Now using the heterogeneous all-paths operation of $AB$, we obtain paths in $A$ over any path in $X$.
  \begingroup{}\allowdisplaybreaks{}
  \begin{alignat*}{3}
    & a_{e}(\gamma,ab_{lr},x_{lre}) && :{ }
    && \sem{A}_E((\refl_{\Gamma}(\gamma),x_{e}), a_{l}(x_{l}), a_{r}(x_{r})), \\
    & b_{0e}(\gamma,ab_{lr},x_{lre}) && :{ }
    && \sem{B}_{E}((\refl_{\Gamma}(\gamma),x_{e},y_{e},a_{e}(\gamma,a_{lr},x_{lre})),b_{l}(x_l,y_{0}(\gamma,x_{l})),b_{r}(x_r,y_{0}(\gamma,x_{r}))), \\
    & \angles{a_{e},b_{0e}}(\gamma,ab_{lr},x_{lre}) && \triangleq{ }
    && \cpath_{AB}(\gamma,x_e,y_{0e}(\gamma,x_{lre}),ab_{l}(x_l,y_{0}(\gamma,x_{l})),ab_{r}(x_r,y_{0}(\gamma,x_{r}))).
  \end{alignat*}\endgroup{}
  In order to define $b_{e}$, we transport $b_{0e}$ over squares in $Y$.
  These squares are constructed using the fact that $Y$ is contractible; since we already know that $\Id_{-}$ on representable types has a weakly stable elimination structure, we omit the precise construction of these squares.
  \begingroup{}\allowdisplaybreaks{}
  \begin{alignat*}{3}
    & s_{Y}(\gamma,x_{lre},y_{lre}) && :{ }
    && \Id_{\Id_{Y}}((\gamma,y_{l},y_{r}),y_{0e}(\gamma,x_{e}),y_{e}).
  \end{alignat*}\endgroup{}
  We can now transport $b_{0e}$ over $s_{Y}$.
  \begingroup{}\allowdisplaybreaks{}
  \begin{alignat*}{3}
    & I_{B}(\gamma,x_{lre},a_{lre},y_{lr},b_{lr},y_{e}) && \triangleq{ }
    && \sem{B}((\refl_{\Gamma}(\gamma),x_e,y_e,a_{e}),b_{l},b_{r}), \\
    & T_{I_{B}}(\gamma,ab_{lr},x_{lre},y_{lre}) && :{ }
    && I_{B}(\gamma,x_{lre},a_{e}(\gamma,ab_{lr},x_{lre}), b_{l}(xy_{l})), b_{r}(xy_{r}), y_{0e}(\gamma,x_{lre})) \\
    &&&&& \simeq I_{B}(\gamma,x_{lre},a_{e}(\gamma,ab_{lr},x_{lre}), b_{l}(xy_{l}), b_{r}(xy_{r}), y_{e}), \\
    & T_{I_{B}}(x,y) && \triangleq{ }
    && \DId_{I_{B}}(\dots, s_{Y}(\gamma,x_{lre},y_{lre})), \\
    & \overrightarrow{T_{I_{B}}}(\gamma,ab_{lr},x_{lre},y_{lre}) && :{ }
    && \isContr((b_{e} : I_{B}(\gamma,ab_{lr},x_{lre},y_{lre})) \times \cdots), \\
    & \overrightarrow{T_{I_{B}}}(\gamma,ab_{lr},x_{lre},y_{lre}) && \triangleq{ }
    && T_{I_{B}}(\gamma,ab_{lr},x_{lre},y_{lre}).\funr(b_{0e}(\gamma,ab_{lr},x_{lre})), \\
    & b_{e}(\gamma,ab_{lr},x_{lre},y_{lre}) && :{ }
    && \sem{B}((\refl_{\Gamma}(\gamma),x_{e},y_{e},a_{e}(\gamma,ab_{lr},x_{e})), b_{l}(xy_{l}), b_{r}(xy_{r})), \\
    & b_{e}(\gamma,ab_{lr},x_{lre},y_{lre}) && \triangleq{ }
    && \overrightarrow{T_{I_{B}}}.\ccenter(\gamma,ab_{lr},x_e,y_e).1.
  \end{alignat*}\endgroup{}
  Finally, we can pose
  \begingroup{}\allowdisplaybreaks{}
  \begin{alignat*}{3}
    & \chpath_\Pi(\gamma,ab_{lr}) && \triangleq{ }
    && (\lambda x_{lre} \mapsto a_{e}(\gamma,a_{lr},x_{lre}), \lambda x_{lre}\ y_{lre} \mapsto b_{e}(\gamma,ab_{lr},x_{lre},y_{lre})).
       \tag*{\qedhere{}}
  \end{alignat*}\endgroup{}
\end{proof}


\section{Proving external univalence}\label{sec:univalence_proof}

We fix a SOGAT $\Th$ equipped with homotopy relations.
Our goal is to prove that $\Th$ satisfies external univalence, that is to equip the $(\reppre)$-CwF $\Th$ with weakly stable identity types satisfying function extensionality and saturation with respect to the homotopy relations of $\Th$.
In this section we essentially give a construction of this data (the weakly stable identity types) from the facts that every operation of $\Th$ preserves the homotopy relations, and that the homotopy relations are equipped with some operations which essentially say that the homotopy relations should be reflexive and admit fillers of $1$- and $2$- dimensional cubes.
More precisely, we rely on~\cref{thm:identity_types_from_refleqv} to construct the identity types, and the constructions of~\cref{sec:mgraph_model} to satisfy the hypothesis of~\cref{thm:identity_types_from_refleqv}, that is to construct reflexive equivalences and dependent equivalences with respect to some contractibility data, such that the contractible types have centers and all-paths operations.

\subsection{Internal model in reflexive equivalences}

\begin{customasm}{(A1)}\label{asm:refleqv_internal_model}
  We assume given a parametricity structure on $\Th$, \ie{} a section
  \[ \sem{-} : \Th \to \CPreReflGraph(\Th) \]
  of the projection map $V$.
  \defiEnd{}
\end{customasm}
By the universal property of $\Th$, this amounts to equipping $\CPreReflGraph(\Th)$ with an internal model of $\Th$ that is displayed over $\Th$.

Concretely, we have to interpret the sorts, operations and equations of $\Th$ in $\CPreReflGraph(\Th)$.
The sorts have to be interpreted by Reedy types, which are typically given by the homotopy relations $(\sim)$, up to the fact that the Reedy types are more dependent than the homotopy relations.
The representable sorts have to be interpreted by representable Reedy types.

Then to give an interpretation of an operation, we exactly need to show that it preserves the homotopy relations.

Depending on the theory $\Th$, ensuring that the equations of $\Th$ are satisfied in this model may be quite tricky.
When $\Th$ is a type theory with the usual $\beta$- and $\eta$- equalities, we have to use the relational definition of equivalences (\cref{exa:homotopy_relations_th_id}).
While other definitions (such as half-adjoints equivalences, \etc{}) are equivalent up to homotopy, they do not seem to satisfy the necessary computational properties for this construction.

\subsection{Contractibility data}\label{ssec:proof_contractibility_data}

We define the following inductive families, internally to $\CPsh(\Th)$.
\begin{alignat*}{3}
  & \isBRepContr, \isBContr, \isRepContr, \isContr && :{ } && \RepTy_{\Th} \to \UPsh.
\end{alignat*}

The families $\isBRepContr$ and $\isBContr$ describe the \emph{basic} contractible types.
We introduce them so as to be able to state our last assumption later (\cref{asm:basic_center_and_all_paths}).
The family $\isBContr$ is generated by the following (non-recursive) constructors, for every generating type $\iS : \GenTy_{\Th}$:
\begin{alignat*}{3}
  & \contr_{\iS,l} && :{ } && \forall (\sigma_{l}, \sigma_{r} : \partial \iS)\ (\sigma_{e} : \sem{\partial \iS}_{E}(\sigma_{l},\sigma_{r}))\ (a_{l} : \iS(\sigma_{l})) \\
  &&&&& \quad \to \isContr((a_{r} : \iS(\sigma_{r})) \times \sem{\iS}_{E}(\sigma_{e},a_{l},a_{r})), \\
  & \contr_{\iS,r} && :{ } && \forall (\sigma_{l}, \sigma_{r} : \partial \iS)\ (\sigma_{e} : \sem{\partial \iS}_{E}(\sigma_{l},\sigma_{r}))\ (a_{r} : \iS(\sigma_{r})) \\
  &&&&& \quad \to \isContr((a_{l} : \iS(\sigma_{l})) \times \sem{\iS}_{E}(\sigma_{e},a_{l},a_{r})), \\
  & \contr_{\iS,{\sim}} && :{ } && \forall (\sigma : \partial \iS)\ (x : \iS(\sigma)) \\
  &&&&& \quad \to \isContr((y : \iS(\sigma)) \times (x \sim_{\iS(\sigma)} y)).
\end{alignat*}
The constructor $\contr_{\iS,{\sim}}$ is included to make sure that we construct identity types satisfying saturation with respect to the homotopy relations $(\sim)$.
The family $\isBRepContr$ is generated by the same constructors, but restricted to representable types.
This means that the constructors $\contr_{\iS,l}$, $\contr_{\iS,r}$ and $\contr_{\iS,{\sim}}$ are only included for $\iS : \GenRepTy_{\Th}$.
There is an evident map $\isBRepContr(A) \to \isBContr(A)$ for $A : \RepTy_{\Th}$.

The family $\isRepContr$ is inductively generated by the following constructors:
\begin{alignat*}{3}
  & \contr^{\rep}_{b} && :{ } && \forall A \to \isBRepContr(A) \to \isRepContr(A), \\
  & \contr^{\rep}_{\Unit} && :{ } && \isRepContr(\Unit \times \Unit), \\
  & \contr^{\rep}_{\Sigma} && :{ } && \forall (A : \RepTy_{\Th})\ (B : \forall a \to \RepTy_{\Th})\ (C : \forall a \to \RepTy_{\Th})\ (D : \forall a\ b\ c \to \RepTy_{\Th}) \\
  &&&&& \quad \to \isRepContr((a : A) \times B(a)) \\
  &&&&& \quad \to (\forall a\ b \to \isRepContr((c : C(a)) \times D(a,b,c))) \\
  &&&&& \quad \to \isRepContr(((a : A) \times (c : C(a))) \times ((b : B(a)) \times D(a,b,c))), \\
  & \contr^\rep_{\cong} && :{ } && \forall (A,B : \RepTy_\Th) \to (\Tm_\Th(A) \cong \Tm_\Th(B)) \to \isRepContr(A) \to \isRepContr(B), \\
\end{alignat*}

The family $\isContr : \Ty_{\Th} \to \UPsh$ is inductively generated by the following constructors:
\begingroup{}\allowdisplaybreaks{}
\begin{alignat*}{3}
  & \contr_{\mathsf{rep}} && :{ } && \forall A \to \isRepContr(A) \to \isContr(A), \\
  & \contr_{b} && :{ } && \forall A \to \isBContr(A) \to \isContr(A), \\
  & \contr_{\Unit} && :{ } && \isContr(\Unit \times \Unit), \\
  & \contr_{\Sigma} && :{ } && \forall (A : \Ty_{\Th})\ (B : \forall a \to \Ty_{\Th})\ (C : \forall a \to \Ty_{\Th})\ (D : \forall a\ b\ c \to \Ty_{\Th}) \\
  &&&&& \quad \to \isContr((a : A) \times B(a)) \\
  &&&&& \quad \to (\forall a\ b \to \isContr((c : C(a)) \times D(a,b,c))) \\
  &&&&& \quad \to \isContr(((a : A) \times (c : C(a))) \times ((b : B(a)) \times D(a,b,c))), \\
  & \contr_{\cong} && :{ } && \forall (A,B : \Ty_\Th) \to (\Tm_\Th(A) \cong \Tm_\Th(B)) \to \isContr(A) \to \isContr(B), \\
  & \contr_{\Pi} && :{ } && \forall (X : \RepTy_{\Th})\ (Y : X \to \RepTy_{\Th}) \\
  &&&&& \quad \to (\forall x \to \isRepContr(Y(x))) \\
  &&&&& \quad \to (A : X \to \Ty_{\Th}) \\
  &&&&& \quad \to (B : (x : X) \to (y : Y(x)) \to A(x) \to \Ty_{\Th}) \\
  &&&&& \quad \to (\forall x\ y \to \isContr((a : A(x)) \times B(x,y,a))) \\
  &&&&& \quad \to \isContr((a : (x:X) \to A(x)) \\
  &&&&& \phantom{{ }\to\isContr} \times ((x:X) \to (y:Y(x)) \to B(x,y,a(x)))).
\end{alignat*}\endgroup{}

Now that we have defined this contractibility data, we observe that it is equipped, by definition, with the operations of~\cref{lem:contr_closure_conditions_mgraph}.
Thus~\cref{constr:homotopical_mgraph_model} provides the pre-reflexive equivalences model $\CPreReflEqv(\Th)$ with respect to this contractibility data.

\begin{lem}
  The section $\sem{-} : \Th \to \CPreReflGraph(\Th)$ factors through $\CPreReflEqv(\Th) \to \CPreReflGraph(\Th)$.
\end{lem}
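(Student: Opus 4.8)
The plan is to invoke the universal property of $\Th$ as a $\{I^{\ty},I^{\repty},I^{\tm},E^{\tm}\}$-cellular $(\reppre)$-CwF (\cref{def:sogat}): to factor the parametricity structure $\sem{-}$ (\cref{asm:refleqv_internal_model}) through the forgetful morphism $U : \CPreReflEqv(\Th) \to \CPreReflGraph(\Th)$ of \cref{constr:homotopical_mgraph_model}, it suffices to lift the interpretation of each \emph{generator} of $\Th$ along $U$. The point is that $U$ lies over the identity functor, so $\CPreReflEqv(\Th)$ and $\CPreReflGraph(\Th)$ have the same underlying category, $U$ is the identity on objects, morphisms and terms, and a type of $\CPreReflEqv(\Th)$ over $\Gamma$ is exactly a Reedy type over $\Gamma$ equipped with the two contractibility witnesses $\funr$ and $\funl$ of \cref{def:homotopical_reedy_types}; moreover $U$ strictly preserves the $\Unit$-, $\Sigma$- and $\Pi$- type structures. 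This last point is precisely what \cref{lem:contr_closure_conditions_mgraph} and \cref{constr:homotopical_mgraph_model} supply, using that the contractibility data of \cref{ssec:proof_contractibility_data} is closed (on the nose) under the operations $\contr_{\Unit}^{\rep}, \contr_{\Sigma}^{\rep}, \contr_{\cong}^{\rep}, \contr_{\Unit}, \contr_{\Sigma}, \contr_{\cong}, \contr_{\Pi}$, which were included as constructors for exactly this purpose.

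First I would handle the generating types. For a generating representable type $\iS : \GenRepTy_{\Th}$, the Reedy type $\sem{\iS}$ over $\sem{\partial\iS}$ is a representable Reedy type, since $\sem{-}$ preserves the restriction $\RepTy \to \Ty$; I promote it to a representable dependent pre-reflexive equivalence by taking its $\funr$ and $\funl$ witnesses to be the constructors $\contr_{\iS,l}$ and $\contr_{\iS,r}$ of $\isBRepContr$ (followed by the inclusion $\isBRepContr \to \isRepContr$), which exist precisely because $\iS$ is representable, and whose types match the conditions of \cref{def:homotopical_reedy_types} for $\sem{\iS}$ — here one uses that $\sem{\partial\iS}_{V} = \partial\iS$ and $\sem{\iS}_{V} = \iS$ because $\sem{-}$ is a section over $\Th$. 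For a non-representable generating type $\iS : \GenTy_{\Th}$ I do the same using $\isBContr$, $\contr_{b}$ and the corresponding versions of $\contr_{\iS,l}, \contr_{\iS,r}$. For a generating term (an operation of $\Th$), its image under $\sem{-}$ is already a term of the relevant Reedy type, hence literally a term of the corresponding type of $\CPreReflEqv(\Th)$ (that type being built from the lifted generating types via the $\Unit$-, $\Sigma$- and $\Pi_{\rep}$- type formers of $\CPreReflEqv(\Th)$, which $U$ sends back to the original type), since $U$ is the identity on terms. For a generating equation of $\Th$, both sides already become equal in $\CPreReflGraph(\Th)$ — because $\sem{-}$ is a $(\reppre)$-CwF morphism — hence they are equal in $\CPreReflEqv(\Th)$ as well, again because $U$ is the identity on terms. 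This data assembles, by the universal property, into a strict $(\reppre)$-CwF morphism $G : \Th \to \CPreReflEqv(\Th)$; and since $U \circ G$ and $\sem{-}$ have the same interpretation of every generator of $\Th$, the universal property gives $U \circ G = \sem{-}$, the desired factorization.

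The proof is essentially bookkeeping, and I do not expect a genuine obstacle: the contractibility data of \cref{ssec:proof_contractibility_data} was defined with exactly the constructors $\contr_{\iS,l}, \contr_{\iS,r}$ (handling the generators) and the closure operations (handling the type formers) that make this lift go through. The two things that need care are (i) matching the shapes of $\contr_{\iS,l}$ and $\contr_{\iS,r}$ against the $\funr$ and $\funl$ components of a dependent pre-reflexive equivalence — note the naming crossover, $\funr$ fixes the left endpoint and so corresponds to $\contr_{\iS,l}$, while $\funl$ fixes the right endpoint and corresponds to $\contr_{\iS,r}$ — and (ii) confirming that $\CPreReflEqv(\Th)$ is a genuine $(\reppre)$-CwF over $U$, which rests on the (immediate) verification that the contractibility data is strictly closed under the operations of \cref{lem:contr_closure_conditions_mgraph}, so that the argument "agree on generators, hence agree as $(\reppre)$-CwF morphisms" is available.
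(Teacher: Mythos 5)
Your proposal is correct and rests on exactly the same three ingredients as the paper's proof: the forgetful morphism $\CPreReflEqv(\Th) \to \CPreReflGraph(\Th)$ is bijective on contexts and terms, the generating types lift via the constructors $\contr_{\iS,l}$ and $\contr_{\iS,r}$ (\resp their representable versions), and the $\Unit$-, $\Sigma$- and $\Pi$- type formers lift because $\CPreReflEqv(\Th)$ is a $(\reppre)$-CwF. The only (immaterial) difference is packaging: the paper reduces to types and inducts on the structure of $\Ty_\Th \cong \Clos_{\reppre}(\BTy_\Th)$, whereas you induct on the cellular presentation of $\Th$ via its universal property.
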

\begin{proof}
  Note that $\CPreReflEqv(\Th) \to \CPreReflGraph(\Th)$ is bijective on contexts and terms.
  Thus it suffices to consider the types.

  We prove by induction on the types of $\Th$ that for every type $A$ (\resp representable type $A$), the Reedy type $\sem{A}$ (\resp representable Reedy type $\sem{A}$) is homotopical.
  For any generating type $\iS$ (\resp generating representable type $\iS$), this is handled by the constructors $\contr_{\iS,l}$ and $\contr_{\iS,r}$ (\resp{} $\contr^\rep_{\iS,l}$ and $\contr^\rep_{\iS,r}$).
  All of the other cases of the induction (for the $\Unit$-, $\Sigma$- and $\Pi$- type formers) follow from the fact that $\CPreReflEqv(\Th)$ is a $(\reppre)$-CwF.
\end{proof}

\subsection{Reflexivity maps}

We now try to define a reflexivity structure (\cref{def:reflexivity_operation}) over $\Th$.

\begin{defi}[Congruence operation]\label{defi:congruence_operation}
  Let $A \Colon \partial A \to \RepTy_{\Th}$ and $B \Colon \partial B \to \Ty_{\Th}$ be two global dependent types, where $A$ is a representable dependent type.

  A \defemph{congruence operation} from $A$ to $B$ consists of:
  \begin{alignat*}{3}
    & \ap^{E}_{A,B} && \Colon{ } &&
    \forall (\sigma : \partial A)\ (\sigma_{e} : \sem{\partial A}_{E}(\sigma,\sigma))\ (\sigma_{r} : \sem{\partial A}_{R}(\sigma,\sigma_{e})) \\
    &&&&& \phantom{\forall} (\tau : A(\sigma) \to \partial B) \\
    &&&&& \phantom{\forall} (\tau_{e} : \forall a_{l}\ a_{r}\ a_{e} \to \sem{\partial B}_{E}(\tau(a_{l}),\tau(a_{r}))) \\
    &&&&& \phantom{\forall} (\tau_{r} : \forall a\ a_{e}\ a_{r} \to \sem{\partial B}_{R}(\tau(a),\tau_{e}(a_{e}))) \\
    &&&&& \phantom{\forall} (b : (a : A(\sigma)) \to B(\tau(a))) \\
    &&&&& \quad \to \forall a_{l}\ a_{r}\ (a_{e} : \sem{A}_{E}(\sigma_{e},a_{l},a_{r})) \to \sem{B}_{E}(\tau_{e}(a_{e}),b(a_{l}),b(a_{r})), \\
    & \ap^{R}_{A,B} && \Colon{ } &&
    \forall \sigma\ \sigma_{e}\ \sigma_{r}\ \tau\ \tau_{e}\ \tau_{r}\ b \\
    &&&&& \quad \to \forall a\ a_{e}\ a_{r} \to \sem{B}_{R}(\tau_{r}(a_{r}), \ap^{E}_{A,B}(\dotsc,b,a_{e})).
    \tag*{\defiEnd{}}
  \end{alignat*}
\end{defi}
The representablility of $A$ ensures that there is a type classifying the premises of a congruence operation, because $A(\sigma) \to \partial B$, $(a : A(\sigma)) \to B(\tau(a))$, \etc, are types.
Thus a congruence operation is fully determined by its evaluation at a suitable generic context.

\begin{customasm}{(A2)}\label{asm:congruence_operations}
  For every global dependent representable type $A$ and generating type $\iS : \GenTy_{\Th}$, we have a congruence operation from $A \Colon \partial A \to \RepTy_{\Th}$ to $\iS \Colon \partial \iS \to \Ty_{\Th}$.
  \defiEnd{}.
\end{customasm}
In particular, the congruence operation from $\Unit \Colon \Unit \to \RepTy_{\Th}$ to $\iS$ is exactly a reflexivity operation for $\iS$.
When $\Th$ is a first-order GAT, \cref{asm:congruence_operations} reduces to the existence of reflexivity operations for every generating type $\iS$.

Note that~\cref{asm:congruence_operations} is not ideal: it refers to an arbitrary representable type $A$, which can be seen as a telescope of basic representable types.
Thus it may require data for every possible telescope shape.
We would prefer to quantify over the generating representable types instead.

\begin{lem}\label{lem:refl_monomial_type}
  Every global dependent monomial type $A \Colon \partial A \to \MonoTy_{\Th}$ can be equipped with a reflexivity operation.
\end{lem}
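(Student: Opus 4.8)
The plan is to reduce the statement directly to \cref{asm:congruence_operations} (A2), which is phrased for an \emph{arbitrary} global dependent representable type, exploiting that a telescope of basic representable types can be collapsed into a single representable type by iterated $\Sigma$-types.

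First I would unfold the monomial type: write $A = [\Delta \vdash \iS(\sigma)]$, where $\Delta \Colon \partial A \to \BRepTy^\star_\Th$ is a telescope of basic representable types, $\iS : \GenTy_\Th$ is a generating type, and $\sigma$ is a term of $\partial \iS$ depending on $\partial A$ and on $\Tm^\star_\Th(\Delta)$, so that the corresponding type of $\Th$ is the iterated first-order $\Pi$-type $\Pi(\Delta,\iS(\sigma))$. Let $\bar\Delta \Colon \partial A \to \RepTy_\Th$ be a representable type with $\Tm_\Th(\bar\Delta) \simeq \Tm^\star_\Th(\Delta)$, obtained by $\Sigma$-collapsing the telescope (as in the corollary of \cref{prop:description_ty_coclassifying_model}), and let $\bar\sigma \Colon (p : \partial A) \to \Tm_\Th(\bar\Delta(p)) \to \Tm_\Th(\partial\iS)$ be $\sigma$ under this identification. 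Since first-order $\Pi$-types distribute over $\Sigma$-types up to isomorphism (as already used in the proof of \cref{prop:type_iso_polytype}), there is a type isomorphism $\Pi(\Delta,\iS(\sigma)) \simeq \Pi(\bar\Delta,\iS(\bar\sigma))$ over $\partial A$, realized by canonical (un)currying terms; because $\sem{-}$ is a strict $(\reppre)$-CwF morphism it carries this to an isomorphism of Reedy types, and a reflexivity operation transports along such isomorphisms (conjugating $\refl^E$, $\refl^R$ by the components of the transported iso). So it suffices to equip $\Pi(\bar\Delta,\iS(\bar\sigma))$ with a reflexivity operation.

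Next I would invoke \cref{asm:congruence_operations} with the representable type $\bar\Delta$ and the generating type $\iS$, obtaining a congruence operation $(\ap^E_{\bar\Delta,\iS},\ap^R_{\bar\Delta,\iS})$. Given $\gamma : \yo(\partial A)$, an edge $\gamma_e : \sem{\partial A}_E(\gamma,\gamma)$, a loop $\gamma_r : \sem{\partial A}_R(\gamma,\gamma_e)$ and a term $f : \Tm_\Th(\Pi(\bar\Delta,\iS(\bar\sigma))(\gamma))$, the definition of $\Pi$-types in $\CPreReflGraph(\Th)$ computes
\[ \sem{\Pi(\bar\Delta,\iS(\bar\sigma))}_E(\gamma_e,f,f) = \forall \delta_l\ \delta_r\ (\delta_e : \sem{\bar\Delta}_E(\gamma_e,\delta_l,\delta_r)) \to \sem{\iS(\bar\sigma)}_E((\gamma_e,\delta_e),f(\delta_l),f(\delta_r)), \]
and, since $\sem{-}$ preserves substitution, $\sem{\iS(\bar\sigma)}_E((\gamma_e,\delta_e),-,-) = \sem{\iS}_E(\sem{\bar\sigma}_E(\gamma_e,\delta_e),-,-)$. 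I would then set $\refl^E_A(\gamma,\gamma_e,\gamma_r,f)$ to be the instance of $\ap^E_{\bar\Delta,\iS}$ with boundary point $\gamma$, boundary edge and loop $\gamma_e,\gamma_r$, with $\tau,\tau_e,\tau_r$ taken to be $\bar\sigma(\gamma,-)$, $\sem{\bar\sigma}_E(\gamma_e,-)$ and $\sem{\bar\sigma}_R(\gamma_r,-)$ — the components of the action of the parametricity structure (\cref{asm:refleqv_internal_model}) on the term $\bar\sigma$ — and with $b := f$; unfolding the displayed equalities shows the codomain of this instance of $\ap^E$ is exactly $\sem{\Pi(\bar\Delta,\iS(\bar\sigma))}_E(\gamma_e,f,f)$. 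The $R$-component $\refl^R_A$ is obtained in the same way from $\ap^R_{\bar\Delta,\iS}$: one checks that $\sem{\Pi(\bar\Delta,\iS(\bar\sigma))}_R(\gamma_r,f,\refl^E_A(\gamma,\gamma_e,\gamma_r,f))$ unfolds, again using preservation of substitution, to the codomain of $\ap^R_{\bar\Delta,\iS}$, which refers back to $\ap^E_{\bar\Delta,\iS}$ in precisely the way the type in \cref{def:reflexivity_operation} demands $\refl^R$ refer to $\refl^E$. Transporting back along the isomorphism of the previous paragraph equips the original monomial type $A$ with its reflexivity operation.

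The main obstacle I expect is purely the bookkeeping of the telescope collapse: matching the iterated first-order $\Pi$-type over $\Delta$ with the single $\Pi$-type over the $\Sigma$-collapsed $\bar\Delta$, both as a type of $\Th$ and under the Reedy interpretation $\sem{-}$, and verifying that the associativity/distributivity isomorphisms involved are compatible with $\sem{-}$. This last point is automatic, since those isomorphisms are built from canonical term-formers ($\lambda$-abstractions, pairings, projections) that every $(\reppre)$-CwF morphism preserves strictly; once it is in place, the definitions of $\refl^E_A$ and $\refl^R_A$ are just re-packagings of the congruence operation from A2 together with the parametricity action on $\sigma$.
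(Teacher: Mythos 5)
Your proposal is correct and follows essentially the same route as the paper: the paper also unfolds the monomial type as a first-order $\Pi$-type over a representable domain $\Delta$ (silently identifying the telescope with its $\Sigma$-collapse, which A2 permits since it quantifies over arbitrary representable types), sets $\tau_e \triangleq \sem{\tau}_E$ and $\tau_r \triangleq \sem{\tau}_R$, and defines $\refl^E_A$ and $\refl^R_A$ pointwise from the congruence operation $\ap^E_{\Delta,\iS}$, $\ap^R_{\Delta,\iS}$ of \cref{asm:congruence_operations}. Your extra care about the telescope-collapse isomorphism and transporting reflexivity operations along it is bookkeeping the paper leaves implicit (and reuses in \cref{lem:refl_type}), not a divergence in method.
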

\begin{proof}
  Since $A$ is a monomial type, we can write
  \[ A(\sigma) = (\delta : \Delta(\sigma)) \to \iT(\tau(\sigma,\delta)) \]
  for some $\Delta \Colon \partial A \to \RepTy_{\Th}$ and $\tau \Colon \forall \sigma \to \Delta(\sigma) \to \partial \iT$.

  By definition of the section $\sem{-}$, we know that:
  \begin{alignat*}{1}
    & \sem{A}_{E}(\sigma_{l},\sigma_{r},\sigma_{e},f_{l},f_{r}) = \forall \delta_{l}\ \delta_{r}\ \delta_{e} \to \sem{\iT}_{E}(\sem{\tau}_{E}(\sigma_{e},\delta_{e}),f_{l}(\delta_{l}),f_{r}(\delta_{r})), \\
    & \sem{A}_{R}(\sigma,\sigma_{e},\sigma_{r},f,f_{e}) = \forall \delta\ \delta_{e}\ \delta_{r} \to \sem{\iT}_{R}(\sem{\tau}_{R}(\sigma_{r},\delta_{r}),f_{e}(\delta_{e})).
  \end{alignat*}

  We first pose:
  \begin{alignat*}{3}
    & \tau_{e} && :{ } && \forall \sigma\ \sigma_{e}\ \delta_{l}\ \delta_{r}\ \delta_{e} \to \sem{\partial\iT}_{E}(\tau(\sigma_{l},\delta_{l})), \\
    & \tau_{e}(\dots) && \triangleq{ } && \sem{\tau}_{E}((\sigma,\delta_{l}),(\sigma,\delta_{r}),(\sigma_{e},\delta_{e})), \\
    & \tau_{r} && :{ } && \forall \sigma\ \sigma_{e}\ \sigma_{r}\ \delta\ \delta_{e}\ \delta_{r} \to \sem{\partial\iT}_{R}(\tau(\sigma,\delta),\tau_{e}(\sigma_{e},\delta_{e})), \\
    & \tau_{r}(\dots) && \triangleq{ } && \sem{\tau}_{R}((\sigma,\delta),(\sigma_{e},\delta_{e}),(\sigma_{r},\delta_{r})).
  \end{alignat*}

  By~\cref{asm:congruence_operations}, we have a congruence operation $\ap_{\Delta,\iT}$ from $\Delta$ to $\iT$.

  We can finally define the reflexivity operation for $A$:
  \begin{alignat*}{3}
    & \refl^{E}_{A}(\sigma,\sigma_{e},\sigma_{r},f) && \triangleq{ } &&
    \lambda \delta_{l}\ \delta_{r}\ \delta_{e} \mapsto \ap^{E}_{\Delta,\iT}(\sigma_{r},\tau(\sigma),\tau_{e}(\sigma_{e}),\tau_{r}(\sigma_{r}),f,\delta_{e}), \\
    & \refl^{R}_{A}(\sigma,\sigma_{e},\sigma_{r},f) && \triangleq{ } &&
    \lambda \delta\ \delta_{e}\ \delta_{r} \mapsto \ap^{R}_{\Delta,\iT}(\sigma_{r},\tau(\sigma),\tau_{e}(\sigma_{e}),\tau_{r}(\sigma_{r}),f,\delta_{r}).
    \tag*{\qedhere{}}
  \end{alignat*}
\end{proof}

\begin{lem}\label{lem:refl_polynomial_type}
  Every global dependent polynomial type $A \Colon \partial A \to \PolyTy_{\Th}$ can be equipped with a reflexivity operation.
\end{lem}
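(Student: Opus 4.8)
The plan is to reduce to the monomial case of~\cref{lem:refl_monomial_type} by an induction on the length of the telescope underlying the polynomial type, using that reflexivity operations are available for the $\Unit$-type and are closed under $\Sigma$-types. The $\Unit$ case is immediate: by construction of $\Unit$-types in $\CPreReflGraph(\Th)$ the Reedy type $\sem{\Unit}$ has both $\sem{\Unit}_E$ and $\sem{\Unit}_R$ constantly equal to $\Unit$, so we may set $\refl_{\Unit}^E$ and $\refl_{\Unit}^R$ constantly equal to $\tt$.

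For the $\Sigma$ case, suppose a dependent type $A$ over a boundary $\partial$ and a dependent type $C$ over $\partial$ extended by $A$ both carry reflexivity operations. Since $\Sigma$-types in $\CPreReflGraph(\Th)$ are formed pointwise, $\sem{\Sigma(A,C)}_E$ and $\sem{\Sigma(A,C)}_R$ are $\Sigma$-types of the corresponding components of $\sem{A}$ and $\sem{C}$, and so one sets
\begin{alignat*}{1}
  & \refl_{\Sigma(A,C)}^E(\dots,(a,c)) \triangleq (\refl_A^E(\dots,a),\refl_C^E(\dots,c)), \\
  & \refl_{\Sigma(A,C)}^R(\dots,(a,c)) \triangleq (\refl_A^R(\dots,a),\refl_C^R(\dots,c)),
\end{alignat*}
where the edge and loop data over the extended boundary that are passed to $\refl_C$ are assembled from the ambient edge and loop data together with $\refl_A^E$ and $\refl_A^R$, exactly as context-level reflexivity operations are assembled in the proof of~\cref{prop:contextual_reflexivity_structure}. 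Checking that this recipe is well-typed and natural is then a matter of unfolding the pointwise definitions.

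Granting these two facts, the lemma follows. Write $A \Colon \partial A \to \PolyTy_{\Th}$ as a telescope $[M_1,\dots,M_k]$ of monomial types; its associated type in $\Ty_{\Th}$ is an iterated $\Sigma$-type built over $\Unit$, coming from the $\Sigma$-structure on telescopes. We induct on $k$: the case $k=0$ is the $\Unit$ case above, and for $k>0$ a nonempty telescope of monomial types extends a shorter one by a single monomial type, so the associated type of $A$ has the form $\Sigma(A',M)$ where $A'$ is the polynomial type of the shorter telescope over $\partial A$ and $M$ is a monomial type over $\partial A$ enlarged by $A'$. The induction hypothesis equips $A'$ with a reflexivity operation, \cref{lem:refl_monomial_type} equips $M$ with one — that the boundary of $M$ is enlarged by a polynomial type is harmless, since \cref{lem:refl_monomial_type} places no restriction on the boundary — and the $\Sigma$ closure fact produces a reflexivity operation for $\Sigma(A',M)$, hence for $A$.

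The only genuine work is in the $\Sigma$ closure step: one must feed $\refl_C$ precisely the edge and loop data over the extended boundary produced by $\refl_A$, so that the pointwise recipe typechecks and is strictly natural. This is routine — it is the same bookkeeping as in~\cref{prop:contextual_reflexivity_structure} — and no hypotheses beyond~\cref{asm:refleqv_internal_model} and~\cref{asm:congruence_operations} (the latter already used through~\cref{lem:refl_monomial_type}) are required.
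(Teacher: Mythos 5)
Your proof is correct and follows essentially the same route as the paper: induction on the length of the telescope, with the empty case trivial, the nonempty case split as a $\Sigma$ of a shorter polynomial type and a monomial type handled by the induction hypothesis and \cref{lem:refl_monomial_type} respectively, and the two reflexivity operations combined componentwise by feeding $\refl_C$ the extended boundary data produced by $\refl_{A'}$. The explicit isolation of the $\Sigma$-closure step is just a more verbose packaging of the same formula the paper writes out directly.
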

\begin{proof}
  We know that $A$ is a telescope of monomial types and we can perform induction on its length.
  The case of the empty telescope is trivial.

  If $A$ is a non-empty telescope, we have
  \[ A(\sigma) = (b : B(\sigma)) \times (c : C(\sigma,b)) \]
  for some $B \Colon \partial A \to \PolyTy_{\Th}$ and $C \Colon \forall \sigma \to B(\sigma) \to \MonoTy_{\Th}$.

  By the induction hypothesis, we have a reflexivity operation for $B$.
  By~\cref{lem:refl_monomial_type}, we have a reflexivity operation for $C$.

  We can thus define:
  \begin{alignat*}{3}
    & \refl^{E}_{A}(\sigma,\sigma_{e},\sigma_{r},(b,c)) && \triangleq{ } &&
    (\refl^{E}_{B}(\sigma,\sigma_{e},\sigma_{r},b), \refl^{E}_{C}((\sigma,b),(\sigma_{e},\refl^{E}_{B}(\dots,b)),(\sigma_{r},\refl^{R}_{B}(\dots,b)),c)), \\
    & \refl^{R}_{A}(\sigma,\sigma_{e},\sigma_{r},(b,c)) && \triangleq{ } &&
    (\refl^{R}_{B}(\sigma,\sigma_{e},\sigma_{r},b), \refl^{R}_{C}((\sigma,b),(\sigma_{e},\refl^{E}_{B}(\dots,b)),(\sigma_{r},\refl^{R}_{B}(\dots,b)),c)).
    \tag*{\qedhere{}}
  \end{alignat*}
\end{proof}

\begin{lem}\label{lem:refl_type}
  Every global dependent type $A \Colon \partial A \to \Ty_{\Th}$ can be equipped with a reflexivity operation.
\end{lem}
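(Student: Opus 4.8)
The plan is to deduce this from \cref{lem:refl_polynomial_type} by transport along a type isomorphism. Given a global dependent type $A \Colon \partial A \to \Ty_{\Th}$, the corollary following \cref{prop:description_ty_coclassifying_model} supplies a global dependent polynomial type $A_{0} \Colon \partial A \to \PolyTy_{\Th}$ together with a type isomorphism $\alpha \Colon A_{0} \simeq A$ over $\partial A$; concretely $\alpha$ is assembled from the essential associativity of $\Sigma$-types and the essential distributivity of first-order $\Pi$-types over $\Sigma$-types, so it is built entirely from the $\Unit$-, $\Sigma$- and $\Pi_{\rep}$- structure of $\Th$. By \cref{lem:refl_polynomial_type}, $A_{0}$ carries a reflexivity operation $(\refl^{E}_{A_{0}}, \refl^{R}_{A_{0}})$, and it then remains to transport this structure along $\alpha$, exactly as the earlier construction transports a center operation or a homogeneous all-paths operation along a type isomorphism.

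First I would note that the parametricity structure $\sem{-} \Colon \Th \to \CPreReflGraph(\Th)$ of \cref{asm:refleqv_internal_model} is a morphism of $(\reppre)$-CwFs, being a section of the projection $V$ in $\CCwf_{\reppre}$; in particular it strictly preserves all the $\Unit$-, $\Sigma$- and $\Pi_{\rep}$- operations out of which $\alpha$ is built. Applying $\sem{-}$ to $\alpha$ therefore produces an isomorphism of Reedy types $\sem{A_{0}} \cong \sem{A}$; by \cref{def:mgraph_reedy_types} this unfolds to the isomorphism $\alpha$ on the $V$-component (since $\sem{-}$ is a section of $V$) together with compatible isomorphisms $\sem{\alpha}_{E}$ and $\sem{\alpha}_{R}$ on the $E$- and $R$- components, respecting the endpoints through $\alpha$ and $\sem{\alpha}_{E}$. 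Transporting $(\refl^{E}_{A_{0}}, \refl^{R}_{A_{0}})$ through these isomorphisms --- precomposing the term argument with $\alpha^{-1}$ and postcomposing the output with $\sem{\alpha}_{E}$, respectively $\sem{\alpha}_{R}$ --- yields the reflexivity operation for $A$:
\[
  \refl^{E}_{A}(\sigma, \sigma_{e}, \sigma_{r}, a) \triangleq \sem{\alpha}_{E}\bigl(\refl^{E}_{A_{0}}(\sigma, \sigma_{e}, \sigma_{r}, \alpha^{-1}(a))\bigr),
\]
and symmetrically for $\refl^{R}_{A}$, using that $\sem{\alpha}_{R}$ lies over $\sem{\alpha}_{E}$ so that the result sits over $\refl^{E}_{A}$. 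Naturality is inherited from that of $\alpha$, $\sem{\alpha}_{E}$, $\sem{\alpha}_{R}$ and $(\refl^{E}_{A_{0}}, \refl^{R}_{A_{0}})$.

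The only point requiring care is the bookkeeping in the previous paragraph: one must check that $\sem{\alpha}_{R}$ is coherent with $\sem{\alpha}_{E}$ in precisely the way needed for $\refl^{R}_{A}$ to type-check against the $\refl^{E}_{A}$ just defined. This is immediate once one observes that a strict $(\reppre)$-CwF morphism sends the canonical data witnessing $A_{0} \simeq A$ to the corresponding canonical data in $\CPreReflGraph(\Th)$, which by the component-wise definition of the type formers in \cref{constr:model_rep_mgraph} decomposes into the required isomorphisms on each of the $V$-, $E$- and $R$- components; no idea beyond unfolding \cref{def:mgraph_reedy_types} and \cref{def:reflexivity_operation} is involved. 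Everything else --- the reduction to polynomial types, the appeal to \cref{lem:refl_polynomial_type}, and the pattern of conjugating a structure by an isomorphism --- has already appeared in this section.
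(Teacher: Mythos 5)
Your proof is correct and follows the same route as the paper: reduce to a polynomial type $A_{0}$ with $\Tm_{\Th}(A_{0}(\sigma)) \cong \Tm_{\Th}(A(\sigma))$ via \cref{prop:type_iso_polytype}, apply \cref{lem:refl_polynomial_type}, and transport the reflexivity operation along the isomorphism. The paper leaves the transport step implicit; your elaboration of it (pushing the isomorphism through the strict $(\reppre)$-CwF morphism $\sem{-}$ to obtain the compatible isomorphisms on the $E$- and $R$- components and then conjugating) is exactly the intended argument.
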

\begin{proof}
  By~\cref{prop:type_iso_polytype}, there is some polynomial type $A_{0} \Colon \partial A \to \PolyTy_{\Th}$ such that $\forall \sigma \to \Tm_{\Th}(A_0(\sigma)) \cong \Tm_{\Th}(A(\sigma))$.
  The results then follows from~\cref{lem:refl_polynomial_type}.
\end{proof}

Thus, by~\cref{prop:contextual_reflexivity_structure}, we have a reflexivity structure over $\Th$.

Now that we have a reflexivity structure, we obtain as in~\cref{ssec:parametricity_structures} reflexive equivalences and dependent equivalences over $\Th$.

\subsection{Centers and paths}

It remains to show that the contractible types have centers and all paths.

We assume that center and homogeneous all-paths operations are defined for the basic contractible types.
\begin{customasm}{(A3)}\label{asm:basic_center_and_all_paths}
  For every constructor $c \Colon (\gamma : \Gamma) \to \isBContr(A(\gamma))$ of $\isBContr$, we have the following data:
  \begin{alignat*}{3}
    & \ccenter_{c} && \Colon{ } && \forall (\gamma : \Gamma) \to A(\gamma), \\
    & \chpath_{c} && \Colon{ } && \forall \gamma\ (\gamma_{e} : \sem{\Gamma}_{E}(\gamma,\gamma))\ (\gamma_{r} : \sem{\Gamma}_{R}(\gamma,\gamma_e)) \\
    &&&&& \phantom{\forall} \to (x,y : A(\gamma))\ \to \sem{A}_{E}(\gamma_{e}, x,y).
  \end{alignat*}

  Note that every constructor of $\isBContr$ is of this form.
  \defiEnd{}
\end{customasm}
In~\cref{asm:basic_center_and_all_paths} we have been careful not to mention $\refl_{\Gamma}$; the assumption can be checked independently of the construction of the reflexivity maps, and independently of~\cref{asm:congruence_operations}.

\begin{lem}\label{lem:basic_center_and_all_paths}
  Every contractibility witness $c \Colon (\gamma : \Gamma) \to \isBContr(A(\gamma))$ admits a center operation and a homogeneous all-paths operation.
\end{lem}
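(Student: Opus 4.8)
The plan is to reduce the statement to the corresponding claim about the underlying type. By the convention stated just after \cref{defi:heterogeneous_all_paths}, the witness $c$ admits a center, respectively a homogeneous all-paths operation, exactly when the type $A$ does; so it suffices to equip $A$ with $\ccenter_{A}$ as in \cref{def:contractibility_center} and $\chpath_{A}$ as in \cref{def:contractibility_has_all_paths}. Since the family $\isBContr$ is inductively generated by the three non-recursive constructor schemes $\contr_{\iS,l}$, $\contr_{\iS,r}$ and $\contr_{\iS,{\sim}}$ (indexed by $\iS : \GenTy_{\Th}$), and none of these takes an argument in $\isBContr$, the witness $c$ is necessarily of the form $c_{k}(\vec{x})$ for one of these schemes $c_{k}$ and a tuple $\vec{x}$ of arguments over $\Gamma$, with $A$ the $\Sigma$-type occurring in the conclusion of $c_{k}$. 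So the proof is a finite case analysis over $c_{k}$ rather than a genuine induction.

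In each case \cref{asm:basic_center_and_all_paths}, applied to the constructor $c_{k}$ over $\Gamma$ with the arguments $\vec{x}$, already supplies the required data: a center $\ccenter_{c} \Colon \forall (\gamma : \yo(\Gamma)) \to \Tm_{\Th}(A(\gamma))$, which I would take as $\ccenter_{A}$ verbatim, and an operation
\[ \chpath_{c} \Colon \forall \gamma\ (\gamma_{e} : \sem{\Gamma}_{E}(\gamma,\gamma))\ (\gamma_{r} : \sem{\Gamma}_{R}(\gamma,\gamma_{e}))\ (x,y : \Tm_{\Th}(A(\gamma))) \to \Tm_{\Th}(\sem{A}_{E}(\gamma_{e},x,y)). \]
For the homogeneous all-paths operation I need a term of type $\Id_{A}(\gamma,x,y)$; recalling from~\cref{ssec:parametricity_structures} that the reflexive equivalence is defined by $\Id_{A}(\gamma,x,y) = \sem{A}_{E}(\refl^{E}_{\Gamma}(\gamma),x,y)$, and using the reflexivity structure on $\Th$ obtained in \cref{lem:refl_type} together with \cref{prop:contextual_reflexivity_structure}, I would set $\chpath_{A}(\gamma,x,y) \triangleq \chpath_{c}(\gamma,\refl^{E}_{\Gamma}(\gamma),\refl^{R}_{\Gamma}(\gamma),x,y)$. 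Everything then typechecks by unfolding the definitions.

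The only step that is not purely formal is the passage from \cref{asm:basic_center_and_all_paths}, phrased for the constructor schemes, to an arbitrary witness $c$ over an arbitrary context $\Gamma$: one must check that a general witness is an instance of a generic constructor obtained by instantiating its arguments, and that the center and all-paths data of~\cref{asm:basic_center_and_all_paths} are natural and therefore transfer along such instantiations. I expect this to be the main obstacle, but it is a routine consequence of the exhaustiveness and disjointness of the constructors of an inductive family in the presheaf topos $\CPsh(\Th)$ together with the naturality of global sections; no new idea is needed, and the rest of the argument is bookkeeping with the definitions of $\ccenter$, $\chpath$ and the reflexive equivalences.
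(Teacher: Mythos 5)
Your proposal is correct and follows essentially the same route as the paper: recognize $c$ as a substitution instance $d[\delta]$ of a generic constructor $d$ over its generic context $\Delta$, take the center by precomposition with $\delta$, and obtain the homogeneous all-paths operation by instantiating $\chpath_{d}$ at the reflexivity data of $\Gamma$. The only thing the paper makes explicit that you leave under the word ``naturality'' is the exact form of the transfer: since the $\gamma_{e},\gamma_{r}$ arguments of $\chpath_{d}$ live over $\Delta$, one feeds in $\sem{\delta}_{E}(\refl^{E}_{\Gamma}(\gamma))$ and $\sem{\delta}_{R}(\refl^{R}_{\Gamma}(\gamma))$, which type-checks because substitution of Reedy types is composition with the components of $\sem{\delta}$.
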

\begin{proof}
  Fix a contractibility witness $c \Colon (\gamma : \Gamma) \to \isBContr(A(\gamma))$.
  By definition of $\isBContr$, there exists a constructor $d \Colon (\delta : \Delta) \to \isBContr(B(\delta))$ of $\isBContr$ such that $c = d[\delta]$ for some $\delta \Colon \Gamma \to \Delta$.
  In particular, $A = B[\delta]$.

  We can now pose
  \begin{alignat*}{3}
    & c.\ccenter(\gamma) && \triangleq{ } && \ccenter_{d}(\delta(\gamma)), \\
    & c.\chpath(\gamma,x,y) && \triangleq{ } && \chpath_{d}(\delta(\gamma), \sem{\delta}_{E}(\refl^{E}_{\Gamma}(\gamma)), \sem{\delta}_{R}(\refl^{R}_{\Gamma}(\gamma)), x,y).
    \tag*{\qedhere{}}
  \end{alignat*}
\end{proof}

We then proceed to extend this to arbitrary contractibility witnesses, relying on the constructions of~\cref{ssec:parametricity_structures}.

\begin{lem}\label{lem:repcontr_center}
  For every representable contractibility witness
  $ c \Colon (\gamma : \Gamma) \to \isRepContr(A(\gamma)), $
  there is a center operation for $c$.
\end{lem}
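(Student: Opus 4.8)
The plan is to prove the lemma by induction on the derivation of the contractibility witness, exploiting the presentation of $\isRepContr$ as an inductive family internally to $\CPsh(\Th)$. As in the proof of~\cref{lem:basic_center_and_all_paths}, a global witness $c \Colon (\gamma : \Gamma) \to \isRepContr(A(\gamma))$ need not literally be a constructor applied at $\Gamma$, but it is always a substitution instance of one of the generic constructors of $\isRepContr$ applied at a suitable generic context. So it suffices to produce a center operation for each of the four generic constructors, over the corresponding generic context; reindexing along the substitution $\delta$ then yields the desired center for $c$, and naturality in $\gamma$ comes for free from this reindexing.

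For the constructor $\contr^{\rep}_{b}$, the witness factors through the map $\isBRepContr(A) \to \isBContr(A)$, so~\cref{lem:basic_center_and_all_paths} immediately supplies a center operation. For $\contr^{\rep}_{\Unit}$, the center is provided by~\cref{constr:center_and_all_paths_unit}. For $\contr^{\rep}_{\Sigma}$, I would apply the induction hypothesis to the two sub-witnesses — one for $(a : A) \times B(a)$ over $\Gamma$ and one for $(c : C(a)) \times D(a,b,c)$ over the extended context, suitably reorganized — and then combine the resulting centers by~\cref{constr:center_sigma}. For $\contr^{\rep}_{\cong}$, the induction hypothesis gives a center for $A$, which I transport along the supplied isomorphism $\Tm_{\Th}(A) \cong \Tm_{\Th}(B)$ using the construction that transports centers and all-paths operations across type isomorphisms.

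The only real subtlety — more bookkeeping than obstacle — is making the induction on the dependent presheaf of derivations precise, in particular checking that each cited construction is stated over an arbitrary base context so that the centers it outputs are again global and natural, and that the induction hypothesis for a sub-derivation is available over whatever (possibly extended) context that sub-derivation lives. Since~\cref{constr:center_and_all_paths_unit}, \cref{constr:center_sigma}, and the transport-along-isomorphism construction are all phrased over an arbitrary $\Gamma : \CC$, and the generic-constructor technique of~\cref{lem:basic_center_and_all_paths} reduces the statement to those finitely many cases, the argument goes through without further difficulty.
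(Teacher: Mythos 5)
Your proposal is correct and follows essentially the same route as the paper: induction on the inductive family $\isRepContr$, discharging the $\contr^{\rep}_{b}$, $\contr^{\rep}_{\Unit}$, and $\contr^{\rep}_{\Sigma}$ cases via \cref{lem:basic_center_and_all_paths}, \cref{constr:center_and_all_paths_unit}, and \cref{constr:center_sigma} respectively. You are in fact slightly more thorough than the paper's own proof, which silently omits the $\contr^{\rep}_{\cong}$ case that you correctly handle by transporting the center along the isomorphism.
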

\begin{proof}
  By induction on $c$.
  \begin{description}
  \item[Constructor $\contr^{\rep}_{b}$] \hfill \\
    By~\cref{lem:basic_center_and_all_paths}.

  \item[Constructor $\contr^{\rep}_{\Unit}$] \hfill \\
    By~\cref{constr:center_and_all_paths_unit}.

  \item[Constructor $\contr^{\rep}_{\Sigma}$] \hfill \\
    By~\cref{constr:center_sigma}.
    \qedhere{}
  \end{description}
\end{proof}

\begin{lem}\label{lem:repcontr_heterogeneous_all_paths}
  Let $c \Colon (\gamma : \Gamma) \to A(\gamma) \to \isRepContr(B(\gamma,a))$ be a global dependent family of contractibility witnesses.
  If we have a homogeneous all-paths operation for $c$ over $\Gamma.A$, then we construct a heterogeneous all-paths operation for $c$.
\end{lem}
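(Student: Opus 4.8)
The plan is to deduce this from \cref{constr:heterogeneous_all_paths}, instantiated with the subfamily $\Ty'_\Th \hra \Ty_\Th$ taken to be the whole family $\RepTy_\Th$ of representable types (so that in particular the representable dependent types $A$ and $B$ both lie in $\Ty'_\Th$; recall that $B$ is representable because $c$ is valued in $\isRepContr$, and $A$ is representable by hypothesis). To invoke that construction I must check its two hypotheses for this choice of $\Ty'_\Th$.

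For the first hypothesis, I would take an arbitrary representable type $X \Colon \yo(\Gamma) \to \RepTy_\Th$ and recall that the section $\sem{-}$ factors through $\CPreReflEqv(\Th) \to \CPreReflGraph(\Th)$ and sends representable types to representable dependent pre-reflexive equivalences; hence $\sem{X}$ is equipped with functionality witnesses $\sem{X}.\funr$ and $\sem{X}.\funl$ whose values lie in $\isRepContr$, by the definition of representable dependent pre-reflexive equivalence in \cref{def:homotopical_reedy_types}. \cref{lem:repcontr_center} then supplies a center operation for every such representable contractibility witness, so the contractibility witnesses of $\sem{X}$ carry centers, as required. For the second hypothesis, the components $\sem{X}_E$ and $\sem{X}_R$ of the representable Reedy type $\sem{X}$ are themselves representable dependent types by construction, hence they factor through $\RepTy_\Th \hra \Ty_\Th$.

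With both hypotheses in hand, I would observe that the given homogeneous all-paths operation for the family $c$ over $\Gamma.A$ is, by definition, a homogeneous all-paths operation for the dependent type $B$ over $\Gamma.A$, so \cref{constr:heterogeneous_all_paths} outputs a heterogeneous all-paths operation for $B$, which is precisely a heterogeneous all-paths operation for $c$. I do not expect a serious obstacle here: the one point needing attention is the bookkeeping of representability, namely verifying that the functionality witnesses of $\sem{X}$ genuinely live in $\isRepContr$ rather than merely in $\isContr$ — this is what makes \cref{lem:repcontr_center} (and not some not-yet-available center operation for arbitrary contractibility witnesses) applicable, and it is the reason the statement is confined to the representable setting.
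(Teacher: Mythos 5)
Your proof is correct and is exactly the paper's argument: the paper likewise invokes \cref{constr:heterogeneous_all_paths} applied to the family restriction $\RepTy_\Th \to \Ty_\Th$ and notes that checking its hypothesis relies on \cref{lem:repcontr_center}. You have simply spelled out the verification of the two hypotheses (that $\sem{X}$ for representable $X$ is a representable dependent pre-reflexive equivalence whose $\funr$/$\funl$ witnesses lie in $\isRepContr$ and hence carry centers, and that its components $\sem{X}_E$, $\sem{X}_R$ remain representable) in more detail than the paper does.
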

\begin{proof}
  By~\cref{constr:heterogeneous_all_paths}, applied to the family restriction $\RepTy_\Th \to \Ty_\Th$.
  Checking the assumption relies on~\cref{lem:repcontr_center}.
\end{proof}

\begin{lem}\label{lem:repcontr_homogeneous_all_paths}
  For every representable contractibility witness
  $ c \Colon (\gamma : \Gamma) \to \isRepContr(A(\gamma)), $
  there is a homogeneous all-paths operation for $c$.
\end{lem}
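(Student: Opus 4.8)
The plan is to argue by induction on the derivation of the representable contractibility witness $c$, paralleling the proof of \cref{lem:repcontr_center}, treating each of the constructors of $\isRepContr$ in turn.

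For the base constructor $\contr^{\rep}_{b}$, the witness factors through $\isBRepContr$ and hence through $\isBContr$, so \cref{lem:basic_center_and_all_paths} already supplies a homogeneous all-paths operation. For $\contr^{\rep}_{\Unit}$ the underlying type is $\Unit \times \Unit$ and \cref{constr:center_and_all_paths_unit} gives the operation directly. For $\contr^{\rep}_{\cong}$ the witness is the transport of an $\isRepContr$ witness for $A$ across an isomorphism $\Tm_\Th(A) \cong \Tm_\Th(B)$; the induction hypothesis equips $A$ with a homogeneous all-paths operation, and the construction following \cref{constr:heterogeneous_all_paths} (that such operations are transported along type isomorphisms) transfers it to $B$.

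The only case requiring care is $\contr^{\rep}_{\Sigma}$, where the witness certifies contractibility of $((a:A)\times(c:C(a)))\times((b:B(a))\times D(a,b,c))$ from a witness $w_{1}$ of contractibility of $(a:A)\times B(a)$ over $\Gamma$ and a family $w_{2}$ of witnesses of contractibility of $(c:C(a))\times D(a,b,c)$ over $\Gamma.A.B$. I will invoke \cref{constr:all_paths_sigma}: its first input, a homogeneous all-paths operation for $(a:A)\times B(a)$, is exactly the induction hypothesis applied to $w_{1}$; its second input, a heterogeneous all-paths operation for $(a,b) \mapsto (c:C(a))\times D(a,b,c)$, is obtained by first applying the induction hypothesis to $w_{2}$ to get a homogeneous all-paths operation over $\Gamma.A.B$, and then upgrading it to a heterogeneous one by \cref{lem:repcontr_heterogeneous_all_paths}. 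Since \cref{lem:repcontr_heterogeneous_all_paths} is proved from \cref{constr:heterogeneous_all_paths} using only \cref{lem:repcontr_center}, there is no circular dependency: the present lemma rests on \cref{lem:repcontr_center} and on \cref{lem:repcontr_heterogeneous_all_paths}, and neither of those rests on it.

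The main obstacle is purely organizational. In the $\Sigma$ case one must reindex $w_{2}$ and the associated data $\sem{-}$ along the projection out of $\Gamma.A.B$, re-associate the nested $\Sigma$-type (via an application of $\contr^{\rep}_{\cong}$) so that the shape matches the conclusion of \cref{constr:all_paths_sigma}, and check that the heterogeneous upgrade from \cref{lem:repcontr_heterogeneous_all_paths} is applied over the correct base context. All remaining obligations are immediate appeals to the cited constructions.
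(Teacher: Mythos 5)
Your proof is correct and follows essentially the same route as the paper: induction on the derivation of $c$, using \cref{lem:basic_center_and_all_paths} for $\contr^{\rep}_{b}$, \cref{constr:center_and_all_paths_unit} for $\contr^{\rep}_{\Unit}$, and \cref{constr:all_paths_sigma} together with \cref{lem:repcontr_heterogeneous_all_paths} (applied to the inductive hypothesis on the inner family) for $\contr^{\rep}_{\Sigma}$, with the non-circularity observation being exactly right. You even spell out the $\contr^{\rep}_{\cong}$ case via the transport-along-isomorphism construction, which the paper's own proof leaves implicit.
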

\begin{proof}
  By induction on $c$.
  \begin{description}
  \item[Constructor $\contr^{\rep}_{b}$] \hfill \\
    By~\cref{lem:basic_center_and_all_paths}.

  \item[Constructor $\contr^{\rep}_{\Unit}$] \hfill \\
    By~\cref{constr:center_and_all_paths_unit}.

  \item[Constructor $\contr^{\rep}_{\Sigma}$] \hfill \\
    By~\cref{constr:all_paths_sigma} and~\cref{lem:repcontr_heterogeneous_all_paths}.
    \qedhere{}
  \end{description}
\end{proof}

We have now defined center and all-paths operations for all contractibility witnesses for representable types.
We remark that this already equips the family of representable types with the structure of weakly stable identity types.
\begin{prop}\label{prop:repcontr_id_types}
  The family $\RepTy_\Th$ is equipped with weakly stable identity types, where the introduction structure is given by the reflexive equivalences constructed in~\cref{ssec:proof_contractibility_data}.
\end{prop}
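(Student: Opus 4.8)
The plan is to apply \cref{thm:identity_types_from_refleqv} to the category with families whose underlying category is that of $\Th$ and whose family is the representable fragment $(\RepTy_\Th,\Tm_\Th)$, equipped with the $\Unit$- and $\Sigma$- types inherited from the $(\reppre)$-CwF structure on $\Th$. Every piece of data required by \cref{thm:identity_types_from_refleqv} has already been produced in the representable case: the contractibility family is $\isRepContr$ from \cref{ssec:proof_contractibility_data}; the reflexive equivalences $(\Id_-,\refl_-)$ and the dependent equivalences $\DId_-$ are the ones obtained in \cref{ssec:parametricity_structures} from the homotopical parametricity structure on $\Th$ (the section $\sem{-}$ of \cref{asm:refleqv_internal_model}, which factors through $\CPreReflEqv(\Th)$) together with the reflexivity structure furnished by \cref{lem:refl_type} and \cref{prop:contextual_reflexivity_structure}; and the centers and homogeneous all-paths operations for representable contractibility witnesses are exactly \cref{lem:repcontr_center} and \cref{lem:repcontr_homogeneous_all_paths}.

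The one point requiring care is that \cref{thm:identity_types_from_refleqv} is stated for the full type family of a CwF, so one must check that its proof restricts to the subfamily $\RepTy_\Th$. This comes down to the following observations. For $A \Colon \yo(\Gamma) \to \RepTy_\Th$ the Reedy type $\sem{A}$ is a \emph{representable} homotopical Reedy type, so $\Id_A(\gamma,-,-) = \sem{A}_E(\refl^E_\Gamma(\gamma),-,-)$ is again a representable type and the functionality witnesses $\Id_A.\funl$, $\Id_A.\funr$ are $\isRepContr$-witnesses; similarly $\DId_{A.B}$ yields representable equivalences when $B$ is a representable dependent type. Since $\RepTy_\Th$ is closed under $\Unit$- and $\Sigma$- types, every type occurring in the proof of \cref{thm:identity_types_from_refleqv} when $A$, the basepoint $x$ and the motive $P$ are representable — the dependency $S = (y:A) \times \Id_A(x,y)$ of the motive, the transported fibres of $P$, the $\Sigma$-types $(d' : P(\delta,a(\delta))) \times T_{S.P}(\dotsc)$ produced by the $.\funl$ and $.\funr$ of the (representable) dependent equivalence, and the type $Q$ used to build $j\beta$ — stays within $\RepTy_\Th$, and all the contractibility witnesses invoked are $\isRepContr$-witnesses. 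This is precisely the situation of the elimination structure of identity types \emph{on the family $\RepTy_\Th$}, whose motives are by definition representable, so \cref{lem:repcontr_center} and \cref{lem:repcontr_homogeneous_all_paths} — which concern only representable contractibility witnesses — suffice to discharge the last hypothesis of \cref{thm:identity_types_from_refleqv}.

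With this in place, \cref{thm:identity_types_from_refleqv} equips the identity type introduction structure $(\Id_-,\refl_-)$ on $(\RepTy_\Th,\Tm_\Th)$ with a weakly stable elimination structure, which together with the introduction structure (automatically weakly stable, being natural in $\CPsh(\Th)$) constitutes a weakly stable identity type structure on $\RepTy_\Th$. I do not expect any new mathematical difficulty: the proof is the assembly of \cref{ssec:proof_contractibility_data}, \cref{ssec:parametricity_structures}, \cref{lem:repcontr_center} and \cref{lem:repcontr_homogeneous_all_paths}, and the only genuine obstacle is the bookkeeping of the second paragraph, namely verifying that representability propagates through the already-completed construction in the proof of \cref{thm:identity_types_from_refleqv}.
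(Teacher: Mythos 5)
Your proposal is correct and follows exactly the paper's argument: the proof is an application of \cref{thm:identity_types_from_refleqv} to the representable fragment, with the contractibility data, reflexive equivalences and dependent equivalences of \cref{ssec:proof_contractibility_data} and the centers and all-paths operations of \cref{lem:repcontr_center} and \cref{lem:repcontr_homogeneous_all_paths}. Your second paragraph, checking that representability propagates through the construction of \cref{thm:identity_types_from_refleqv}, is bookkeeping the paper leaves implicit, but it is the right thing to verify and does not change the route.
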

\begin{proof}
  We use~\cref{thm:identity_types_from_refleqv}.
  We have defined contractibility data, reflexive equivalences and dependent equivalences in~\cref{ssec:proof_contractibility_data}, and centers and paths of contractible types in~\cref{lem:repcontr_center} and~\cref{lem:repcontr_homogeneous_all_paths}.
\end{proof}

\begin{lem}\label{lem:contr_center}
  For every contractibility witness
  $ c \Colon \isContr(A), $
  there is a center operation for $c$.
\end{lem}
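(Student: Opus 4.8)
The plan is to prove \cref{lem:contr_center} by induction on the inductive generation of $\isContr$, treating each of its six constructors with the matching center construction already developed in \cref{sec:mgraph_model} and \cref{ssec:proof_contractibility_data}. Since every constructor of $\isContr$ packages exactly the premises needed by the corresponding construction, the argument amounts to a case dispatch rather than to any new work.

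Concretely: for $\contr_{\mathsf{rep}}$ a center is supplied by \cref{lem:repcontr_center}; for $\contr_{b}$ by \cref{lem:basic_center_and_all_paths}; for $\contr_{\Unit}$ by \cref{constr:center_and_all_paths_unit}; and for $\contr_{\Sigma}$ by \cref{constr:center_sigma}, fed with the centers obtained from the induction hypothesis on the two premises. For $\contr_{\cong}$ we transport the center of $A$ along the supplied isomorphism $\Tm_\Th(A) \cong \Tm_\Th(B)$, using the construction immediately following \cref{constr:heterogeneous_all_paths}. The case of $\contr_{\Pi}$ is the one to be careful about: it is handled by \cref{constr:center_pi}, whose hypotheses demand both a center for the fiberwise $\Sigma$-type coming from the induction hypothesis \emph{and} genuine $\isRepContr$-witnesses for the index family $Y$ --- but the latter are precisely the premise $\forall x \to \isRepContr(Y(x))$ that the $\contr_{\Pi}$ constructor of $\isContr$ already carries, so \cref{constr:center_pi} applies verbatim.

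The only real subtlety, and the reason this is stated separately rather than merged into one big induction over $\Ty_\Th$, is that \cref{constr:center_pi} internally transports fibers over paths in $Y$ using a homogeneous all-paths operation for $Y$, which it obtains from \cref{lem:repcontr_homogeneous_all_paths}. Thus the induction for centers over general types is not self-contained: it relies on the already-completed treatment of centers \emph{and} all-paths operations for representable contractibility witnesses. I would therefore open the proof by noting that \cref{lem:repcontr_center} and \cref{lem:repcontr_homogeneous_all_paths} are used as black boxes, and that no all-paths data for non-representable types is needed here --- those will be supplied in the next lemma. I expect no genuine obstacle; the one thing worth double-checking is that, in each case, the premises carried by the $\contr$-constructor line up exactly with the hypotheses of the construction being invoked, in particular the representability requirements in the $\Pi$-case.
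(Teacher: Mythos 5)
Your proposal is correct and follows the same route as the paper: induction on the constructors of $\isContr$, dispatching each case to the corresponding construction (\cref{lem:basic_center_and_all_paths}, \cref{constr:center_and_all_paths_unit}, \cref{constr:center_sigma}, \cref{constr:center_pi}), with the $\Pi$-case leaning on the already-established representable results \cref{lem:repcontr_center} and \cref{lem:repcontr_homogeneous_all_paths}. You are in fact slightly more careful than the paper's own proof, which silently omits the $\contr_{\mathsf{rep}}$ and $\contr_{\cong}$ cases and cites the all-paths construction rather than \cref{constr:center_sigma} in the $\Sigma$-case.
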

\begin{proof}
  By induction on $c$.
  \begin{description}
  \item[Constructor $\contr_{b}$] \hfill \\
    By~\cref{lem:basic_center_and_all_paths}.

  \item[Constructor $\contr_{\Unit}$] \hfill \\
    By~\cref{constr:center_and_all_paths_unit}.

  \item[Constructor $\contr_{\Sigma}$] \hfill \\
    By~\cref{constr:all_paths_sigma}.

  \item[Constructor $\contr_{\Pi}$] \hfill \\
    By~\cref{constr:center_pi}.
    \qedhere{}
  \end{description}
\end{proof}

\begin{lem}\label{lem:contr_heterogeneous_all_paths}
  Let $c \Colon (\gamma : \Gamma) \to (a : A(\gamma)) \to \isContr(B(\gamma,a))$ be a global dependent family of contractibility witnesses.
  If we have a homogeneous all-paths operation for $c$ over $\Gamma.A$, then we construct a heterogeneous all-paths operation for $c$.
\end{lem}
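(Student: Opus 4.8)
The plan is to derive this lemma as a direct instance of \cref{constr:heterogeneous_all_paths}, in exactly the same way that \cref{lem:repcontr_heterogeneous_all_paths} was obtained, except that I instantiate the subfamily $\Ty'_\CC \hra \Ty_\CC$ appearing in that construction to the full family $\Ty_\Th$ itself (with the inclusion taken to be the identity monomorphism), rather than to the restriction $\RepTy_\Th \to \Ty_\Th$.

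To invoke \cref{constr:heterogeneous_all_paths} I need to verify its two hypotheses for this choice of $\Ty'_\CC$. The second hypothesis — that $\sem{X}_E$ and $\sem{X}_R$ again land in $\Ty'_\CC$ for every $X \Colon \yo(\Gamma) \to \Ty'_\CC$ — is vacuous, since $\Ty'_\CC = \Ty_\Th$ and every dependent type of $\Th$ factors through the identity. The first hypothesis requires that, for every type $X \Colon \yo(\Gamma) \to \Ty_\Th$, the contractibility witnesses $\sem{X}.\funr$ and $\sem{X}.\funl$ of the homotopical Reedy type $\sem{X}$ carry centers of contraction. These witnesses are elements of $\isContr$, so this follows immediately from \cref{lem:contr_center}, which has by then been established for all constructors of $\isContr$.

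With both hypotheses in hand, \cref{constr:heterogeneous_all_paths} — applied to the dependent type underlying $c$ over $\Gamma.A$, together with the assumed homogeneous all-paths operation for $c$ — outputs the required heterogeneous all-paths operation for $c$. I expect no genuine obstacle here: all of the actual work (transporting across the equivalence $\DId_{A.B}(\gamma,a_e)$, composing the result with a homogeneous path supplied by $c.\chpath$, and repackaging) is already carried out inside \cref{constr:heterogeneous_all_paths}, and the only external ingredient it consumes in this setting, namely centers for the relevant contractibility witnesses, is precisely \cref{lem:contr_center}. The one point to be careful about is to cite \cref{lem:contr_center} rather than the representable-specific \cref{lem:repcontr_center}.
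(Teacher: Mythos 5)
Your proposal is correct and matches the paper's proof exactly: the paper likewise applies \cref{constr:heterogeneous_all_paths} to the identity restriction $\Ty_\Th \to \Ty_\Th$ and discharges the center hypothesis via \cref{lem:contr_center}. Nothing further is needed.
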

\begin{proof}
  By~\cref{constr:heterogeneous_all_paths}, applied to the family restriction $\Ty_\Th \to \Ty_\Th$.
  Checking the assumption relies on~\cref{lem:contr_center}.
\end{proof}

\begin{lem}\label{lem:contr_homogeneous_all_paths}
  For every contractibility witness
  $ c \Colon (\gamma : \Gamma) \to \isContr(A(\gamma)), $
  there is a homogeneous all-paths operation for $c$.
\end{lem}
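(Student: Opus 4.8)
The plan is to proceed by induction on the derivation of the contractibility witness $c$, that is, on the constructors of the inductive family $\isContr$, exactly as in the proof of \cref{lem:contr_center}. Each constructor is handled by one of the general constructions collected in \cref{ssec:parametricity_structures}, with the induction hypothesis supplying the homogeneous all-paths operations for the sub-witnesses.

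The base cases are immediate. For $c$ built from $\contr_b$ the operation is given by \cref{lem:basic_center_and_all_paths}; for $c$ built from $\contr_{\mathsf{rep}}$ it is given by \cref{lem:repcontr_homogeneous_all_paths}, using that the reflexive and dependent equivalences constructed in \cref{ssec:proof_contractibility_data} agree whether one works with $\RepTy_\Th$ or with $\Ty_\Th$. The case $\contr_{\Unit}$ follows from \cref{constr:center_and_all_paths_unit}, and the case $\contr_\cong$ from the construction that transports center and homogeneous all-paths operations along a type isomorphism, combined with the induction hypothesis for the source type.

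For $\contr_\Sigma$ I would first apply the induction hypothesis to the sub-witness of type $\isContr((a:A(\gamma)) \times B(\gamma,a))$ to obtain a homogeneous all-paths operation, then apply the induction hypothesis to the sub-witness of type $\forall a\ b \to \isContr((c:C(a)) \times D(a,b,c))$ over the extended context and upgrade it to a heterogeneous all-paths operation via \cref{lem:contr_heterogeneous_all_paths}, and finally conclude with \cref{constr:all_paths_sigma}. For $\contr_\Pi$ I would likewise use the induction hypothesis on the sub-witness $\forall x\ y \to \isContr((a:A(x)) \times B(x,y,a))$ together with \cref{lem:contr_heterogeneous_all_paths} to get a heterogeneous all-paths operation for the dependent type $(x,y) \mapsto (a:A(x)) \times B(x,y,a)$, and then invoke \cref{constr:all_paths_pi}.

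The main obstacle is this last case: \cref{constr:all_paths_pi} requires, as a hypothesis, that the identity types $\Id_{-}$ on $\RepTy_\Th$ already carry a weakly stable \emph{elimination} structure, which is used there to build the squares $s_Y$ in the contractible family $Y$. This is not part of the induction hypothesis, but it has already been secured by \cref{prop:repcontr_id_types}, whose proof rests only on \cref{lem:repcontr_center} and \cref{lem:repcontr_homogeneous_all_paths} and hence does not depend on the present lemma. Likewise, the $\Sigma$- and $\Pi$- cases invoke \cref{lem:contr_heterogeneous_all_paths}, which in turn relies on \cref{lem:contr_center}; since both of those are established earlier in this section, there is no circularity, and the induction goes through.
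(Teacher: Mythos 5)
Your proof is correct and follows essentially the same route as the paper: induction on the constructors of $\isContr$, using \cref{lem:basic_center_and_all_paths}, \cref{constr:center_and_all_paths_unit}, \cref{constr:all_paths_sigma} with \cref{lem:contr_heterogeneous_all_paths}, and \cref{constr:all_paths_pi} with \cref{prop:repcontr_id_types} for the respective cases. You are in fact slightly more explicit than the paper, which silently omits the $\contr_{\mathsf{rep}}$ and $\contr_{\cong}$ cases that you correctly dispatch via \cref{lem:repcontr_homogeneous_all_paths} and the isomorphism-transport construction.
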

\begin{proof}
  By induction on $c$.
  \begin{description}
  \item[Constructor $\contr_{b}$] \hfill \\
    By~\cref{lem:basic_center_and_all_paths}.

  \item[Constructor $\contr_{\Unit}$] \hfill \\
    By~\cref{constr:center_and_all_paths_unit}.

  \item[Constructor $\contr_{\Sigma}$] \hfill \\
    By~\cref{constr:all_paths_sigma} and~\cref{lem:contr_heterogeneous_all_paths}.

  \item[Constructor $\contr_{\Pi}$] \hfill \\
    By~\cref{constr:all_paths_pi}, \cref{prop:repcontr_id_types} and \cref{lem:contr_heterogeneous_all_paths}.
    \qedhere{}
  \end{description}
\end{proof}

\subsection{Main theorem}

We record the results of this section in the following theorem.
\begin{thm}\label{thm:external_univalence_from_refleqv}
  Let $\Th$ be a SOGAT equipped with homotopy relations.
  If it satisfies~\cref{asm:refleqv_internal_model}, \cref{asm:congruence_operations} and~\cref{asm:basic_center_and_all_paths}, then it satisfies external univalence, \ie{} the $(\reppre)$-CwF $\Th$ can be equipped with weakly stable identity types satisfying function extensionality and saturation with respect to the homotopy relations.
\end{thm}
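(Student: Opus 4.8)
The plan is to assemble, in the order in which their hypotheses become available, the constructions carried out through this section; beyond the bookkeeping of which assumption feeds which lemma, there is little left to do. First, \cref{asm:refleqv_internal_model} provides the parametricity structure $\sem{-} : \Th \to \CPreReflGraph(\Th)$. I would then take the contractibility data $(\isBRepContr,\isBContr,\isRepContr,\isContr)$ of~\cref{ssec:proof_contractibility_data}; it is closed, by construction, under exactly the operations of~\cref{lem:contr_closure_conditions_mgraph}, so~\cref{constr:homotopical_mgraph_model} produces the $(\reppre)$-CwF $\CPreReflEqv(\Th)$ relative to this data, and $\sem{-}$ factors through $\CPreReflEqv(\Th) \to \CPreReflGraph(\Th)$, the generating-type cases being witnessed by $\contr_{\iS,l}$ and $\contr_{\iS,r}$; hence $\sem{-}$ is an $\{l,r\}$-homotopical parametricity structure on $\Th$.

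Second, I would use~\cref{asm:congruence_operations}: the congruence operations yield reflexivity operations for monomial types (\cref{lem:refl_monomial_type}), then for polynomial types by induction on telescope length (\cref{lem:refl_polynomial_type}), then for all types via~\cref{prop:type_iso_polytype} (\cref{lem:refl_type}); since $\Th$ is contextual, \cref{prop:contextual_reflexivity_structure} upgrades this to a reflexivity structure. A homotopical parametricity structure together with a reflexivity structure is exactly the input of~\cref{ssec:parametricity_structures}, which equips $\Th$ with reflexive equivalences $(\Id_{-},\refl_{-})$ and dependent equivalences $(\DId_{-})$.

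Third, I would invoke~\cref{asm:basic_center_and_all_paths} and propagate centers and homogeneous all-paths operations along the inductive structure of the contractibility data: the basic case is~\cref{lem:basic_center_and_all_paths}; $\Unit$ is~\cref{constr:center_and_all_paths_unit}; the $\Sigma$ cases combine~\cref{constr:center_sigma}, \cref{constr:all_paths_sigma} with the heterogeneous all-paths lemmas; the $\Pi$ cases combine~\cref{constr:center_pi}, \cref{constr:all_paths_pi}, the latter using that $\RepTy_\Th$ already carries weakly stable identity types (\cref{prop:repcontr_id_types}). This gives~\cref{lem:contr_center} and~\cref{lem:contr_homogeneous_all_paths}: every contractibility witness has a center and all paths. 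Feeding the contractibility data, the reflexive and dependent equivalences, and these operations into~\cref{thm:identity_types_from_refleqv}, the introduction structure $(\Id_{-},\refl_{-})$ gains a weakly stable elimination structure, so $\Th$ carries weakly stable identity types.

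It remains to verify function extensionality and saturation, and this matching-up is the only step that is not purely mechanical. For function extensionality, the type whose contractibility is demanded unfolds — after using $\contr_{\cong}$ and the definition of $\Id$ on the base type through $\sem{-}_E$ — into an instance of the constructor $\contr_{\Pi}$, with the contractibility hypotheses supplied by $\sem{B}.\funr$; so it is contractible for our data and hence, through its center and all-paths operations together with weak stability, contractible in the sense required. For saturation, the constructor $\contr_{\iS,{\sim}}$ is by definition an element of $\isContr((y:\iS(\sigma)) \times (x \sim_{\iS(\sigma)} y))$, which via~\cref{lem:contr_center}, \cref{lem:contr_homogeneous_all_paths} and weak stability becomes a contractibility witness for this type with respect to the constructed identity types — exactly the condition of~\cref{def:univalent_internal_model}. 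The main obstacle is therefore not a single hard argument but ensuring that the constructed $\Id$ types, defined indirectly through $\sem{-}$ and the reflexivity structure, line up on the nose with the shapes of $\contr_{\Pi}$ and $\contr_{\iS,{\sim}}$ that were built into $\isContr$, so that these properties hold strictly rather than merely up to equivalence.
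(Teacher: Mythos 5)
Your proposal is correct and follows essentially the same route as the paper: assemble the contractibility data of the section, use \cref{asm:congruence_operations} to build the reflexivity structure via monomial/polynomial types, use \cref{asm:basic_center_and_all_paths} plus the inductive closure to get centers and all-paths, and conclude by \cref{thm:identity_types_from_refleqv}, with saturation read off from the constructors $\contr_{\iS,\sim}$. The only (immaterial) difference is that the paper justifies function extensionality directly from the pointwise definition of $\Id$ at $\Pi$-types in $\CPreReflGraph(\Th)$ rather than by exhibiting it as an instance of $\contr_{\Pi}$.
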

\begin{proof}
  The weakly stable identity types are constructed using~\cref{thm:identity_types_from_refleqv}.
  We have defined contractibility data, reflexive equivalences and dependent equivalences in~\cref{ssec:proof_contractibility_data}, and centers and paths of contractible types in~\cref{lem:contr_center} and~\cref{lem:contr_homogeneous_all_paths}.
  Saturation with respect to the homotopy relations follows from the constructors $\contr_{\iS,\sim}$ of $\isBContr$.
  Function extensionality follows from the definition of the $\Pi$-types in $\CPreReflGraph(\CC)$.
\end{proof}

\subsection{Necessary conditions}

The hypotheses of~\cref{thm:external_univalence_from_refleqv} are not necessary conditions in the general setting.
However, they become necessary for SOGATs without equations, and more generally for cofibrant SOGATs, \ie{} for SOGATs that are retracts (in $\CCwf_{\reppre}$) of SOGATs without equations.
\begin{thm}\label{thm:external_univalence_from_refleqv_converse}
  Let $\Th$ be a cofibrant SOGAT, \ie a $(\reppre)$-CwF that is in the left class of the weak factorization system generated by $\{I^\ty,I^{\repty},I^\tm\}$.

  Assume that $\Th$ is equipped with homotopy relations, such that for every generating representable sort $\iS : \GenRepTy_\Th$, the homotopy relation $(- \sim_{\iS(-)} -)$ is a family of representable types.

  If $\Th$ satisfies external univalence, then it satisfies the conditions of~\cref{thm:external_univalence_from_refleqv} (\cref{asm:refleqv_internal_model}, \cref{asm:congruence_operations} and~\cref{asm:basic_center_and_all_paths}).
\end{thm}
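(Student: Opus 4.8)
The plan is to extract the three pieces of data of \cref{thm:external_univalence_from_refleqv} from the weakly stable identity types, function extensionality and saturation furnished by external univalence, using cofibrancy of $\Th$ to promote an ``up to homotopy'' construction to the strict data required by \cref{asm:refleqv_internal_model}, \cref{asm:congruence_operations} and \cref{asm:basic_center_and_all_paths}. First I would record that the identity types on $\Th$, via \cref{constr:stable_contractibility_data}, equip $\Th$ with stable contractibility data, and that this data satisfies the closure operations of \cref{lem:contr_closure_conditions_mgraph} and \cref{lem:contr_closure_conditions_mgraph_2}: these are exactly the standard closure properties of identity types under $\Unit$-, $\Sigma$- and $\Pi$- types. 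Hence \cref{constr:homotopical_mgraph_model} and \cref{constr:homotopical_mgraph_model_2} apply, and the $(\reppre)$-CwFs $\CPreReflEqv(\Th)$ and $\CReflEqv(\Th)$ are available.

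For \cref{asm:refleqv_internal_model}, the idea is that a $(\repinfty)$-CwF carries a canonical internal model of itself in its reflexive equivalences: a type $A$ over $\Gamma$ is sent to the dependent reflexive equivalence whose $V$-component is $A$, whose $E$-component is the heterogeneous identity type of $A$ over the identity type of $\Gamma$, and whose $R$-component records that a given loop is homotopic to reflexivity; the functionality conditions are the fact that identity types are equivalences, and the $R$-contractibility is based path induction. On a generating representable sort $\iS$ one must instead use the homotopy relation $(\sim_{\iS})$ itself for the $E$-component, which is legitimate because $(\sim_{\iS})$ is representable by hypothesis and its functionality follows from saturation, making the comparison $(\sim_{\iS}) \to \Id_{\iS}$ an equivalence. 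The mechanism that produces this interpretation as an honest section of $V$, compatibly with every operation, is that the forgetful morphism from the appropriate structured reflexive-equivalences model down to $\Th$ is a trivial fibration of $(\reppre)$-CwFs — it is bijective on contexts and terms once a chosen transport operation is built into the types, and types of $\Th$ are lifted through the identity types of $\Th$ — so cofibrancy of $\Th$ yields a section; composing with $\CReflEqv(\Th) \to \CPreReflEqv(\Th) \to \CPreReflGraph(\Th)$ gives the parametricity structure.

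For the remaining conditions, \cref{asm:congruence_operations} asks for a congruence operation from each representable dependent type $A$ — possibly a telescope of basic representable types — to each generating sort $\iS$; this is precisely the iterated action on paths of a term-former of domain $A$, built from $\J$, using that $A$ is representable so that the relevant $\Pi$-types exist and using saturation to pass between $(\sim_{\iS})$ and $\Id_{\iS}$. For \cref{asm:basic_center_and_all_paths}, each generator $\contr_{\iS,l}$, $\contr_{\iS,r}$, $\contr_{\iS,\sim}$ of $\isBContr$ asserts the contractibility of a singleton- or fibre-type; its center is given by reflexivity and its homogeneous all-paths operation by Paulin--Mohring based path induction, with the saturation equivalence $(\sim_{\iS}) \simeq \Id_{\iS}$ inserted for the $\contr_{\iS,\sim}$ generator and the representability hypothesis on $(\sim_{\iS})$ handling the representable variant $\isBRepContr$. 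Packaging the congruence and center/all-paths data into the same structured model lets all three assumptions be extracted from one section obtained by cofibrancy.

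The main obstacle is the strictification hidden in the second paragraph: replacing the identity types $\Id_{\iS}$ by the a priori only equivalent homotopy relations $(\sim_{\iS})$, coherently with every operation of $\Th$, so as to obtain a genuine strict section of $V$ rather than a pseudo-section. This is exactly where cofibrancy of $\Th$ is indispensable — for a SOGAT presented by generators without equations (or a retract thereof) the section can be built generator by generator with nothing to verify, whereas equations in $\Th$ can genuinely obstruct the strictification, which is why the converse fails without the cofibrancy hypothesis. Making this precise amounts to exhibiting the relevant forgetful morphism as a trivial fibration, which in turn reduces to a long but routine check that the identity types of $\Th$ satisfy all the equational laws baked into $\CPreReflEqv$, the congruence operations and the center/all-paths operations.
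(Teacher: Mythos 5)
Your proposal follows essentially the same route as the paper's proof: build contractibility data from the identity types via \cref{constr:stable_contractibility_data}, form the reflexive-equivalences model, show the projection down to $\Th$ has the right lifting property against $\{I^\ty,I^{\repty},I^\tm\}$ (lifting types through $\DId$, representable types via the representability hypothesis on $(\sim)$, and terms via the dependent action on paths), and invoke cofibrancy to obtain the section giving \cref{asm:refleqv_internal_model}, with \cref{asm:congruence_operations} and \cref{asm:basic_center_and_all_paths} then extracted from $\mathsf{apd}$ and singleton contractibility plus saturation; you also correctly identify that non-uniqueness of term lifts is what blocks lifting against $E^\tm$ and hence why cofibrancy is needed. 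The only slip is the claim that the forgetful morphism to $\Th$ is ``bijective on contexts and terms'' --- it is the comparison $\CPreReflEqv(\Th) \to \CPreReflGraph(\Th)$ that is bijective on contexts and terms, while the projection to $\Th$ merely admits (possibly many) lifts of terms, and the paper additionally passes to the contextual core $\CReflEqv^{\cxl}(\Th)$ so that contexts themselves satisfy the contractibility conditions --- but neither point affects the correctness of your argument.
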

\begin{proof}
  The identity types $(\simeq)$ of $\Th$ induce contractibility data $\isContr_{\simeq}$ on $\Th$, as constructed in~\cref{constr:stable_contractibility_data}.

  The contractibility data $\isContr_{\simeq}$ satisfies the necessary closure conditions, so that the reflexive equivalences model $\CReflEqv(\Th)$ can be constructed by~\cref{constr:homotopical_mgraph_model_2}; we omit the details.

  We consider the contextual core $\CReflEqv^{\mathsf{cxl}}(\Th)$ of $\CReflEqv(\Th)$; its contexts can be obtained as iterated context extensions in $\CReflEqv(\Th)$.
  Since $\CReflEqv(\Th)$ has $\Sigma$-types, any context of $\CReflEqv^{\mathsf{cxl}}(\Th)$ is isomorphic to a closed type of $\CReflEqv(\Th)$.
  In particular, any context of $\CReflEqv^{\mathsf{cxl}}(\Th)$ satisfies the contractibility conditions of reflexive equivalences.

  We now show that the projection morphism $V : \CReflEqv^{\cxl}(\Th) \to \Th$ satisfies the right lifting property with respect to $\{I^\ty,I^{\repty},I^\tm\}$.
  \begin{description}
  \item[Case ${I^\ty : \Free_{\reppre}(\bm{\Gamma} \vdash) \to \Free_{\reppre}(\bm{\Gamma} \vdash \bm{A}\ \type)}$]
    Given an object $\Gamma : \CReflEqv^{\cxl}(\Th)$ and a type $A \Colon \yo(\Gamma_V) \to \Ty_\Th$, we have to extend $A$ to a dependent reflexive equivalence $(A,A_E,A_R)$ over $\Gamma$.

    Since $(\Gamma_E,\Gamma_R)$ satisfies the contractibility condition of a reflexive equivalence, it is equivalent to the reflexive equivalence $\Id_\Gamma$.
    Up to this equivalence, we can then compute $(A_E,A_R)$ as the dependent identity type $\DId_{\Gamma.A}$.

  \item[Case ${I^{\repty} : \Free_{\reppre}(\bm{\Gamma} \vdash) \to \Free_{\reppre}(\bm{\Gamma} \vdash \bm{A}\ \reptype)}$]
    Similar to the case of $I^\ty$.
    Given an object $\Gamma : \CReflEqv^{\cxl}(\Th)$ and a representable type $A \Colon \yo(\Gamma_V) \to \RepTy_\Th$, we have to extend $A$ to a representable dependent reflexive equivalence $(A,A_E,A_R)$ over $\Gamma$.
    We can compute $(A_E,A_R)$ as the dependent identity type $\DId_{\Gamma.A}$.
    The assumption that the homotopy relation $(- \sim_{\iS(-)} -)$ is a family of representable types whenever $\iS$ is representable ensures that we can choose representable families for $A_E$ and $A_R$.

  \item[Case ${I^{\tm} : \Free_{\reppre}(\bm{\Gamma} \vdash \bm{A}\ \type) \to \Free_{\reppre}(\bm{\Gamma} \vdash \bm{a} : \bm{A})}$]
    Given an object $\Gamma : \CReflEqv^{\cxl}(\Th)$, a dependent reflexive equivalence $A$ over $\Gamma$ and a term $a \Colon (\gamma : \yo(\Gamma_V)) \to \Tm_\Th(A_V(\gamma))$, we have to show that $a$ can be extended to a term of $\CReflEqv(\Th)$, that is we have to construct:
    \begin{alignat*}{3}
      & a_E && :{ }
      && \forall \gamma_l\ \gamma_r\ (\gamma_e : \Tm_\Th(\Gamma_E(\gamma_l,\gamma_r))) \to \Tm_\Th(A_E(\gamma_e,a(\gamma_l),a(\gamma_e))), \\
      & a_R && :{ }
      && \forall \gamma\ \gamma_e\ (\gamma_r : \Tm_\Th(\Gamma_R(\gamma,\gamma_e))) \to \Tm_\Th(A_R(\gamma_r,a(\gamma_l),a_E(\gamma_e))).
    \end{alignat*}

    Up to the identification of $(\Gamma_E,\Gamma_R)$ and $(A_E,A_R)$ with respectively $\Id_\Gamma$ and $\DId_{\Gamma.A}$, $a_E$ and $a_R$ are just given by the dependent action of $a$ on paths $\mathsf{apd}_a$.
  \end{description}
  Note that we could not show the lifting property with respect to $E^\tm$: there could be multiple ways to lift a same term from $\Th$ to $\CReflEqv^{\cxl}(\Th)$, \eg{} given by multiple definitions of the dependent action on paths.

  Since $\Th$ is $\{I^\ty,I^{\repty},I^\tm\}$-cellular, we obtain a section $\sem{-} : \Th \to \CReflEqv^{\cxl}(\Th)$ of $V : \CReflEqv^{\cxl}(\Th) \to \Th$.
  By composing this section with the $(\reppre)$-CwF morphism $\CReflEqv^{\cxl}(\Th) \to \CPreReflGraph(\Th)$, we equip $\CPreReflGraph(\Th)$ with an internal model of $\Th$, satisfying~\cref{asm:refleqv_internal_model}.

  It remains to check~\cref{asm:congruence_operations} and~\cref{asm:basic_center_and_all_paths}.

  The conditions of~\cref{asm:congruence_operations} are instances of the dependent action on paths.

  Finally, \cref{asm:basic_center_and_all_paths} is proven by showing that induction on the inductive families $\isRepContr$ and $\isContr$ that for every $A \Colon \yo(\Gamma) \to \Ty_\Th$, we have
  \[ \forall \gamma \to \isContr(A(\gamma)) \to \isContr_{\simeq}(A(\gamma)), \]
  since we already know that the types that are contractible with respect to $\isContr_{\simeq}$ have centers and all-paths.
  This amount to showing some closure conditions on $\isContr_{\simeq}$ that were already proven in the construction of $\CReflEqv(\Th)$.
\end{proof}


\section{Applications}\label{sec:applications}

In this section, we apply~\cref{thm:external_univalence_from_refleqv} to prove that some SOGATs satisfy external univalence.

\subsection{Categories}\label{ssec:application_cat}

We first consider the first-order generalized algebraic theory $\Th_{\CCat}$ of categories, equipped with the homotopy relations defined in~\cref{exa:homotopy_relations_th_cat}.
We apply~\cref{thm:external_univalence_from_refleqv} to show that it satisfies external univalence.

These constructions have been formalized in Agda~%
\footnote{The Agda files are available at~\url{https://rafaelbocquet.gitlab.io/Agda/20221114_ExternalUnivalence/Cat.html}.}%
, showing that for some concrete theory, \cref{asm:refleqv_internal_model}, \cref{asm:congruence_operations}, \cref{asm:basic_center_and_all_paths} are syntactic enough as to be checked in a proof assistant.

We first need to check~\cref{asm:refleqv_internal_model}, that is to equip $\CPreReflGraph(\Th_{\CCat})$ with an internal model of $\Th_{\CCat}$.

The sorts of objects, morphisms and equalities between morphisms are interpreted as follows:
\begingroup{}\allowdisplaybreaks{}
\begin{alignat*}{3}
  & \sem{\iob}_{E}(\gamma_e) && \triangleq{ } && \lambda x_{l}\ x_{r} \mapsto x_{l} \cong x_{r}, \\
  & \sem{\ihom(x,y)}_{E}(\gamma_e) && \triangleq{ } && \lambda f_{l}\ f_{r} \mapsto \iEqHom(\sem{y}_E(\gamma_e) \icirc f_{l}, f_{r} \icirc \sem{x}_E(\gamma_e)), \\
  & \sem{\ieqhom(f,g)}_{E}(\gamma_e) && \triangleq{ } && \lambda p_{l}\ p_{r} \mapsto \Unit, \\
  & \sem{\iob}_{R}(\gamma_r) && \triangleq{ } && \lambda x\ x_{e} \mapsto \iEqHom(x_{e},\iid(x)), \\
  & \sem{\ihom(x,y)}_{R}(\gamma_r) && \triangleq{ } && \lambda f\ f_{e} \mapsto \Unit, \\
  & \sem{\ieqhom(x,y)}_{R}(\gamma_r) && \triangleq{ } && \lambda p\ p_{e} \mapsto \Unit.
\end{alignat*}\endgroup{}

We then have to interpret the category operations, in a way that satisfies the category laws.
We have to define the following components:
\begin{alignat*}{3}
  & \sem{\iid(x)}_{E}(\gamma_e) && :{ } && \iEqHom(\sem{x}_E(\gamma_e) \icirc \iid(x(\gamma_l)), \iid(x(\gamma_r)) \icirc \sem{x}_{E}(\gamma_e)), \\
  & \sem{\icomp(f,g)}_{E}(\gamma_e) && :{ } && \iEqHom(\sem{z}_{E}(\gamma_e) \icirc \icomp(f(\gamma_l),g(\gamma_l)), \icomp(f(\gamma_r),g(\gamma_r)) \icirc \sem{x}_{E}(\gamma_e)).
\end{alignat*}
The components $\sem{\iid(x)}_{R}$ and $\sem{\icomp(f,g)}_{R}$ and are trivial, since $\sem{\ihom(-)}_{R}(-)$ is the unit type.

The component $\sem{\iid(x)}_{E}$ follows directly from the category laws.

For the remaining component $\sem{\icomp(f,g)}_{E}$, remember that we have assumptions
\begin{alignat*}{3}
  & \sem{f}_{E}(\gamma_e) && :{ } && \iEqHom(\sem{y}_{E}(\gamma_e) \icirc f(\gamma_l), f(\gamma_r) \icirc \sem{x}_{E}(\gamma_e)), \text{and} \\
  & \sem{g}_{E}(\gamma_e) && :{ } && \iEqHom(\sem{z}_{E}(\gamma_e) \icirc g(\gamma_l), g(\gamma_r) \icirc \sem{y}_{E}(\gamma_e)).
\end{alignat*}
We can then apply the category laws to derive $\sem{\icomp(f,g)}_{E}(\gamma_e)$.

Because the components ${-}_{E}$ are propositional, the category laws are trivially satisfied in $\CPreReflGraph(\Th_{\CCat})$.

This finishes the definition of an internal model of $\Th_{\CCat}$ in $\CPreReflGraph(\Th_{\CCat})$.
We then have to check~\cref{asm:congruence_operations}.
Since $\Th_{\CCat}$ is a first-order GAT, we just have to define reflexivity operations for every generating type.
The reflexivity operation of objects is given by $\iid$, the reflexivity operations for morphisms is given by $\irefl$ and the reflexivity operation for equalities between morphisms is trivial.

It remains to check~\cref{asm:basic_center_and_all_paths}.
We first unfold the definition of $\isContr^{b}$; it has the following constructors:
\begingroup{}\allowdisplaybreaks{}
\begin{alignat*}{3}
  & \contr_{\iOb,l} && :{ } &&
  (x : \iOb) \to \isContr^{b}((y : \iOb) \times (x \cong y)), \\
  & \contr_{\iOb,r} && :{ } &&
  (y : \iOb) \to \isContr^{b}((x : \iOb) \times (x \cong y)), \\
  & \contr_{\iOb,{\sim}} && :{ } &&
  (x : \iOb) \to \isContr^{b}((y : \iOb) \times (x \cong y)), \\
  & \contr_{\iHom,l} && :{ } &&
  \forall x_{l}\ x_{r}\ x_{e}\ y_{l}\ y_{r}\ y_{e}\ (f : \iHom(x_{l},y_{l})) \\
  &&&&& \quad \to \isContr^{b}((g : \iHom(x_{r},y_{r})) \times \iEqHom(y_{e} \icirc f \icirc x_{e}^{-1}, g)), \\
  & \contr_{\iHom,r} && :{ } &&
  \forall x_{l}\ x_{r}\ x_{e}\ y_{l}\ y_{r}\ y_{e}\ (f : \iHom(x_{r},y_{r})) \\
  &&&&& \quad \to \isContr^{b}((g : \iHom(x_{l},y_{l})) \times \iEqHom(y_{e} \icirc g \icirc x_{e}^{-1}, f)), \\
  & \contr_{\iHom,{\sim}} && :{ } &&
  \forall x\ y\ (f : \iHom(x,y)) \\
  &&&&& \quad \to \isContr^{b}((g : \iHom(x,y)) \times \iEqHom(f, g)), \\
  & \contr_{\iEqHom,l} && :{ } &&
  \forall x_{l}\ x_{r}\ x_{e}\ y_{l}\ y_{r}\ y_{e}\ f_{l}\ f_{r}\ f_{e}\ g_{l}\ g_{r}\ g_{e}\ (p : \iEqHom(f_{l},g_{l})) \\
  &&&&& \quad \to \isContr^{b}(\iEqHom(f_{r},g_{r}) \times \Unit), \\
  & \contr_{\iEqHom,r} && :{ } &&
  \forall x_{l}\ x_{r}\ x_{e}\ y_{l}\ y_{r}\ y_{e}\ f_{l}\ f_{r}\ f_{e}\ g_{l}\ g_{r}\ g_{e}\ (p : \iEqHom(f_{r},g_{r})) \\
  &&&&& \quad \to \isContr^{b}(\iEqHom(f_{l},g_{l}) \times \Unit), \\
  & \contr_{\iEqHom,{\sim}} && :{ } &&
  \forall x\ y\ f\ g\ (p : \iEqHom(f,g)) \\
  &&&&& \quad \to \isContr^{b}(\iEqHom(f,g) \times \Unit).
\end{alignat*}\endgroup{}
However only $\contr_{\iOb,l}$, $\contr_{\iOb,r}$ and $\contr_{\iOb,{\sim}}$ are interesting.
In all of the other constructors, the type is isomorphic to $\Unit$, and verifying the conditions of~\cref{asm:basic_center_and_all_paths} is then trivial.
Furthermore, the constructors $\contr_{\iOb,l}$, $\contr_{\iOb,r}$ and $\contr_{\iOb,{\sim}}$ are equivalent, and it suffices to consider $\contr_{\iOb,l}$.

We have to construct the following terms:
\begin{alignat*}{3}
  & \ccenter_{\iOb,l} && :{ } &&
  \forall (x : \iOb) \to (y : \iOb) \times (f : x \cong y), \\
  & \chpath_{\iOb,l} && :{ } &&
  \forall x\ (x_{e} : x \cong x) \\
  &&&&& \phantom{\forall} (y_{l} : \iOb)\ (f_{l} : x_{l} \cong y_{l})\ (y_{r} : \iOb)\ (f_{r} : x_{r} \cong y_{r}) \\
  &&&&& \quad\to (y_{e} : y_{l} \cong y_{r}) \\
  &&&&& \quad\phantom{\to}\times \iEqHom(y_{e} \icirc f_{l} \icirc x_{e}^{-1}, f_{r}) \\
  &&&&& \quad\phantom{\to}\times \iEqHom(x_{e} \icirc f_{l}^{-1} \icirc y_{e}^{-1}, f_{r}^{-1}).
\end{alignat*}

The input data for $\chpath_{\iOb,l}$ can be described in the following diagram:
\[ \begin{tikzcd}
    x
    \ar[r, "{x_{e}}", "{\cong}"']
    \ar[d, "{\cong}", "{f_{l}}"']
    &
    x
    \ar[d, "{\cong}"', "{f_{r}}"]
    \\
    y_{l}
    &
    y_{r}\rlap{\ .}
  \end{tikzcd} \]
Then $\chpath_{\iOb,l}$ should be a filler of that open square.

They can be constructed as follows:
\begin{alignat*}{3}
  & \ccenter_{\iOb,l}(x) && \triangleq{ } && (x,\iid(x)), \\
  & \chpath_{\iOb,l}(x,x_{e},y_{l},f_{l},y_{r},f_{r}) && \triangleq{ } && (f_{r} \icirc x_{e} \icirc f_{l}^{-1}, \irefl, \irefl).
\end{alignat*}

Thus we have checked~\cref{asm:basic_center_and_all_paths}.
By~\cref{thm:external_univalence_from_refleqv}, the theory $\Th_{\CCat}$ satisfies external univalence.

\subsection{Type theory with identity types}\label{ssec:application_type_theory}

We now show external univalence for the SOGAT $\Th_{\Id}$ of a representable family equipped with weak identity types, equipped with the homotopy relations of~\cref{exa:homotopy_relations_th_id}.

We first equip the inverse diagram model $\CPreReflGraph(\Th_{\Id})$ with an internal model of $\Th_{\Id}$.

The sorts of types and terms are interpreted as follows:
\begin{alignat*}{3}
  & \sem{\ity}_{E}(\gamma_e) && \triangleq{ } && \lambda A\ B \mapsto \iEquiv(A,B), \\
  & \sem{\itm(A)}_{E}(\gamma_e) && \triangleq{ } && \lambda x\ y \mapsto \iTm(\sem{A}_{E}(\gamma_e,x,y)), \\
  & \sem{\ity}_{R}(\gamma_r) && \triangleq{ } && \lambda A\ E \mapsto \isRefl(E), \\
  & \sem{\itm(A)}_{R}(\gamma_r) && \triangleq{ } && \lambda x\ p \mapsto \iTm(\sem{A}_{R}(\gamma_r,x,p)),
\end{alignat*}
where $\iEquiv(A,B)$ is the sort of relational equivalences between $A$ and $B$, and
\begin{alignat*}{3}
  & \isRefl(E) && \triangleq{ } && (P : \forall a \to \iTm(E(a,a)) \to \iTy) \\
  &&&&& \times (\forall a \to \isContr((p : E(a,a)) \times P(p)))
\end{alignat*}
is the sort of reflexivity structures over an equivalence $E : \iEquiv(A,A)$.

We also have to interpret the operations $\iId$, $\irefl$, $\iJ$ and $\iJb$ in this model.
For the operations $\iId$ and $\irefl$, we have to define the following components:
\begingroup{}\allowdisplaybreaks{}
\begin{alignat*}{3}
  & \sem{\iId(A,x,y)}_{E}(\gamma_e) && :{ } && \iEquiv(\iId(A(\gamma_l),x(\gamma_l),y(\gamma_l)), \iId(A(\gamma_r),x(\gamma_r),y(\gamma_r))), \\
  & \sem{\iId(A,x,y)}_{R}(\gamma_r) && :{ } && \isRefl(\sem{\iId(A,x,y)}_{E}(\gamma_e)), \\
  & \sem{\irefl(A,x)}_{E}(\gamma_e) && :{ } && \iTm(\sem{\iId(A,x,x)}_E(\gamma_e,\irefl(A(\gamma_l),x(\gamma_l)),\irefl(A(\gamma_r),x(\gamma_r)))), \\
  & \sem{\irefl(A,x)}_{R}(\gamma_r) && :{ } && \iTm(\sem{\iId(A,x,x)}_{R}(\gamma_r,\irefl(A(\gamma),x(\gamma)),\sem{\irefl(A,x)}_E(\gamma_e))).
\end{alignat*}\endgroup{}
In other words, we have to show that $\iId$ preserves relational equivalences in a way that preserves reflexive equivalences, and that $\irefl$ preserves elements of these relations.
We omit this standard proof.

This completes the definition of the internal model of $\Th_{\Id}$ in $\CPreReflGraph(\Th_\Id)$.
We now have to check~\cref{asm:congruence_operations}.

Let $A \Colon \partial A \to \RepTy_{\Th_{\Id}}$ be a global dependent representable type.
We have to construct congruence operations from $A$ to $\ity$ and $\itm$:
\begingroup{}\allowdisplaybreaks{}
\begin{alignat*}{3}
  & \ap^{E}_{A,\ity} && \Colon{ }
  && \forall (\sigma : \partial A) (\sigma_{e} : \sem{\partial A}_{E}(\sigma,\sigma)) (\sigma_{r} : \sem{\partial A}_{R}(\sigma,\sigma_{e})) \\
  &&&&& \phantom{\forall} (B : A(\sigma) \to \iTy) \\
  &&&&& \to \forall a_l\ a_r\ (a_e : \sem{A}_E(\sigma_e,a_l,a_r)) \to \iEquiv(B(a_l),B(a_r)), \\
  & \ap^{R}_{A,\ity} && \Colon{ }
  && \forall (\sigma : \partial A) (\sigma_{e} : \sem{\partial A}_{E}(\sigma,\sigma)) (\sigma_{r} : \sem{\partial A}_{R}(\sigma,\sigma_{e})) \\
  &&&&& \phantom{\forall} (B : A(\sigma) \to \iTy) \\
  &&&&& \to \forall a\ a_e\ (a_r : \sem{A}_R(\sigma_r,a,a_e)) \to \isRefl(\ap^{E}_{A,\ity}(\sigma,\sigma_e,\sigma_r,B,a,a,a_e)), \\
  & \ap^{E}_{A,\itm} && \Colon{ }
  && \forall (\sigma : \partial A) (\sigma_{e} : \sem{\partial A}_{E}(\sigma,\sigma)) (\sigma_{r} : \sem{\partial A}_{R}(\sigma,\sigma_{e})) \\
  &&&&& \phantom{\forall} (B : A(\sigma) \to \ity) \\
  &&&&& \phantom{\forall} (B_{e} : \forall a_l\ a_r\ a_e \to \iEquiv(B(a_l),B(a_r))) \\
  &&&&& \phantom{\forall} (B_{r} : \forall a\ a_e\ a_r \to \isRefl(B_e(a,a,a_e))) \\
  &&&&& \phantom{\forall} (b : (a : A(\sigma)) \to \iTm(B(a))) \\
  &&&&& \to \forall a_l\ a_r\ (a_e : \sem{A}_E(\sigma_e,a_l,a_r)) \to B_e(a_l,a_r,a_e,b(a_l),b(a_r)), \\
  & \ap^{R}_{A,\itm} && \Colon{ }
  && \forall (\sigma : \partial A) (\sigma_{e} : \sem{\partial A}_{E}(\sigma,\sigma)) (\sigma_{r} : \sem{\partial A}_{R}(\sigma,\sigma_{e})) \\
  &&&&& \phantom{\forall} (B : A(\sigma) \to \ity) \\
  &&&&& \phantom{\forall} (B_{e} : \forall a_l\ a_r\ a_e \to \iEquiv(B(a_l),B(a_r))) \\
  &&&&& \phantom{\forall} (B_{r} : \forall a\ a_e\ a_r \to \isRefl(B_e(a,a,a_e))) \\
  &&&&& \phantom{\forall} (b : (a : A(\sigma)) \to \iTm(B(a))) \\
  &&&&& \to \forall a\ a_e\ (a_r : \sem{A}_R(\sigma_r,a,a_e)) \to B_r(a,a_e,a_r,b(a),\ap^{E}_{A,\itm}(\sigma,\sigma_e,\sigma_r,B,B_e,B_r,b,a,a,a_e)).
\end{alignat*}\endgroup{}
By~\cref{prop:type_iso_polytype}, we can assume without loss of generality that $A$ is a telescope of basic representable sorts.
Since the only representable sort of $\Th_\Id$ is $\itm$, this means that we have a dependent telescope $\Delta \Colon \partial A \to \iTy^\star$, with $A(\sigma) = \iTm^\star(\Delta(\sigma))$.
We can then compute $\sem{A}_E$ and $\sem{A}_R$ and prove by induction on the telescope $\Delta$ that $(a_r : A(\sigma)) \times (a_e : \sem{A}_E(\sigma_e,a_l,a_r))$ and $(a_e : \sem{A}_E(\sigma_e,a,a)) \times (a_r : \sem{A}_R(\sigma_r,a,a_e))$ are contractible in $\iTy^{\star}$ (with respect to the inner identity types), which is sufficient to derive $\ap_{A,\ity}$ and $\ap_{A,\itm}$.

It remains to check~\cref{asm:basic_center_and_all_paths}.
We first unfold the definition of $\isContr^{b}$ for this theory; it is given by the following constructors.
\begingroup{}\allowdisplaybreaks{}
\begin{alignat*}{3}
  & \contr_{\iTy,l} && :{ } && (A : \iTy) \to \isContr^{b}((B : \iTy) \times \iEquiv(A,B)) \\
  & \contr_{\iTy,r} && :{ } && (B : \iTy) \to \isContr^{b}((A : \iTy) \times \iEquiv(A,B)) \\
  & \contr_{\iTy,{\sim}} && :{ } && (A : \iTy) \to \isContr^{b}((B : \iTy) \times \iEquiv(A,B)) \\
  & \contr_{\iTm,l} && :{ } && \forall A\ B\ (E : \iEquiv(A,B))\ (a : \iTm(A)) \\
  &&&&& \quad \to \isContr^{b}((b : \iTm(B)) \times \iTm(E(a,b))), \\
  & \contr_{\iTm,r} && :{ } && \forall A\ B\ (E : \iEquiv(A,B))\ (b : \iTm(B)) \\
  &&&&& \quad \to \isContr^{b}((a : \iTm(A)) \times \iTm(E(a,b))), \\
  & \contr_{\iTm,{\sim}} && :{ } && \forall A\ (x : \iTm(A)) \\
  &&&&& \quad \to \isContr^{b}((y : \iTm(A)) \times \iTm(\Id(A,x,y))).
\end{alignat*}\endgroup{}

We now check~\cref{asm:basic_center_and_all_paths} for every constructor.
\begin{description}
  \item[Constructor ${\contr_{\iTy,l}}$] \hfill \\
        We have to construct the following terms:
        \begin{alignat*}{3}
          & \ccenter_{\iTy,l} && :{ } &&
          \forall (A : \iTy) \to (B : \iTy) \times \iEquiv(A,B), \\
          & \chpath_{\iTy,l} && :{ } &&
          \forall A\ (A_{e} : \iEquiv(A,A)) \\
          &&&&& \phantom{\forall} (B_{l} : \iTy)\ (E_{l} : \iEquiv(A,B_{l}))\ (B_{r} : \iTy)\ (E_{r} : \iEquiv(A,B_{r})) \\
          &&&&& \quad\to \sem{A \mapsto (B : \iTy) \times (E : \iEquiv(A,B))}_E(\dots,(B_{l},E_{l}),(B_{r},E_{r})).
        \end{alignat*}

        The center is given by the identity equivalence on $A$.
        \begin{alignat*}{3}
          & \ccenter_{\iTy,l}(A) && \triangleq{ } && (A,\id).
        \end{alignat*}

        Defining $\chpath$ amounts to giving a composite and filler for the following open square of equivalences:
        \[ \begin{tikzcd}
            A
            \ar[r, "\simeq"']
            \ar[r, "A_{e}"]
            \ar[d, "\simeq"]
            \ar[d, "E_{l}"']
            &
            A
            \ar[d, "\simeq"']
            \ar[d, "E_{r}"]
            \\
            B_{l}
            &
            B_{r}\rlap{\ .}
          \end{tikzcd} \]

        We omit this construction.
  \item[Constructor ${\contr_{\iTy,r}}$] \hfill \\
        Up to symmetry, this constructor is equivalent to ${\contr_{\iTy,l}}$.

  \item[Constructor ${\contr_{\iTy,{\sim}}}$] \hfill \\
        This constructor is identical to ${\contr_{\iTy,l}}$.

  \item[Constructor ${\contr_{\iTm,l}}$] \hfill \\
        We have to construct the following:
        \begingroup{}\allowdisplaybreaks{}
        \begin{alignat*}{3}
          & \ccenter_{\iTm,l} && :{ }
          && \forall A\ B\ (E : \iEquiv(A,B))\ a \to (b : \iTm(B)) \times (p : \iTm(E(a,b))), \\
          & \chpath_{\iTm,l} && :{ }
          && \forall (A : \iTy)\ (A_{e} : \iEquiv(A,A))\ (A_{r} : \isRefl(A_{e})) \\
          &&&&& \phantom{\forall} (B : \iTy)\ (B_{e} : \iEquiv(B,B))\ (B_{r} : \isRefl(B_{e})) \\
          &&&&& \phantom{\forall} (E : \iEquiv(A,B))\ E_e\ E_r \\
          &&&&& \phantom{\forall} (a : \iTm(A))\ (a_{e} : A_{e}(a,a))\ (a_r : A_r(a,a_e)) \\
          &&&&& \phantom{\forall} b_{l}\ (p_{l} : \iTm(E(a,b_{l})))\ b_{r}\ (p_{r} : \iTm(E(a,b_{r}))) \\
          &&&&& \quad \to (b_{e} : \iTm(B_{e}(b_{l},b_{r}))) \\
          &&&&& \quad \times (p_{e} : \iTm(E_{e}(a,a,a_{e},b_{l},b_{r},b_{e},p_{l},p_{r}))).
        \end{alignat*}\endgroup{}

        The center is given by transporting $a$ along the equivalence $E$:
        \[ \ccenter_{\iTm,l}(A,B,E,a) \triangleq E.\funr(a).1. \]

        Defining $\chpath$ amounts to giving a composite and filler for the following open dependent square
         \[ \begin{tikzcd}
            a
            \ar[r, no head, "a_{e}"]
            \ar[d, no head, "p_{l}"']
            &
            a
            \ar[d, no head, "p_{r}"]
            \\
            b_{l}
            &
            b_{r}\rlap{\ ,}
          \end{tikzcd} \]
        over the following square of equivalences
        \[ \begin{tikzcd}
            A
            \ar[r, "\simeq"']
            \ar[r, "A_{e}"]
            \ar[d, "\simeq"]
            \ar[d, "E"']
            \ar[rd, phantom, "(E_{e})"]
            &
            A
            \ar[d, "\simeq"']
            \ar[d, "E"]
            \\
            B
            \ar[r, "\simeq"]
            \ar[r, "B_{e}"']
            &
            B\rlap{\ .}
          \end{tikzcd} \]

        We also omit this construction.
  \item[Constructor ${\contr_{\iTm,r}}$] \hfill \\
        Up to symmetry, this constructor is equivalent to ${\contr_{\iTm,l}}$.

  \item[Constructor ${\contr_{\iTm,{\sim}}}$] \hfill \\
        This is a version of ${\contr_{\iTm,l}}$ that is specialized to the identity equivalence.
\end{description}

By~\cref{thm:external_univalence_from_refleqv}, the theory $\Th_{\Id}$ satisfies external univalence.

\subsection{Other type formers}
If we extend $\Th_{\Id}$ with additional type formers, the proofs of~\cref{asm:congruence_operations} and~\cref{asm:basic_center_and_all_paths} remain the same.
Only~\cref{asm:refleqv_internal_model} needs to be checked; that is we have to give an interpretation of the additional operations in $\CPreReflGraph(\Th)$.

Giving an interpretation of type and term formers in $\CPreReflGraph(\Th)$ amounts to a more or less standard parametricity translation of the type and terms formers, similar to other parametricity translations found in the literature~\parencite{ProofForFreeParametricityForDependentTypes,MarriageUnivalenceParametricity}.
As in the case of identity types, we have to show that the type and term formers preserve equivalences and identification in a way that preserves identity equivalences and reflexivity identifications.
Furthermore, these constructions have to be compatible with the various definitional equalities, such as the $\beta$- and $\eta$- rules for $\Sigma$- and $\Pi$- types.

For example, in the case of $\Unit$-, $\Sigma$- and $\Pi$- types, this translation mirrors at the inner level the outer level constructions of~\cref{sec:mgraph_model}:
\begingroup{}\allowdisplaybreaks{}
\begin{alignat*}{3}
  & \sem{\iUnit}_E(\gamma_e) && \triangleq{ }
  && \lambda t_l\ t_r \mapsto \iUnit, \\
  & \sem{\iUnit}_R(\gamma_r) && \triangleq{ }
  && \lambda t\ t_e \mapsto \iUnit, \\
  & \sem{\iSigma(A,B)}_E(\gamma_e) && \triangleq{ }
  && \lambda p_l\ p_r \mapsto (a_e : \sem{A}_E(\gamma_e,\ipi_1(p_l),\ipi_1(p_r))) \times (b_e : \sem{B}_E((\gamma_e,a_e),\ipi_2(p_l),\ipi_2(p_r))), \\
  & \sem{\iSigma(A,B)}_R(\gamma_r) && \triangleq{ }
  && \lambda p\ p_e \mapsto (a_r : \sem{A}_R(\gamma_r,\ipi_1(p),\ipi_1(p_e))) \times (b_r : \sem{B}_R((\gamma_r,a_r),\ipi_2(p),\ipi_2(p_e))), \\
  & \sem{\iPi(A,B)}_E(\gamma_e) && \triangleq{ }
  && \lambda f_l\ f_r \mapsto (\forall a_l\ a_r\ (a_e : \sem{A}_E(\gamma_e,a_l,a_r)) \to \sem{B}_E((\gamma_e,a_e),\iapp(f_l,a_l),\iapp(f_r,a_r)), \\
  & \sem{\iPi(A,B)}_R(\gamma_r) && \triangleq{ }
  && \lambda f\ f_e \mapsto (\forall a\ a_e\ (a_r : \sem{A}_R(\gamma_r,a,a_e)) \to \sem{B}_R((\gamma_r,a_r),\iapp(f,a),f_e(a_e)).
\end{alignat*}\endgroup{}

An empty type $\innerSym{\Init}$ can be interpreted in the presence of a $\Unit$-type.
\begin{alignat*}{3}
  & \sem{\innerSym{\Init}}_E(\gamma_e) && \triangleq{ }
  && \lambda z_l\ z_r \mapsto \iUnit, \\
  & \sem{\innerSym{\Init}}_R(\gamma_r) && \triangleq{ }
  && \lambda z\ z_e \mapsto \iUnit.
\end{alignat*}

Inductive types such as booleans or natural numbers can be interpreted provided that they support large elimination (or alternatively that sufficiently many indexed inductive types exist).
\begin{alignat*}{3}
  & \sem{\inner{\BoolTy}}_E(\gamma_e)(\inner{true},\inner{true}) && \triangleq{ }
  && \iUnit, \\
  & \sem{\inner{\BoolTy}}_E(\gamma_e)(\inner{true},\inner{false}) && \triangleq{ }
  && \innerSym{\Init}, \\
  & \sem{\inner{\BoolTy}}_E(\gamma_e)(\inner{false},\inner{true}) && \triangleq{ }
  && \innerSym{\Init}, \\
  & \sem{\inner{\BoolTy}}_E(\gamma_e)(\inner{false},\inner{false}) && \triangleq{ }
  && \iUnit.
\end{alignat*}

If $a \Colon \Tm_\Th(A)$ is any axiom of the theory, that is a generating element of a closed sort of $\Th$, then its interpretation in $\CPreReflGraph(\Th)$ can be derived automatically from the reflexivity structure of $\Th$:
\begin{alignat*}{1}
  & \sem{a}_{E}(\star) \triangleq
  \refl_{A}^{E}(a), \\
  & \sem{a}_{R}(\star) \triangleq
  \refl_{A}^{R}(a),
\end{alignat*}
provided that $\refl_{A}^{E}$ and $\refl_{A}^{R}$ are well-defined (to be more precise we should consider extensions of the theory $\Th$ by additional axioms).

The Uniqueness of Identity Proofs principle
\[ \inner{uip} : \forall (A : \iTy) \to \isSet(A), \]
cannot be seen as an axiom in the previous sense, because $\forall (A : \iTy) \to \isSet(A)$ is not a sort.
However it can still be interpreted in the pre-reflexive graphs model.
It suffices to define
\begin{alignat*}{3}
  & \sem{\inner{uip}(A)}_E(\gamma_e) && :{ }
  && \iEquiv(\isSet(A(\gamma_l)), \isSet(A(\gamma_r))), \\
  & \sem{\inner{uip}(A)}_R(\gamma_r) && :{ }
  && \isRefl(\sem{\inner{uip}(A)}_E(\gamma_e)).
\end{alignat*}
When defining $\sem{\inner{uip}(A)}_E(\gamma_e)$, we have an equivalence $\sem{A}_E(\gamma_e)$ between $A(\gamma_l)$ and $A(\gamma_r)$, from which we can derive an equivalence between $\isSet(A(\gamma_l))$ and $\isSet(A(\gamma_r))$.
Defining $\sem{\inner{uip}(A)}_R(\gamma_r)$ is straightforward using the fact that $\isSet(-)$ is a propositional type.

\begin{thm}
  The SOGATs of type theories with weak identity types and any selection of type structures among:
  \begin{itemize}
  \item a $\Unit$-type;
  \item $\Sigma$-types;
  \item $\Pi$-types;
  \item an empty type $\Init$, in the presence of $\Unit$;
  \item a boolean type with large elimination, in the presence of $\Unit$ and $\Init$;
  \item a natural number type with large elimination, in the presence of $\Unit$ and $\Init$;
  \item either weak or strict computation rules for any of the above type structures;
  \item any number of axioms, \ie{} generating elements of closed representable sorts;
  \item the Uniqueness of Identity Proofs principle;
  \end{itemize}
  all satisfy external univalence.
  \qed{}
\end{thm}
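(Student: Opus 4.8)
The plan is to derive each case from \cref{thm:external_univalence_from_refleqv}, which reduces external univalence of a SOGAT $\Th$ equipped with homotopy relations to the three hypotheses \cref{asm:refleqv_internal_model}, \cref{asm:congruence_operations} and \cref{asm:basic_center_and_all_paths}. The crucial observation is that the latter two hypotheses depend only on the core structure of $\Th_\Id$ — the sorts $\ity$, $\itm$, the weak identity types, and the homotopy relations of \cref{exa:homotopy_relations_th_id} — and not on any additional type former. Indeed, \cref{asm:congruence_operations} was verified in \cref{ssec:application_type_theory} by reducing, via \cref{prop:type_iso_polytype}, to telescopes of the single representable sort $\itm$ and computing contractibility of the relevant $\Sigma$-types inside $\iTy^{\star}$; adding new type and term formers changes neither the set of generating (representable) sorts nor the definition of $\isBContr$, so the arguments for \cref{asm:congruence_operations} and \cref{asm:basic_center_and_all_paths} carry over verbatim. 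Thus the entire content of the proof is the verification of \cref{asm:refleqv_internal_model}: extending the internal model of $\Th_\Id$ in $\CPreReflGraph(\Th)$ built in \cref{ssec:application_type_theory} to interpret the chosen extra type formers, equations included.

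First I would treat $\Unit$-, $\Sigma$- and $\Pi$- types: their interpretation in $\CPreReflGraph(\Th)$ is the standard binary parametricity translation, mirroring at the inner level the outer-level Reedy-type constructions of \cref{sec:mgraph_model}, with $E$- and $R$-components as displayed just before the theorem; the $\beta$- and $\eta$-laws hold on the nose because they hold for the inner type formers of $\Th$ and the translation is componentwise. For the empty type $\Init$ I would set both the $E$- and $R$-components to $\iUnit$ (using a $\Unit$-type), so that the eliminator is interpreted trivially and its computation rule is vacuous. For the inductive types $\BoolTy$ and $\NatTy$ I would define the $E$-component by large elimination — $\iUnit$ on equal constructors and $\Init$ on distinct ones, and analogously by recursion for $\NatTy$ — and the $R$-component again as $\iUnit$; large elimination (or sufficiently many indexed inductive types) is exactly what is needed to form these type-valued relations, and the computation rules of the eliminators are then matched using the computation rules of large elimination in $\Th$. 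Weak computation rules are only easier. For an axiom $a \Colon \Tm_\Th(A)$ with $A$ a closed representable sort, its interpretation is forced to be $(\refl_A^E(a),\refl_A^R(a))$ via the reflexivity structure already produced in \cref{ssec:proof_contractibility_data} and \cref{lem:refl_type}. Finally, since UIP is not an element of a sort, I would interpret $\inner{uip}$ directly in the pre-reflexive graph model: transport $\isSet$ along the inner equivalence $\sem{A}_E(\gamma_e)$ to obtain $\sem{\inner{uip}(A)}_E$, and use that $\isSet(-)$ is a homotopy proposition to obtain $\sem{\inner{uip}(A)}_R$.

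The main obstacle is coherence of these interpretations with the \emph{definitional} equalities of the theory, in particular strict $\beta$/$\eta$-laws and strict computation rules. This is precisely where the relational presentation of equivalences (\cref{exa:homotopy_relations_th_id}) is essential: with half-adjoint or bi-invertible equivalences the translated data would only agree up to a path, breaking strictness, whereas with relational equivalences the required equations reduce to equations already holding for $\iId$, $\iSigma$, $\iPi$ and the inductive eliminators in $\Th$. A secondary subtlety is that each type former must preserve not merely equivalences but \emph{reflexive} equivalences — the $R$-components and the $\isRefl$ predicate — compatibly; for $\Sigma$- and $\Pi$-types this is again componentwise, while for the inductive types one must check that the $\iUnit$-valued $R$-components cohere with large elimination on reflexivity. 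Once \cref{asm:refleqv_internal_model} is established for each listed combination of type formers, \cref{thm:external_univalence_from_refleqv} immediately yields external univalence, completing the proof.
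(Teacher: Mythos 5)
Your proposal is correct and follows essentially the same route as the paper: reduce to \cref{thm:external_univalence_from_refleqv}, observe that \cref{asm:congruence_operations} and \cref{asm:basic_center_and_all_paths} are unaffected by the additional type formers, and verify \cref{asm:refleqv_internal_model} by the componentwise parametricity-style interpretations of $\Unit$, $\Sigma$, $\Pi$, $\Init$, $\BoolTy$/$\NatTy$ via large elimination, axioms via the reflexivity structure, and UIP via transport of $\isSet$ — exactly the interpretations the paper displays. The remarks on why relational equivalences are needed for strict $\beta$/$\eta$-coherence also match the paper's discussion.
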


\subsection{Type theories with universes}

Finally, we discuss the situation of universes.
Tabareau \etal{} have achieved results that are similar to ours, but their work seemingly require the univalence axiom~\parencite[§6.4]{MarriageUnivalenceParametricity}
We claim that by using universes à la Russell, they implicitly assume the existence of a coding function, that turns types into elements of the universe.
By considering universes à la Tarski and without coding functions, there is no obstacle to the proof of external univalence, even in the presence of non-univalent universes.

We consider two SOGATs $\Th_{1}$ and $\Th_{2}$ that correspond to type theories with either a hierarchy of universes à la Tarski or a hierarchy of universes à la Coquand.
\begin{defi}
  The SOGAT $\Th_{1}$ is presented by the following signature
  \begingroup{}\allowdisplaybreaks{}
  \begin{alignat*}{3}
    & \ity && :{ } && (i : \Nat) \to \Ty, \\
    & \iTy_i && \triangleq{ } && \Tm(\ity_i), \\
    & \itm && :{ } && (i : \Nat) \to \iTy \to \RepTy, \\
    & \iTm_i(A) && \triangleq{ } && \Tm(\itm_i(A)), \\
    & \inner{\bm{U}} && :{ } && (i : \Nat) \to \iTy_{i+1}, \\
    & \inner{El} && :{ } && (i : \Nat) \to \iTm(\inner{\bm{U}}_i) \to \iTy_i, \\
    & \inner{\bm{u}} && :{ } && (i : \Nat) \to \iTm_{i+2}(\inner{\bm{U}}_{i+1}), \\
    & - && :{ } && \inner{El}(\inner{\bm{u}}_i) = \inner{\bm{U}}_i, \\
    & \inner{Lift} && :{ } && (i : \Nat) \to \iTy_i \to \iTy_{i+1}, \\
    & \inner{lift} && :{ } && (i : \Nat) (A  : \iTy_i) \to \iTm(A) \cong \iTm(\inner{Lift}(A)), \\
    & \iId && :{ } && (i : \Nat) (A : \iTy_i) (x,y : \iTm(A)) \to \iTy_i, \\
    & \irefl && :{ } && (i : \Nat) (A : \iTy_i) (x : \iTm(A)) \to \iTm_i(\iId(A,x,x)), \\
    & \iJ && :{ } && \dots, \\
    & \iJb && :{ } && \dots
  \end{alignat*}\endgroup{}

  The theory $\Th_2$ is the extension of $\Th_{1}$ with additional maps
  \[ \inner{c} : \iTy_i \to \iTm(\inner{\bm{U}}_i) \]
  that are inverses to the maps $\inner{El}$.
  \defiEnd{}
\end{defi}

The two theories $\Th_1$ and $\Th_2$ may seem equivalent, since they their syntaxes can be identified.
\begin{prop}
  The map $\Init_{\Th_1} \to \Init_{\Th_2}$ is an isomorphism (of models of $\Th_1$).
\end{prop}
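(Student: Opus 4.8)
The statement claims that adding the coding functions $\inner{c}:\iTy_i\to\iTm(\inner{\bm{U}}_i)$ that are inverse to $\inner{El}$ does not change the initial model. The map $\Init_{\Th_1}\to\Init_{\Th_2}$ exists because $\Init_{\Th_2}$ is a model of $\Th_1$ (forget the $\inner c$ structure) and $\Init_{\Th_1}$ is initial among $\Th_1$-models. To show it is an isomorphism of $\Th_1$-models, I would exhibit an inverse. By initiality of $\Init_{\Th_2}$ among $\Th_2$-models, it suffices to equip $\Init_{\Th_1}$ with the structure of a model of $\Th_2$, i.e.\ to define coding maps $\inner c:\iTy_i\to\iTm(\inner{\bm U}_i)$ in $\Init_{\Th_1}$ which are inverse to $\inner{El}$; the universal map $\Init_{\Th_2}\to\Init_{\Th_1}$ it induces will then be a $\Th_1$-morphism, and the two composites $\Init_{\Th_1}\to\Init_{\Th_2}\to\Init_{\Th_1}$ and $\Init_{\Th_2}\to\Init_{\Th_1}\to\Init_{\Th_2}$ are $\Th_1$- (resp.\ $\Th_2$-) morphisms out of initial objects, hence identities.

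**Constructing $\inner c$ in $\Init_{\Th_1}$.** The key observation is that $\Th_1$ already has $\inner{\bm u}_i:\iTm_{i+2}(\inner{\bm U}_{i+1})$ with $\inner{El}(\inner{\bm u}_i)=\inner{\bm U}_i$, and the lifting operations $\inner{Lift}$, $\inner{lift}$ relating universe levels. I would define $\inner c$ by induction on the structure of the types of $\Init_{\Th_1}$, using \cref{prop:description_ty_coclassifying_model} (adapted to $\Init_{\Th_1}$): every type of the initial model is built from the generating sorts by the available operations. Concretely, in the \emph{initial} model every element of $\iTy_i$ is (up to the equations of $\Th_1$) a syntactic type expression. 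For types of the form $\inner{El}(t)$ with $t:\iTm(\inner{\bm U}_i)$ we must set $\inner c(\inner{El}(t))=t$ to respect the inverse condition $\inner c\circ\inner{El}=\id$; since in $\Th_1$ \emph{every} type of $\iTy_i$ is of the form $\inner{El}(t)$ for a \emph{unique} $t$ — this is precisely the content of the normalization statement for $\Th_1$, namely that $\iTy_i\cong\iTm(\inner{\bm U}_i)$ in the initial model — the map $\inner c$ is forced and is automatically a two-sided inverse. So the real work is the normalization/canonicity lemma: \emph{in $\Init_{\Th_1}$, the map $\inner{El}_i:\iTm_i(\inner{\bm U}_i)\to\iTy_i$ is a bijection.}

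**The main obstacle.** The hard part is exactly this bijectivity of $\inner{El}$ in the initial model, which is a normalization-type result for the syntax of $\Th_1$: surjectivity says every closed-in-context type expression is definitionally $\inner{El}$ of a universe code, and injectivity says $\inner{El}(t)=\inner{El}(s)$ forces $t=s$. Surjectivity follows if the signature of $\Th_1$ is arranged so that all type formers ($\iId$, $\inner{Lift}$, plus whatever $\Sigma,\Pi,\dots$ are included) have universe-code counterparts closed under $\inner{El}$ — this is the meaning of ``universes à la Tarski without a coding function'' when the codes are actually \emph{complete}, i.e.\ every type arises as $\inner{El}$ of a code. Injectivity is the delicate direction and would be handled by a standard logical-relations/gluing argument over $\Th_1$, or by citing the analogue of \cref{prop:description_ty_coclassifying_model} for $\Th_1$ (``a standard normalization argument''). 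Once $\inner{El}$ is known to be bijective, defining $\inner c$ as its inverse, checking $\inner{El}\circ\inner c=\id$ and $\inner c\circ\inner{El}=\id$, and verifying that all other $\Th_2$-operations and equations are inherited from $\Th_1$ are routine, and the initiality argument above closes the proof. I would therefore present this proof as: (i) reduce to exhibiting a $\Th_2$-structure on $\Init_{\Th_1}$; (ii) invoke the normalization lemma that $\inner{El}$ is bijective in $\Init_{\Th_1}$; (iii) set $\inner c:=\inner{El}^{-1}$ and conclude by initiality.
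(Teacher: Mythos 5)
Your proposal matches the paper's proof, which is exactly the reduction you describe: it suffices to invert the maps $\inner{El}$ in $\Init_{\Th_1}$ (thereby equipping it with a $\Th_2$-structure and concluding by initiality), and this invertibility is obtained from normalization for $\Th_1$. You simply spell out the initiality bookkeeping and the content of the normalization lemma in more detail than the paper does.
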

\begin{proof}
  It suffices to show that the maps $\inner{El}$ have inverses in $\Init_{\Th_1}$, which follows from normalization for the type theory $\Th_1$.
\end{proof}

However, while $\Init_{\Th_1} \to \Init_{\Th_2}$ is an isomorphism, this is not the case for the map $\Th_1 \to \Th_2$ between their coclassifying $(\reppre)$-CwFs.
We will show that $\Th_1$ satisfies external univalence while $\Th_2$ does not.

\begin{defi}\label{def:homotopy_relations_univ}
  We define homotopy relations on both $\Th_1$ and $\Th_2$, analogously to the homotopy relations defined in~\cref{exa:homotopy_relations_th_id}.
  \begin{alignat*}{3}
    & A \sim_{\ity_i} B && \triangleq{ } && \iEquiv(A,B), \\
    & x \sim_{\itm_i(A)} y && \triangleq{ } && \iTm(\iId(A,x,y)).
  \end{alignat*}
\end{defi}

\begin{lem}
  The theory $\Th_1$ satisfies external univalence with respect to the homotopy relations defined in~\cref{def:homotopy_relations_univ}.
\end{lem}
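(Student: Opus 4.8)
The plan is to apply \cref{thm:external_univalence_from_refleqv} to $\Th_1$, so I need to verify \cref{asm:refleqv_internal_model}, \cref{asm:congruence_operations}, and \cref{asm:basic_center_and_all_paths} for the homotopy relations of \cref{def:homotopy_relations_univ}. The key new ingredient over \cref{ssec:application_type_theory} is the treatment of the universe operators $\inner{\bm{U}}$, $\inner{El}$, $\inner{\bm{u}}$, $\inner{Lift}$, $\inner{lift}$; the identity-type operators $\iId$, $\irefl$, $\iJ$, $\iJb$ are handled exactly as in \cref{ssec:application_type_theory}, and the congruence operations \cref{asm:congruence_operations} and the basic center/all-paths data \cref{asm:basic_center_and_all_paths} are unchanged (they only depend on the representable sort $\itm$ and the homotopy relations, which are the same as in \cref{exa:homotopy_relations_th_id} for each level). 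So the bulk of the work — and the only genuinely new part — is checking \cref{asm:refleqv_internal_model}: equipping $\CPreReflGraph(\Th_1)$ with an internal model of $\Th_1$, \ie{} giving the parametricity translation of the universe structure.

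For the sorts I would set, at each level $i$,
\begin{alignat*}{3}
  & \sem{\ity_i}_E(\gamma_e) && \triangleq{ } && \lambda A\ B \mapsto \iEquiv(A,B), \\
  & \sem{\itm_i(A)}_E(\gamma_e) && \triangleq{ } && \lambda x\ y \mapsto \iTm(\sem{A}_E(\gamma_e,x,y)), \\
  & \sem{\ity_i}_R(\gamma_r) && \triangleq{ } && \lambda A\ E \mapsto \isRefl(E), \\
  & \sem{\itm_i(A)}_R(\gamma_r) && \triangleq{ } && \lambda x\ p \mapsto \iTm(\sem{A}_R(\gamma_r,x,p)),
\end{alignat*}
exactly as in \cref{ssec:application_type_theory}. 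For $\inner{\bm{U}}_i : \iTy_{i+1}$ I must give $\sem{\inner{\bm{U}}_i}_E(\gamma_e) : \iEquiv(\inner{\bm{U}}_i,\inner{\bm{U}}_i)$ and $\sem{\inner{\bm{U}}_i}_R$. The crucial point — the whole reason $\Th_1$ works while $\Th_2$ does not — is that since there is no coding function, the parametricity relation on $\inner{\bm{U}}_i$ need not be the identity type; I can use a relation built out of $\iEquiv(\inner{El}(a),\inner{El}(b))$. Concretely, I would define the relation underlying $\sem{\inner{\bm{U}}_i}_E$ to send $a,b : \iTm(\inner{\bm{U}}_i)$ to (an $i{+}1$-small codomain of) $\iEquiv(\inner{El}_i(a),\inner{El}_i(b))$ — which requires $\inner{El}_i(a) : \iTy_i$ and hence the relation lands in $\iTy_{i+1}$, so it is indeed classified at the correct level using $\inner{\bm{u}}_i$ — and check it is a relational equivalence (this is where univalence for the \emph{outer} notion, not the inner, is used: the type of pairs $(b, \iEquiv(\inner{El}(a),\inner{El}(b)))$ is contractible because $\inner{El}_i$ is an equivalence of types in the syntax at the outer level, and every type is equivalent to itself). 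Then $\sem{\inner{El}_i}_E$ must turn a relational equivalence over $\inner{\bm{U}}_i$ (\ie{} an element of $\iTm(\sem{\inner{\bm{U}}_i}_E(a,b))$, unfolded to an equivalence $\inner{El}(a) \simeq \inner{El}(b)$) into an equivalence $\inner{El}(a) \simeq \inner{El}(b)$ — so $\sem{\inner{El}_i}_E$ is essentially the identity, and this definitional match is what makes the $\beta$-equation $\inner{El}(\inner{\bm{u}}_i) = \inner{\bm{U}}_i$ preserved: both sides must receive the same parametricity relation, which they do by construction. The $R$-components are all reflexivity structures and go through because $\isRefl$ is built from contractibility, as in \cref{ssec:application_type_theory}. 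For $\inner{Lift}$ and $\inner{lift}$ the interpretation is a routine transport of the relation along the isomorphism $\iTm(A) \cong \iTm(\inner{Lift}(A))$, exactly analogous to $\contr_\cong$ usage elsewhere; and for $\inner{\bm{u}}_i$ the interpretation is forced by the equation and reduces to reflexivity data for $\inner{\bm{U}}_{i+1}$.

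The main obstacle I expect is the careful level bookkeeping together with checking the definitional equalities: I must ensure $\sem{\inner{\bm{U}}_i}_E$ lands in types classified at level $i{+}1$ (using $\inner{El}_{i+1}(\inner{\bm{u}}_i) = \inner{\bm{U}}_i$ to recode $\iEquiv(\inner{El}_i(a),\inner{El}_i(b))$, which is an $i$-small-family-indexed construction, as an element of $\iTy_{i+1}$), and that the translations of $\inner{\bm{u}}$ and $\inner{El}$ are strictly compatible with $\inner{El}(\inner{\bm{u}}_i) = \inner{\bm{U}}_i$ and with $\inner{lift}$ being an isomorphism — this compatibility is exactly the place where universes à la Russell (equivalently, a coding function $\inner{c}$) would break, because then one would be forced to equip $\inner{c}$ with a parametricity action, which by the argument sketched in the introduction would force the relation on $\inner{\bm{U}}_i$ to coincide with the inner identity type, contradicting its definition via $\iEquiv$ unless internal univalence holds. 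Once \cref{asm:refleqv_internal_model} is established, \cref{thm:external_univalence_from_refleqv} immediately yields external univalence for $\Th_1$; I would note explicitly that the same verification fails for $\Th_2$ precisely at the interpretation of $\inner{c}$, which is the content of the (not-yet-stated) companion result that $\Th_2$ does not satisfy external univalence without internal univalence.
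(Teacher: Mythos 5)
There is a genuine gap, and it sits exactly at the one place where this lemma differs from \cref{ssec:application_type_theory}: the interpretation of the universe. Your proposed relation $\sem{\inner{\bm{U}}_i}_E(a,b) \triangleq \iEquiv(\inner{El}_i(a),\inner{El}_i(b))$ fails for two reasons. First, $\iEquiv(\inner{El}(a),\inner{El}(b))$ is an \emph{outer} type (it quantifies over a relation valued in the sort $\ity_i$), and in $\Th_1$ there is no way to recode it as an element of $\iTy_{i+1}$: the theory has no inner $\Pi$- or $\Sigma$-types, and more fundamentally no way to internalize the sort $\ity_i$ into $\iTm(\inner{\bm{U}}_i)$ --- that internalization is precisely the coding function that $\Th_1$ lacks. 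So $\sem{\inner{\bm{U}}_i}_E$ as you define it is not a well-formed element of $\iEquiv_{i+1}(\inner{\bm{U}}_i,\inner{\bm{U}}_i)$. Second, even granting an inner encoding, the bidirectional functionality you must verify is the contractibility, \emph{with respect to the inner identity types}, of $(b : \iTm(\inner{\bm{U}}_i)) \times \iEquiv(\inner{El}(a),\inner{El}(b))$ --- which is exactly the statement of internal univalence for $\inner{\bm{U}}_i$, not available in $\Th_1$. Your justification (``$\inner{El}_i$ is an equivalence of types in the syntax at the outer level'') is both false in $\Th_1$ (an outer inverse to $\inner{El}$ is again the coding function) and beside the point, since outer contractibility does not yield the inner contractibility that the definition of $\iEquiv_{i+1}$ demands.

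The paper's proof makes the opposite choice: $\sem{\inner{\bm{U}}_i}_E(A,B) \triangleq \iId_{\inner{\bm{U}}_i}(A,B)$, the inner identity type of \emph{codes}, which is trivially a reflexive relational equivalence, and then interprets the decoding function by $\sem{\inner{El}_i(A)}_E(\gamma_e) \triangleq \mathsf{idtoeqv}(\sem{A}_E(\gamma_e))$. This is where the asymmetry between $\Th_1$ and $\Th_2$ actually lives: $\inner{El}$ only needs the direction ``identification of codes $\Rightarrow$ equivalence of types'' ($\mathsf{idtoeqv}$, always definable from $\iJ$), whereas a coding function $\inner{c}$ would need the reverse direction, i.e.\ internal univalence --- which is \cref{lem:coding_refutes_external_univalence}. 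So your guiding slogan (``the relation on $\inner{\bm{U}}_i$ need not be the identity type'') is backwards relative to the working construction. The remainder of your outline --- the interpretation of the sorts, of $\inner{Lift}$/$\inner{lift}$ by transport along the isomorphism, and the observation that \cref{asm:congruence_operations} and \cref{asm:basic_center_and_all_paths} carry over unchanged from \cref{ssec:application_type_theory} --- does agree with the paper.
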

\begin{proof}
  We use~\cref{thm:external_univalence_from_refleqv}.
  \Cref{asm:congruence_operations} and~\cref{asm:basic_center_and_all_paths} can be checked in the same way as in~\cref{ssec:application_type_theory}; we omit the proof.

  We equip $\CPreReflGraph(\Th_1)$ with the structure of an internal model of $\Th_1$.

  The sorts of types and terms are interpreted as follows:
  \begingroup{}\allowdisplaybreaks{}
  \begin{alignat*}{3}
    & \sem{\ity_i}_{E}(\gamma_e) && \triangleq{ }
    && \lambda A\ B \mapsto \iEquiv_i(A,B), \\
    & \sem{\itm_i(A)}_{E}(\gamma_e) && \triangleq{ }
    && \lambda x\ y \mapsto \iTm_i(\sem{A}_{E}(\gamma_e,x,y)), \\
    & \sem{\ity_i}_{R}(\gamma_r) && \triangleq{ }
    && \lambda A\ E \mapsto \isRefl_i(E), \\
    & \sem{\itm_i(A)}_{R}(\gamma_r) && \triangleq{ }
    && \lambda x\ p \mapsto \iTm_i(\sem{A}_{R}(\gamma_r,x,p)),
  \end{alignat*}\endgroup{}
  where $\iEquiv_i(A,B)$ is the sort of relational equivalences between $A$ and $B$ in $\Ty_i$, and
  \begin{alignat*}{3}
    & \isRefl(E) && \triangleq{ } && (P : \forall a \to \iTm_i(E(a,a)) \to \iTy_i) \\
    &&&&& \times (\forall a \to \isContr((p : E(a,a)) \times P(p)))
  \end{alignat*}
  is the sort of reflexivity structures in $\Ty_i$ over an equivalence $E : \iEquiv_i(A,A)$.

  The universes are interpreted as follows; note that the relation between elements of the universe is not equivalence, but identification of the codes:
  \begingroup{}\allowdisplaybreaks{}
  \begin{alignat*}{3}
    & \sem{\inner{\bm{U}}_i}_{E}(\gamma_e) && :{ }
    && \iEquiv_{i+1}(\inner{\bm{U}}_i, \inner{\bm{U}}_i), \\
    & \sem{\inner{\bm{U}}_i}_{E}(\gamma_e) && \triangleq{ }
    && \lambda A\ B \mapsto \iId_{\inner{\bm{U}}_i}(A,B), \\
    & \sem{\inner{\bm{U}}_i}_{R}(\gamma_r) && :{ }
    && \isRefl_{i+1}(\sem{\inner{\bm{U}}_i}_{E}(\gamma_e)), \\
    & \sem{\inner{\bm{U}}_i}_{R}(\gamma_r) && \triangleq{ }
    && \lambda A\ e \to \iId(e, \irefl_{\inner{\bm{U}}_i}(A)), \\
    & \sem{\inner{El}_i(A)}_{E}(\gamma_e) && :{ }
    && \iEquiv_{i}(\inner{El}(A(\gamma_l)), \inner{El}(A(\gamma_r))), \\
    & \sem{\inner{El}_i(A)}_{E}(\gamma_e) && \triangleq{ }
    && \mathsf{idtoeqv}(\sem{A}_{E}(\gamma_e)), \\
    & \sem{\inner{El}_i(A)}_{R}(\gamma_r) && :{ }
    && \isRefl(\mathsf{idtoeqv}(\sem{A}_{E}(\gamma_e))), \\
    & \sem{\inner{El}_i(A)}_{R}(\gamma_r) && \triangleq{ }
    && \text{(follows from ${\sem{A}_{R}(\gamma_{r}) : \iId(\sem{A}_{E}(\gamma_{e}),\irefl(A(\gamma)))})$}.
  \end{alignat*}\endgroup{}

  Interpreting the lifting operations is straightforward, and the identity types can be interpreted as in~\cref{ssec:application_type_theory}.

  This completes the construction of the internal model of $\Th_1$ in $\CPreReflGraph(\Th_1)$.
  By~\cref{thm:external_univalence_from_refleqv}, the theory $\Th_1$ satisfies external univalence.
\end{proof}

\begin{lem}\label{lem:coding_refutes_external_univalence}
  The theory $\Th_2$ does not satisfies external univalence with respect to the homotopy relations defined in~\cref{def:homotopy_relations_univ}.
\end{lem}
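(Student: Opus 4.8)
The plan is to argue by contradiction. Suppose the $(\reppre)$-CwF $\Th_{2}$ can be equipped with weakly stable identity types satisfying function extensionality and saturation with respect to the homotopy relations of~\cref{def:homotopy_relations_univ}. I would first extract from this ``internal univalence'' for the universes $\inner{\bm{U}}_i$, phrased so that it transfers to arbitrary models of $\Th_{2}$, and then contradict it by exhibiting a model of $\Th_{2}$ in which $\mathsf{idtoeqv}$ is not an equivalence.

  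For the first part: since $\inner{El}_i$ and $\inner{c}_i$ are mutually inverse, the actions on paths $\ap(\inner{c}_i,-)$ and $\ap(\inner{El}_i,-)$ exhibit, for all $a,b \Colon \iTm(\inner{\bm{U}}_i)$, an equivalence
  \[ \Id_{\ity_i}(\inner{El}(a),\inner{El}(b)) \simeq \Id_{\itm_{i+1}(\inner{\bm{U}}_i)}(a,b) \]
  with respect to the (hypothetical) outer identity types, using only that $\ap$ is compatible with identities and composition up to homotopy. By saturation, $\Id_{\ity_i}(\inner{El}(a),\inner{El}(b))$ is equivalent to $\inner{El}(a) \sim_{\ity_i} \inner{El}(b) = \iEquiv(\inner{El}(a),\inner{El}(b))$, and $\Id_{\itm_{i+1}(\inner{\bm{U}}_i)}(a,b)$ is equivalent to $\itm_{i+1}(\iId_{\inner{\bm{U}}_i}(a,b))$. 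Composing these equivalences and tracing through the comparison maps, one checks that the result is inverse, up to homotopy, to the canonical map $\mathsf{idtoeqv}$ for $\inner{\bm{U}}_i$; so external univalence forces $\mathsf{idtoeqv}$ to be an equivalence. Unfolding the saturation equivalences on the representable sorts, this witness reduces to an element of the internal theory over a finitely presented context of $\Th_{2}$ (two elements of $\inner{\bm{U}}_i$ together with a relational equivalence between their decodings, the latter encoded as a representable telescope via the coding function). By the universal property of the generic model, such an element maps into every model of $\Th_{2}$; hence every model of $\Th_{2}$ would satisfy internal univalence for each $\inner{\bm{U}}_i$.

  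For the contradiction I would use the standard ``set model'' $M$ of $\Th_{2}$: contexts are sets, types over $\Gamma$ are $\Gamma$-indexed families of sets, $\iTm$ is interpreted by dependent functions, each $\ity_i$ by a fixed cumulative hierarchy of Grothendieck universes, $\inner{\bm{U}}_i$ by the $i$-th universe, $\inner{El}_i$ and $\inner{c}_i$ by identities, $\inner{u}_i$ and the lifting operations in the evident way, and $\iId_i(A,x,y)$ by $\gamma \mapsto \{\ast \mid x(\gamma)=y(\gamma)\}$, with $\iJ,\iJb$ induced by uniqueness of identity proofs (which holds strictly here). All equations of $\Th_{2}$ hold, including $\inner{El}(\inner{u}_i)=\inner{\bm{U}}_i$ and $\inner{El}\circ\inner{c}=\mathrm{id}$. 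In $M$, $\itm(\iId_{\inner{\bm{U}}_i}(A,B))$ is a singleton when $A=B$ as sets and empty otherwise, whereas $\iEquiv(\inner{El}(A),\inner{El}(B))$ contains (at least) one element for each bijection $A\to B$. Taking $A=B$ a two-element set, $\itm(\iId_{\inner{\bm{U}}_i}(A,A))$ is a singleton but $\iEquiv(\inner{El}(A),\inner{El}(A))$ has at least two distinct elements (the identity and the transposition), so $\mathsf{idtoeqv}$ is not a fibrewise bijection, i.e.\ not an equivalence in $M$. This contradicts the previous paragraph. (Equivalently, since $\Init_{\Th_1}\to\Init_{\Th_2}$ is an isomorphism and $M$ is already a model of $\Th_1$, the same computation shows internal univalence is not provable in the common syntax.)

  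The main obstacle I anticipate is the bookkeeping in the second paragraph: ``internal univalence'' is not literally a closed type of the bare $(\reppre)$-CwF $\Th_{2}$, because $\iEquiv$ is an outer type and $\Th_{2}$ has neither outer identity types nor $\Pi$-types with outer domains. One must therefore present the consequence of external univalence as a natural family of elements of representable types over finitely presented contexts, carefully use the saturation equivalences to trade every outer identity over a representable sort for the inner $\iId$, and only then appeal to the universal property of the generic model to land the statement in the set model.
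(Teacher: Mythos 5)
Your overall strategy is the same as the paper's -- use the action on paths of the coding function $\inner{c}$, translate through saturation, and refute the resulting statement in the standard set model -- but the contradiction you extract is noticeably heavier than it needs to be, and the extra weight lands exactly on the step you flag as the main obstacle. The paper only uses the \emph{one-directional} map that external univalence plus $\ap(\inner{c},-)$ provides,
\[ (A,B : \iTy_0) \to \iEquiv(A,B) \to \iTm(\iId(\inner{\bm{U}}_0,\inner{c}(A),\inner{c}(B))), \]
and then picks two \emph{distinct} codes $A \neq B$ whose decodings are equivalent: the domain is then inhabited and the codomain empty in $\CSet$, so no map can exist and no round-trip analysis is required. (The paper does note the wrinkle that in a type-theoretic metatheory one may have to add redundant codes to the universe, e.g.\ $\UU'_0 \triangleq \UU_0 + \{A,B\}$ with $\El(A)=\El(B)=\Unit$, to arrange distinct equivalent codes -- your choice of interpreting $\inner{El}$ and $\inner{c}$ as identities makes this step necessary if you wanted to go that way.) By instead taking $A=B$ a two-element set, you are forced to establish the full statement that $\mathsf{idtoeqv}$ is an equivalence, including the round-trip homotopies, and then to unfold those homotopies (outer identifications between elements of the outer telescope $\iEquiv(\El(A),\El(A))$) through saturation and function extensionality into inner, representable data before they transfer to the set model. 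That is all doable -- your sketch of why $\ap(\inner{c})$ and $\ap(\inner{El})$ are mutually inverse up to homotopy is sound -- but it is precisely the bookkeeping you identify as the anticipated difficulty, and it is avoidable. Your closing parenthetical about $\Init_{\Th_1}\to\Init_{\Th_2}$ being an isomorphism does not contribute to the argument, since the set model is already a model of $\Th_2$ and the lemma is a statement about the coclassifying CwF $\Th_2$, not about the initial model.
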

\begin{proof}
  Assuming that $\Th_2$ satisfies external univalence, the action of $\inner{c}$ on paths determines a map
  \[ (A,B : \iTy_0) \to \iEquiv(A,B) \to \iTm(\iId(\inner{\bm{U}}_0,\inner{c}(A),\inner{c}(B))) \]
  in the $(\reppre)$-CwF $\Th_2$.

  This map can be interpreted into any model of $\Th_2$; in particular it can be interpreted in the standard model $\CSet$.
  Since we can assume without loss of generality that there exists two sets $A$ and $B$ that are equivalent but not equal, $\Th_2$ cannot satisfy external univalence.
  (In a set-theoretic metatheory, we can choose $A = \{1\}$ and $B = \{2\}$.
  In a type-theoretic metatheory, we can add new redundant codes to the universes of $\CSet$, \eg{} we define $\UU'_0 \triangleq \UU_0 + \{A,B\}$ with $\El(A) = \El(B) = \Unit$.)
\end{proof}

\begin{thm}
  The SOGATs of type theories with a $\Nat$-indexed hierarchy of universes à la Tarski, weak identity types and any selection of type structures among:
  \begin{itemize}
  \item $\Unit$-types;
  \item $\Sigma$-types;
  \item $\Pi$-types;
  \item empty types $\Init$, in the presence of $\Unit$;
  \item boolean types, in the presence of $\Unit$ and $\Init$;
  \item natural number types, in the presence of $\Unit$ and $\Init$;
  \item either weak or strict computation rules for the above type structures;
  \item any number of axioms, \ie{} postulated elements of closed representable sorts types, such as the univalence axiom;
  \item the Uniqueness of Identity Proofs principle;
  \end{itemize}
  all satisfy external univalence.
  \qed{}
\end{thm}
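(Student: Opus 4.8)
The plan is to deduce the theorem from \cref{thm:external_univalence_from_refleqv}: given such a SOGAT $\Th$, equipped with the homotopy relations of \cref{def:homotopy_relations_univ} (which extend those of \cref{exa:homotopy_relations_th_id}), it suffices to check \cref{asm:refleqv_internal_model}, \cref{asm:congruence_operations} and \cref{asm:basic_center_and_all_paths}, and then invoke that theorem.

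First I would dispose of \cref{asm:congruence_operations} and \cref{asm:basic_center_and_all_paths}. Both assumptions refer only to the representable sorts $\itm_i$ and to the homotopy relations, and neither of these is changed by adding extra type structures to the base theory $\Th_1$ of Tarski universes. So the verification is identical to the one for $\Th_1$ (itself essentially that of \cref{ssec:application_type_theory}): for \cref{asm:congruence_operations} one reduces via \cref{prop:type_iso_polytype} to telescopes of $\itm_i$-sorts and shows, by induction on the telescope, that the $\Sigma$-types pairing ``the other endpoint'' with ``an edge'' are contractible with respect to the inner identity types, which is exactly the data of the congruence operations $\ap_{A,\ity}$ and $\ap_{A,\itm}$; for \cref{asm:basic_center_and_all_paths} one unfolds $\isBContr$ and checks each of its constructors $\contr_{\iS,l}$, $\contr_{\iS,r}$, $\contr_{\iS,\sim}$ for $\iS$ among $\ity_i$ and $\itm_i$, the only substantial cases being those about types, handled by transporting along inner equivalences and filling open squares of equivalences.

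The real content is \cref{asm:refleqv_internal_model}: equipping $\CPreReflGraph(\Th)$ with an internal model of $\Th$ displayed over $\Th$, which by the universal property of $\Th$ means interpreting each sort, operation, and equation of $\Th$. I would interpret the sorts $\ity_i$ and $\itm_i$ as in the $\Th_1$ case (edges on $\ity_i$ are relational equivalences $\iEquiv$, with the $R$-component given by $\isRefl$; edges and $R$-components on $\itm_i(A)$ are inner terms of $\sem{A}_E$ and $\sem{A}_R$), and interpret each type and term former by its parametricity translation: $\iUnit$, $\iSigma$ and $\iPi$ by the inner-level mirror of the constructions of \cref{sec:mgraph_model}; $\innerSym{\Init}$, $\inner{\BoolTy}$ and the inner natural numbers using large elimination (or sufficiently many indexed inductive types); the Tarski universe structure $\inner{\bm{U}}_i$, $\inner{El}_i$, $\inner{\bm{u}}_i$, $\inner{Lift}$, $\inner{lift}$ exactly as in the proof for $\Th_1$, so that the relation on elements of a universe is inner identification of codes and $\inner{El}$ acts by $\mathsf{idtoeqv}$; the identity-type operations $\iId$, $\irefl$, $\iJ$, $\iJb$ as in \cref{ssec:application_type_theory}; each axiom $a \Colon \Tm_\Th(A)$ (a generating element of a closed sort, such as the univalence axiom) by $\sem{a}_E \triangleq \refl_A^E(a)$ and $\sem{a}_R \triangleq \refl_A^R(a)$ via the reflexivity structure on $\Th$; and the Uniqueness of Identity Proofs principle by transporting $\isSet$ along the equivalence $\sem{A}_E$, its $R$-component being automatic since $\isSet(-)$ is a homotopy proposition. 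Both weak and strict computation rules are accommodated, since the translation supplies the required path or equality either way.

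I expect the main obstacle to be exactly the strictness of the \emph{equations}: the $\beta$- and $\eta$-rules for $\iSigma$ and $\iPi$, the computation rules for the inductive types and for $\iId$, the Tarski coherences such as $\inner{El}(\inner{\bm{u}}_i) = \inner{\bm{U}}_i$, and the lifting isomorphisms must all hold on the nose in $\CPreReflGraph(\Th)$. They do, because every term-level edge- and reflexivity-component takes values in inner terms and the notion of equivalence employed is the fully relational one of \cref{exa:homotopy_relations_th_id} --- half-adjoint or other presentations of equivalence would break the needed strict coherences --- so each verification reduces to a routine, if tedious, unfolding of definitions, and for the universe coherences it is literally the computation already performed for $\Th_1$. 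With \cref{asm:refleqv_internal_model}, \cref{asm:congruence_operations} and \cref{asm:basic_center_and_all_paths} in hand, \cref{thm:external_univalence_from_refleqv} yields external univalence for $\Th$.
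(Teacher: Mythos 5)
Your proposal is correct and follows essentially the same route as the paper: the theorem is obtained by combining the verification for the Tarski-universe theory $\Th_1$ (sorts interpreted via $\iEquiv$/$\isRefl$, universe codes related by inner identification, $\inner{El}$ acting by $\mathsf{idtoeqv}$) with the observation of \cref{ssec:application_type_theory} and the subsequent section that adding further type formers, axioms, or UIP only requires re-checking \cref{asm:refleqv_internal_model} via the parametricity translation, while \cref{asm:congruence_operations} and \cref{asm:basic_center_and_all_paths} are unaffected. Your remarks on the necessity of the relational presentation of equivalences for the strict $\beta$/$\eta$ and universe coherences also match the paper's own caveats.
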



\section{Future work}\label{sec:conclusion}

\subsection{Semantic study of external univalence}

In this paper, we have defined external univalence as a property of the $(\reppre)$-CwF $\Th$.
The advantage of this approach is that we did not have to consider the semantics of $\Th$ at all.

In future work, we plan to study how external univalence for a SOGAT $\Th$ is related to properties of the category $\CMod_\Th$ of models of $\Th$.
In particular, we plan to show~\cref{clm:external_univalence_equiv_modelcat}, which says that $\Th$ satisfies external univalence exactly when the category $\CMod_\Th^\cxl$ of contextual models of $\Th$, equipped with suitable classes of maps, is a left semi-model category.
We also plan to show that the notion of Morita equivalence between type theories, which was introduced by \textcite{IsaevMoritaEquivalences}, can be captured at the level of the $(\reppre)$-CwFs: given a morphism $\Th_1 \to \Th_2$ of SOGATs that satisfy external univalence, the adjunction between the left semi-model categories $\CMod_{\Th_1}^\cxl$ and $\CMod_{\Th_2}^\cxl$ is a Quillen adjunction if and only if $\Th_1 \to \Th_2$ preserves the identity types, and a Quillen equivalence if and only if $\Th_1 \to \Th_2$ is additionally a weak equivalence in $\CCwf_{\repinfty}$.

\subsection{Strictification}

It is rather inconvenient that our methods only equip $\Th$ with weakly stable identity types.
It does not seem possible to equip $\Th$ with strictly stable identity types in general.
Instead, we may want to strictify the identity types, \ie faithfully embed $\Th$ into a $(\reppre)$-CwF with strictly stable identity types.

\begin{conj}
  Let $\Th$ be a SOGAT equipped with homotopy relations.
  If $\Th$ satisfies external univalence, then there exists a $(\reppre,\Id)$-CwF $\CC$ with strictly stable identity types and a $(\reppre)$-CwF morphism $\Th \to \CC$ that weakly preserves identity types and is essentially surjective on types and terms.
  \defiEnd{}
\end{conj}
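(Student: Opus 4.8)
The plan is to obtain $\CC$ by \emph{strictifying} the weakly stable identity types of $\Th$. The input to the strictification is exactly what external univalence provides: $\Th$ regarded as a $(\reppre)$-CwF equipped with weakly stable identity types $(\Id,\refl,\J,\Jbeta)$ (plus function extensionality and saturation, which we will see are not essential for this particular statement). Since the $\Unit$-, $\Sigma$- and first-order $\Pi$- type structures of a $(\reppre)$-CwF are specified internally to its presheaf topos, they are already strictly substitution-stable; so the only thing that genuinely needs to be rectified is the identity-type data. I would run the coherence machinery for weakly stable structures \parencite{CoherenceStrictEqualities} --- equivalently, use the generic $\Id$-introduction context as a local universe in a Lumsdaine--Warren-style splitting --- and take $\CC$ to be the resulting strictly stable $(\reppre,\Id)$-CwF. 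Concretely, $\CC$ would have the same objects and terminal object as $\Th$; a type of $\CC$ over $\Gamma$ would be a \emph{coherent system} assigning to each substitution $\rho$ into $\Gamma$ a type of $\Th$ over its domain, glued strictly along further substitution and carrying enough identity-type data to make $\Id$ (and eventually $\J$, $\Jbeta$) strictly stable on $\CC$ --- the same ``cofreely add naturality'' pattern already used for contractibility in \cref{constr:stable_contractibility_data}, now applied to the $\Id$-structure itself --- with the representable types of $\CC$ cut out by requiring the components to be representable, and $\Sigma$, $\Unit$, $\Pi_{\rep}$ defined componentwise.

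With such a $\CC$ in hand, the morphism $F : \Th \to \CC$ would send a type $A : \Ty_\Th(\Gamma)$ to the coherent system $\rho \mapsto A[\rho]$ (carrying the chosen weakly stable identity structure restricted along $\rho$) and be the identity on terms. It is essentially surjective on types and on terms because a coherent system is determined up to equivalence by its value at $\id_\Gamma$, which is an honest type of $\Th$; and it only \emph{weakly} preserves identity types, since the comparison between $F(\Id_A(x,y))$ and $\Id_{F(A)}(F(x),F(y))$ is precisely the (weak) naturality equivalence of the weakly stable $\Id$ of $\Th$, an equivalence rather than an isomorphism. Function extensionality and saturation are transported along the construction if one wants them, but what the strictification itself consumes is only weak stability --- so one could equally state the result for any $(\reppre)$-CwF with weakly stable identity types.

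The hard part will be the elimination side and its interaction with the representable-type structure. Rectifying the \emph{type former} $\Id$ to be strictly substitution-stable is bookkeeping once the coherent-system apparatus is in place, but making $\J$ and $\Jbeta$ into strictly stable operations --- coherently with the rectified $\Id$, and compatibly with the already-strict $\Sigma$/$\Pi_{\rep}$ structure --- is genuinely delicate: the naive right-adjoint splitting of the underlying CwF does not propagate the $\J$-data strictly (the restriction comparisons are only equivalences), so one must genuinely invoke, and first \emph{verify in the $(\reppre)$-CwF setting with first-order $\Pi$-types}, the coherence theorem for weakly stable structures. Two secondary points I would expect to dispatch routinely: a size check that $\Ty_\CC$ still lands in a Hofmann--Streicher universe of $\CPsh(\CC)$ and that $\CC$ is a legitimate, locally finitely presentable $(\reppre)$-CwF (as in \cref{constr:stable_contractibility_data}, perhaps at the cost of one universe level); and a check that ``weakly preserves identity types'' together with essential surjectivity exhibits $F$ as a weak equivalence in $\CCwf_{\repinfty}$, which is the form in which this strictification would be most useful downstream. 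A cross-check is that the coherence theorems comparing $\infty$-type theories with $1$-type theories \parencite{InftyTypeTheories} predict exactly such a strictification; transporting them to the present syntactic, constructive setting is what remains to be done.
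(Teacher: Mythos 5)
First, a point of framing: the statement you are proving is stated in the paper as a \emph{conjecture}, in the ``Future work'' section. The paper offers no proof of it; it only proves the special case of first-order generalized algebraic theories without equations (where $\Th$ is an $\{I^{\ty},I^{\tm}\}$-cellular $\lexpre$-CwF), by citing a separate strictification result. So there is no ``paper proof'' to compare against --- the question is whether your argument closes the open problem, and it does not.

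The genuine gap is exactly where you park it. The paper explicitly remarks, immediately after the conjecture, that ``the known strictification methods cannot be applied to this situation'', and singles out the local universes method of Lumsdaine--Warren as requiring more $\Pi$-types than a $(\reppre)$-CwF possesses: $\Th$ only has first-order $\Pi$-types whose domains are \emph{representable} types, and the local universe needed to classify an $\Id$-elimination problem must internalize the motive $P$ and the base case $d$, which are dependent types and terms over context extensions by a general (non-representable) type $\Id_A(x,\text{--})$ --- there is no exponential in $\Th$ to package these. Your proposal invokes precisely this machinery (``the generic $\Id$-introduction context as a local universe in a Lumsdaine--Warren-style splitting'') and then concedes that one must ``first verify in the $(\reppre)$-CwF setting with first-order $\Pi$-types the coherence theorem for weakly stable structures''. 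That verification \emph{is} the conjecture; deferring to it is circular. The analogy with \cref{constr:stable_contractibility_data} also does not carry the weight you put on it: cofreely adding naturality works there because $\isContr$ is unstructured data (a dependent presheaf of witnesses with no equations to preserve), whereas a strictly stable $\J$ must satisfy its typing and $\Jbeta$ laws strictly after restriction, and the restriction comparisons for a weakly stable $\J$ are only paths, not equalities. Your ``coherent system'' types would therefore carry $\J$-data that agrees across substitutions only up to homotopy, which is not a $(\reppre,\Id)$-CwF with strictly stable identity types. The remainder of your write-up (essential surjectivity, weak preservation of $\Id$, size issues) is plausible \emph{conditional} on the strictification existing, but the existence is the whole content of the statement.
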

The known strictification methods cannot be applied to this situation.
For example, the local universes method~\parencite{LocalUniversesModel} requires more $\Pi$-types than available in $\Th$.

It is however possible to strictify the identity types in the special case of first-order generalized algebraic theories without equations.
\begin{thm}
  Let $\Th$ be a first-order generalized algebraic theory without equations, \ie{} an $\{I^\ty,I^\tm\}$-cellular $\lexpre$-CwF.

  If $\Th$ satisfies external univalence with respect to a choice of homotopy relations, then $\Th$ can be equipped with strictly stable identity types satisfying saturation with respect to the homotopy relations.
\end{thm}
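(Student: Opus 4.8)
The plan is to carry out a strictified version of the construction of \cref{sec:univalence_proof}, exploiting the fact that for a first-order GAT without equations the family of types is a genuine inductive family rather than merely being essentially covered by one. Since $\Th$ has no representable-type or term-equation generators and no $\Pi$-type generators, \cref{prop:description_ty_coclassifying_model} specializes to a description of $\Ty_\Th$ as the closure of the basic types under $\Unit$ and $\Sigma$ only, so that $\Ty_\Th$ (and, over each context, the collection of dependent types) is a genuine inductive family. This lets us define type-level operations by structural recursion, and — crucially — since $\Sigma$ is an honest constructor of this family, we never have to pass through the merely-essential associativity of $\Sigma$-types used in \cref{prop:type_iso_polytype} (and hence in \cref{thm:external_univalence_from_refleqv}). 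That essential-only step, together with the absence of term equations, is exactly why the general construction only produces weakly stable identity types.

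First I would extract the ingredients from external univalence as \emph{strictly stable} operations. By \cref{constr:stable_contractibility_data}, the weakly stable identity types of $\Th$ furnish strictly stable contractibility data; and since $\Th$ is cellular with no generating representable sorts, the representability hypothesis of \cref{thm:external_univalence_from_refleqv_converse} is vacuous, so external univalence supplies the data of \cref{asm:refleqv_internal_model,asm:congruence_operations,asm:basic_center_and_all_paths}, in particular a parametricity structure $\sem{-}$ and, for each generating sort $\iS$, a strictly stable ``boundary transport'': a family of equivalences between fibres of $\iS(-)$ indexed by paths in $\partial\iS$ (the dependent action on paths of $\iS$), compatible with the homotopy relation $(\sim_{\iS})$ over the reflexive boundary-path. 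These data verify the closure conditions of \cref{lem:contr_closure_conditions_mgraph,lem:contr_closure_conditions_mgraph_2}, and the constructions of \cref{sec:mgraph_model} (reflexive and dependent equivalences, centers, homogeneous and heterogeneous all-paths operations) go through \emph{as operations} rather than merely per-context: in the absence of $\Pi$-types only the $\Unit$-, $\Sigma$- and basic cases of \cref{lem:contr_center,lem:contr_homogeneous_all_paths} occur, and \cref{constr:center_and_all_paths_unit,constr:center_sigma,constr:all_paths_sigma,constr:heterogeneous_all_paths} each manifestly preserve strictness.

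Next I would define the strict identity types $\Id^{s},\refl^{s}$, and simultaneously the strict dependent identity types $\DId^{s}$, by induction-recursion on the inductive structure of types: on a basic sort, $\Id^{s}_{\iS(\sigma)}(x,y) \triangleq (x \sim_{\iS(\sigma)} y)$ with $\refl^{s} \triangleq \refl_{\iS(\sigma)}$; on a dependent basic sort $\iS(\tau)$ over $\Gamma.A$, $\DId^{s}_{A.\iS(\tau)}(q,b_l,b_r)$ is the heterogeneous relation obtained from $(\sim_{\iS})$ by the boundary transport of $\iS$ along $q$; $\Id^{s}_{\Unit} \triangleq \lambda\,\_\,\_ \mapsto \Unit$; and $\Id^{s}_{\Sigma(C,D)}((c_l,d_l),(c_r,d_r)) \triangleq (q : \Id^{s}_{C}(c_l,c_r)) \times \DId^{s}_{C.D}(q,d_l,d_r)$, with $\DId^{s}$ defined analogously. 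Every clause is built from strictly stable operations by structural recursion, so $\Id^{s},\refl^{s}$ are strictly stable, and saturation with respect to the homotopy relations holds by construction on basic sorts and is preserved by $\Unit$ and $\Sigma$. To add elimination I would check that $\Id^{s}$ is exactly the reflexive-equivalence structure determined by $\sem{-}$ (restricted along $\Ty_\Th \cong \Clos_{\lexpre}(\BTy_\Th)$) and then feed it, with the strict auxiliary data above, into \cref{thm:identity_types_from_refleqv}: its proof applies its operations uniformly in the elimination data $(\Delta,\gamma,P,d)$, performing no case analysis on $P$ or $d$, so with strictly stable inputs it outputs $\J^{s},\Jbeta^{s}$ as genuine operations, i.e.\ a strictly stable (weak) identity type structure.

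The main obstacle is the verification in the last two paragraphs that every auxiliary construction really is an operation on the nose: one must track naturality through \cref{constr:stable_contractibility_data}, through each closure construction of \cref{sec:mgraph_model,sec:univalence_proof}, and through the passage from the weakly stable data of $\Th$ to the boundary-transport operations for the generating sorts, confirming naturality in the boundary. I expect no genuine obstruction — the whole point of restricting to first-order GATs without equations is that the two unavoidable sources of weakness in the general case, the essential-only distributivity of $\Pi$ over $\Sigma$ and the coherence conditions forced by equations among the operations, are both absent — but the bookkeeping required to keep every step strict, rather than valid up to canonical isomorphism, is where the real work lies.
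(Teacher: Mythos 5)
Your route is genuinely different from the paper's, which does not redo the construction of \cref{sec:univalence_proof} strictly at all: it simply invokes \cite[Theorems 1 and 2]{StrictificationWeaklyStable} and asserts that those methods extend from cellular CwFs to cellular $\lexpre$-CwFs. The mechanism there is to exploit cellularity directly — form a CwF whose types carry chosen identity-type data, observe that the forgetful map to $\Th$ is a trivial fibration (the lifts being supplied by the weakly stable structure), and obtain a section from the $\{I^{\ty},I^{\tm}\}$-cellularity of $\Th$ — so that strict stability holds because every operation is defined by substitution from a generic instance, never by an induction whose clauses must be checked for naturality.

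Measured against that, your proposal has a genuine gap exactly at the step you defer as ``bookkeeping''. The obstruction to strictness in \cref{sec:univalence_proof} is not primarily the essential-only associativity of $\Sigma$ or distributivity of $\Pi$ over $\Sigma$; it is that the auxiliary data fed into \cref{thm:identity_types_from_refleqv} is not natural. Concretely: (i) the reflexivity structure $\refl^{E}_{\Gamma}$ of \cref{prop:contextual_reflexivity_structure}, which enters your $\DId^{s}$ on dependent basic sorts (through the reflexivity edges of the boundary terms) and the dependent equivalences used by the eliminator, is natural in $\Gamma$ only if $\sem{t}_{E}$ sends reflexivity edges to reflexivity edges \emph{on the nose} for every term $t$ — i.e.\ only if the parametricity structure strictly preserves the congruence operations of \cref{asm:congruence_operations} — and nothing in your hypotheses forces this; (ii) the center and all-paths operations of \cref{lem:contr_center,lem:contr_homogeneous_all_paths} are defined by recursion on \emph{derivations} of $\isContr$, which are not canonical (the constructor $\contr_{\cong}$ in particular lets a single type be contractible for several inequivalent reasons), so their strict stability is an additional coherence property, not something ``manifest'' in \cref{constr:center_sigma,constr:all_paths_sigma,constr:heterogeneous_all_paths}; and (iii) your claim that \cref{thm:identity_types_from_refleqv} outputs a strict $\J^{s}$ when given strict inputs presupposes (i) and (ii). These are precisely the coherence problems the generic-context method is built to bypass. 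To salvage your approach you would have to prove these naturality statements outright — which amounts to redoing the cited strictification theorem — or restructure the definitions so that every operation is obtained by substitution from a generic context, at which point you have reproduced the paper's (cited) proof.
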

\begin{proof}
  This is proven for $\{I^\ty,I^\tm\}$-cellular CwFs (without $\Sigma$) in~[\cite[Theorems 1 and 2]{StrictificationWeaklyStable}], but the methods can be generalized to $\{I^\ty,I^\tm\}$-cellular $\lexpre$-CwFs.
\end{proof}

\subsection{Embedding theories into richer models}
While we have shown that it is possible to transport structures over homotopies for any SOGAT that satisfies external univalence, this only holds for structures that are expressible in the language of the $(\reppre)$-CwF $\Th$.
This language is not sufficiently expressive for all applications.

It would be desirable to conservatively embed $\Th$ into a richer language that allows for the specification of additional structures and properties.
A good candidate for this richer language is (any variant of) Homotopy Type Theory.

\begin{conj}[Weak embedding into HoTT]\label{conj:weak_emb_hott}
  Let $\Th$ be a SOGAT equipped with homotopy relations satisfying external univalence.
  Then there exists a model $\CC$ of (some variant of) HoTT equipped with an univalent internal model of $\Th$ such that the induced $(\reppre)$-morphism $\Th \to \CC$ is essentially surjective on terms.
  \defiEnd{}
\end{conj}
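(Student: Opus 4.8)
The plan is to realize $\CC$ as a homotopical presheaf model built over the underlying category of $\Th$, in which the homotopy relations of $\Th$ become genuine path types and the representable sorts are classified by a univalent Tarski universe. First I would unpack the hypothesis: external univalence makes $\Th$ a $(\repinfty)$-CwF whose weakly stable identity types are saturated, so the generic model $\id \colon \Th \to \Th$ is already a univalent internal model in the sense of \cref{def:univalent_internal_model}. Any embedding $i \colon \Th \to \CC$ we produce must carry the outer identity types $(\simeq)$ of $\Th$, and equivalently the homotopy relations $(\sim)$, to the inner identity/path types of $\CC$; saturation is exactly what makes this simultaneous matching coherent.

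For the construction of $\CC$ I would pass from $\Th$ to a cubical (or semi-simplicial) presheaf category over the underlying category of $\Th$ and equip it with a fibrancy structure, obtaining a model of a cubical variant of HoTT in which a fibrant sub-universe of each Hofmann--Streicher universe $\UPsh_i$ is univalent. Composing the Yoneda embedding with the inclusion of constant presheaves gives the candidate $(\reppre)$-morphism $i \colon \Th \to \CC$. The essential move is to interpret each generating representable sort $\iS$ as the fibrant object whose path type is precisely $\sim_{\iS}$; external univalence guarantees that these interpretations assemble into an internal model of $\Th$ and, more importantly, that the sub-universe classifying the fibrant images of the representable types is univalent, since identifications in it unfold under saturation to equivalences, which are the homotopy relations on $\ity$-like sorts. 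This is where the contrast of \cref{lem:coding_refutes_external_univalence} is decisive: the universe must be à la Tarski \emph{without} a coding function, so that types of $\Th$ are represented only up to equivalence and univalence is not contradicted.

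I would then verify the two required properties. Univalence of the internal model transfers along $i$: the contractibility condition of \cref{def:univalent_internal_model} for $i$ is the image of the saturation already present in $\Th$, because $i$ preserves identity types up to the canonical comparison map. Essential surjectivity on terms is a conservativity statement, namely that $\CC$ creates no new global inhabitants of the sorts of $\Th$ up to identification. I would prove it by an Artin-gluing argument over $i$: glue $\CC$ to $\Th$ along a global-sections functor, show the glued structure is again a model of the relevant fragment, and conclude from the universal property of the generic model that every term of $i(A)$ over the terminal context is homotopic to $i(a)$ for some term $a$ of $A$. Representability of the sorts, which lets terms be classified by honest types, keeps this bookkeeping finite.

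The hard part will be the coherent construction of the univalent universe classifying exactly the types of $\Th$ together with the strictification a genuine model of HoTT demands: external univalence supplies only \emph{weakly} stable identity types, whereas a model of HoTT wants strictly stable, universe-coherent structure, and reconciling the two is precisely the open strictification problem discussed above (where the local-universes method fails for lack of $\Pi$-types). A secondary difficulty is pushing the gluing argument through in the presence of the universe: conservativity of $i$ on term-inhabitation must be established without accidentally importing a coding function, which is exactly the failure mode isolated by the $\Th_1$ versus $\Th_2$ distinction. For these reasons the statement remains a conjecture; the route above is the one I would pursue, possibly substituting the semi-model structure of \cref{clm:external_univalence_equiv_modelcat} for the cubical presheaf model and extracting $\CC$ as the internal language of the homotopy theory of fibrant contextual models.
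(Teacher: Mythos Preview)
The statement you are addressing is a \emph{conjecture} in the paper, placed in the Future Work section; the paper does not prove it. What the paper offers is a one-paragraph informal intuition: since external univalence makes $\Th$ a $(\repinfty)$-CwF, one should regard $\Th$ as an $\infty$-category with representable maps, embed it via the $\infty$-categorical Yoneda lemma into the $\infty$-topos of $\infty$-presheaves on $\Th$, and then interpret HoTT in that $\infty$-topos. Essential surjectivity on terms would follow from full faithfulness of the $\infty$-Yoneda embedding. The paper explicitly says that turning this into a proper proof is not straightforward.

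Your proposal is broadly compatible with this intuition and is more detailed in places, and you correctly land on the same obstruction the paper names: the weakly stable identity types supplied by external univalence are not the strictly stable structure a model of HoTT demands, and the known strictification techniques do not apply. Your closing suggestion, to extract $\CC$ from the semi-model structure on contextual models, is essentially a rephrasing of the paper's $\infty$-categorical route.

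There is, however, a real tension in your concrete construction. You propose to take cubical (or semi-simplicial) presheaves over the underlying \emph{$1$-category} of $\Th$ and send $\Th$ in via the Yoneda embedding composed with the inclusion of constant presheaves. But constant cubical presheaves are homotopically discrete: the path type of $\yo(\Gamma)$ in that model is trivial, not the homotopy relation $\sim$. You cannot then also ``interpret each generating representable sort $\iS$ as the fibrant object whose path type is precisely $\sim_{\iS}$'' and have this agree with the Yoneda image. The paper's intuition avoids this mismatch by working $\infty$-categorically from the start, so that the Yoneda embedding already sees the identity types of $\Th$. If you want a presheaf-style model, you would need something like presheaves valued in a homotopical category, indexed by the $(\repinfty)$-CwF structure of $\Th$ rather than its underlying $1$-category; making this precise is exactly the open problem.

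Your Artin-gluing argument for essential surjectivity is a reasonable alternative to the full-faithfulness-of-Yoneda route the paper gestures at, but note that it presupposes the hard part is already done: you need $\CC$ and the morphism $i$ with the right identity-type behaviour before the gluing can even be set up.
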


\begin{conj}[Strict embedding into HoTT]\label{conj:str_emb_hott}
  Let $\Th$ be a SOGAT equipped with homotopy relations satisfying external univalence.
  Then there exists a model $\CC$ of (some variant of) HoTT equipped with an univalent internal model of $\Th$ such that the induced $(\reppre)$-morphism $\Th \to \CC$ is bijective on terms.
  \defiEnd{}
\end{conj}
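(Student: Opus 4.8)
The plan is to produce $\CC$ as a homotopical model built over the category of contextual models of $\Th$, and to obtain the embedding $\Th \to \CC$ from a Yoneda embedding followed by fibrant replacement. The first step is to record what external univalence already buys us: by hypothesis the $(\reppre)$-CwF $\Th$ carries weakly stable identity types with function extensionality that are saturated with respect to the homotopy relations, so $\Th$ is a $(\repinfty)$-CwF whose outer identity types $(\simeq)$ agree, on every sort, with the specified homotopy relations $(\sim)$. In particular the generic model $\id : \Th \to \Th$ is already univalent in the sense of \cref{def:univalent_internal_model}. The problem is therefore not to create univalence but to transport it into a genuine model of HoTT: I must enrich $\Th$ with strictly stable identity types, univalent universes à la Tarski and the remaining type formers, while changing neither the types nor --- for the strict statement --- the terms lying over the sorts of $\Th$.

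For the construction I would use the left semi-model structure on $\CMod_\Th^\cxl$ supplied by \cref{clm:external_univalence_equiv_modelcat}, and build $\CC$ as a fibrant-objects model inside the presheaf $(\reppre)$-CwF $\psh{\CMod_\Th^\cxl}$, in the style of the models of Kapulkin--Lumsdaine and Shulman. Concretely I would iterate the homotopical inverse-diagram construction of \cref{sec:mgraph_model}: where $\CReflEqv(\CC)$ records only the $1$- and $2$-dimensional cubical data needed for weakly stable identity types, a full model of HoTT requires a semicubical (or simplicial) object supplying fillers in all dimensions, with the identity types of $\CC$ given by the path object and a univalent universe classifying the (trivial) fibrations. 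The homotopy relations of $\Th$ furnish the $1$-dimensional data and, via saturation, force this path structure to agree with $(\sim)$ on the sorts; the universe is chosen large enough to contain the representable sorts of $\Th$, so that $\CC$ really is a HoTT model in which $\Th$ lives. The Yoneda embedding $\Th \hra \psh{\CMod_\Th^\cxl}$ --- available because the underlying category of $\Th$ is (the full subcategory of finitely generated objects of) $\CMod_\Th^\cxl$ --- composed with fibrant replacement then yields the desired $(\reppre)$-morphism $\Th \to \CC$, univalent by construction.

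The hard part, and the reason the statement is only conjectural, is twofold. First, strict term-bijectivity is much stronger than the essential surjectivity of \cref{conj:weak_emb_hott}: it demands that fibrant replacement introduce no new terms over the image sorts, i.e. a conservativity statement for the enrichment, and the usual strictification tool --- the local universes method --- is unavailable here for exactly the reason flagged for the strictification conjecture, namely that $\Th$ does not provide enough $\Pi$-types. Second, assembling a strictly stable, univalent universe that is closed under all the HoTT type formers and whose decoding recovers precisely the sorts of $\Th$, while $\Th$'s own identity types remain only weakly stable, is a serious coherence problem; a cubical presentation (as in \cref{sec:mgraph_model}) seems necessary to keep univalence computational and the universe structure strict. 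I expect this second point --- strictifying the universe and the identity types without enlarging the term-sets --- to be the principal obstacle, and would attack it by first establishing the weak embedding of \cref{conj:weak_emb_hott} and then isolating a term-conservativity argument particular to SOGATs.
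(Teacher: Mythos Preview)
There is nothing to compare against: the statement is a \emph{conjecture} in the paper's ``Future work'' section, and the paper offers no proof or proof sketch for it. The only informal argument the paper gives is for the \emph{weaker} \cref{conj:weak_emb_hott} (essential surjectivity on terms rather than bijectivity), and even that is explicitly flagged as an intuition that is ``not straightforward'' to turn into a proof: view $\Th$ as an $\infty$-category with representable maps, Yoneda-embed into $\infty$-presheaves, then interpret HoTT in the resulting $\infty$-topos.

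Your proposal is broadly in the same spirit as that intuition --- Yoneda into presheaves over $\CMod_\Th^\cxl$, then build a HoTT model there --- and you go further by invoking the conjectural semi-model structure of \cref{clm:external_univalence_equiv_modelcat} and fibrant replacement. You also correctly identify the two genuine obstacles the paper leaves open: the unavailability of local-universes strictification (already noted in the strictification discussion) and the need for a term-conservativity argument to upgrade essential surjectivity to strict bijectivity. But none of this constitutes a proof, and you say as much. So your write-up is a reasonable research plan, not a proof attempt, and the paper agrees: it lists the statement as an open conjecture with no claimed argument.
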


These conjectures should be seen as $\infty$-categorical variants of the following $1$-categorical theorem:
\begin{thm}
  Let $\Th$ be any SOGAT.
  Then there exists a model $\CC$ of extensional type theory equipped with an internal model of $\Th$ such that the induced $(\reppre)$-morphism $\Th \to \CC$ is bijective on terms.
\end{thm}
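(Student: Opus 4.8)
The plan is to realize the desired model as a presheaf topos. Assuming, after a choice of universes and using that the underlying category of the coclassifying $(\reppre)$-CwF $\Th$ is essentially small (it is equivalent to the category of finitely generated models), that this category is small, I take $\CC := \CPsh(\Th)$. By the discussion in \cref{ssec:internal_language_psh}, $\CC$ is a model of extensional type theory: viewed as a CwF, and hence as a $(\reppre)$-CwF with types the dependent presheaves, representable types the locally representable dependent presheaves, and first-order $\Pi$-types as in \cref{exa:psh_reppi}, it carries $\Unit$-, $\Sigma$- and $\Pi$-types, extensional equality types, quotient types, and a hierarchy of Hofmann--Streicher universes à la Tarski --- everything ETT requires. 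It then remains to equip $\CC$ with an internal model of $\Th$ and to check that the resulting $(\reppre)$-CwF morphism $\Th \to \CC$ is bijective on every sort.

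For the internal model I would use the nerve of the Yoneda embedding $\yo : \Th \to \psh{\Th}$. Concretely: an object $\Gamma$ of $\Th$ goes to $\yo(\Gamma)$; a type $A \Colon \yo(\Gamma) \to \Ty_\Th$ goes to the locally representable dependent presheaf $NA$ over $\yo(\Gamma)$ whose fibre over a substitution $\rho : \Delta \to \Gamma$ is the set of terms of $A[\rho]$ over $\Delta$, with the evident restriction maps --- local representability being witnessed by the context extensions $\Delta.A[\rho]$ of $\Th$; and a term $t$ of $A$ over $\Gamma$ goes to the section $\rho \mapsto t[\rho]$. One checks that $N$ is compatible with substitution, sends the restriction $\RepTy_\Th \to \Ty_\Th$ to the restriction $\RepTy_\CC \to \Ty_\CC$, and preserves $\Unit$-, $\Sigma$- and first-order $\Pi$-types; preservation of the latter two holds only through the canonical isomorphisms coming from the type-structure isomorphisms of $\Th$ and $\CC$, so a priori $N$ is only a pseudo-morphism of $(\reppre)$-CwFs. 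To produce an honest strict internal model $M : \Th \to \CC$ I would invoke the cellular presentation of $\Th$ from \cref{def:sogat}: interpret the generators of the signature of $\Th$ through $N$, observe that each equation of the signature, being an equality of terms holding in the generic model $\id_\Th$, is carried to an equality in $\CC$ because $N$ acts on terms by restriction along all $\rho$ and hence visibly preserves equalities, and conclude by the universal property of the presentation that this interpretation of the signature determines a strict $(\reppre)$-CwF morphism $M : \Th \to \CC$ agreeing with $N$ up to isomorphism. Equivalently, one rectifies the pseudo-morphism $N$; this is unproblematic because $\CC$, being a topos, has ample $\Pi$-types for a standard strictification such as the local universes method~\parencite{LocalUniversesModel}.

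Finally, bijectivity on terms is just the Yoneda lemma. For a generating type $\iS : \GenTy_\Th$, an object $\Gamma$ of $\Th$ and a boundary $\sigma$, the terms of $\iS_\CC(M(\sigma))$ over $\yo(\Gamma)$ are, up to the isomorphism of $M$ with $N$, the sections of the locally representable dependent presheaf $N\iS$ pulled back along $\yo(\sigma)$, and these correspond bijectively to the terms of $\iS(\sigma)$ over $\Gamma$; this is exactly the required bijection on every sort. The main obstacle lies entirely in the coherence step of the second paragraph: the nerve respects the $(\reppre)$-CwF structure only up to canonical isomorphism, so turning it into a genuine internal model of $\Th$ --- one satisfying all the equations of the signature on the nose --- needs either a rectification argument or a uniform choice of structure, after which the Yoneda computation is immediate. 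This $1$-categorical statement is the baseline that \cref{conj:weak_emb_hott} and \cref{conj:str_emb_hott} aim to upgrade; there the genuine difficulty is that the target must additionally model univalence (or the relevant higher structure) compatibly with the chosen homotopy relations, which the present presheaf construction does not provide.
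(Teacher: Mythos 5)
Your proposal is correct and follows essentially the same route as the paper's own proof sketch: take $\CC = \psh{\Th}$, observe that the Yoneda embedding is a pseudo-morphism of $(\reppre)$-CwFs that is bijective on terms, and use the cellular presentation of $\Th$ to replace it by a strict $(\reppre)$-CwF morphism agreeing with it up to a $2$-cell. Your explicit description of the nerve and of the rectification step is just an unpacking of what the paper leaves implicit.
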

\begin{proof}[Proof sketch]
  We define $\CC$ as the presheaf topos $\psh{\Th}$.
  Then the Yoneda embedding $\yo : \Th \to \psh{\Th}$ is a pseudo-morphism of $(\reppre)$-CwFs, and bijective on terms.
  Relying on the fact that $\Th$ is $\{I^\ty,I^{\repty},I^\tm,E^\tm\}$-cellular, we can construct a strict replacement $y : \Th \to \psh{\Th}$ of the Yoneda embedding, along with a $2$-cell $\yo \cong y$.
  This strict replacement is also bijective on terms.
\end{proof}

An $\infty$-categorical version of this argument gives intuition for why~\cref{conj:weak_emb_hott} should hold.
Indeed, when a SOGAT $\Th$ satisfies external univalence, the $(\reppre)$-CwF $\Th$ is a $(\repinfty)$-CwF, which should correspond to some $\infty$-category with representable maps.
By the $\infty$-categorical Yoneda lemma, we can faithfully embed this $\infty$-category into an $\infty$-topos of $\infty$-categorical presheaves.
We can finally interpret HoTT into this $\infty$-topos.
Unfortunately, turning this this informal proof idea into a proper proof is not straightforward.

When studying the computational properties of type theories, such as canonicity and normalization properties, it is typical to rely on the interpretation of extensional type theory into some presheaf categories.
A solution of these conjectures would provide a good setting for the study of some homotopical properties of type theories, such as homotopy canonicity and normalization up to homotopy.


\appendix

\printbibliography

\end{document}